\documentclass{article}
\pdfpagewidth=8.5in
\pdfpageheight=11in
% The file ijcai20.sty is NOT the same than previous years'
\usepackage{ijcai20}

\newif\iflong
\newif\ifshort

% comment the below line out for short version
\longtrue

\iflong
\else
\shorttrue
\fi

% Use the postscript times font!
\usepackage{times}
\usepackage{soul}
\usepackage{url}
\usepackage[utf8]{inputenc}
\usepackage[small]{caption}
\usepackage{graphicx}
\usepackage{amsmath}
\usepackage{amsthm}
\usepackage{booktabs} 

\usepackage{amsmath,amssymb,graphicx,amsthm,dsfont}
\usepackage{color,soul}
\usepackage{xspace}
\usepackage{enumerate}
\usepackage{bbm}
\usepackage{mathtools}
\usepackage{comment}
\usepackage{paralist} 
\usepackage{nicefrac}
\usepackage{multirow} 

\usepackage{colortbl}
 
\usepackage{subcaption} 
\usepackage[noend,ruled,linesnumbered]{algorithm2e}

\urlstyle{same}
\SetAlFnt{\scriptsize}
\SetAlCapFnt{\footnotesize}
\SetAlCapNameFnt{\footnotesize}
\SetAlCapHSkip{0pt} 

\newcommand{\algor}{{\normalfont \textbf{or}}\xspace}
\newcommand{\algand}{{\normalfont \textbf{and}}\xspace}

\usepackage{xifthen}

\newcommand{\bigoh}{\mathcal{O}}

\usepackage[linecolor=blue!70!black, backgroundcolor=blue!10, bordercolor=black, textsize=scriptsize, obeyFinal,disable]{todonotes}

\newcommand{\todoH}[1]{\todo[linecolor=yellow!70!black, backgroundcolor=yellow!10]{H: #1}}

\newcommand{\todoT}[1]{\todo[linecolor=red!70!black, backgroundcolor=red!10]{T: #1}}
\newcommand{\todoR}[1]{\todo[linecolor=green!70!black, backgroundcolor=green!10]{R: #1}}
\usepackage{complexity}
\usepackage{makecell}

% \textheight=22.50truecm
% \textwidth=16.5truecm
% \hoffset=-1.80truecm
% \voffset=-2truecm

%\setlength{\oddsidemargin}{1in}
%\setlength{\evensidemargin}{1in}

\usepackage[%
pagebackref,
breaklinks=true,
colorlinks=true,
linkcolor=green!40!black!90,
citecolor=darkblue,
urlcolor=purple,
pdfpagelayout=SinglePage,
pdfstartview=Fit]{hyperref}
%customize backref in e
\renewcommand*{\backref}[1]{}
\renewcommand*{\backrefalt}[4]{%
\ifcase #1%
\marginpar{\tiny no cite}
\or
$\rightarrow$~p.~#2.%
\else
 $\rightarrow$~pp.~#2.%
\fi
}

\newcommand{\St}{\mathsf{S}}
\newcommand{\worst}{\mathsf{worst}}
\newcommand{\Wst}{\mathsf{wst}}

 \newcommand{\lit}{\mathsf{lit}}
\newcommand{\enn}{r}
\newcommand{\emm}{s}
 
\newcommand{\capR}{\mathsf{Q}}

\newcommand{\decprob}[3]{
  % \begin{center}%
  \smallskip
  {\centering
    \begin{minipage}{0.94\linewidth}%
      \textsc{#1}\\[0.2ex]
      \textbf{Input:} #2\\[0.2ex]
      \textbf{Question:} #3
    \end{minipage}%
  }
  \smallskip
  % \end{center}
}

\newcommand{\decprobnormal}[3]{
  \begin{center}%
    \begin{minipage}{0.9\linewidth}%
      \textsc{#1}\\[0.2ex]
      \textbf{Input:} #2\\[0.2ex]
      \textbf{Question:} #3
    \end{minipage}%
  \end{center}
}

\usepackage{tikz}
\usetikzlibrary{decorations,arrows,petri,topaths,backgrounds,shapes,positioning,fit,calc,decorations.pathreplacing,patterns,intersections,decorations.pathmorphing,matrix}

\makeatletter
\newcommand{\gettikzxy}[3]{%
  \tikz@scan@one@point\pgfutil@firstofone#1\relax
  \edef#2{\the\pgf@x}%
  \edef#3{\the\pgf@y}%
}
\makeatother

 \tikzstyle{profilestyle} = [ampersand replacement=\&,column sep=-4pt, row sep=-2pt]

\usepackage{cleveref}
\newtheorem{theorem}{Theorem}
\newtheorem{corollary}{Corollary}
\newtheorem{lemma}{Lemma}
\newtheorem{claim}{Claim}
\newtheorem{observation}{Observation}
\newtheorem{proposition}{Proposition}

\theoremstyle{definition}

\newtheorem{example}{Example}

\crefname{table}{Table}{Tables}
\crefname{figure}{Figure}{Figures}
\crefname{theorem}{Theorem}{Theorems}
\crefname{definition}{Definition}{Definitions}
\crefname{corollary}{Corollary}{Corollaries}
\crefname{observation}{Observation}{Observations}
\crefname{lemma}{Lemma}{Lemmas}
\crefname{example}{Example}{Examples}
\crefname{reduction}{Reduction}{Reductions}
\crefname{construction}{Construction}{Constructions}
\crefname{subsection}{Subsection}{Subsections}
\crefname{section}{Section}{Sections}
\crefname{proposition}{Proposition}{Propositions}
\crefname{algorithm}{Algorithm}{Algorithms}
\crefname{drule}{Rule}{Rules}
\crefname{claim}{Claim}{Claims}

%%%%%%%%%%%%%%%%%%%%%%%%%%%%%%%%%%%%%%%%%%%%%%

%\newcommand{\R}{\mathbb{R}}

%\newcommand{\R}{\mathbb{R}}
%\newcommand{\C}{\mathbb{C}}

%\newcommand{\D}{\mathbb{D}}

% I prefer not to use special fonts for the complexity names
%\newcommand{\NP}{\mathsf{NP}}
%\newcommand{\PP}{\mathsf{P}}

%\newcommand{\M}{\mathcal{M}}

%\newcommand{\match}[3]{\ensuremath{#2 \overset{#1}{\leftrightarrow} #3}}

\newcommand{\diverseties}{\textsc{SMTI-Diverse}\xspace}
\newcommand{\diverse}{\textsc{SMI-Diverse}\xspace}
\newcommand{\feasible}{\textsc{FI-Diverse}\xspace}
\newcommand{\CSM}{\textsc{Classified Stable Matching}\xspace}

\newcommand{\threesat}{\textsc{$\forall\exists$3SAT}\xspace}
\newcommand{\ethreesat}{\textsc{3SAT}\xspace}
\newcommand{\oneinthree}{\textsc{1-in-3-$\forall\exists$3SAT}\xspace}
\newcommand{\notoneinthree}{\textsc{Not-1-in-3-$\exists\forall$3SAT}\xspace}

% The conference version switch off the highlighting of definitions
%\newcommand{\myemph}[1]{{\color{darkgreen}\emph{#1}}}
\newcommand{\myemph}[1]{\emph{#1}}

%%%% PARAMETERS %%%%%%
 % #blocking pairs
 % #blocking agents
 % #size of the stable matching
 % #size of the stable matching
 % #deleted agents

\usepackage{xcolor}
\definecolor{dargray}{rgb}{0.18, 0.18, 0.18}
\definecolor{darkgreen}{rgb}{0.01,0.6,0.1}
\definecolor{lightrose}{rgb}{0.996,0.75,0.793}
\definecolor{rose}{cmyk}{0.75, 0.75, 0,0}
\definecolor{winered}{rgb}{0.6,0.1,0.1}
\definecolor{darkyellow}{rgb}{.99, .87, 0.04}
\definecolor{lightyellow}{rgb}{1, 1, 0.6}
\definecolor{transparent}{rgb}{1,1,1}
\definecolor{lightlightgray}{rgb}{0.88, 0.88, 0.88}
\definecolor{lightgray}{rgb}{0.8, 0.8, 0.8}
\definecolor{lightblue}{rgb}{0.527,0.805,0.977}
\definecolor{lightgreen}{rgb}{.74,1,0}
\definecolor{darkblue}{rgb}{0,0,0.4}

\usepackage{graphicx}

\newcommand{\known}[1]{\textcolor{darkgray}{#1}}

\newcommand{\npremark}{\small $^\clubsuit$}
\newcommand{\aamascite}{\small $^\diamondsuit$}
\newcommand{\presult}[1]{\cellcolor{green!20}#1}

\newcommand{\fptresult}[1]{\cellcolor{green!20}#1}
\newcommand{\xpresult}[1]{\cellcolor{green!20}#1}
\newcommand{\npresult}[1]{\cellcolor{red!20}#1}
\newcommand{\sigmapresult}[1]{\cellcolor{red!40}#1}

\newcommand{\highlight}[1]{\cellcolor{lightgray}#1}
\newcommand{\true}{\mathsf{true}}
\newcommand{\false}{\mathsf{false}}

\newcommand{\lowervec}{\ensuremath{\mathsf{\ell}}}
\newcommand{\uppervec}{\ensuremath{\mathsf{u}}}
\newcommand{\umax}{\ensuremath{\mathsf{u}_{\infty}}}
\newcommand{\lmax}{\ensuremath{\mathsf{\ell}_{\infty}}}
\newcommand{\qmax}{\ensuremath{\mathsf{q}_{\infty}}}

% I changed the symbol to differentiate between a type vector t and an integer t...
\newcommand{\typevec}{\ensuremath{\tau}}
\newcommand{\acset}{\ensuremath{\mathsf{A}}}
\newcommand{\osucc}{\ensuremath{\, }}

\makeatletter
\g@addto@macro\bfseries{\boldmath}
\makeatother

\tikzstyle{blueline} = [thick, blue, dotted]
\tikzstyle{redline} = [line width=2pt, red, dashed]
\tikzstyle{blackline} = [thick, black]

\title{Stable Matchings with Diversity Constraints: Affirmative Action is beyond NP}

%The Complexity of Diversity
%The Diversity Tradeoff
%The Diversity Conundrum
%The Cost of Being Diverse
%Affirmative Action/Ensuring Representation is Harder than Limiting Space for Stable Matchings

%\author{Submission \#3346}
\author{
Jiehua Chen
\and
Robert Ganian
\and
Thekla Hamm
\affiliations
Vienna University of Technology\\
\emails
jiehua.chen@tuwien.ac.at, \{rganian, thamm\}@ac.tuwien.ac.at,
}

%THIS HIDES PROOFS
%\usepackage{environ}
%\NewEnviron{killcontents}{}
%\let\proof\killcontents
%\let\endproof\endkillcontents

\begin{document}

\maketitle

\begin{abstract}
  % After some thinkg, I decided to add ``many-to-one" to the stable matching to avoid ambigouity...
   We investigate the following many-to-one stable matching problem with diversity constraints~(\diverseties): Given a set of students and a set of colleges which have preferences over each other, where the students have overlapping types, and the colleges each have a total capacity as well as quotas for individual types (the diversity constraints),
   %   and  a lower quota and upper quota on each type (the diversity constraints), 
   is there a matching satisfying all diversity constraints such that no unmatched student-college pair has an incentive to deviate?
%   ? In other words, the aim is to obtain a matching which satisfies all the constraints and where each unmatched student-college pair has a reason for being unmatched.
%  
%   , i.e. one without \emph{blocking pairs}?
%  Here, an unmatched student-college pair is \emph{blocking} if matching it (by unmatching some other pairs) results in a strictly better solution for both the student and the college in the pair.
  
  \diverseties{} is known to be NP-hard. 
  However, as opposed to the NP-membership claims in the literature~\cite{AzizGaspersSunWalsh2019aamas,Huang2010classifiedSM}, we prove that it is beyond NP: it is complete for the complexity class~$\Sigma^{\text{P}}_2$.
  %which is widely considered to be strictly larger than \NP.
  In addition, we provide a comprehensive
  %n in-depth and complete\todoR{Is ``complete'' ok, or overselling?}\todoH{Probably we shouldn't say complete... We can use comprehensive. But then it is redundant to in-depth...} 
  analysis of the problem's complexity from the viewpoint of natural restrictions to inputs and obtain new algorithms for the problem.
%  such as bounding the number of colleges, students, types, and/or the maximum upper quota, and the maximum capacity. Among others, this includes several new efficient algorithms\todoR{How about dropping ``efficient''?} as well as an exact classification of which fragments of the problem are \NP-complete and which are $\Sigma^{\text{P}}_2$-complete.
%  %, and a fix for a previously incorrect \NP-hardness proof for the problem.
%%  We show that dropping the lower quota requirement lowers the complexity to being NP-complete; the NP-hardness remains even if there are only a few colleges or a few types.
%%
%%
%%  Our hardness result helps to correct a false claim of  NP-hardness for \diverse even for lower quotas of zero~\cite{AzizGaspersSunWalsh2019aamas}.
%%  Complementing the intractability results, we also identify several tractable special cases, providing efficient algorithms for cases with few students or few colleges and types. 
\end{abstract}

\section{Introduction}\label{sec:intro}
\iflong
%\todoH{Do we want to a unique enumeration of all results or for each environment a counter?}
Stability is a classic and central property of assignments, or \myemph{matchings}, of agents to each other,
describing that no two agents actively prefer each other to their respective situations in the matching. Stability is desirable in many
scenarios and spawned numerous works in various context~\cite{Manlove2013}. 
In this work we investigate the notion of stability in combination with diversity, which is key
in many real-world matching applications, ranging from education, through health-care systems, to job and
housing markets~[\citeauthor{Abdul2005-college-affirmative}, \citeyear{Abdul2005-college-affirmative}; \citeauthor{Huang2010classifiedSM}, \citeyear{Huang2010classifiedSM}; \citeauthor{KamKoj2015}, \citeyear{KamKoj2015}; \citeauthor{KurHamIwakiYok2017controlledschoolchoice}, \citeyear{KurHamIwakiYok2017controlledschoolchoice}; \citeauthor{AhmeDickFuge2017diverseBmatching}, \citeyear{AhmeDickFuge2017diverseBmatching}; \citeauthor{BenChaHoSliZic2019houseallocation}, \citeyear{BenChaHoSliZic2019houseallocation}; \citeauthor{GonNisKovRom2019}, \citeyear{GonNisKovRom2019}; \citeauthor{AzizGaspersSunWalsh2019aamas}, \citeyear{AzizGaspersSunWalsh2019aamas}].
%housing markets~\cite{Abdul2005-college-affirmative,Huang2010classifiedSM,KamKoj2015,KurHamIwakiYok2017controlledschoolchoice,AhmeDickFuge2017diverseBmatching,BenChaHoSliZic2019houseallocation,GonNisKovRom2019,AzizGaspersSunWalsh2019aamas}.
%\todoR{Remove green emph before submission...}

For this we conceptually distinguish two sets---a set of \myemph{students}
which should be matched to a set of \myemph{colleges} (each with a maximum \myemph{capacity} to accommodate students) with the additional
constraint that the set of students matched to any single college has to
be diverse. The diversity requirements are captured by \myemph{types} which are
attributes that a student may or may not have, and \myemph{upper and lower
quotas} that specify how many students of a certain type may be matched to a given college.
The terminology arises from the context of controlled public school choice, a typical application of this paradigm where it is desirable to match colleges to students to ensure stability as well as demographic, socio-economic, and ethnic diversity (see also \emph{affirmative action}).
%Briefly put, a matching is \myemph{stable} if it admits no \myemph{blocking pairs}, i.e., unmatched student-college pairs~$(u,w)$ such that college~$w$ can replace some of its assigned students with student~$u$ to obtain a new matching that is strictly better for both $u$ and $w$ and fulfills $w$' diversity constraints.
% Diversity, in addition to stability, is key in numerous real-world matching applications, ranging from education, through health-care systems, to job and housing markets~\cite{Abdul2005-college-affirmative,KamKoj2015,KurHamIwakiYok2017controlledschoolchoice,AhmeDickFuge2017diverseBmatching,BenChaHoSliZic2019houseallocation,GonNisKovRom2019,AzizGaspersSunWalsh2019aamas}.
% College admission (or controlled public school choice\url{https://wapo.st/2NLQGzw}) is a classical application of this paradigm: students and colleges have preferences over each other, and it is desirable to match colleges to students to ensure stability as well as demographic, socio-economic, and ethnic diversity (aka.\ \emph{affirmative action})\todoR{Unify myemph and emph?}.
% %For instance, in college admission (or controlled public school choice\url{}), where students and colleges have preferences over each other,
% %one may need to match colleges to students to ensure stability as well as demographic, socio-economic, and ethnic diversity (aka.\ affirmative action).
% %%In labor matching market, the companies may want to hire people that together ensure stability, gender balance, and skill diversity.

As an illustration, assume that there are four students~$u_1,\ldots, u_4$ and two colleges~$w_1,w_2$ as depicted in Table~\ref{tab:example}.
%As an illustration, assume that there are five students~$u_1,\ldots, u_5$ and two colleges~$w_1,w_2$ with preferences depicted as follows, where ``$w_1\succ w_2$'' to the left of student~$u_1$ means that $u_1$ strictly prefers~$w_1$ to~$w_2$:
\noindent
\begin{table}
\begin{tikzpicture}[scale=.95, every node/.style={scale=.95}]
  \matrix[profilestyle, column 1/.style={nodes={text width=5.5ex, align=center, minimum height=3ex}}, column 2/.style={nodes={align=left}}, column 8/.style={nodes={text width=3.5ex, align=center,minimum height=3ex}}] (ex) 
  {\node (T) {Types}; \& \node (SPref) {S.Pref.}; \& \node[] (S) {S.}; \& \node[text width=4ex] {}; \& \node (C) {C.}; \& \node (CPref) {C.Pref.}; \& \node (Q) {Quotas}; \& \node (C) {C.}; \\
    \node (T1) {--L}; \& \node (prefu1) {$w_1 \!\succ\! w_2$}; \& \node[] (u1) {$~u_1~$};
    \& \&  \\
    \node (T2) {--L}; \& \node (prefu2) {$w_1 \!\succ\! w_2$}; \& \node[] (u2) {$~u_2~$}; \& \& \node[] (w1) {$~w_1~$};  \& \node (prefw1) {$u_3 \!\succ\! u_1 \!\succ\! u_2$}; \& \node (LQ1) {\small $\ge$1F, $\ge$1L}; \& \node (q1) {$2$}; \\
    \node[text width=5ex] (T3) {FL}; \& \node (prefu3) {$w_2\!\succ\!w_1$}; \& \node[] (u3) {$~u_3~$};
     \& \& \node[] (w2) {$~w_2~$};  \& \node (prefw2) {$u_1\!\succ\!u_3\!\succ\!u_4 \!\succ\! u_2$};  \& \node (LQ2) {\small =1F, =1L}; \& \node[] (q2) {$2$}; \\
     \node (T4) {F--}; \& \node (prefu4) {$w_2$}; \& \node[] (u4) {$~u_4~$}; \& \& \& \& \& \node (d) {};\\
  };
     
  \foreach \s/\t in { u1/w2,
    u2/w1,
    u3/w1,u4/w2%
  } {
    \draw[redline] ($(\s.east)+(-.1,0)$) -- ($(\t.west)+(0.1,0)$);
  }

  \foreach \s/\t in {u1/w1,u3/w1,u2/w2,u4/w2} {
    \draw[blackline] ($(\s.east)+(-0.1,0)$) -- ($(\t.west)+(0.1,0)$);
  }
  
  % \draw[] ($(T.north west)$) -- ($(T4.south west)$);
  % \draw[] ($(T.north west)$) -- ($(C.north east)$);
  % \draw[] ($(d.south east)$) -- ($(C.north east)$);
  % \draw[] ($(d.south east)$) -- ($(T4.south west)$);
\end{tikzpicture}
\vspace{-0.5cm}
\caption{Each of the students $u_1,\ldots,u_4$ has preferences over the two colleges (where $w_1\succ w_2$ indicates that $w_1$ is preferred to $w_2$) and their types (\textbf{F}emale, \textbf{L}ocal). Each of the colleges has preferences over the students, \myemph{Quotas} for individual types and a \myemph{Capacity} to accommodate students.}
\label{tab:example}
\vspace{-0.5cm}
\end{table}
%% \begin{alignat*}{4}
%%   u_1 \colon & w_1 \succ w_2, \qquad & w_1\colon& u_3 \succ u_1 \succ u_2,\\
%%   u_2 \colon & w_1 \succ w_2,  & w_2\colon& u_1 \succ u_3 \succ u_4 \succ u_2,\\
%%   u_3 \colon & w_2 \succ w_1, \\
%%   u_4 \colon & w_2.
%% \end{alignat*}
%Both~$w_1$ and~$w_2$ have a \myemph{capacity} of two to accommodate students~(see the last column).
In terms of \myemph{classical stability}, one could match~$u_1$ and~$u_2$ to~$w_1$, and~$u_3$ and~$u_4$ to~$w_2$,
without inducing any \myemph{blocking pairs}, i.e., an unmatched student-college pair~$\{u,w\}$
%\todoH{Might look better to have an ordered pair~$(u,w)$ instead of $\{u,w\}$} R: Yes, but I doubt it's worth changing it just for the arxiv version.
such that
\begin{inparaenum}[(1)]
  \item $u$ is unmatched or strictly prefers~$w$ to its assigned college~$M(u)$ and
  \item $w$ can either accommodate~$u$ (without exceeding the capacity) or strictly prefers~$u$ to at least one member in $M(w)$.
\end{inparaenum}
%Assume that~$u_3$ is a female (F) and local (L) student,~$u_4$ is a female foreign student while~$u_1$ and $u_2$ are male local students.
However, to ensure affirmative action, among all assignees to~$w_1$ there must be \emph{at least one} female student and \emph{one} local student,
and~$w_2$ must receive \emph{exactly one} female student and \emph{exactly one} local (as indicated by the respective quotas in Table~\ref{tab:example}).
A \myemph{feasible} matching~$M_1$ (see the black solid lines) could be to match~$u_1$ and~$u_3$ to~$w_1$, and~$u_2$ and $u_4$ to~$w_2$.
However, this is \emph{not} stable since $w_2$ may substitute its assignees~$u_2$ and~$u_4$ with~$u_3$ so as to obtain a more preferred student while maintaining its diversity constraints.
Another feasible matching~$M_2$~(see the red dashed lines), which matches~$u_2$ and~$u_3$ to~$w_1$, and~$u_1$ and~$u_4$ to~$w_2$, fulfills diversity constraints and is stable (see \cref{sec:prelim} for formal definitions).
Note that stable matchings fulfilling the diversity constraints are not guaranteed to exist.
For instance, if student~$u_2$ in the above example does not find $w_1$ acceptable at all, then matching~$M_1$ is the only feasible matching with diversity.
However, it is not stable.
\fi
\ifshort
Stability is a classic and central property of assignments, or \myemph{matchings}, of agents to each other,
describing that no two agents actively prefer each other to their respective situations in the matching. Stability is desirable in many
scenarios and spawned numerous works in various context~\cite{Manlove2013}. 
In this work we investigate the notion of stability in combination with diversity, which is key
in many real-world matching applications, ranging from education, through health-care systems, to job and
housing markets~[\citeauthor{Abdul2005-college-affirmative}, \citeyear{Abdul2005-college-affirmative}; \citeauthor{Huang2010classifiedSM}, \citeyear{Huang2010classifiedSM}; \citeauthor{KamKoj2015}, \citeyear{KamKoj2015}; \citeauthor{KurHamIwakiYok2017controlledschoolchoice}, \citeyear{KurHamIwakiYok2017controlledschoolchoice}; \citeauthor{AhmeDickFuge2017diverseBmatching}, \citeyear{AhmeDickFuge2017diverseBmatching}; \citeauthor{BenChaHoSliZic2019houseallocation}, \citeyear{BenChaHoSliZic2019houseallocation}; \citeauthor{GonNisKovRom2019}, \citeyear{GonNisKovRom2019}; \citeauthor{AzizGaspersSunWalsh2019aamas}, \citeyear{AzizGaspersSunWalsh2019aamas}].
%housing markets~\cite{Abdul2005-college-affirmative,Huang2010classifiedSM,KamKoj2015,KurHamIwakiYok2017controlledschoolchoice,AhmeDickFuge2017diverseBmatching,BenChaHoSliZic2019houseallocation,GonNisKovRom2019,AzizGaspersSunWalsh2019aamas}.
\todoR{Remove green emph before submission...}

For this we conceptually distinguish two sets---a set of \myemph{students}
which should be matched to a set of \myemph{colleges} (each with a maximum \myemph{capacity} to accommodate students) with the additional
constraint that the set of students matched to any single college has to
be diverse. The diversity requirements are captured by \myemph{types} which are
attributes that a student may or may not have, and \myemph{upper and lower
quotas} that specify how many students of a certain type may be matched to a given college.
The terminology arises from the context of controlled public school choice, a typical application of this paradigm where it is desirable to match colleges to students to ensure stability as well as demographic, socio-economic, and ethnic diversity (see also \emph{affirmative action}).
%Briefly put, a matching is \myemph{stable} if it admits no \myemph{blocking pairs}, i.e., unmatched student-college pairs~$(u,w)$ such that college~$w$ can replace some of its assigned students with student~$u$ to obtain a new matching that is strictly better for both $u$ and $w$ and fulfills $w$' diversity constraints.
\fi

The study of stable matchings with diversity constraints was initiated by Abdulkadiro\v glu~\shortcite{Abdul2005-college-affirmative} in the context of college admissions.
It has since become an ongoing and actively researched topic among economists and computer scientists, covered for example by two chapters
~\cite{Heo2019equity-diversity,Kojima2019constraints} in the recently published book ``On the Future of Economic Design''~\cite{LasMouSanZwi2019future}.
% Added ``one of the'' to make our proposition less ``open to attack''
One of the fundamental questions in this area is whether there is a diverse and stable matching between students and colleges; %a given set of students and colleges;
the corresponding computational problem is called \diverseties~(see \cref{sec:prelim} for formal definitions).
%Due to its relevance and popularity, it is of particular importance to be able to efficiently decide whether there are diverse and stable matchings; the corresponding computational problem is called \diverseties~(see \cref{sec:prelim} for formal definitions).

As has already been observed in the pioneering work of Aziz, Gaspers, Sun and Walsh~\shortcite{AzizGaspersSunWalsh2019aamas} and hinted at in Huang's earlier work on a closely related problem~\shortcite{Huang2010classifiedSM}, \diverseties\ is \NP-hard.
%literature~\cite{Huang2010classifiedSM,AzizGaspersSunWalsh2019aamas}, this is an NP-hard problem.
The authors further claimed that the problem(s) under consideration belong to NP (see also \cite[Chapter 5.2.5]{Manlove2013}).
We disprove this claim by presenting an involved reduction showing that the problem is in fact complete for the complexity class~$\Sigma^{\text{P}}_2$~(\cref{thm:diverse:lq>0:sigma2p-c}), even under severe restrictions to the input instances.
%for preferences without ties and for only a constant number of colleges, implying that it cannot lie in NP.
%Hence, the problem is substantially more difficult than all NP problems, and is as difficult as the hardest problems in~$\Sigma^{\text{P}}_2$,
%unless the polynomial hierarchy collapses, which is considered to be extremely unlikely in computational complexity theory.

Complementing this hardness finding, we systematically analyze the complexity of the problem by considering natural relaxations (such as dropping lower quotas or dropping stability) or restrictions (such as bounding the number~$n$ of students, the number~$t$ of types, the number~$m$ of colleges, and/or the maximum upper quota~$\umax$, and the maximum capacity~$\qmax$). The outcome of our analysis is a full classification of the complexity of \diverseties\ w.r.t.\ the considered restrictions and relaxations, presented in Table~\ref{tab:overview}.
We highlight three key technical contributions of our work:

\newcommand{\The}[1]{{\small Th~\ref{#1}}}
\newcommand{\Pro}[1]{{\small Pr~\ref{#1}}}
\newcommand{\Cor}[1]{{\small Co~\ref{#1}}}
\newcommand{\Obs}[1]{{\small Ob~\ref{#1}}}
\iflong
\begin{table}[t!]
  \centering
 \extrarowheight=.8\aboverulesep
  \addtolength{\extrarowheight}{\belowrulesep}
  \aboverulesep=0pt
  \belowrulesep=0pt
  \resizebox{\columnwidth}{!}{
\begin{tabular}{|@{\,}l|l@{}>{\columncolor{white}[0pt][\tabcolsep]}l@{}|l@{}l|l@{}l@{}>{\columncolor{white}[\tabcolsep][2pt]}l@{\,}|}
  \toprule
  Problems & \multicolumn{2}{@{\;}c@{\;}|}{\feasible} & \multicolumn{5}{c|}{\diverseties}  \\\midrule
  Constraints  & \multicolumn{2}{c|}{$(\lmax \geq 0)$} & \multicolumn{2}{c|}{$(\lmax \geq 0)$} &  \multicolumn{3}{c|}{$(\lmax = 0)$} \\\midrule
  Complexity & \multicolumn{2}{c|}{\known{NP-c\aamascite}} &  \sigmapresult{$\Sigma^{\text{P}}_2$-c} & \sigmapresult{[\The{thm:diverse:lq>0:sigma2p-c}]} & \multicolumn{2}{l}{\npresult{NP-c}} & \npresult{[\The{thm:diverse:NP-c:t=2}]} \\\hline
%  &&&&&&\\[-2ex]
  $m+\umax$ & \multicolumn{2}{c|}{\known{NP-c\aamascite}} & \sigmapresult{$\Sigma^{\text{P}}_2$-c}  & \sigmapresult{[\The{thm:diverse:lq>0:sigma2p-c}]} & \multicolumn{2}{l}{\npresult{NP-c}} & \npresult{[\The{thm:lq=0:m=4:diverse-NP-c}]} \\\hline
  $t+\umax+\qmax$ & \npresult{NP-c} & \npresult{[\Pro{prop:feasible-t+umax+qmax-NP-h}]}  & \npresult{NP-c\npremark} & \npresult{[\The{thm:diverse:NP-c:t=2},\Obs{obs:t-or-qmax-NP}]}  & \multicolumn{2}{l}{\npresult{NP-c}} & \npresult{[\The{thm:diverse:NP-c:t=2}]} \\\hline
%  $n$ & \multicolumn{5}{l}{\fptresult{FPT: exponential-size kernel, but no poly-kernel}} & \fptresult{[\The{thm:fptn},\Pro{prop:nopoly-n}]}\\
  $n$ & \presult{P} & \fptresult{[\The{thm:fptn}]}& \presult{P} & \fptresult{[\The{thm:fptn}]}& \multicolumn{2}{l}{\presult{P}} & \fptresult{[\The{thm:fptn}]}\\\hline
%        \fptresult{FPT: kernel but no poly-kernel} & \fptresult{[\The{thm:fptn}]}  & \presult{FPT: kernel but no poly-kernel} & \fptresult{[\The{thm:fptn}]}  & \fptresult{FPT: kernel but no poly-kernel} & \presult{[\The{thm:fptn}]}  \\ \hline
  $m + t$ & \presult{P} & \xpresult{[\Cor{cor:feasible-m+t-fpt}]} & \presult{P} &\xpresult{[\The{thm:diverse-m+t-XP}]} & \multicolumn{2}{l}{\presult{P}} &\xpresult{[\The{thm:diverse-m+t-XP}]} \\\hline
  $m+\qmax$ & \presult{P} & \xpresult{[\Pro{prop:diverseties-m+qmax:XP}]}  & \presult{P} & \xpresult{[\Pro{prop:diverseties-m+qmax:XP}]}  & \multicolumn{2}{l}{\presult{P}} & \fptresult{[\Pro{prop:diverseties-m+qmax:XP}]}\\ %\hline
%   & \wresult{W[2]-h} & \wresult{[\Pro{prop:feasible-m+qmax:W2-h}]}  & \wresult{W[2]-h} & \wresult{[\Pro{prop:feasible-m+qmax:W2-h}]}  & \fptresult{FPT for no ties,} & \wresult{W[1]-h for ties} & \wresult{[\Pro{prop:diverse:lamx=0-noties:m+qmax:FPT}+\ref{prop:diverseties:m+qmax:w[1]-h}]}\\
   % \;$t=1$ & \textcolor{blue}{P (Fill up greedily)} & & ? & & \textcolor{blue}{P (Modified-GS-Alg)} & \\\hline
  \bottomrule
 \end{tabular}}
\caption{A complete picture of the complexity results for \feasible and \diverseties (see \cref{sec:prelim} for the definitions).
  Results marked with \aamascite{} are due to~\protect\cite[Proposition~5.1]{AzizGaspersSunWalsh2019aamas} while the remaining ones are new.
  All hardness results hold even for preferences with \emph{no} ties, even if the corresponding measures are upper-bounded by a constant. 
%    parameters have constant values.{Hua: Not sure whether we should use the word parameter here.}
    The NP-containment result marked with \npremark{} holds already when either $t$ or $\qmax$ is a constant.
  The problem variants for which fixed-parameter tractability results (FPT) exist do not admit polynomial-size kernels~(see \cref{prop:nopoly-n,prop:nopoly-m+t+qmax}).}\label{tab:overview} 
\end{table}
\fi
\ifshort
\begin{table}
  \extrarowheight=.8\aboverulesep
  \addtolength{\extrarowheight}{\belowrulesep}
  \aboverulesep=0pt
  \belowrulesep=0pt
  \resizebox{\columnwidth}{!}{
%   \begin{tabular}{|@{}p{1.45cm}|@{}>{\columncolor{white}[0pt][\tabcolsep]}p{.94cm}@{}p{1cm}|p{.96cm}@{\;\,}p{1.3cm}|p{.78cm}@{\;\,}>{\columncolor{white}[\tabcolsep][0pt]}l@{}|}
    \begin{tabular}{|@{}l|@{}>{\columncolor{white}[0pt][\tabcolsep]}l@{\;\;}>{\columncolor{white}[0pt][\tabcolsep]}l@{}|@{}>{\columncolor{white}[0pt][\tabcolsep]}l@{\;}>{\columncolor{white}[0pt][\tabcolsep]}l@{}|@{}>{\columncolor{white}[0pt][\tabcolsep]}l@{\;\;}>{\columncolor{white}[\tabcolsep][0pt]}l@{}|}
%    \begin{tabular}{|@{}p{1.45cm}|@{}>{\columncolor{white}[0pt][\tabcolsep]}l@{}l@{}|@{\;\;}p{.96cm}@{\;}p{1.3cm}|p{.78cm}@{\;\,}>{\columncolor{white}[\tabcolsep][0pt]}l@{}|}
    \toprule
  Problems & \multicolumn{2}{@{\;}c@{\;}|}{\feasible} & \multicolumn{4}{c|}{\diverseties}  \\\midrule
  Constraints  & \multicolumn{2}{c|}{$(\lmax \geq 0)$} & \multicolumn{2}{c|}{$(\lmax \geq 0)$} &  \multicolumn{2}{c|}{$(\lmax = 0)$} \\\midrule
  Complexity & \multicolumn{2}{c|}{\known{~NP-c\aamascite}} &  \sigmapresult{~$\Sigma^{\text{P}}_2$-c} & \sigmapresult{[\The{thm:diverse:lq>0:sigma2p-c}]} & \npresult{~NP-c} & \npresult{[\The{thm:diverse:NP-c:t=2}]} \\\hline
%  &&&&&&\\[-2ex]
  $m+\umax$ & \multicolumn{2}{c|}{\known{~NP-c\aamascite}} & \sigmapresult{~$\Sigma^{\text{P}}_2$-c}  & \sigmapresult{[\The{thm:diverse:lq>0:sigma2p-c}]} & \npresult{~NP-c} & \npresult{[\The{thm:lq=0:m=4:diverse-NP-c}]} \\\hline
  $t\!+\!\umax\!+\!\qmax$ & \npresult{~NP-c} & \npresult{[\Pro{prop:feasible-t+umax+qmax-NP-h}]}  & \npresult{~NP-c\npremark} & \npresult{[\The{thm:diverse:NP-c:t=2},\Obs{obs:t-or-qmax-NP}]}  & \npresult{~NP-c} & \npresult{[\The{thm:diverse:NP-c:t=2}]} \\\hline
  $n$ & \presult{~P} & \presult{[\The{thm:fptn}]}  & \presult{~P} & \presult{[\The{thm:fptn}]}  & \presult{~P} & \presult{[\The{thm:fptn}]}  \\ \hline
  $m + t$ & \presult{~P} & \presult{[\The{thm:diverse-m+t-XP}]} & \presult{~P} &\presult{[\The{thm:diverse-m+t-XP}]} & \presult{~P} &\presult{[\The{thm:diverse-m+t-XP}]} \\\hline
    $m+\qmax$ & \presult{~P} & \presult{[\Pro{prop:diverseties-m+qmax:XP}]}  & \presult{~P} & \presult{[\Pro{prop:diverseties-m+qmax:XP}]}  & \presult{~P} & \presult{[\Pro{prop:diverseties-m+qmax:XP}]}\\ 
\bottomrule
\end{tabular}}
\caption{A complete picture of the complexity results for \feasible and \diverseties (see \cref{sec:prelim} for the definitions). Results marked with \aamascite{} are due to~\protect\cite[Prop~5.1]{AzizGaspersSunWalsh2019aamas} while the remaining ones are new.
  All hardness results hold even for preferences with \emph{no} ties, even if the corresponding measures are upper-bounded by a constant.
%    parameters have constant values.{Hua: Not sure whether we should use the word parameter here.}
  The NP-containment result marked with \npremark{} holds already when either $t$ or $\qmax$ is a constant~(see \cref{obs:t-or-qmax-NP}).\vspace{-0.5cm}}\label{tab:overview} 
\end{table}
\fi
\iflong
\begin{compactenum}[(1)]
%\begin{enumerate}
  \item \diverseties{} is $\Sigma^{\text{P}}_2$-complete even when the preferences do not have ties and there are only four colleges, while two natural relaxations of the problem (either dropping the lower quotas or the stability requirement) lower the complexity to being NP-complete.
  \item When (a) the number of students, (b) the number of colleges and types, or (c) the number of colleges and maximum capacity is bounded by a constant, \diverseties\ can be solved in polynomial time. 
  %  $n$, or $m+t$, or $m+\qmax$ is a constant, \diverseties{} can be solved in polynomial time.
%  The algorithm for a constant number~$n$ of students is based on the observation that both the number of types and the number of colleges can be upper-bounded by a function in~$n$.
%  The algorithm for constant~$m+t$ is based on dynamic programming to assign students one by one while maintaining the diversity constraints.
%  The algorithm for constant~$m+\qmax$ is based on a simple branching algorithm.
  \item \diverseties{} is NP-complete even when the preferences do not have ties, lower quotas are all zero, and the number of types and the maximum upper quota and maximum capacity are bounded by a constant. Our reduction showing this result also fixes a technical flaw 
%  $t+\umax+\qmax$ is a constant. Our reduction for this result can be used to fix a technical flaw 
in the reduction presented in Aziz et al.'s recent work~\cite[Proposition~5.3]{AzizGaspersSunWalsh2019aamas}.
%\end{enumerate}
\end{compactenum}
\else
%\begin{comment}
\noindent (1) \diverseties\ is $\Sigma^{\text{P}}_2$-complete even when the preferences do not have ties and $m=4$, while two natural relaxations of the problem (either dropping the lower quotas or the stability requirement) lower the complexity to NP-complete.\\
\noindent (2) When $n$, $m+t$, or $m+\qmax$ is a constant,  \diverseties\ can be solved in polynomial time. \\
% I changed back the above result to use symbols to comply with the other enums (The definitions of the symbols are right before the enums). Also, we have to emphasize that there are no zero quotas.
%the number of types and the maximum upper quota and maximum capacity are bounded by
%a constant.
\noindent (3) \diverseties\ is NP-complete even if lower quotas are all zero and \(t+\umax+\qmax\) is a constant. 
This result also fixes a technical flaw in a proof presented in earlier work~[Aziz et al., \citeyear{AzizGaspersSunWalsh2019aamas}].
%\end{comment}
%Aziz et al.'s work~\shortcite[Prop~5.3]{AzizGaspersSunWalsh2019aamas}.
%\end{enumerate}
\fi

\ifshort
\paragraph{Related Work.}
For one type, \diverseties is equivalent to the Hospitals/Residents with Lower Quotas problem where no hospital is allowed to be closed (HR-LQ-2), as studied by \cite{HamadaIwamaMiyazaki2016algorithmica}.
This problem is polynomial-time solvable when no ties are allowed~\cite[Chapter 5.2.3]{Manlove2013}.
We show that \diverseties{} becomes NP-hard even for only two types.
%The problem variant (HR-LQ-1) where each hospital is allowed to be closed (i.e., receive no residents) was introduced by \cite{BiroFleinerIrvingManlove2010} and  proven to be NP-complete even if each upper and lower quota is equal to three. Our \diverseties{} is different from this problem as we do not allow colleges to be closed.

Huang~\shortcite{Huang2010classifiedSM} introduced the closely related \CSM~(CSM) problem, which asks for a matching that fulfills the diversity constraints and does not admit \emph{blocking coalitions}. %While blocking coalitions and blocking pairs are not comparable in general, 
We show that our $\Sigma^{\text{P}}_2$-hardness reduction can be adapted to show that CSM is also $\Sigma^{\text{P}}_2$-complete.

Aziz et al.~\shortcite{AzizGaspersSunWalsh2019aamas} studied school choice with diversity constraints, but with a slightly different stability condition: an unmatched student-college pair~$\{u,w\}$ is \myemph{d-blocking} if it is a blocking pair (in our sense) and the new solution fulfills the diversity constraints for \emph{all} remaining colleges instead of~$w$ only.
This means that a d-blocking pair is also a blocking pair, but the converse is not true. 
However, dropping the lower quotas requirements renders both concepts equivalent.
Our model of blocking pairs is a natural extension of the HR-LQ-2 problem where a student and a college already form a blocking pair once the new solution is better for them, regardless of whether the other colleges' lower quotas.
Such kind of model assumes that blocking condition is tested based on local information of whether the deviating college's diversity constraints are fulfilled after the rematching.
This is a standard assumption in many controlled school choice articles~\cite{Abdul2005-college-affirmative,KurHamIwakiYok2017controlledschoolchoice,HamadaIwamaMiyazaki2016algorithmica}.
       %        \todoR{Can we add some justification/reference for why selfish is classical?}
Nevertheless, our $\Sigma^{\text{P}}_2$-hardness reduction also establishes the same hardness of their variant.
%shows that problem is hard for $\Sigma^{\text{P}}_2$.

Other related work includes recent papers
by Nguyen and Vohra~\shortcite{NguVoh2019} (who studied stable matching with proportionality constraints on the lower quotas), Kurata et al.~\shortcite{KurHamIwakiYok2017controlledschoolchoice} (who studied stable matching with overlapping types and consider the diversity constraints as soft bounds), and Ismaele et al.~\shortcite{IsmHamZhaSuzYok2019} (who introduced and studied weighted matching markets with budget constraints).

\smallskip

{\noindent {\emph{%
      Some proofs are replaced with proof sketches due to space constraints.
      We can provide them immediately upon request.
    }}}

\fi

\iflong
\subsection{Related Work}
%\todoR{Didn't read/touch related work... it probably will drop to a few sentences in the submission}
If the number of types is equal to one, then \diverseties is equivalent to the Hospitals/Residents with Lower Quotas problem where no hospital is allowed to be closed (HR-LQ-2), as studied by \cite{HamadaIwamaMiyazaki2016algorithmica}.
This problem is polynomial-time solvable when no ties are allowed~\cite[Chapter 5.2.3]{Manlove2013}.
We show that \diverseties{} becomes NP-hard even for only two types.
The problem variant (HR-LQ-1) where each hospital is allowed to be closed (i.e., receive no residents) was introduced by \cite{BiroFleinerIrvingManlove2010} and  proven to be NP-complete even if each upper and lower quota is equal to three. Our \diverseties{} is different from this problem as we do not allow colleges to be closed.

Huang~\shortcite{Huang2010classifiedSM} introduced the closely related \CSM~(CSM) problem, which asks for a matching that fulfills the diversity constraints and does not admit the so-called \emph{blocking coalitions}.
Huang showed that CSM is NP-hard and further claimed that it is in NP. However, while blocking coalitions and blocking pairs are not comparable in general, we show that our $\Sigma^{\text{P}}_2$-hardness reduction can be adapted to show that CSM is indeed beyond NP: it is also $\Sigma^{\text{P}}_2$-complete.

Aziz et al.~\shortcite{AzizGaspersSunWalsh2019aamas} studied school choice with diversity constraints, but with a slightly different stability condition: an unmatched student-college pair~$\{u,w\}$ is \myemph{d-blocking} if it is a blocking pair (in our sense) and the new solution fulfills the diversity constraints for \emph{all} remaining colleges instead of~$w$ only.
This means that a d-blocking pair is also a blocking pair, but the converse is not true. 
However, dropping the lower quotas requirements renders both concepts equivalent.
Our model of blocking pairs is a natural extension of the HR-LQ-2 problem where a student and a college already form a blocking pair once the new solution is better for them, regardless of whether the other colleges' lower quotas.
Such kind of model assumes that blocking condition is tested based on local information of whether the deviating college's diversity constraints are fulfilled after the rematching.
This is a standard assumption in many controlled school choice articles~\cite{Abdul2005-college-affirmative,KurHamIwakiYok2017controlledschoolchoice,HamadaIwamaMiyazaki2016algorithmica}.
       %        \todoR{Can we add some justification/reference for why selfish is classical?}
Nevertheless, our $\Sigma^{\text{P}}_2$-hardness reduction also establishes the same hardness of their variant.

\citeauthor{Abdul2005-college-affirmative}~\shortcite{Abdul2005-college-affirmative}, \citeauthor{HafTYenYil2013}~\shortcite{HafTYenYil2013}, \citeauthor{Tomoeda2018typespecific-sm}~\shortcite{Tomoeda2018typespecific-sm}, and \citeauthor{Bo2016typespecific-diverse}~\shortcite{Bo2016typespecific-diverse} studied \diverseties{} for the specific case where every student and every college has complete and strict preferences and every student belongs to exactly one type.

\citeauthor{NguVoh2019}~\shortcite{NguVoh2019} studied stable matching with proportionality constraints on the lower quotas.
\citeauthor{KurHamIwakiYok2017controlledschoolchoice}~\shortcite{KurHamIwakiYok2017controlledschoolchoice} study stable matching with overlapping types and consider the diversity constraints as soft bounds.
\citeauthor{IsmHamZhaSuzYok2019}~\shortcite{IsmHamZhaSuzYok2019} introduce and study weighted matching markets with budget constraints in labor market, where the students have preferences over the colleges and have wage requirements, and the colleges have preferences over subsets of the students and each have a budget limit.
The goal is to decide whether there is a coalitionally stable matching, a matching which satisfies the budget constraints and does not admit blocking coalitions.
Our \diverseties{} problem differs from theirs in two ways: (1) while they allow colleges to specify preferences over subsets of students which increases the complexity in the input, we do not; (2) they allow both the students and the colleges to have some form of weights, we do not. They show that their problem is in general $\Sigma^{\text{P}}_2$-complete. The reduction, however, heavily utilizes (1) and (2). Thus, it is not obvious how to encode numbers and preferences over subsets to make their reduction adaptable for us.

Finally, we mention that work in combination with diversity has been done on hedonic games~\cite{BreElkIga2019}, multi-winner elections~\cite{BreFalIgaLacSko2018}, and matchings~\cite{AhmeDickFuge2017diverseBmatching}.

\fi

\section{Preliminaries}\label{sec:prelim}
Given an integer~$z$, we use~\myemph{$[z]$} to denote the set $\{1,2,\ldots, z\}$.
% and $[z]_0$ to denote the sets~$\{1,2,\ldots, z\}$ and $\{0,1,2,\ldots, z\}$, respectively.
Given two integer vectors~$\boldsymbol{x}, \boldsymbol{y}$ of the same dimension, i.e., $\boldsymbol{x},\boldsymbol{y} \in \mathds{Z}^{z}$ for a non-negative integer~$z$, we write \myemph{$\boldsymbol{x}\le \boldsymbol{y}$} if for each index~$i\in [z]$ it holds that $\boldsymbol{x}[i]\le \boldsymbol{y}[i]$; otherwise, we write \myemph{$\boldsymbol{x} \not\le \boldsymbol{y}$.} 
\iflong

\fi
A \myemph{preference list}~$\succeq$ over a set~$A$ is a complete, transitive binary relation on~$A$. 
We use \myemph{$\succ$} to denote the asymmetric part ({i.e.}, $x\succeq y$ and $\neg (y\succeq x)$)
and \myemph{$\sim$} to denote the symmetric part of $\succeq$ ({i.e.}, $x\succeq y$ and $y \succeq x$).
We say that~$x$ is \myemph{(strictly) preferred} (resp.\ \myemph{weakly preferred}) to~$y$ if $x\succ y$ (resp.\ $x\succeq y$),
and that $x$ and $y$ are \myemph{tied} in $\succeq$ if $x\sim y$; $\succeq$ is said to \myemph{contain ties} in this case.
We write~$[A]$ to denote an arbitrary but fixed linear order on~$A$ (i.e., a preference list without ties).
The expression ``\myemph{$x\succeq Y$}'' (resp.\ ``\myemph{$x\succ Y$}'') means that $x$ is weakly (resp.\ strictly) preferred to every one in $Y$.
%\todoR{Is the ``green emph'' allowed for submission?}

\paragraph{Problem-Specific Terminology.}
The problem we study has as input a set $T\coloneqq [t]$ of types,
a set $U\coloneqq \{u_1,u_2,\ldots, u_n\}$ of $n$~students and a set $W\coloneqq \{w_1,w_2,\ldots,w_m\}$ of $m$~colleges together with the following information.
%Let $T\coloneqq [t]$ be a set of types,
%$U\coloneqq \{u_1,u_2,\ldots, u_n\}$ be a set of $n$~students and let $W\coloneqq \{w_1,w_2,\ldots,w_m\}$ be a set of $m$~colleges such that the following holds.
\iflong

\fi
\noindent Each student~$u$$\in$\,$U$ has
\ifshort\begin{inparaenum}[(i)]\else \begin{compactenum}[(i)]\fi
    \item a \myemph{preference list~$\succeq_{u}$} over a subset~$\acset(u)\subseteq W$ of the colleges, 
    and
    \item a \myemph{type vector~$\typevec_{u}$ $\in \{0,1\}^t$}, where $\typevec_{u}[z]=1$ means that $u$ has type~$z$.
    \ifshort\end{inparaenum} \else\end{compactenum}\fi
Each college~$w\in W$ has
\ifshort\begin{inparaenum}[(i)]\else \begin{compactenum}[(i)] \fi
    \item a \myemph{preference list~$\succeq_{w}$} over a subset~$\acset(w) \subseteq U$ of the students, and
    \item a \myemph{lower-quota} and \myemph{upper-quota} for each type which is described, respectively, via the vectors~\myemph{\(\lowervec_{w}\)} and \myemph{\(\uppervec_{w}\)} \(\in [n]^t\), where $\lowervec_{w}\le \uppervec_{w}$, and
    \item a \myemph{capacity~\(q_{w}\)} \(\in [n]\) which is the maximum number of students allowed to be admitted to~$w$. Note that while the capacity can be modeled by introducing an extra type, leaving it separate from types allows for a more refined analysis of the problem's complexity.
    % \todoH{Can we avoid footnotes?}
    \ifshort\end{inparaenum}\else \end{compactenum}
We say that the input contains \myemph{ties} if it contains a preference list~$\succeq$ with ties; otherwise it contains no ties. 
\fi

% \iflong
% \begin{enumerate}[(i)]
%   \item Each student~$u_i$ has
%   \begin{enumerate}[(a)]
%     \item a \myemph{preference list~$\succeq_{u_i}$} over a subset~$\acset(u_i)\subseteq W$ of the colleges, and 
%     \item a \myemph{type vector}~$\typevec_{u_i}\in \{0,1\}^t$, where $\typevec_{u_i}[z]=1$ means that $u_i$ has type~$z$. 
%   \end{enumerate}
%   \item Each college~$w_j$ has
%   \begin{enumerate}
%     \item a \myemph{preference list~$\succeq_{w_j}$} over a subset~$\acset(w_j) \subseteq U$ of the students, and
%     \item a \myemph{lower-quota} and \myemph{upper-quota} for each type which is described, respectively, via the vectors~\myemph{$\lowervec_{w_j}, \uppervec_{w_j} \in [n]^t$}, where $\lowervec_{w_j}\le \uppervec_{w_j}$, %, i.e., for each $z\in [t]$ it holds that $\lowervec_{w_j}(z)\le \uppervec_{w_j}(z)$.  
%     and
%     \item a \myemph{capacity~$q_{w_j} \in [n]$} which is the maximum number of students allowed to be admitted to~$w_j$.\footnote{Note that while the capacity can be completely modeled by the inclusion of an extra type, it has typically served a different purpose than types in instances. Moreover, leaving it separate from types allows for a more refined analysis of the problem's complexity.}
%     %\todoR{$q$ should probably disappear, since it can be completely modeled by $\uppervec$ (and we should mention that)... hence all it does is make the problem formulation unnecessarily more bloated.}
%   \end{enumerate}
% \end{enumerate}
%\fi

For each $x\in U\cup W$, we call \myemph{$\acset(x)$} the \myemph{acceptable set} of~$x$, which contains all students or colleges that are acceptable to~$x$.
%agent~$x$ (either student or college), we call $\acset(x)$ the acceptable set of $x$, which contains all agents that are acceptable to~$x$.
Throughout the paper, we assume that no student or college has an empty acceptable set, and
for each student~$u$ and each college~$w$
it holds that $u\in \acset(w)$ iff.\ $w\in \acset(u)$.

A \myemph{matching}~$M$ is a set of student-college pairs of the form~$\{u,w\}$, where each student~$u$ is involved in \emph{at most one} pair in~$M$ and $w \in \acset(u)$. 
%\todoR{Why is this called a matching if it's many-to-one? What about ``assignment''?}
If $\{u,w\}\in M$, then we say that $u$ and $w$ are \myemph{assigned} to each other by~$M$.
Slightly abusing the notation, given a student~$u\in U$ if there exists a college~$w\in W$ with $\{u,w\}\in M$, then we let \myemph{\(M(u)\)} \(\coloneqq w\); otherwise we let $M(u)\coloneqq \bot$.
We assume that each student~$u$ prefers an acceptable college~$w\in \acset(u)$ to~$\bot$.
Similarly, given a college~$w\in W$, we write \myemph{\(M(w)\)} \(\coloneqq \{u\mid \{u, w\} \in M\}\) to denote the set consisting of all students assigned to~$w$ by~$M$.

\paragraph{Feasible and Stable Matchings.}
A matching~$M$ is \myemph{feasible} for an instance~$(U, W, T, (\typevec_u,\succeq_u)_{u\in U}, (\succeq_w,\lowervec_{w}, \uppervec_w, q_w)_{w\in W})$ if each college $w\in W$
\begin{inparaenum}[(i)]
  \item is assigned at most $q_{w}$ students, i.e., $|M(w)| \le q_{w}$, and
  \item meets the lower and upper quotas for each type, i.e.,
  $\lowervec_{w} \le \sum_{u\in M(w)}\typevec_u \le \uppervec_{w}$.
\end{inparaenum}

%the following two conditions are satisfied by each college $w_j$: $|M(w_j)| \le q_{w_j}$, and $\forall z \in T \lowervec_{w_j}(z) \le |\{u_i \in M(w_j) \mid \typevec_i(z)=1\}| \le \uppervec_{w_j}(z)$.
%for each college~$w_j$ the followings hold.
%\begin{align}
%%  &  M(w_j) \subseteq \acset(w_j), \\
%  & |M(w_j)| \le q_{w_j}, \text{ and }\\
%  \forall z \in T\colon & \lowervec_{w_j}(z) \le |\{u_i \in M(w_j) \mid \typevec_i(z)=1\}| \le \uppervec_{w_j}(z).
%\end{align}
%\todoT{(1) is redundant as this is already a condition for being a matching.}

%\paragraph{Blocking pairs and stable matchings.}
A student~$u$ and a college~$w$ form a \myemph{blocking pair} in a matching~$M$ if: 
%the following conditions hold.
\begin{compactenum}[(i)]
  \item $u \in \acset(w)$ and $\{u, w\}\notin M$,
%  \todoT{again the acceptability condition is redundant, this time because of the preferences.}
  \item student~$u$ strictly prefers $w$ to~$M(u)$,
  \item there exists a (possibly empty) subset~$U'\subseteq M(w)$ %\emph{replaced}\todoT{I am not sure about this name, because these students are not replaced. Disposable or exchangeable sounds better to me.}
  \iflong of students \fi
  such that $w$ strictly prefers~$u$ to each student from $U'$, and
  \item $M \cup \{\{u, w\}\} \setminus \{\{u', w\} \mid u' \in U'\}$ is feasible for~$w$. %\inlinetodoH{This requirement seems to be too strict even for only one type.
%    To see this, consider the following instance with one student, two colleges, and one type:
%    $s\colon c_1 \succ c_2$ and $\lowervec_1=0, \uppervec_1=1$, $\lowervec_2=1$, and $\uppervec_2=1$.
%    There is only one feasible matching~$\{\{c_2,s_1\}\}$ which is also stable according to the above definition even if $c_1$ and $s_1$ like each other the most.
%    Some other references only require that the college~$w_j$'s diversity constraints are fulfilled. Shall we change this definition?}
\end{compactenum}
Accordingly, we say that \(U'\) is a \myemph{witness} for \(\{u, w\}\) to block~\(M\).
%\todoT{I would like to introduce the terminology that `\(U'\) witnesses that \((u, w)\) is a blocking pair for \(M\)'.}
A matching~$M$ is \myemph{stable} if it has \emph{no} blocking pairs.

\paragraph{Problem Variants.}
Now, we formally state our main problem of interest---the natural generalization of the classical \textsc{Many-to-One Stable Matching with Ties and Incomplete Preferences} (\textsc{SMTI})~\cite{Manlove2013} to incorporate diversity constraints:
%consider the following computational problems: \textsc{Stable Matchings with Diversity Constraints} and \textsc{Feasible Matchings with Diversity Constraints}.

\ifshort
\decprob{\diverseties}
{A set~$U$ of $n$ students, a set~$W$ of $m$ colleges, a set~$T$ of types,
  the type vectors and preference lists~$(\typevec_u,\succeq_u)_{u\in U}$ for the students,
  the preference lists, lower-quota vectors, upper-quota vectors, and capacities~$(\succeq_w, \lowervec_w,\uppervec_w,q_w)_{w\in W}$ for the colleges.}
{Is there a feasible and stable matching?}
\else \decprobnormal{\diverseties}
{A set~$U$ of $n$ students, a set~$W$ of $m$ colleges, a set~$T$ of types,
  the type vectors and preference lists~$(\typevec_u,\succeq_u)_{u\in U}$ for the students,
  the preference lists, lower-quota vectors, upper-quota vectors, and capacities~$(\succeq_w, \lowervec_w,\uppervec_w,q_w)_{w\in W}$ for the colleges.}
{Is there a feasible and stable matching?}
\fi

\noindent We use \diverse{} to denote the restriction of \diverseties to the case
\iflong where ties are not present.
\else with \emph{no} ties.
\fi
Moreover, we use \feasible\ to denote the problem of deciding whether there is a feasible matching (representing a generalization for \textsc{Feasible Matching with Incomplete Preferences}).
  %  we call the problem of deciding whether there is a feasible matching \feasible (short for \textsc{Feasible Matching with Incomplete Preferences and Diversity Constraints}).
  %  and the preference lists can be incomplete, and \diverse{} to denote the problem variant when ties are \emph{not} allowed but the preferences lists can still be incomplete.

\iflong
The introductory example from \cref{sec:intro} can be depicted as follows, where the first type is about ``being female'' and the second type is about ``being local''. 
%there be four students~$u_1,\ldots, u_4$, two colleges~$w_1,w_2$,

\begin{example}\label{ex:intro->prelim}
  Below, we describe the type vectors~(T.) and the preference lists~(Pref.) of the students~(S.) as well as the preference lists~(Pref.), the lower quotas~(LQ.), the upper quotas~(UQ.), and the capacities~(C.) of the colleges from \cref{sec:intro}; \textbf{the preference lists do not have ties and are always ordered by $\succ$}:\\
  \begin{center}
  \begin{tabular}{@{}l@{}l@{\;\;}c@{}|@{\,}l@{\,}l@{\;\;}c@{\,}|@{\,}l@{}l@{\;}c@{\;}c@{\;}c}
    \toprule
    S. & Pref. & T. &    S. & Pref. & T. & C. & Pref. & LQ. & UQ. & C. \\\midrule 
    $u_1\colon$ & $w_1\osucc w_2$ & $01$ & $u_2\colon$ & $w_1\osucc w_2$ & $01$ &   $w_1\colon$& $u_3 \osucc u_1 \osucc u_2$ & $11$ & $22$ & $2$\\
    $u_3\colon$ & $w_2 \osucc w_1$ &$11$&     $u_4\colon$ & $w_2$ &$10$&   $w_2\colon$ & $u_1 \osucc u_3  \osucc u_4 \osucc u_2$ & $11$  & $11$ & $2$\\
    \bottomrule
  \end{tabular}
  \end{center}
As already discussed, there are two feasible matchings~$M_1$ and $M_2$
with $M_1(w_1)$$=$$\{u_1,u_3\}$, $M_1(w_2)$$=$$\{u_2,u_4\}$
and $M_2(w_1)$$=$$\{u_2,u_3\}$, $M_2(w_2)=\{u_1,u_3\}$.
But $M_1$ is blocked by $\{u_3,w_2\}$ because $w_2$ would prefer to replace $\{u_2,u_4\}$ with $u_3$ and $u_3$ prefers~$w_2$ to~$w_1$.
One can verify that $M_2$ is feasible and stable.
If $u_2$ does not accept~$w_2$, then no feasible and stable matching exists.
\end{example}
      %       We note that there is a subtle difference between our definition of stability and the one used in the earlier work of Aziz et al.~\shortcite{AzizGaspersSunWalsh2019aamas}. In particular, *****  
%Moreover, \diverseties\ can be straightforwardly translated to the \textsc{***} problem studied by Huang~\shortcite{Huang2010classifiedSM}: indeed, *****.
%\todoR{Add here a paragraph explaining/justifying how we differ from these, OR weaken how we introduce these 2 errors in the intro.}
\else
For an illustration, let there be four students~$u_1,\ldots, u_4$, two colleges~$w_1,w_2$,
with the following type vectors~(T.) and the preference lists~(Pref.) of the students~(S.) as well as the preference lists~(Pref.), the lower quotas~(LQ.), the upper quotas~(UQ.), and the capacities~(C.) of the colleges depicted as follows; \textbf{preferences always ordered by $\succ$}:\\
\begin{tabular}{@{}l@{}l@{\;}c@{}|@{\,}l@{\,}l@{\;}c@{\,}|@{\,}l@{}l@{\;}c@{}c@{}c@{}}
    \toprule
S. & Pref. & T. &    S. & Pref. & T. & C. & Pref. & LQ. & UQ. & C. \\\midrule 
    $u_1\colon$ & $w_1\osucc w_2$ & $01$ & $u_2\colon$ & $w_1\osucc w_2$ & $01$ &   $w_1\colon$& $u_3 \osucc u_1 \osucc u_2$ & $11$ & $22$ & $2$\\
    $u_3\colon$ & $w_2 \osucc w_1$ &$11$&     $u_4\colon$ & $w_2$ &$10$&   $w_2\colon$ & $u_1 \osucc u_3  \osucc u_4 \osucc u_2$ & $11$  & $11$ & $2$\\
  \bottomrule
\end{tabular}
The upper quotas are arbitrary. There are two feasible matchings~$M_1$ and $M_2$
with $M_1(w_1)$$=$$\{u_1,u_3\}$, $M_1(w_2)$$=$$\{u_2,u_4\}$
and $M_2(w_1)$$=$$\{u_2,u_3\}$, $M_2(w_2)=\{u_1,u_3\}$.
But $M_1$ is blocked by $\{u_3,w_2\}$ while $M_2$ is stable.
If $u_2$ does not accept~$w_2$, then no feasible and stable matching exists.
%We note that there is a subtle difference between our definition of stability and the one used in the earlier work of Aziz et al.~\shortcite{AzizGaspersSunWalsh2019aamas}. In particular, *****  
%Moreover, \diverseties\ can be straightforwardly translated to the \textsc{***} problem studied by Huang~\shortcite{Huang2010classifiedSM}: indeed, *****.
%\todoR{Add here a paragraph explaining/justifying how we differ from these, OR weaken how we introduce these 2 errors in the intro.}
\fi

\section{How Hard is Diversity?}\label{sec:hard}
%\todoR{or ``The Hardness of Diversity''?}
\iflong
\subsection{General Complexity}
\fi
\ifshort
\paragraph{General Complexity.}
%\todoT{For these subsection paragraphs, I would personally not put a fullstop and insert a linebreak \\ R: no linebreak, it's fine without linebreak and we can't spare the space. I don't care much about the full stop, but based on style guides I googled a full stop is either recommended or at least allowed
\fi
Aziz et al.~\shortcite[Proposition 5.1]{AzizGaspersSunWalsh2019aamas} proved that \feasible\ is NP-complete; the hardness result holds even for a single college. %, each student belongs to at most three types, each type contains at most three students, and there are no upper quotas for any type.
\ifshort
They also claimed that determining a matching without d-blocking pairs is NP-complete~\cite[Proposition 5.3]{AzizGaspersSunWalsh2019aamas}. 
%; moreover, the hardness proof provided there shows that \diverse\ is NP-hard even when the lower quotas of all colleges for all types are zero, there are only two types, and the maximum upper-quota is two. 
%Unfortunately, there are \textbf{two} separate issues with Proposition 5.3 in the aforementioned paper.
However, the proof used to show NP-membership is technically flawed---in particular, while the proof claims that ``\emph{Deciding whether a stable outcome exists is in NP, since we can guess an outcome X and check whether X admits blocking pair in polynomial time}'', by adapting the reduction of Aziz et al.~\shortcite[Proposition 5.1]{AzizGaspersSunWalsh2019aamas} we can show that this is impossible unless P~$\subseteq$~coNP. %under established complexity assumptions.
%in Proposition~\ref{prop:checkingMstable:coNP-h} we show that this is in fact not possible (unless coNP$\subseteq$P).
\else They also claimed that determining whether there exists a matching without d-blocking pairs is NP-complete~\cite[Proposition 5.3]{AzizGaspersSunWalsh2019aamas}.
%; moreover, the hardness proof provided there shows that \diverse\ is NP-hard even when the lower quotas of all colleges for all types are zero, there are only two types, and the maximum upper-quota is two. 
%Unfortunately, there are \textbf{two} separate issues with Proposition 5.3 in the aforementioned paper.
However, there are two separate issues with Proposition 5.3 in the aforementioned paper.
First of all, the proof used to show NP-membership is technically flawed---in particular, while the proof claims that ``\emph{Deciding whether a stable outcome exists is in NP, since we can guess an outcome X and check whether X admits blocking pair in polynomial time}'', by adapting the reduction of Aziz et al.~\shortcite[Proposition 5.1]{AzizGaspersSunWalsh2019aamas} we can show that this is impossible unless P~$\subseteq$~coNP. %under established complexity assumptions.
%in Proposition~\ref{prop:checkingMstable:coNP-h} we show that this is in fact not possible (unless coNP$\subseteq$P).
\fi

%\ifshort
%Indeed, consider an adaptation of the reduction from the \textsc{Exact Cover by 3-Sets} problem~\cite{GJ79} given by Aziz et al.~\shortcite[Proposition 5.1]{AzizGaspersSunWalsh2019aamas} via the addition of one new student~$s$ to the instance, where $s$ has all types and is the most preferred student of a single college~$c$, where $\lowervec_c=\uppervec_c=3$. If we have $M$ assign all students other than $s$ to $c$, one can show that \textsc{Exact Cover by 3-Sets} is a YES-instance if and only if $M$ is not feasible and stable. Hence:
%%  The lower and upper quota of each type are exactly three.
%%  We ask whether the matching~$M$ which assigns the college to all students but~$s$ is feasible stable. 
%\fi

\begin{proposition}\label{prop:checkingMstable:coNP-h}
 \iflong Given a \diverse\ instance and a feasible matching $M$, it is \textnormal{coNP}-hard to decide whether $M$ does \emph{not} admit any blocking pairs or d-blocking pairs.
 \else Deciding whether a given feasible matching has no blocking pairs or no d-blocking pairs is \textnormal{coNP}-hard.
 \fi
\end{proposition}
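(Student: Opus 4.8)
The plan is to prove the contrapositive-flavoured statement that deciding whether $M$ \emph{does} admit a blocking pair is NP-hard, which immediately yields the claimed coNP-hardness of the "no blocking pair" question. Since the instance I build will have a single college, a blocking pair and a d-blocking pair coincide (the requirement that the diversity constraints of \emph{all remaining} colleges be met becomes vacuous), so both halves of the statement follow from one reduction. I would reduce from \textsc{Exact Cover by 3-Sets}: given a universe $E=[3\kappa]$ and a family $\calC=\{C_1,\dots,C_k\}$ of $3$-element subsets of $E$, decide whether some subfamily partitions $E$. This is essentially the single-college feasibility reduction of Aziz et al.\ [Proposition~5.1] re-embedded into the blocking-pair structure.

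Given such an instance, I would construct a \diverse\ instance with one college $w$, type set $T\coloneqq E$, and students $\{z_1,\dots,z_k\}\cup\{u^{*}\}$. Each $z_i$ gets type vector $\typevec_{z_i}$ equal to the indicator of $C_i$, and $u^{*}$ gets $\typevec_{u^{*}}=\mathbf 1$ (all types). College $w$ accepts all students, ranks $u^{*}$ strictly above every $z_i$ (with an arbitrary strict order among the $z_i$), and has capacity $q_w\coloneqq k$; each student finds only $w$ acceptable. Writing $\Sigma\coloneqq\sum_{i=1}^{k}\typevec_{z_i}$, so that $\Sigma[e]$ counts the sets containing $e$, I set $\lowervec_w\coloneqq\Sigma\eqqcolon\uppervec_w$. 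The matching $M$ is defined by $M(w)\coloneqq\{z_1,\dots,z_k\}$ with $u^{*}$ unmatched. Because $\sum_{u\in M(w)}\typevec_u=\Sigma=\lowervec_w=\uppervec_w$ and $|M(w)|=k=q_w$, the matching $M$ is feasible, so this is a legitimate input for the decision problem.

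The core of the argument is the equivalence between blocking pairs of $M$ and exact covers. As $u^{*}$ is unmatched and prefers $w$ to $\bot$, and $w$ ranks $u^{*}$ above all $z_i$, the only candidate blocking pair is $\{u^{*},w\}$, and \emph{every} $U'\subseteq M(w)$ satisfies conditions (i)--(iii). For condition (iv), the post-swap set $(M(w)\setminus U')\cup\{u^{*}\}$ has, in coordinate $e$, type count $\Sigma[e]-\rho[e]+1$, where $\rho\coloneqq\sum_{z_i\in U'}\typevec_{z_i}$; feasibility forces this to equal $\Sigma[e]$ (since $\lowervec_w=\uppervec_w=\Sigma$), i.e.\ $\rho=\mathbf 1$, which says exactly that the sets indexed by $U'$ cover each element once. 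Any such $U'$ is nonempty, so the capacity bound $k+1-|U'|\le k$ holds automatically. Hence $\{u^{*},w\}$ blocks $M$ iff $\calC$ admits an exact cover, so $M$ admits no (d-)blocking pair iff the input is a no-instance; since the complement of \textsc{Exact Cover by 3-Sets} is coNP-complete, the checking problem is coNP-hard.

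The step I expect to be the main obstacle is reconciling two opposing demands: $M$ must already be feasible—so feasibility of the starting configuration cannot by itself encode the hard instance—yet detecting a blocking pair must remain hard. The construction resolves this by making $w$'s quotas \emph{tight} (equal to the type-sum of the entire student population at $w$) and giving $u^{*}$ the all-ones type vector, so that admitting $u^{*}$ is feasible precisely when the ejected witness forms an exact cover. The only routine point to verify is that the quotas lie in the admissible range, which holds because $1\le\Sigma[e]\le k$ for every $e$ once trivially infeasible elements (those covered by no set) are discarded.
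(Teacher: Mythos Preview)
Your proposal is correct and takes essentially the same approach as the paper: a single-college instance with one ``all-types'' student preferred by the college, the remaining set-students already matched, and tight quotas forcing any witness for the lone blocking pair to encode an exact cover. The only cosmetic difference is that the paper reduces from the restricted variant RX3C (each element in exactly three sets), which lets it fix every quota uniformly at~$3$, whereas you reduce from general X3C and set the quotas to the element frequencies~$\Sigma$; both are valid and the correctness arguments are identical.
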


\iflong
\begin{proof}
  To show this, we adapt the reduction as given by Aziz et al.~\shortcite[Proposition 5.1]{AzizGaspersSunWalsh2019aamas}.
  (For the sake of completeness, we describe the whole construction here.)
  We remark that in the constructed instance there is only one single college.
  Since for one college the notion of d-blocking pairs as used by \citeauthor{AzizGaspersSunWalsh2019aamas} is equivalent to our notion of blocking pairs we only show the statement for our definition of blocking pairs.

  We reduce an NP-complete variant of the \textsc{Exact Cover by 3-Sets} problem~\cite{GJ79}.
  Note that in this variant each element of the universe appears in exactly three sets~\cite{Gonzalez1985}.
  \decprobnormal{Restricted Exact Cover by 3-Sets~(RX3C)}
  {A finite set~$X$ with $|X|=3q$ and a collection~\(\mathcal{S}\) of $3$-element subsets of $X$, where each element appears in exactly three sets.}
  {Does $\mathcal{S}$ contain an \myemph{exact cover}, i.e., a subcollection~$C\subseteq \mathcal{S}$ such that every element of $X$ occurs in exactly one member of~$C$?}
  
  Let $I=(X, \mathcal{S})$ be an instance of \textsc{X3C}, where $X=\{x_1,\ldots,x_{n^*}\}$.
  We construct an instance of \diverse as follows.
  Let $U=\{s \mid S \in \mathcal{S}\}\cup \{d\}$, $W=\{w\}$, $t=n^*$.
  For each set~$S\in \mathcal{S}$, the corresponding \myemph{set-student}~$s\in U$ has types that correspond to the three elements contained in $S$.
  Formally, for each type~$z\in [t]$ it holds that $\typevec_{s}[z]=1$ if and only if $x_z\in S$.
  The special student~$d$ has all types which is the most preferred student of the single college~$w$.
%  Second, add one additional student~$d$ to the instance, where $d$ has all types and is the most preferred student of the single college~$c$.
  The lower quota and upper quota of each type are exactly three.
  We ask whether the feasible matching~$M$, which assigns the single college~$w$ to all students but~$d$,
  admits no blocking pairs.
  Observe that $M$ is indeed feasible because each element in $I$ appeas in exactly three sets.
  
%  For the sake of completeness, we state the whole construction here.
%  Let $I=(X, \mathcal{S})$ be an instance of X3C, where $X=\{x_1,\ldots,x_{n^*}\}$.
%  We construct an instance of \diverse as follows.
%  Let $U=\{s \mid S \in \mathcal{S}\}\cup \{d\}$, $W=\{w\}$, $t=n^*$.
%  For each set~$S\in \mathcal{S}$, the corresponding \myemph{set-student}~$s\in U$ has types that correspond to the three elements contained in $S$.
%  Formally, for each type~$z\in [t]$ it holds that $\typevec_{s}[z]=1$ if and only if $x_z\in S$.
%  The special student~$d$ has all types which is the most preferred student of the single college~$w$.
%  The upper and lower quotas of each type are set to three.
%  Consider the matching~\(M\) in which assigns all students but $d$ to~$w$.
 % Clearly, this matching is feasible for $w$ because each element in $I$ appears in exactly three sets.

  To show the coNP-hardness, it suffices to show that $I$ has an exact cover if and only if $M$ admits a blocking pair.
 
  If there is an exact cover, then $\{d,w\}$ is a blocking pair because replacing the students corresponding to the exact cover with $d$ results in a feasible matching which is better for $w$ and~\(d\).

  On the other hand, if \(M\) admits a blocking pair, it must be $\{s,w\}$ as this is the only pair not matched in \(M\).
  In order to form a blocking pair with $d$, there has to be some $U' \subseteq M(w)$ which witnesses this.
  Because of the lower and upper quotas on each type and the fact that \(d\) has every type, the witness~$U'$ has to ``cover'' each type exactly once.
  It is straight-forward to verify that $C=\{S \in \mathcal{S} \mid s\in U'\}$ forms an exact cover.
%  
%  there exists no exact cover, i.e., for each family of sets there exists an element which is covered either zero times or more than once.
%  Thus, $\{s,c\}$ is not a blocking pair.
\end{proof}
\fi

Note that Proposition~\ref{prop:checkingMstable:coNP-h} itself does not rule out that \diverse is in NP.
It just suggests the given proof is incorrect.
There could in principle be a different non-deterministic algorithm to place the problem in NP.
We show that this is not the case in our main result (Theorem~\ref{thm:diverse:lq>0:sigma2p-c}), by showing \(\Sigma_2^P\)-hardness.
For this we introduce a crucial gadget which is used in several of our reductions\iflong. \fi\ifshort\ throughout this section. \fi
\iflong
For the sake of recognizability the components of the gadget are marked with gray background color, in particular whenever it occurs in reductions later. %\todoT{do this}.)
\fi
\begin{lemma}\label{lem:aux}
  \ifshort Let $T=\{1,2\}$, and
  \else  Let there be two types~$T=\{1,2\}$, and
  \fi
  let $U\uplus \{r_1,r_2,r_3\}$ be a set of students with three distinguished students~$r_1,r_2,r_3$,
  and let $W\uplus\{a,b\}$ be a set of colleges with two distinguished colleges~$a$ and $b$.
  Similar to the format given in \ifshort \cref{sec:prelim},
  \else \cref{ex:intro->prelim}, \fi
  the preference lists and type vectors of students~$r_1,r_2$, and $r_3$,
  and the preference lists, the upper quotas, and the capacities of the colleges are depicted as follows:
  % \begin{tabular}{@{}l@{}l@{\;\;\;}c@{\;}|@{\;}l@{\;}l@{\;\;\;}c@{\;}c@{}}
  %   \hline %\toprule
  %   S. & Pref. & T. & C. & Pref. & UQ. & C. \\\hline\\[-2ex] %\midrule 
  %   \highlight{$r_1\colon$} & \highlight{$b \osucc a$} & \highlight{$10$} &  \highlight{$\forall w\in W\colon$}& \highlight{$[U] \osucc r_2$} & \highlight{$11$} & \highlight{$q_w$}\\
  %   \highlight{$r_2\colon$} & \highlight{$b \osucc [W] \osucc a$} &\highlight{$11$} &   \highlight{$a\colon$} & \highlight{$r_1 \osucc r_2 \osucc r_3$} & \highlight{$11$} & \highlight{$1$}\\
  %   \highlight{$r_3\colon$} & \highlight{$a \osucc b$} &\highlight{$01$}&  \highlight{$b\colon$} & \highlight{$r_3 \osucc r_2 \osucc r_1$} & \highlight{$11$} & \highlight{$2$}\\
  %   \bottomrule 
  \ifshort\\ \else \begin{center}\fi
  \begin{tabular}%{@{}l@{}l@{\;\;\;}c@{\;}|@{\;}l@{\;}l@{\;\;\;}c@{\;}c@{}}
    {l@{}>{\columncolor{white}[0pt][0pt]}l@{}>{\columncolor{white}[0pt][\columnsep]}c|ll@{}>{\columncolor{white}[0pt][\columnsep]}c@{}>{\columncolor{white}[0pt][\tabcolsep]}c}
    \toprule
    S. & Pref. & T. & C. & Pref. & ~~~UQ.~~~ & C. \\
    % \hline\\[-2ex] %
    \midrule 
  \highlight{$r_1\colon$} & \highlight{$b \osucc a$} & \highlight{$10$} &  \highlight{$\forall w\in W\colon$}& \highlight{$[U] \osucc r_2$} & \highlight{$11$} & \highlight{$q_w$}\\
   \highlight{$r_2\colon$} & \highlight{$b \osucc [W] \osucc a$~~~} &\highlight{$11$} &   \highlight{$a\colon$} & \highlight{$r_1 \osucc r_2 \osucc r_3$} & \highlight{$11$} & \highlight{$1$}\\
    \highlight{$r_3\colon$} & \highlight{$a \osucc b$} &\highlight{$01$}&  \highlight{$b\colon$} & \highlight{$r_3 \osucc r_2 \osucc r_1$} & \highlight{$11$} & \highlight{$2$}\\%\hline
    % \highlight{$r_3\colon$} & \highlight{$a \osucc b$} &\highlight{$01$}&  \highlight{$b\colon$} & \highlight{$r_3 \osucc r_2 \osucc r_1$} & \highlight{$11$} & \highlight{$2$}\\
    %$r_1\colon$ & $b \osucc a$ & $10$ &  $\forall w\in W\colon$& $[U] \osucc r_2$ & $11$ & $q_w$\\
    %$r_2\colon$ & $b \osucc [W] \osucc a$ &$11$ &   $a\colon$ & $r_1 \osucc r_2 \osucc r_3$ & $11$ & $1$\\
    %$r_3\colon$ & $a \osucc b$ &$01$&  $b\colon$ & $r_3 \osucc r_2 \osucc r_1$ & $11$ & $2$\\
    \bottomrule
  \end{tabular}
  \iflong \end{center}\fi

  \noindent All students in $U$ have zero types and arbitrary but fixed preferences.
  All lower quotas are zero.
  The following holds for every matching~$M$.
  \begin{compactenum}[(1)]
    \item\label{aux-nec} If $M(a)$$=$$\{r_2\}$, $M(b)$$=$$\{r_1,r_3\}$, and $|M(w)\cap U|$$=$$q_w$ for all $w\in W$, then \emph{no} pair~$\{u,w\}$
      with ``$u$$\in$$\{r_1,r_2,r_3\}$ and $w$$\in$$\{a,b\}$'' or with ``$u=r_2$ and $w\in W$'' is blocking~$M$.
    \item\label{aux-card} If $|M(w)\cap U|$$<$$q_w$ for some~$w$$\in$$W$, then $M$ is not stable.
  \end{compactenum}
\end{lemma}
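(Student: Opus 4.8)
The plan is to prove both parts by direct case analysis over the three distinguished students $r_1,r_2,r_3$ and the two special colleges $a,b$, repeatedly invoking the four defining conditions (i)--(iv) of a blocking pair. Two structural facts will be used throughout. First, the preference lists make $\acset(a)=\acset(b)=\{r_1,r_2,r_3\}$, so $M(a),M(b)\subseteq\{r_1,r_2,r_3\}$ for \emph{any} matching $M$, and $\acset(w)=U\cup\{r_2\}$ for every $w\in W$. Second, adding $r_2$ (type vector $11$) to any set of $U$-students (each of type $00$) yields type counts exactly $(1,1)$, which meets every college's upper quota $11$; by contrast $b$ cannot feasibly hold $\{r_1,r_2\}$ (type counts $(2,1)$) or $\{r_2,r_3\}$ (type counts $(1,2)$).

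For part~(\ref{aux-nec}) I would rule out each listed candidate pair in turn. The pair $\{r_1,a\}$ fails condition~(ii) since $r_1$ is matched to its top choice $b$. For $\{r_3,a\}$, as $a$ prefers its occupant $r_2$ to $r_3$ the only admissible witness is $U'=\emptyset$, and then $a$ would hold $\{r_2,r_3\}$, exceeding its capacity $1$, so~(iv) fails. The pair $\{r_2,b\}$ is the crucial one: since $b$ prefers $r_3$ to $r_2$, $r_3$ cannot be evicted, so the admissible witnesses are $U'=\emptyset$ (three students at $b$, over capacity) and $U'=\{r_1\}$ (giving $\{r_2,r_3\}$, whose type-$2$ count $2$ violates $b$'s upper quota); both fail~(iv). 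Finally, for $w\in W$ the pair $\{r_2,w\}$ fails because $w$ prefers every $U$-student to $r_2$: as $M(w)$ already consists of exactly $q_w$ students of $U$, no eviction is admissible and adding $r_2$ overflows the capacity.

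For part~(\ref{aux-card}) I would argue by contradiction, taking a feasible $M$ with $|M(w_0)\cap U|<q_{w_0}$ for some $w_0\in W$ that is nonetheless stable. The first step locates $r_2$: if $r_2\notin M(w_0)$ then $M(w_0)\subseteq U$ has a free slot, and by the second structural fact $\{r_2,w_0\}$ would block (witness $U'=\emptyset$) unless $r_2$ weakly prefers its partner to $w_0$; hence $M(r_2)\succeq_{r_2} w_0$, which (since $r_2$'s order is $b\succ[W]\succ a\succ\bot$) forces $M(r_2)\in\{b\}\cup W$. If $M(r_2)=b$, feasibility of $b$ forces $M(b)=\{r_2\}$, so $r_1,r_3$ are confined to $\{a,\bot\}$; then $\{r_3,b\}$ blocks (evicting $r_2$, which $b$ likes less than $r_3$) unless $M(r_3)=a$, in which case $r_1$ is unmatched and $\{r_1,a\}$ blocks (evicting $r_3$, which $a$ likes less than $r_1$). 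If instead $M(r_2)=w'\in W$, then $r_2\notin M(b)$ and $M(b)\subseteq\{r_1,r_3\}$, so $\{r_2,b\}$ blocks (evicting $r_1$ if present) unless $r_3\in M(b)$; in that remaining case either $M(a)=\emptyset$ and $\{r_3,a\}$ blocks, or $M(a)=\{r_1\}$, whence $M(b)=\{r_3\}$ and $\{r_1,b\}$ blocks (the result $\{r_1,r_3\}$ has type counts $(1,1)$ and fits within $b$). Every branch contradicts stability.

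I expect the main obstacle to be the bookkeeping in part~(\ref{aux-card}): one must track the feasibility constraints at $a$ (capacity $1$) and at $b$ (type upper quota $11$) precisely enough to exhibit a concrete blocking pair in each configuration of $r_1$ and $r_3$. The second, more subtle point is that feasibility of $M$ is genuinely needed here — without it the infeasible assignment $M(b)=\{r_1,r_2\}$ would place all three of $r_1,r_2,r_3$ at their respective top choices, and such an $M$ can be stable even with a deficient $w_0$; so the argument must (and does) use feasibility to pin down $M(b)$.
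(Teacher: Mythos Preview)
Your proof is correct and follows the same case-analysis approach as the paper. Your organization of part~(\ref{aux-card})---first pinning $M(r_2)$ to $\{b\}\cup W$ and then branching---is in fact slightly tidier than the paper's (which splits on whether $r_2\in M(w)$ for the specific deficient $w$ and glosses over the sub-case $r_2\in M(w'')$ for some other $w''\in W$); your remark that feasibility is genuinely needed is also apt, since the paper's own argument silently uses $b$'s upper quota to force $M(b)=\{r_2\}$.
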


\iflong
\begin{proof}
  Let $M$ be an arbitrary matching.
  Statement~\eqref{aux-nec}: assume that $M(a)=\{r_2\}$, $M(b)=\{r_1,r_3\}$, and $|M(w')\cap U|=q_{w'}$ for all~$w'\in W$.
  Towards a contradiction, suppose that $M$ admits a blocking pair~$\{u, w\}$ with ``$u \in \{r_1,r_2,r_3\}$ and $w\in \{a,b\}$'' or with ``$u=r_2$ and $w\in W$''.
  Clearly, $u \neq r_1$ because $r_1$ already receives her most preferred college.
  Consequently, $w \neq a$ because~$a$ can only accommodate one student and only prefers $r_1$ to its assigned student~$r_2$ but we have already reasoned that $r_1$ is not involved in any blocking pair.
  Similarly, $u\neq r_3$.
  Moreover, $w\neq b$ because~$b$ already receives two students,
  but cannot replace $r_3$ with $r_2$ because of the upper quota for type~$2$.
  Thus, $u=r_2$ and $w\in W$.
  However, since $|M(w)\cap U|=q_w$ and $w$ does not prefer $u$ to any of its assigned student, it follows that $\{u,w\}$ is not a blocking pair, a contradiction.

  Statement~\eqref{aux-card}: Assume that there exists a college~$w\in W$ with $|M(w)\cap U|< q_w$.
  Towards a contradiction, suppose that $M$ is stable.
  We distinguish between two cases, either $r_2\in M(w)$ or $r_2\notin M(w)$.
  If $r_2 \in M(w)$, then since $r_2$ prefers~$b$ to~$w$ it follows that $r_3 \in M(b)$ as otherwise $\{r_2,b\}$ forms a blocking pair.
  Moreover, $r_1\in M(b)$ because $b$ is $r_1$'s most preferred college and $r_2\notin M(b)$.
  However, this means that $a$ receives no student at all and will form a blocking pair with $r_3$, a contradiction.

  If $r_2\notin M(w)$, then $r_2$ must be assigned to $b$ as otherwise $\{r_2,w\}$ is a blocking pair; note that all students in~$U$ have zero types. 
  Then, due to its upper quotas college~$b$ only receives a student, namely $r_2$.
  However, since $a$ can only accommodate one student, it follows that $r_1$ or $r_3$ is assigned to no college.
  This implies that $\{r_1,a\}$ is a blocking pair (if $M(r_1)=\bot$) or $\{r_3,b\}$ is a blocking pair (if $M(r_3)=\bot$), a contradiction. 
\end{proof}
\fi

\ifshort It is straight-forward % I would avoid using the word easy.. 
to verify the correctness of Lemma~\ref{lem:aux} by case analysis.
\fi
With Lemma~\ref{lem:aux} in hand, we can prove that the problem is not NP-complete but instead lies on the second level of the polynomial hierarchy.

\begin{theorem}\label{thm:diverse:lq>0:sigma2p-c}
  \diverseties{} is $\Sigma^{\text{P}}_2$-complete, and remains $\Sigma^{\text{P}}_2$-hard even if feasible matchings always exist, there are no ties, \(m = 4\) and \(\umax = 3\).
  \end{theorem}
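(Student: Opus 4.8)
The plan is to establish both containment and hardness. For containment in $\Sigma^{\text{P}}_2$, I would read off the quantifier structure directly: a matching $M$ is a polynomial-size object whose feasibility is checkable in polynomial time, so it can be existentially guessed. Given $M$, \cref{prop:checkingMstable:coNP-h} shows that recognising a blocking pair may itself be hard, but crucially a blocking pair is always \emph{witnessed} by a triple $(u, w, U')$ with $U' \subseteq M(w)$ whose four defining conditions are each polynomial-time checkable. Hence ``$M$ is unstable'' lies in $\np$ and ``$M$ is stable'' in $\textnormal{coNP}$, so the whole question ``is there a feasible and stable $M$'' has the form $\exists M\, \forall (u,w,U')\, \psi$ with $\psi$ polynomial-time decidable, placing it in $\Sigma^{\text{P}}_2$.

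For hardness I would reduce from a $\Sigma^{\text{P}}_2$-complete quantified variant of 3SAT such as \notoneinthree, i.e.\ the $\exists X\,\forall Y$ question of whether there is an assignment to the existential variables $X$ such that, under \emph{every} assignment to the universal variables $Y$, no clause of a given 3CNF formula has exactly one satisfied literal. The two-level structure mirrors the two quantifiers: the guessed matching $M$ encodes the assignment to $X$, while the absence of a blocking pair encodes the inner universal statement. Each existential variable $x_i$ receives a gadget offering exactly two ways to fill the available colleges to capacity, corresponding to $x_i=\true$ and $x_i=\false$; here the auxiliary gadget of \cref{lem:aux} does the heavy lifting, since its statement~\eqref{aux-card} forces every college to be loaded to capacity in any stable matching, so that the only remaining freedom is the binary per-variable choice, while statement~\eqref{aux-nec} guarantees that the gadget students $r_1,r_2,r_3$ never themselves create spurious blocking pairs. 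All preferences are kept strict, so the instance contains no ties.

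The inner $\textnormal{coNP}$ check is engineered exactly as in \cref{prop:checkingMstable:coNP-h}: a dedicated ``detector'' student (playing the role of $d$ there) can form a blocking pair with a college only together with a witness set $U'$ that, forced by tight lower and upper quotas on suitably chosen types, must (a) select a \emph{consistent} full assignment to $Y$ and (b) simultaneously exhibit a clause receiving exactly one true literal under that assignment. The upper quota of $3$ is precisely what caps the per-clause literal count at the clause width and lets the final feasibility-of-$w$ condition in the blocking definition fire exactly at count ``one''; thus a blocking pair exists if and only if some $Y$ violates the not-1-in-3 condition, i.e.\ if and only if the inner $\forall Y$ statement fails. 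Packing the variable gadgets, the clause/type bookkeeping, and the two gadget colleges $a,b$ of \cref{lem:aux} all onto only four colleges is achieved by letting the two non-gadget colleges carry the entire encoding through their type vectors and by reusing capacity across variables; the construction is arranged so that a feasible matching (e.g.\ the canonical capacity-filling one) always exists regardless of the formula, which both yields the ``feasible matchings always exist'' strengthening and separates the hardness from that of \feasible.

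Correctness then splits in the usual way. If the \notoneinthree instance is a yes-instance, the matching encoding a good $X$ is feasible, and any candidate blocking triple would encode a falsifying $Y$, which does not exist, so the matching is stable. If it is a no-instance, then for every capacity-filling matching — equivalently every $X$ — some $Y$ violates a clause and hence yields a blocking pair, while \cref{lem:aux}\eqref{aux-card} rules out the under-filled matchings; so no feasible matching is stable. The main obstacle I anticipate is exactly this final calibration: making blocking triples correspond \emph{precisely} to consistent $Y$-assignments that violate a clause, while respecting the severe budget of $m=4$ colleges and maximum upper quota $\umax=3$, and simultaneously guaranteeing that no ``illegal'' matching — one encoding no valid $X$, or whose witness encodes an inconsistent $Y$ — slips through as stable. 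This forces a delicate simultaneous choice of types, lower and upper quotas, and preference orders, with the \cref{lem:aux} gadget pinning down the global skeleton so that only the intended degrees of freedom survive.
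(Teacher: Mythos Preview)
Your proposal is correct and takes essentially the same approach as the paper: reduce from \notoneinthree, let the matching encode the existential assignment~$X$ via which $X$-literal student sits in the main college~$v$, use the \cref{lem:aux} gadget (with $W=\{w\}$) to force the detector student~$d$ away from~$v$, and set the clause-type quotas so that a witness for the blocking pair~$\{d,v\}$ exists iff some $Y$-assignment 1-in-3-satisfies \emph{all} clauses simultaneously. One small clarification on your point~(b): the witness must certify that \emph{every} clause---not merely ``a clause''---receives exactly one true literal (this is what your next sentence already states correctly, and it is exactly why the feasibility check for~$v$ after the swap is a conjunction over all clause types), and note that it is the lower quotas on~$v$, not \cref{lem:aux}\eqref{aux-card}, that force~$v$ to carry the full literal load; the gadget's role is only to pin~$d$ to~$w$.
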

\ifshort
\begin{proof}[Proof Sketch]
  To show that \diverseties is in~$\Sigma^{\text{P}}_2$,
  we first observe that checking whether a matching is \emph{not} stable can be done by an NP-oracle (guess an unmatched pair~$\{u,w\}$ and a subset~$S$ of students, and check in polynomial time whether $S$ witnesses that $\{u,w\}$ is blocking $M$).
  Hence, we can guess in polynomial time a matching, ask the NP-oracle whether it is not stable, and return yes if and only if the oracle answers no. Containment follows because $\text{NP}^{\text{NP}}=\Sigma^{\text{P}}_2$. %~\cite[Definition~17.2]{Pap94}. 
%  This means that the problem lies in $\text{NP}^{\text{NP}}=\Sigma^{\text{P}}_2$~\cite[Definition~17.2]{Pap94}. 

To show $\Sigma^{\text{P}}_2$-hardness, we reduce from a problem called \notoneinthree: Given a Boolean 3CNF formula~$\phi(X,Y)$ over two equal-size variable sets $X,Y$ such that each clause contains at least two literals from $Y\cup \overline{Y}$, determine whether there exists a truth assignment of $X$ such that for each truth assignment of $Y$ at least one clause $C_j$ is \myemph{not 1-in-3-satisfied}, i.e., does not have precisely one true literal. \notoneinthree\ can be shown to be hard for $\Sigma^{\text{P}}_2$ via a standard (and complementing) reduction from \threesat, a classic $\Pi^{\text{P}}_2$-hard problem~\cite{Stockmeyer1976phcom}.

The idea of our main reduction is to construct, from a given instance $I$ of \notoneinthree\ with $|X|=|Y|=\enn$ and $\emm$ clauses, an equivalent instance $I'$ of \diverse\ with $2\enn$ ``variable-types'', $\emm$ ``clause-types'' and $2$ auxiliary types (the types are ordered in this sequence).
Instance~$I'$ contains a special student~$d$ that has all the variable-types and all the clause-types, and two distinguished colleges $v,w$ which can both accommodate $d$, but $d$ prefers being in $v$. $I'$ furthermore uses Lemma~\ref{lem:aux} to construct a gadget
% (involving two additional colleges, three additional students and the two auxiliary types) 
which ensures that a matching can only be stable if $d$ is matched to $w$---in particular, this will force a stable and feasible matching to ensure $\{d,v\}$ will not form a blocking pair. 

Moreover,~$I'$ contains one clause-student~$d_j$ for each clause~$C_j$ (let $D$ denote the set of all clause-students) and one student~$\lit$ for each literal in~$X\cup Y \cup \overline{X} \cup \overline{Y}$. Student~$d_j$ only has one type: the clause-type~$2\enn+j$ corresponding to~$C_j$. 
Student~$\lit$ has the variable-type~$i\in [\enn]$ corresponding to its variable as well as all the clause-types~$2\enn+j$ of every clause~$C_j$ containing $\lit$.
All $Y$-literal students only want to go to~$v$; all positive $X$-literal students prefer $v$ to $b$ while all negative $X$-literal students prefer $b$ to $v$.

We can now explain the core of the reduction:
the quotas of $v$ are set up in a way which ensures (assuming $d$ is matched to $w$) that precisely one literal-student for each variable in $X$, both literal-students for each variable in $Y$, and some clause-students must be matched to~$v$. In particular, a clause-student~$d_j$ will be matched to~$v$ if and only if the literal-student missing from $v$ represents a literal in $C_j$. Once set up, we show that $\{d,v\}$ is blocking if and only if there is a witness set of literal-students, and this witness set would represent an assignment which 1-in-3 satisfies $I$. In other words, a feasible and stable matching~$M$ exists if and only if there is an assignment of the $X$-variables (which can be reconstructed from $M$) such that no assignment of the $Y$-variables 1-in-3-satisfies all clauses.

The following describes the preference lists, quotas and capacities of the colleges, together with the preference lists of the students from $\{r_1$, $r_2$, $r_3$, $d\}$.
  \noindent{%\extrarowheight=.7\aboverulesep
  \addtolength{\extrarowheight}{\belowrulesep}
  \aboverulesep=0pt
  \belowrulesep=0pt
  \tabcolsep=0pt
  \begin{tabular}{@{}>{\columncolor{white}[0pt][\tabcolsep]}l@{}>{\columncolor{white}[0pt][\tabcolsep]}l@{}>{\columncolor{white}[0pt][\tabcolsep]}l@{}|@{\,}>{\columncolor{white}[0pt][\tabcolsep]}
  l@{}>{\columncolor{white}[0pt][\tabcolsep]}l@{}>{\columncolor{white}[0pt][\tabcolsep]}l@{}>{\columncolor{white}[0pt][\tabcolsep]}l@{}>{\columncolor{white}[0pt][\tabcolsep]}l@{}>{\columncolor{white}[0pt][\tabcolsep]}c@{}}
    \toprule
    S. & Pref. &  & C. & Pref. & LQ. & UQ. & C.\\\midrule 
   \highlight{$r_1$:} & \highlight{$b \osucc a$} && \highlight{$w$:} & \highlight{$d\osucc r_2$} &    \highlight{${0}^{2\enn+\emm+2}$~}  &  \highlight{${1}^{2\enn+\emm+2}$~} & \highlight{$1$}\\
    \highlight{$r_2$:} & \highlight{$b \osucc w \osucc a$} & & \highlight{$a$:} & \highlight{$r_1 \osucc r_2 \osucc r_3$}  &\highlight{${0}^{2\enn+\emm+2}$}& \highlight{$0^{2\enn+\emm}11$}&\highlight{$1$}\\
    \highlight{$r_3$:} & \highlight{$a \osucc b$} & & \highlight{$b$:} & \highlight{$r_3 \osucc r_2 \osucc r_1 \osucc [X]  \osucc [\overline{X}]$}  &\highlight{$1^{\enn}{0}^{\enn+\emm+2}$}& \highlight{$1^{\enn}0^{\emm}1^{\emm+2}$}& \highlight{$\enn+2$}\\
  $d$: & $vw$ & & $v$:& $[D]\osucc d \osucc [Y] \osucc [\overline{Y}] \osucc [\overline{X}] \osucc [X]$ & $1^\enn2^\enn3^\emm00$~~~& $1^\enn2^\enn3^\emm00$~~~& $3\enn$$+$$\emm$\\
 % \forall $\lit\in Y\cup \overline{Y}$    $\lit :$ & $v$ \\
%  \forall $d_j\in D,$    $d_j \colon$ & $v$ \\
    \bottomrule
  \end{tabular}
  }
% \begin{itemize}
%   \item For $v$: $q_v\coloneqq 3\enn+\emm$, $\lowervec_v \coloneqq \{1\}^{\enn}\{2\}^{\enn}\{3\}^{\emm}00$ and $\uppervec_v \coloneqq \{1\}^{\enn}\{2\}^{\enn}\{3\}^{\emm}00$. The preferences of $v$ are: $[D]\succ_v d \succ_v [Y] \succ_v [\overline{Y}] \succ_v [\overline{X}] \succ_v [X]$.
%   \item for $w$: $q_w\coloneqq 1$, $\lowervec_w \coloneqq \{0\}^{2\enn+\emm+2}$ and $\uppervec_w \coloneqq \{1\}^{2\enn+\emm+2}$. The preferences of $w$ are: $d \succ_w  r_2$.
%   \item For $a$: $q_a\coloneqq 1$, $\lowervec_a \coloneqq \{0\}^{2\enn+\emm+2}$ and $\uppervec_a \coloneqq \{0\}^{2\enn+\emm}11$. The preferences of $a$ are: $r_1 \succ_a  r_2 \succ_a r_3$.
%   \item For $b$: $q_b \coloneqq \enn+2$, $\lowervec_b \coloneqq \{1\}^{\enn}\{0\}^{\enn+\emm+2}$ and $\uppervec_b \coloneqq \{1\}^{\enn}\{0\}^{\enn}\{1\}^{\emm+2}$. The preferences of $b$ are: $r_3 \succ_b  r_2 \succ_b r_1 \succ_b [X] \succ_b [\overline{X}]$.\qedhere
% \end{itemize}
  This completes the construction which can be verified to fulfill the restriction stated in the theorem.
  It now suffices to show that (1) there exists an $X$-assignment~$\sigma_X$ such that for each $Y$-assignment~$\sigma_Y$ at least one clause is \emph{not} 1-in-3-satisfied if and only if (2) matching $X'\cup Y \cup \overline{Y} \cup D'$ to~$v$, $(\{r_1,r_3\}\cup X \cup \overline{X}) \setminus X'$ to~$b$, $r_2$ to~$a$, and~$d$ to~$w$ is feasible and stable,
  where $X'$ corresponds to the assignment~$\sigma_X$ and $D'=\{d_j \mid |(X'\cup Y\cup \overline{Y})\cap C_j| = 2\}$.
  %  We claim that there exists an $X$-assignment~$\sigma_X$ such that for each $Y$-assignment~$\sigma_Y$ at least one clause is \emph{not} 1-in-3-satisfied
%  if and only if matching $X'\cup Y \cup \overline{Y} \cup D'$ to~$v$,
%  $(\{r_1,r_3\}\cup X \cup \overline{X}) \setminus X'$ to~$b$,
%  $r_2$ to~$a$,
%  and~$d$ to~$w$ is feasible and stable,
%  where $X'$ corresponds to the assignment~$\sigma_X$ and $D'=\{d_j \mid |(X'\cup Y\cup \overline{Y})\cap C_j| = 2\}$.
\end{proof}
\fi

\iflong
\begin{proof}
We establish the $\Sigma^{\text{P}}_2$-hardness of the problem by a polynomial-time reduction from the following quantified satisfiability problem, which can be shown to be $\Sigma^{\text{P}}_2$-complete.
% We consider the following $\Pi^{\text{P}}_2$-hard problem
%\todoH{Add a proof showing that the following problem is really $\Sigma^{\text{P}}_2$-hard.}. 
\decprobnormal{\notoneinthree}
{Two equal-size sets~$X$ and $Y$ of Boolean variables; a Boolean formula~$\phi(X,Y)$ over~$X\cup Y$ in 3CNF, i.e., a set of clauses each containing 3 literals. Moreover, each clause contains \emph{at least} two literals from $Y\cup \overline{Y}$.}
{Does there exist a truth assignment of $X$ such that for each truth assignment of $Y$ there is a clause~$C_j$ which is not \myemph{1-in-3-satisfied} (i.e., $C_j$ does not have precisely $1$ true literal)?}

%\notoneinthree is $\Sigma^{\text{P}}_2$-hard even if each clause contains at least two literals from~$Y\cup \overline{Y}$: % and at least one clause contains a literal from~$X\cup\overline{X}$,
% $\overline{X}\coloneqq \{\overline{x}_1,\ldots, \overline{x}_n\}$ and $\overline{Y}\coloneqq \{{y}_{n+1},\ldots, \overline{y}_{2n}\}$.
\begin{claim}\label{claim:nooneinthree-sigma2p-h}
  \notoneinthree{} is $\Sigma^{\text{P}}_2$-hard, even if each clause contains at least two literals from~$Y\cup \overline{Y}$. 
\end{claim}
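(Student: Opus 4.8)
The plan is to obtain \notoneinthree{} as the complement of its ``dual'' problem \oneinthree{}, which asks whether $\forall X\,\exists Y$ such that \emph{every} clause is $1$-in-$3$-satisfied, and to prove $\Sigma^{\text{P}}_2$-hardness by complementation. Write $A$ for \threesat, which is $\Pi^{\text{P}}_2$-complete~\cite{Stockmeyer1976phcom}, so its complement $\overline{A}$ is $\Sigma^{\text{P}}_2$-complete. For a \emph{fixed} formula, the \notoneinthree{} instance is literally the negation of the \oneinthree{} instance (the same clause set, with prefix $\exists X\,\forall Y$ and acceptance condition complemented). Hence a single polynomial-time many-one reduction $f$ witnessing $A \le_m \oneinthree$ simultaneously witnesses $\overline{A} \le_m \notoneinthree$, and the latter yields the desired hardness since $\overline{A}$ is $\Sigma^{\text{P}}_2$-complete. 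So it suffices to construct $f$, i.e.\ a clause-wise transformation from ordinary $3$CNF to a $1$-in-$3$ instance that preserves the $\forall X\,\exists Y$ truth value.

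For the gadget I would write $R(p,q,r)$ for the constraint ``exactly one of the literals $p,q,r$ is true'', and replace each clause $(\ell_1\vee\ell_2\vee\ell_3)$ by six fresh variables $s_1,s_2,s_3,t_1,t_2,t_3$ together with the four constraints $R(s_1,s_2,s_3)$ and $R(s_i,\overline{\ell_i},t_i)$ for $i\in\{1,2,3\}$. Here $R(s_i,\overline{\ell_i},t_i)$ forbids $s_i=\overline{\ell_i}=\true$ and thus encodes the implication ``$s_i=\true \Rightarrow \ell_i=\true$'', while $R(s_1,s_2,s_3)$ forces exactly one ``selector'' $s_i$ to be true. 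I would then verify that, for any fixed truth values of $\ell_1,\ell_2,\ell_3$, the six fresh variables can be set so as to $1$-in-$3$-satisfy all four constraints \emph{iff} some $\ell_i=\true$: if $\ell_j=\true$, set $s_j=\true$ and the other selectors to $\false$ (each $t_i$ is then forced consistently); conversely, if all $\ell_i=\false$ then $\overline{\ell_i}=\true$, so $R(s_i,\true,t_i)$ forces $s_i=\false$ for all $i$, contradicting $R(s_1,s_2,s_3)$. Note that every produced constraint contains at least two literals over fresh variables.

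Finally I would place all fresh variables into the block that is existential in \oneinthree{} (and hence universal in \notoneinthree{}). Applying the gadget to each clause of $\phi(X,Y)$ gives a $1$-in-$3$ instance $\psi(X,Y')$, where $Y'$ extends $Y$ by all selector and implication variables; I would pad the smaller of $X,Y'$ with dummy variables occurring in no clause so that $|X|=|Y'|$, which changes no truth value. Since the gadget uses private fresh variables per clause and shares only the original literals, for every fixed $\sigma_X$ we get $\exists Y\,\phi(\sigma_X,Y)$ iff $\psi(\sigma_X,Y')$ is $1$-in-$3$-satisfiable for some $Y'$, and quantifying over $X$ preserves the value, establishing $A\le_m\oneinthree$. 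Moreover every clause of $\psi$ has at least two literals in $Y'\cup\overline{Y'}$, matching the promised restriction. Complementing $f$ then gives $\overline{A}\le_m\notoneinthree$ and the claim. The main obstacle is precisely this bookkeeping---guaranteeing that the auxiliary variables land in the existential (resp.\ universal) block so that both the alternating-quantifier semantics and the ``at least two literals from $Y$'' structural constraint hold; the $1$-in-$3$ gadget itself is routine once the selector/implication idea is fixed.
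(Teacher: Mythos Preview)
Your proposal is correct and follows essentially the same approach as the paper: reduce \threesat{} (a $\Pi^{\text{P}}_2$-complete problem) to \oneinthree{} via a clause-local $1$-in-$3$ gadget whose fresh variables are placed in the existential $Y$-block, pad to equalize $|X|$ and $|Y|$, and then complement to obtain $\Sigma^{\text{P}}_2$-hardness of \notoneinthree{}. The only difference is the specific gadget: the paper uses Schaefer's classical three-clause, four-fresh-variable gadget $(\neg\ell_1,a,b),(\ell_2,b,c),(\neg\ell_3,c,d)$, whereas you use a four-clause, six-fresh-variable selector gadget; both correctly ensure that every produced clause carries at least two literals from the enlarged $Y$-block.
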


\begin{proof}[Proof of Claim~\ref{claim:nooneinthree-sigma2p-h}]
  \renewcommand{\qedsymbol}{$\diamond$}
  We begin by showing that the following \oneinthree problem is $\Pi^{\text{P}}_2$-hard, using the standard polynomial-time reduction from \ethreesat{} to \textsc{Not-1-in-3-3SAT}.
  \decprobnormal{\oneinthree}
{Two equal-size sets~$X$ and $Y$ of Boolean variables; a Boolean formula~$\phi(X,Y)$ over~$X\cup Y$ in 3CNF, i.e., a set of clauses each containing three literals such that at least two of them come from $Y\cup \overline{Y}$.}
  {Is it true that \emph{for each} truth assignment of $X$ there \emph{exists} a truth assignment of $Y$ such that each clause~$C_j$ is \myemph{1-in-3-satisfied} (i.e., $C_j$ is satisfied by exactly one literal)?}
  %\todoH{Also, the definition here is fixed. We need the requirement that two literals are from $Y\cup \overline{Y}$.}
%  \inlinetodoH{Should we change the subscript~$m$ to some other symbol? - T: I would prefer that. Then one should also consider changing \(n\) to e.g.\ \(n_X\) and \(n_Y\)}
  
  Let $\phi(X^*,Y^*)=(C^*_1,\ldots,C^*_\emm)$ be a Boolean formula in 3CNF over two equal-size variable sets~$X^*$ and $Y^*$.
  For each clause~$C^*_j$, containing three literals~$\lit_j^1$, $\lit_j^2$, $\lit_j^3$ from variables~$X^*\cup Y^*$,
  we introduce four fresh variables~$a_j, b_j, c_j, d_j$, % add them to $Y^*$,
  and construct the following three clauses
  $C_j^1, C_j^2, C_j^3$ such that
  % I find it is easier to read if the following defintions are in math display mode.
  \begin{align*}
  &  C_j^1 \coloneqq (\neg \lit_j^1, a_j, b_j), C_j^2\coloneqq (\lit_j^2, b_j, c_j), \text{ and }\\
  &  C_j^3 \coloneqq (\neg \lit_j^3, c_j, d_j)\text{,}
  \end{align*}
  where $\neg \lit$ denotes $\overline{z}$ if $z\in X^*\cup Y^*$; otherwise $\neg \lit$ denotes $z$.
  It is straightforward to verify that for all truth assignment of $X^*$ there exists a truth assignment of $Y^*$ under which $(C^*_j)_{1\le j \le \emm}$ is satisfied if and only if
  for all truth assignment of $X^*$ there exists a truth assignment of $Y^*\cup \{a_j,b_j,c_j,d_j\mid 1\le j \le \emm\}$ under which each clause~$C_j^z$, $1\le j \le \emm$, $1\le z\le 3$, is \emph{1-in-3-satisfied}. Moreover, the addition of $4$ new variables into $Y^*$ can be mirrored by the addition of $4$ new ``dummy'' variables into $X^*$ which do not occur in any clause. Since the former problem is $\Pi^{\text{P}}_2$-hard~\cite{Stockmeyer1976phcom}, we obtain that \oneinthree{} is also $\Pi^{\text{P}}_2$-hard.
  It is straight-forward to see that the newly constructed clauses each have at least two literals coming from $Y \cup \overline{Y}$.

  Since an instance~$I$ of \oneinthree{} is a yes instance if and only if $I$ is a no instance of \notoneinthree, it follows from the well-known complementarity of these classes that \notoneinthree is $\Sigma^{\text{P}}_2$-hard.
\end{proof}

\paragraph{The idea.}
The idea of our main reduction is to construct, from a given instance~$I$ of \notoneinthree\ with $|X|=|Y|=\enn$ and $\emm$ clauses, an equivalent instance $I'$ of \diverse\ with $2\enn$ \myemph{variable-types}, $\emm$ \myemph{clause-types} and $2$ auxiliary types (the types are ordered in this sequence).
Instance~$I'$ contains a special student~$d$ that has all the variable-types and all the clause-types, and two distinguished colleges $v,w$ which can both accommodate $d$, but $d$ prefers being in $v$. Instance~$I'$ furthermore uses Lemma~\ref{lem:aux} to construct a gadget
% (involving two additional colleges, three additional students and the two auxiliary types) 
which ensures that a matching can only be stable if $d$ is matched to $w$---in particular, this will force a stable and feasible matching to ensure $\{d,v\}$ will not form a blocking pair. 

Moreover,~$I'$ contains one clause-student~$d_j$ for each clause~$C_j$ and one student~$\lit$ for each literal in~$X\cup Y \cup \overline{X} \cup \overline{Y}$. Student~$d_j$ only has one type: the clause-type~$2\enn+j$ corresponding to~$C_j$. 
Student~$\lit$ has the variable-type~$i\in [\enn]$ corresponding to its variable as well as all the clause-types~$2\enn+j$ of every clause~$C_j$ containing $\lit$.
All $Y$-literal students only want to go to~$v$; all positive $X$-literal students prefer $v$ to $b$ while all negative $X$-literal students prefer $b$ to $v$.

We can now explain the core of the reduction:
the quotas of $v$ are set up in a way which ensures (assuming $d$ is matched to $w$) that precisely one literal-student for each variable in $X$, both literal-students for each variable in $Y$, and some clause-students must be matched to~$v$. In particular, a clause-student~$d_j$ will be matched to~$v$ if and only if the literal-student missing from $v$ represents a literal in $C_j$. Once set up, we show that $\{d,v\}$ is blocking if and only if there is a witness set of literal-students, and this witness set would represent an assignment which 1-in-3 satisfies $I$. In other words, a feasible and stable matching~$M$ exists if and only if there is an assignment of the $X$-variables (which can be reconstructed from $M$) such that no assignment of the $Y$-variables 1-in-3-satisfies all clauses.

Next, we formally describe our reduction, showing that \diverse is as hard as \notoneinthree even if there are only four colleges and a feasible matching always exists.

Let $I$$=$$(X$$=$$\{x_1,\ldots, x_{\enn}\},Y$$=$$\{y_{\enn+1},\ldots,y_{2\enn}\},\phi(X,Y))$ be an instance of \notoneinthree{} with $\emm$-many clauses which each have at least two literals from $Y\cup \overline{Y}$.
%To ease notation, let $\overline{X}\coloneqq \{\overline{x}_1,\ldots, \overline{x}_n\}$ and $\overline{Y}\coloneqq \{{y}_{n+1},\ldots, \overline{y}_{2n}\}$.
We construct the following instance of \diverse.

\paragraph{The types.}
There are $t\coloneqq 2\enn+\emm+2$ types: for each variable there is a corresponding \myemph{variable type}, for each clause there is a corresponding \myemph{clause type}, and there are two special types.
%Thus, $T=\{1,\ldots, 2n+m+2\}$.

\paragraph{The students.}
For each literal~$\lit\in X\cup \overline{X} \cup Y \cup \overline{Y}$, there is a literal student~$\lit$.
The type vector~$\typevec_{\lit}$ of student~$\lit$ is constructed as follows:
\begin{itemize}
  \item For each type~$z\in [2\enn]$ if $\lit \in \{x_z, \overline{x}_z, y_z, \overline{y}_z\}$, then let $\typevec_{\lit}[z]\coloneqq 1$; otherwise let $\typevec_{\lit}[z]\coloneqq 0$.
  \item For each~$j \in [\emm]$ let $\typevec_{\lit}[2n'+j]\coloneqq 1$ if $C_j$ contains~$\lit$; otherwise let $\typevec_{\lit}[2\enn+j]\coloneqq 0$.
  \item Student~$\lit$ has neither type~$2\enn+\emm+1$ nor type~$2\enn+\emm+2$, i.e., $\typevec_{\lit}[2\enn+\emm+1]\coloneqq \typevec_{\lit}[2\enn+\emm+2]\coloneqq 0$.
\end{itemize}

We introduce four special students~\myemph{$d$, $r_1$, $r_2$, and~$r_3$}.
They have the following types:
\begin{align*}
  \typevec_d \coloneqq & \{1\}^{2\enn+\emm}00\text{,} &
  \typevec_{r_1} \coloneqq & \{0\}^{2\enn+\emm}10\text{,}\\
  \typevec_{r_2} \coloneqq & \{0\}^{2\enn+\emm}11\text{,} &  
  \typevec_{r_3} \coloneqq & \{0\}^{2\enn+\emm}01\text{.} 
\end{align*}
We also introduce $\emm$ \myemph{clause-students~$D=\{d_1, \ldots, d_{\emm}\}$} such that for each~$j\in [\emm]$ student~$d_{j}$ has exactly one type, namely:
\begin{align*}
  \forall z \in T\setminus \{2\enn+j\}\colon  \typevec_{d_{j}}[z]\coloneqq 0, \text{~ and ~}
  \typevec_{d_{j}}[2\enn+j]\coloneqq 1, 
\end{align*}

In other words, the type of~$d_j$ corresponds to the clause~$C_j$.

\paragraph{The colleges.}

There are four colleges~$v$, $w$, $a$, and $b$.
Their capacities and type-specific lower and upper quotas are defined as follows: % (see also Table~\ref{tab:sigma2p-reduction-types-diverse}):
\begin{itemize}
  \item Let $q_v\coloneqq 3\enn+\emm$, $q_w\coloneqq 1$, $q_a\coloneqq 1$, and $q_b \coloneqq \enn+2$.
  \item Let $\lowervec_v \coloneqq \uppervec_v \coloneqq \{1\}^{\enn}\{2\}^{\enn}\{3\}^{\emm}00$.
  \item Let $\lowervec_w \coloneqq \{0\}^{2\enn+\emm+2}$ and $\uppervec_w \coloneqq \{1\}^{2\enn+\emm+2}$.
  \item Let $\lowervec_a \coloneqq \{0\}^{2\enn+\emm+2}$ and $\uppervec_a \coloneqq \{0\}^{2\enn+\emm}11$.
  \item Let $\lowervec_b \coloneqq \{1\}^{\enn}\{0\}^{\enn+\emm+2}$ and $\uppervec_b \coloneqq \{1\}^{\enn}\{0\}^{\enn}\{1\}^{\emm+2}$.
\end{itemize}

\paragraph{The preference lists of the students.}
%In the following, given a set~$A$, we use the notation~$[A]$ to denote an arbitrary but fixed linear order of $A$.\todoT{This notation is already used in Lemma~1; maybe it should go into the prelims? In Lemma~1, we potentially only want a subset of \(W\)... *****R: As before, if it's only about presentation then I'd only focus on this if it's a trivial fix.}
For each variable~$x_i\in X$ the preference lists of the literal students~$x_i$ and $\overline{x}_i$ are as follows:
\begin{align*}
  x_i \colon v \succ_{x_i} b~\text{ and }~ \overline{x}_i \colon b \succ_{\overline{x}_i} v.
\end{align*}
Each clause student \(d_j \in D\), and each literal student~$\lit\in Y\cup \overline{Y}$ contains only $v$ in its preference list.
The preference list of the special students are:
\begin{align*}
  d\colon & v\succ_d w\text{,} &
  r_1\colon& b\succ_{r_1} a\text{,}\\
  r_2\colon &b\succ_{r_2} w \succ_{r_2} a\text{,} & 
  r_3\colon& a\succ_{r_3} b\text{.}
\end{align*}

\paragraph{The preference lists of the colleges.}
The preference lists of the four colleges are as follows:
\begin{align*}
  v\colon & [D]\succ_v d \succ_v [Y] \succ_v [\overline{Y}] \succ_v [\overline{X}] \succ_v [X],\\
  w\colon & d \succ_w  r_2\text{,}\\
  a\colon & r_1 \succ_a  r_2 \succ_a r_3\text{,}\\
  b\colon & r_3 \succ_b  r_2 \succ_b r_1 \succ_b [X] \succ_b [\overline{X}]\text{.}
\end{align*}

            %             \todoR{If the purpose of this is to allow $v$ to only contain one choice of literal per $x\in X$... why not simply let each $x$ like $v$ and $b$ equally and be also equally liked by $v$ and $b$? I mean, there's no issue, it just seems unnecessarily complicated...}
            %             \todoH{First of all, we don't want to encode ties. Second, without ties, I think this gadget (together with two types for each variable) is the simplest one that encode truth/false assignement with constant number of colleges.}

\begin{table}
  \noindent{%\extrarowheight=.7\aboverulesep
  \addtolength{\extrarowheight}{\belowrulesep}
  \aboverulesep=0pt
  \belowrulesep=0pt
  \tabcolsep=0pt
  \begin{tabular}{@{}>{\columncolor{white}[0pt][\tabcolsep]}l@{}>{\columncolor{white}[0pt][\tabcolsep]}l@{}>{\columncolor{white}[0pt][\tabcolsep]}l@{}|@{}>{\columncolor{white}[0pt][\tabcolsep]}
  l@{}>{\columncolor{white}[0pt][\tabcolsep]}l@{}>{\columncolor{white}[0pt][\tabcolsep]}l@{}>{\columncolor{white}[0pt][\tabcolsep]}l@{}>{\columncolor{white}[0pt][\tabcolsep]}l@{}>{\columncolor{white}[0pt][\tabcolsep]}c@{}}
    \toprule
    S.~~ & Pref. &  & C.~ & Pref. & LQ. & UQ. & C.\\\midrule 
   \highlight{$r_1$:} & \highlight{$b \osucc a$} && \highlight{$w$:} & \highlight{$d\osucc r_2$} &    \highlight{${0}^{2\enn+\emm+2}$~}  &  \highlight{${1}^{2\enn+\emm+2}$~} & \highlight{$1$}\\
    \highlight{$r_2$:} & \highlight{$b \osucc w \osucc a$} & & \highlight{$a$:} & \highlight{$r_1 \osucc r_2 \osucc r_3$}  &\highlight{${0}^{2\enn+\emm+2}$}& \highlight{$0^{2\enn+\emm}11$}&\highlight{$1$}\\
    \highlight{$r_3$:} & \highlight{$a \osucc b$} & & \highlight{$b$:} & \highlight{$r_3 \osucc r_2 \osucc r_1 \osucc [X]  \osucc [\overline{X}]$}  &\highlight{$1^{\enn}{0}^{\enn+\emm+2}$}& \highlight{$1^{\enn}0^{\emm}1^{\emm+2}$}& \highlight{$\boldsymbol{\enn}+2$}\\
  $d$: & $v\osucc w$ & & $v$:& $[D]\osucc d \osucc [Y] \osucc [\overline{Y}] \osucc [\overline{X}] \osucc [X]$ & $1^\enn2^\enn3^\emm00$~~~& $1^\enn2^\enn3^\emm00$~~~& $3\enn$$+$$\emm$\\
  $x_i$: & $v \osucc b$ & &\\
  $\overline{x}_i$: &   $b\osucc v$ &\\
    \bottomrule
  \end{tabular}
}
\caption{A description of the preference lists, quotas and capacities of the colleges, together with the preference lists of the students from~$\{r_1$, $r_2$, $r_3$, $d\}\cup X\cup \overline{X}$, used in the proof of \cref{thm:diverse:lq>0:sigma2p-c}.}\label{fig:sigma2p-h-reduction}
\end{table}
This completes the construction, which can clearly be done in polynomial time~(also see \cref{fig:sigma2p-h-reduction} for an illustration).
We denote the constructed instance by~$I'$.
We first claim that $I'$ admits a feasible matching. 

\begin{claim}\label[claim]{claim:feasible}
  The constructed instance~$I'$ admits a feasible matching.
\end{claim}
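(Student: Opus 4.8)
The plan is to prove existence constructively, by writing down one explicit feasible matching~$M$; this is in fact the same matching that will later witness the ``if'' direction of the equivalence, instantiated with an arbitrary truth assignment of~$X$. Concretely, I would fix for each variable~$x_i\in X$ one of its two literal students and collect these choices into a set~$X'$, so that $X'$ contains exactly one of~$x_i,\overline{x}_i$ for every~$i\in[\enn]$. I then define $M$ by assigning $X'\cup Y\cup \overline{Y}\cup D'$ to~$v$, the set $(\{r_1,r_3\}\cup X\cup \overline{X})\setminus X'$ to~$b$, the student~$r_2$ to~$a$, and~$d$ to~$w$, where $D'\coloneqq\{d_j\mid |(X'\cup Y\cup \overline{Y})\cap C_j|=2\}$. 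Every clause student in $D\setminus D'$ is left unmatched, which is permitted since matchings need not saturate the students. The remainder of the argument is a college-by-college verification of the capacity and type-quota constraints.

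The two easy cases are~$w$ and~$a$. College~$w$ receives only~$d$, whose type vector is~$1^{2\enn+\emm}00$, so every coordinate of the resulting type count is at most~$1=\uppervec_w$ and the single occupied seat respects $q_w=1$; college~$a$ receives only~$r_2$, whose type vector~$0^{2\enn+\emm}11$ lies coordinate-wise below $\uppervec_a=0^{2\enn+\emm}11$ and within $q_a=1$. All lower quotas involved are~$0$, so feasibility at~$w$ and~$a$ is immediate.

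The heart of the proof is college~$v$, and the step I expect to be the main obstacle is checking that every clause type is met with \emph{exactly}~$3$ students. For a variable type this is routine: exactly one $X$-literal student (the one placed in~$X'$) and both $Y$-literal students of each variable sit in~$v$, which matches $\lowervec_v=\uppervec_v=1^{\enn}2^{\enn}3^{\emm}00$ on the first~$2\enn$ coordinates, while no student in~$M(v)$ carries a special type. For a clause type~$2\enn+j$ the only students carrying it are the three literal students of~$C_j$, the clause student~$d_j$, and~$d$; since $d$ is routed to~$w$, the count in~$v$ equals the number of~$C_j$'s literal students placed in~$v$, plus~$1$ if~$d_j\in D'$. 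Here I would invoke the structural assumption that each clause has at least two literals from~$Y\cup\overline{Y}$, hence at most one from~$X\cup\overline{X}$: all $Y$-literal students lie in~$v$, so this number is either~$3$ (three $Y$-literals, or two $Y$-literals together with an $X$-literal lying in~$X'$) or~$2$ (two $Y$-literals whose unique $X$-literal lies outside~$X'$). By the definition of~$D'$, the student~$d_j$ is added to~$v$ precisely in the latter case, so the total is~$3$ in either case. Finally the capacity is respected since $|M(v)|=\enn+2\enn+|D'|\le 3\enn+\emm=q_v$.

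It remains to verify~$b$, whose analysis is dual to that of~$v$. The $\enn$ literal students in $(X\cup\overline{X})\setminus X'$ contribute exactly one student of each $X$-variable type, meeting $\lowervec_b=\uppervec_b$ there, and contribute nothing to the $Y$-variable types, whose upper quota is~$0$; the clause-type upper quota of~$1$ is respected because each clause has at most one $X$-literal, so at most one literal student in~$b$ carries any given clause type; and $r_1,r_3$ supply exactly one student of each of the two special types. The capacity holds since $|M(b)|=\enn+2=q_b$. Having checked all four colleges, $M$ is feasible, which proves the claim. The only genuine subtlety throughout is the clause-type bookkeeping at~$v$ (and its mirror at~$b$), which relies entirely on the ``at least two $Y$-literals per clause'' property of the source instance.
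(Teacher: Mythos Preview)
Your proposal is correct and follows essentially the same approach as the paper: both pick an arbitrary selector set~$X'\subseteq X\cup\overline{X}$, define the matching by sending $X'\cup Y\cup\overline{Y}\cup D'$ to~$v$, $d$ to~$w$, $r_2$ to~$a$, and $\{r_1,r_3\}\cup(X\cup\overline{X})\setminus X'$ to~$b$, and then verify feasibility, with the clause-type bookkeeping at~$v$ hinging on the ``at least two $Y$-literals per clause'' assumption. The paper's write-up is terser (it declares the verification ``straightforward'' and only spells out the clause-type argument), whereas you carry out the college-by-college check in full, including the dual analysis at~$b$; but the construction and the key observation are identical.
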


\begin{proof}[Proof of Claim~\ref{claim:feasible}]
  \renewcommand{\qedsymbol}{$\diamond$}
  We prove a slightly stronger claim. Notably, consider an arbitrary matching $M$ constructed as follows.
%  Let $M$ be the matching, constructed as follows:
  Let $X'\subseteq X\cup \overline{X}$ with $|X'|=n$ such that for each~$i\in [n]$ it holds that $|\{x_i,\overline{x}_i\}\cap X'|=1$.
  In other words, $X'$ defines a truth assignment for~$X$.
  Let $M(v)\coloneqq X'\cup Y \cup \overline{Y}\cup D'$, where $D'\coloneqq \{d_j \mid |C_j \cap (X'\cup Y\cup \overline{Y})| = 2\}$.
  Let $M(w)\coloneqq \{d\}$, $M(a)\coloneqq \{r_2\}$, and $M(b)\!\coloneqq\!(\{r_1,r_3\}\cup X\cup \overline{X})\!\setminus X'$.
  
  It is straightforward to verify that matching~$M$ fulfills all capacity and type (diversity) constrains. Indeed, the type constraints for the first $2\enn$ types are met trivially, while the type constraints corresponding to clauses are met by the fact that each clause contains either 3 or 2 literals that were matched to $v$ (since for all \(j \in [m]\), \(|C_j \cap (X'\cup Y\cup \overline{Y})| \geq 2\)), and in the latter case the type constraint is met by the addition of the respective clause student. 
  %  , if one keeps in mind that in our \notoneinthree instance for all \(j \in [m]\), \(|C_j \cap (X'\cup Y\cup \overline{Y})| \geq 2\).
\end{proof}

%Next, we claim that $I$ is a yes instance if and only if the constructed \diverse{} instance~$I'$ admits a feasible and stable matching.

Before we continue with the correctness proof, we observe the following properties that each feasible and stable matching must fulfill.

\begin{claim}\label{claim:feasible-matching}
  Each feasible and stable matching~$M$ of $I'$ must satisfy the following.
  \begin{enumerate}[(1)]
    \item\label{feasible:X}
    For each~$x_i\in X$ %it holds that
    either (i) $M(x_i) = v$ and $M(\overline{x}_i) = b$ or (ii) $M(\overline{x}_i) = v$ and $M(x_i) = b$.
    %either (i) $M(x_i) = v$ and $M(\overline{x}_i) = b$ or (ii) $M(\overline{x}_i) = b$ and $M(x_i) = v$. 
%  There exists a subset~$X'$ with $|X'| = n$ and $X'=M(a)\cap (X\cup \overline{X})$ such that for each variable~$x_i\in X$ it holds that $|\{x_i, \overline{x}_i\}\cap X'|=1$ and $X\cup \overline{X}\setminus X' \subseteq M(b)$.
%  (This means that the $X\cup \overline{X}$-students assigned to $v$ induce a truth assignment for $X$.)
   \item\label{feasible:d-b} $M(d)=w$. % (because among all acceptable agents, only agent~$d$ can give $b$ a type from $Y$).
    \item\label{feasible:Y-a} $Y\cup \overline{Y} \subseteq M(v)$.
  \item\label{feasible:dummy} For each~$C_j$ it holds that $d_{j}\in M(v)$ if and only if $|C_j\cap M(v)| = 2$. %; recall that $2\le |C_j\cap (Y\cup \overline{Y})| \le 3$.
\end{enumerate}
\end{claim}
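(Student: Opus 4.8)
The plan is to prove property~(2), $M(d)=w$, first, since the other three properties then fall out of pure feasibility bookkeeping on the colleges $v$ and $b$. I would open with a few structural facts that hold for \emph{every} feasible matching $M$, regardless of stability. The college $a$ has acceptable set $\{r_1,r_2,r_3\}$ and capacity $1$; the college $w$ is acceptable only to $d$ and $r_2$ and has capacity $1$; the college $v$ carries quota $0$ on both special types ($\lowervec_v=\uppervec_v$ is $0$ there), so no student with a special type — in particular none of $r_1,r_2,r_3$ — can sit in $v$. Most importantly, $\lowervec_b=\uppervec_b=1$ on each of the first $\enn$ (the $X$-variable) types, and the only students acceptable to $b$ carrying the $i$-th variable type are $x_i,\overline{x}_i$ (the $r_j$ carry no variable type, and $d\notin\acset(b)$); hence exactly one of $x_i,\overline{x}_i$ lies in $M(b)$ for every $i\in[\enn]$. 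Thus $b$ always holds precisely $\enn$ literal students, leaving exactly $2$ of its $\enn+2$ slots for a subset of $\{r_1,r_2,r_3\}$ that is further constrained by $b$'s special-type upper quotas $1,1$.

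To establish $M(d)=w$, I would first exclude $M(d)=\bot$: as $w$ is then either empty or holds only $r_2$, the pair $\{d,w\}$ blocks with witness $\emptyset$ or $\{r_2\}$ (using $d\succ_w r_2$ and feasibility of a single student in $w$). Since $d$ finds only $v,w$ acceptable, it remains to rule out $M(d)=v$, which forces $M(w)\in\{\emptyset,\{r_2\}\}$. I would then rerun the case analysis underlying Lemma~\ref{lem:aux}, statement~\eqref{aux-card}, with $d$ in the role of the sole would-be occupant of $w$; the only extra verification beyond that lemma is that $d$'s types never obstruct feasibility, which holds because $w$ and $a$ take at most one student and because the whole argument is driven by $b$'s special-type quotas rather than by variable/clause types. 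Concretely, if $M(w)=\{r_2\}$ one shows $r_3\in M(b)$ (otherwise $\{r_2,b\}$ blocks, with witness $\emptyset$ or $\{r_1\}$ since $r_2\succ_b r_1$), after which $a$ is left empty or $r_1$ can improve into the free slot of $b$, giving a blocking pair $\{r_3,a\}$ or $\{r_1,b\}$; if $M(w)=\emptyset$ then $r_2$ must sit in $b$ (else $\{r_2,w\}$ blocks), which evicts $r_1,r_3$ from $b$ on special-type grounds and yields a blocking pair $\{r_1,a\}$ or $\{r_3,b\}$ (the latter with witness $\{r_2\}$, using $r_3\succ_b r_2$). This case analysis is the main obstacle, being the only point where stability — rather than feasibility alone — is invoked.

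With $M(d)=w$ in hand we have $d\notin M(v)$ and $d\notin M(b)$, so for each $i$ the only $v$-acceptable (resp.\ $b$-acceptable) students carrying the $i$-th $X$-variable type are $x_i,\overline{x}_i$. Since $\lowervec_v=\uppervec_v$ demands exactly one such student in $M(v)$ and $\lowervec_b=\uppervec_b$ demands exactly one in $M(b)$, and a student occupies at most one college, $x_i$ and $\overline{x}_i$ must split as one into $v$ and the other into $b$, which is property~(1). The same counting on the $Y$-variable types, where $\uppervec_v=2$ and only $y_z,\overline{y}_z$ can supply type $z$, forces both into $M(v)$, giving property~(3). Finally, $v$ must receive exactly three students of each clause-type $2\enn+j$, and the only such students acceptable to $v$ are the three literal students of $C_j$ together with $d_j$ (as $d\notin M(v)$); feasibility therefore gives $|C_j\cap M(v)| + [\,d_j\in M(v)\,]=3$, which, since $d_j$ is acceptable only to $v$, is exactly ``$d_j\in M(v)$ iff $|C_j\cap M(v)|=2$'', i.e.\ property~(4). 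The equation is consistent (the count never needs to be below $2$) precisely because each clause has at least two $Y$-literals, all of which lie in $M(v)$ by~(3).
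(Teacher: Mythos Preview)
Your proposal is correct and follows essentially the same route as the paper: establish~(2) first via the stability argument of Lemma~\ref{lem:aux}\eqref{aux-card}, then derive~(1),~(3),~(4) from pure feasibility bookkeeping on~$v$ and~$b$. The only cosmetic difference is that the paper cites Lemma~\ref{lem:aux}\eqref{aux-card} directly (with $\{d\},\{w\}$ playing the roles of $U,W$ with respect to the two special types), whereas you rerun its case analysis inline and separately treat the subcase $M(d)=\bot$, which the lemma already subsumes.
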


\begin{proof}[Proof of
  \cref{claim:feasible-matching}]
  \renewcommand{\qedsymbol}{$\diamond$}
  We first show Statement~\eqref{feasible:d-b} as the other statements are based on this.
  For Statement~\eqref{feasible:d-b}, observe that regarding types~$2\enn+\emm+1$ and $2\enn+\emm+2$, the sets~$\{d\}\uplus \{r_1,r_2,r_3\}$ and $\{w\}\uplus \{a,b\}$ correspond exactly to students~$U\uplus \{r_1,r_2,r_3\}$ and colleges~$W\uplus \{a,b\}$ given in \cref{lem:aux}.
  By \cref{lem:aux}\eqref{aux-card} it follows that  $|M(w)\cap \{d\}|=1$, i.e., $M(d)=w$.

  Statement~\eqref{feasible:X} holds because $d\notin M(v)$ (see Statement~\eqref{feasible:d-b}) and due to the quotas of $v$ and $b$ on the $X$-types.  

  Statement~\eqref{feasible:Y-a} holds because $d\notin M(v)$ (see Statement~\eqref{feasible:d-b}) and due to the $Y$-quotas of $v$.

  It remains to show Statement~\eqref{feasible:dummy}.
  For the ``only if'' part, assume that $d_j\in M(v)$.
  Since $\typevec_{d_j}[2\enn+j] = 1$ and $\lowervec_{v}[2\enn+j]=\uppervec_{v}[2\enn+j]=3$ it follows $M(v)$ contains exactly two literal students~$\lit$ and $\lit'$ with $M(v)\cap C_j = \{\lit, \lit'\}$.
  For the ``if'' part, assume that $|M(v)\cap C_j| = 2$. Similarly, since $d\notin M(v)$ and since $\lowervec_{v}[2\enn+j]=\uppervec_{v}[2\enn+j]=3$ and since $d_j$ is the only student (other than those from $C_j$) which has type~$2\enn+j$,
  it follows that $d_j\in M(v)$.
\end{proof}

Now, we proceed to establish the correctness of the reduction, i.e., $I$ is a yes instance if and only if the constructed \diverse{} instance~$I'$ admits a feasible and stable matching.

For the ``only if'' part, assume that $I$ is a yes instance.
Let $\sigma_X$ be a truth assignment for~$X$ which ``witnesses'' that $I$ is a yes instance.
Let $X'\coloneqq \{x_i\mid \sigma_{X}(x_{i})=\true\} \cup \{\overline{x}_{i}\mid\sigma_{X}(x_i)=\false\}$.
Let $M$ be a matching constructed as follows:
\begin{enumerate}
  \item $M(v)\coloneqq X'\cup Y\cup \overline{Y}\cup D'$, where $D'\coloneqq \{d_j \mid |C_j \cap (X'\cup Y\cup \overline{Y})| = 2\}$,
  \item $M(w)\coloneqq \{d\}$,
  \item $M(a)\coloneqq \{r_2\}$,
  \item $M(b)\coloneqq \{r_1,r_3\}\cup (X\cup \overline{X}\setminus X')$.
\end{enumerate}
We claim that $M$ is feasible and stable.
It is straightforward to verify that $M$ is feasible; note that by \cref{claim:nooneinthree-sigma2p-h} we know that if $d_j \notin M(v)$ then $|C_j \cap (X'\cup Y\cup \overline{Y})|=3$ for all $j\in [\emm]$.
Suppose, for the sake of contradiction, that $M$ is not stable.
%Since $M$ is feasible, by \cref{claim:feasible-matching}\eqref{feasible:X}, let $X'\coloneqq M(a)\cap (X\cup \overline{X})$ be the subset of literal students from $X\cup\overline{X}$ which are assigned to $a$.
That is, it is blocked by some unmatched pair~$\{\alpha, \beta\}$ with $\beta\in \{a,b,v,w\}$.
Now, observe that regarding the special types~$2\enn+\emm+1$ and $2\enn+\emm+2$
students~$\{d\}\cup \{r_1,r_2,r_3\}$ and colleges~$W\cup \{a,b\}$ correspond exactly to the students~$U\cup \{r_1,r_2,r_3\}$ and colleges~$W\cup \{a.b\}$ given in \cref{lem:aux}.
By \cref{lem:aux}\eqref{aux-nec}, we immediately have that no unmatched pair~$\{u',w'\}$ with $u'\in \{r_1,r_2,r_3\}$ and $w'\in \{a,b\}$ is blocking~$M$.
%that $e$ will \emph{not} be involved in a blocking pair because she already gets her most preferred college.
Further, observe that college~$w$ is \emph{not} involved in a blocking pair as $w$ has only capacity one and already receives its most preferred student.
Similarly, neither is any student in $Y\cup \overline{Y}$ involved in a blocking pair.
This means that $\alpha\in X \cup \overline{X} \cup \{d\} \cup D$. % and $\beta \in \{v,b\}$
%Since $M(v)=\{f\}$, by the preferences and the quota constraints of~$v$, it follows that~$v$ will also not be able to form a blocking pair.

%We distinguish between six cases for the choices of~$\alpha$.
We now distinguish between four cases; in each case, let $S$ be the witness for the considered blocking pair.
%, i.e., the students assigned to $\alpha$ that will be replaced with $\beta$ to result in a new feasible matching for $v$.
% Now, observe that no literal student of $Y$ will be involved in a blocking pair.
% Neither is $e$ involved in a blocking pair because she already gets her most preferred college.
% This means that $\alpha\in X\cup \overline{X}\cup \{d,f,g\} \cup D$.
% Let $S$ be the set of students assigned to $\beta$ that will be replaced with $\alpha$ to result in a new feasible matching for $\alpha$.
\begin{enumerate}[(1)]
  \item If $\alpha = x_i$ for some $i\in [\enn]$, then $\beta=v$ and the set~$\{\overline{x}_i\}$ is the only subset of students which can be replaced with $x_i$ such that the new matching remains feasible for $v$ regarding the $X$-types.
However, college~$v$ prefers $\overline{x}_i$ to $x_i$, meaning that $\{x_i,v\}$ is not a blocking pair.
\item Similarly, if $\alpha=\overline{x}_i$ for some $i \in [\enn]$, then $\beta=b$ and the set~$\{{x}_i\}$ is the only subset of students which can be replaced with $\alpha$ such that the new matching remains feasible for $\beta$ regarding the $X$-types.
Again, $b$ prefers~${x}_i$ to~$\overline{x}_i$, meaning that $\{\overline{x}_i, b\}$ is not a blocking pair.
\item If $\alpha=d_j$ for some $j\in [\emm]$, then $\beta=v$. However, no subset of students assigned to~$v$ can be replaced so that the resulting matching remains feasible for $v$ regarding the $X$-types and the $Y$-types because $d_j$ has only type~$2\enn+j$.
%\item If $\alpha=f$, then $\beta = w$ (recall that $\beta \in \{a,w\}$).
%Further, it must hold that $S\subseteq (X\cup \overline{X})\setminus X'$. However, replacing $S$ with $e$ will not result in a feasible matching for $w$ due to the quotas for the $X$-types.
%\item If $\alpha=g$, then $\beta=v$. However, $q_v=1$ and $v$ does \emph{not} prefer $g$ to $M(v)=f$.
\item Finally, if $\alpha=d$, then $\beta=v$; this represents the ``core'' of the reduction. In this case, there must be a subset of literal students~$Y'\subseteq Y\cup \overline{Y} \subseteq M(v)$ with $|Y'|=\enn$ such that
for each $i\in \{\enn+1,\ldots,2\enn\}$ it holds that $|\{y_i,\overline{y}_i\}\cap Y'| = 1$ and $S=X'\cup Y'$; recall that $X'= M(a)\cap (X \cup \overline{X})$ and note that $D\cap S =\emptyset$ as college~$v$ prefers each clause student to~$d$.
Now, define a truth assignment for~$Y$ by letting~$\sigma_{Y}(y_i)\coloneqq \true$ if and only if $y_i \in Y'$.
Since~$I$ is a yes instance, it follows that there exists a clause~$C_j=(\lit_j^1, \lit_j^2, \lit_j^3)$ which does \emph{not} have exactly one true literal under $\sigma_x$ and $\sigma_y$.
If $C_j$ has no true literal under~$\sigma_X$ and~$\sigma_Y$, then $|\{u\in M(v)\setminus S \mid \typevec_{u}[2\enn+j]=1\}|=3$, a contradiction to the assumption that replacing $S$ with $d$ is feasible for~$v$ because $\typevec_d[2\enn+j]=1$.
If $C_j$ has \emph{more than one} true literal under~$\sigma_X$ and~$\sigma_Y$, then $|\{u\in M(a)\setminus S \mid \typevec_{u}[2\enn+j]=1\}|\le 1$, which once again represents a contradiction to the assumption that replacing $S$ with $d$ is feasible for $v$.
%a contradiction to the assumption that replacing $S$ with $d$ is feasible for~$v$ because $\sigma_d[2n+j]=1$.
\end{enumerate}
In any case, we arrive at a contradiction.
Thus, $I$ is a yes instance.

For the ``if'' part, assume that $I'$ is a yes instance, i.e., there exists a feasible and stable matching, called~$M$.
Define $X'\coloneqq M(v)\cap (X\cup \overline{X})$ as the subset of literal students from $X\cup\overline{X}$ which are assigned to $v$.
By \cref{claim:feasible-matching}\eqref{feasible:X}, $X'$ defines a truth assignment of $X$.
Let $\sigma_{X}$ be the truth assignment of $X$ corresponding to~$X'$, i.e,
$\sigma_{X}(x_i)\coloneqq \true$ if and only if $x_i\in X'$.
Consider an arbitrary truth assignment~$\sigma_{Y}$ of $Y$ and let $Y'$ be the subset of literal students that correspond to~$\sigma_Y$, i.e,
$Y'\coloneqq \{y_i \in Y \mid \sigma_Y(y_i)=\true\}\cup \{\overline{y}_i \mid \sigma_Y(y_i)=\false\}$.
Define $A \coloneqq X'\cup Y'$; note that $A\subseteq M(v)$ due to \cref{claim:feasible-matching}\eqref{feasible:Y-a}.
By \cref{claim:feasible-matching}\eqref{feasible:d-b}, $M(d)=w$.
Hence,~$d$ prefers $v$ to~$M(d)$.
Moreover,~$v$ prefers~$d$ to every student from $A$.
Since $M$ is stable, it must hold that $M\setminus \big(\{\{\lit, v\} \mid \lit \in A\} \cup \{d,w\}\big) \cup \{\{d,v\}\}$ is \emph{not} feasible for $v$.
Observe that by \cref{claim:feasible-matching}(\ref{feasible:X},\ref{feasible:Y-a},\ref{feasible:dummy}),
the students~$M(v)$ assigned to~$v$ reach the upper quota of each type for~$v$.
By the type quotas of $v$ and by the type vector of $d$, we have to show that
$\sum_{\lit \in A}\typevec_{\lit} \neq \{1\}^{2\enn+\emm}00$.
Clearly, for each~$z\in [2\enn]$, it holds that $|\{\lit \in A\mid \typevec_{\lit}[z]=1\}|=1$.
Thus, to ensure that $\{d,v\}$ is \emph{not} blocking $M$, there must exist some type~$2n+j$ such that $\sum_{\lit \in A}\typevec_{\lit} \in \{0,2,3\}$ because $\typevec_d[2\enn+j]=1$.
This implies that $C_j$ obtains \emph{no}, \emph{exactly two}, or \emph{exactly three} literals under~$\sigma_X\cup \sigma_Y$.
In other words, $C_j$ is not 1-in-3-satisfied under $\sigma_X\cup \sigma_Y$, implying that $I$ is a yes instance.

% % $|\{u \in M(a)\setminus A \mid \typevec_u[2n+j]=1\}| \in \{0,1,3\}$ because $\tau_{d}[2n+j]=1$.
% There are four cases to distinguish.
% \begin{enumerate}[(1)]
%   \item If $\sum_{\lit\in A}\typevec_\lit[2n+j]=0$, then $C_j \cap A=\emptyset$, meaning that $C_j$ obtains \emph{no} true literals under~$\sigma_X\cup \sigma_Y$.
%   \item If $\sum_{\lit\in A}\typevec_\lit[2n+j]=2$, then $C_j$ obtains exactly \emph{two} true literals under~$\sigma_{X}\cup \sigma_Y$.
%   \item If $\sum_{\lit\in A}\typevec_u[2n+j]=3$, then $C_j$ obtains exactly \emph{three} true literals under~$\sigma_X\cup \sigma_Y$.
% \end{enumerate}
% In any case, we have shown that there exists a clause which is not 1-in-3-satisfied under $\sigma_X$ and $\sigma_Y$, implying that $I$ is a yes instance.

%\todoT{I am more used to showing containment first and then showing hardness, so I would put this part in the beginning of the proof. ***R: I agree, but don't think this is critical. Feel free to fix it if you feel like it.}
%\todoH{I would prefer to keep the structure as it is: First show the difficult/interesting part and then give the easier proof later.}
It remains to show that \diverseties is contained in the complexity class~$\Sigma^{\text{P}}_2$.
Intuitively, observe that the problem can be stated as ``deciding whether there exists a feasible matching for $I$ such that every unmatched student-college pair~$\{u,w\}$ is \emph{not} blocking~$M$'', i.e., via an existential quantification followed by a universal quantification---which is a well-known characterization of problems in $\Sigma^{\text{P}}_2$~\cite[Theorem~17.8]{Pap94}. 

To formalize, observe that given an instance~$I$ of \diverseties{} and a matching~$M$ of~$I$,
checking~$M$ is \emph{not} stable can be done by a non-deterministic polynomial-time oracle machine: we can simply guess an unmatched student-college pair~$\{u,w\}$ and a subset \(U'\) of students assigned to college~$w$ and check in polynomial time whether \(U'\) witnesses that $\{u,w\}$ is a blocking pair for $M$.
This implies that given an instance of \diverseties{} and a matching for $I$ checking whether $M$ is \emph{not} feasible or \emph{not} stable for~$I$ is in NP; note that checking whether $M$ is not feasible can be even done in polynomial time.
Let $O$ be such an NP oracle.
Then, \diverseties\ belongs to $\text{NP}^{\text{NP}}$ because there exists an non-deterministic polynomial-time Turing machine that on an input~$I$ of \diverseties{} guesses a matching~$M$ and asks oracle~$O$ whether $M$ is not feasible or not stable for input~$I$.
It returns yes if and only if the oracle says no.
By definition, $\text{NP}^{\text{NP}}=\Sigma^{\text{P}}_2$~\cite[Definition 17.2]{Pap94}, implying that \diverseties lies in $\Sigma^{\text{P}}$.
\end{proof}
\fi

% In addition to serving as an errata to the recent paper of \citeauthor{AzizGaspersSunWalsh2019aamas},
To see why the reduction behind Theorem~\ref{thm:diverse:lq>0:sigma2p-c} can be used to directly show $\Sigma^{\text{P}}_2$-hardness for the problem studied by \citeauthor{AzizGaspersSunWalsh2019aamas}~\shortcite{AzizGaspersSunWalsh2019aamas} we observe that in the constructed instance, $\{d,v\}$ is a blocking pair if and only if it is a d-blocking pair.
\iflong
Recall that an unmatched pair~$\{u,w\}$ is a \myemph{d-blocking pair} of a matching~$M$ if there exists a subset~$S\subseteq M(w)$ of assigned students which is a witness that $\{u,w\}$ is blocking~$M$ and $M \cup \{u,w\} \setminus \big(\{u,M(u)\}\cup \{\{u',w\}\mid u'\in S\}\big)$ is feasible for all colleges.
\fi
\iflong
%\todoH{Added the following statement to make it clear why their problem has the same complexity. Please check...}
\begin{corollary}\label{co:aziz-sigma2p-c}
  The \textsc{School Choice with Diversity Constraints} problem studied by \citeauthor{AzizGaspersSunWalsh2019aamas}~\shortcite{AzizGaspersSunWalsh2019aamas} is $\Sigma^{\text{P}}_2$-complete.
\end{corollary}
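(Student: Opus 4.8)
The plan is to reuse the very same instance $I'$ constructed in the proof of \cref{thm:diverse:lq>0:sigma2p-c}, together with the $\exists\forall$ characterisation of $\Sigma^{\text{P}}_2$, and to argue that replacing blocking pairs by d-blocking pairs does not change which matchings count as stable on $I'$. For membership I would repeat the argument used for \diverseties\ almost verbatim: checking that a given feasible matching $M$ \emph{does} admit a d-blocking pair is in NP, since one can guess an unmatched pair $\{u,w\}$ and a witness $U'\subseteq M(w)$ and then verify in polynomial time both that $\{u,w\}$ blocks $M$ and that the rematching is feasible for \emph{all} colleges (the extra global feasibility test is still polynomial). Hence Aziz et al.'s problem is expressible as ``there exists a feasible matching such that no unmatched pair d-blocks'', placing it in $\text{NP}^{\text{NP}}=\Sigma^{\text{P}}_2$.

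For hardness I would feed $I'$ into Aziz et al.'s problem and show it is a yes-instance there iff the original \notoneinthree\ instance $I$ is a yes-instance. The ``only if'' direction is free: every d-blocking pair is in particular a blocking pair, so the feasible, blocking-pair-free matching exhibited in the yes-case of \cref{thm:diverse:lq>0:sigma2p-c} admits no d-blocking pairs either, i.e.\ it is stable in Aziz et al.'s sense. For the ``if'' direction I would re-run the analysis of \cref{thm:diverse:lq>0:sigma2p-c}, checking that it only ever invokes \emph{d}-blocking pairs. Concretely, the auxiliary pairs used to establish \cref{lem:aux} and the structural properties in \cref{claim:feasible-matching} all move students $r_1,r_2,r_3$ (or involve empty colleges) between $a,b,w$; since $a$ and $w$ have all-zero lower quotas and the $r_i$ carry none of the $X$-types that give $b$ its positive lower quota, removing the displaced student never violates a lower quota of its former college. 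Thus each such blocking pair is also a d-blocking pair, so \cref{lem:aux} and \cref{claim:feasible-matching} hold unchanged for d-stability, forcing $M(d)=w$ and pinning down the structure of $M(v)$.

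The heart of the matter is then exactly the observation recorded just before the corollary: for the decisive pair $\{d,v\}$ we have $M(d)=w$, and since $w$'s lower quotas are all zero, $w$ stays feasible after $d$ leaves it; therefore $\{d,v\}$ is a d-blocking pair precisely when it is a blocking pair, namely precisely when the witness set can be swapped out at $v$ while keeping $v$ feasible. Consequently the forcing step (that for every $Y$-assignment the witness $A=X'\cup Y'$ \emph{cannot} be replaced by $d$ at $v$, which exhibits a clause that is not $1$-in-$3$-satisfied) goes through verbatim. Combining the two directions yields correctness of the reduction for Aziz et al.'s problem, and together with membership this gives $\Sigma^{\text{P}}_2$-completeness.

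I expect the main obstacle to be precisely the bookkeeping of the second paragraph: one must verify, pair by pair, that none of the blocking pairs invoked in \cref{lem:aux} and \cref{claim:feasible-matching} fails to be a d-blocking pair, i.e.\ that the only colleges that lose a student in those arguments ($a$, $w$, and $b$ losing some $r_i$) retain their lower quotas afterwards. This is where the deliberate placement of the positive lower quotas---all concentrated on $v$, and on $b$ only for $X$-types that are carried exclusively by literal students---does the work, and it is the step most at risk of a subtle mismatch between the two blocking notions.
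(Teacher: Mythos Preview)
Your proposal is correct and follows essentially the same approach as the paper: reuse the instance $I'$ from \cref{thm:diverse:lq>0:sigma2p-c}, argue membership via the $\exists\forall$ characterisation, and for hardness verify that the blocking pairs invoked in the gadget analysis (\cref{lem:aux}) and the decisive pair $\{d,v\}$ are all d-blocking pairs because the only colleges that lose a student---$a$, $w$, or $b$ losing some $r_i$---either have all-zero lower quotas or are unaffected on the types with positive lower quota.

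There is one minor structural difference worth noting. The paper proves the slightly stronger intermediate statement that, on the specific instance $I'$, a feasible matching is stable if and only if it is d-stable; this requires, in the ``d-stable $\Rightarrow$ stable'' direction, a case analysis over \emph{all} potential blocking pairs ($\{x_i,v\}$, $\{\overline{x}_i,b\}$, $\{d_j,v\}$, $\{r_1,a\}$, $\{d,v\}$, \ldots) showing each would also be d-blocking. Your route is more economical: you go directly from ``$I'$ has a d-stable matching'' to ``$I$ is a yes-instance of \notoneinthree'', which only needs the structural \cref{claim:feasible-matching} (hence \cref{lem:aux}\eqref{aux-card}) to hold under d-stability and the single observation that $\{d,v\}$ is d-blocking iff blocking. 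This suffices because the ``if'' direction of \cref{thm:diverse:lq>0:sigma2p-c} invokes stability only through $\{d,v\}$. The paper's detour buys a clean equivalence statement on $I'$; your shortcut buys less bookkeeping. Both are valid.
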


\begin{proof}
  The \textsc{School Choice with Diversity Constraints} problem has as input a \diverse{} instance and asks whether there exists a feasible matching which does not admit any d-blocking pairs.
  To show that this problem is also $\Sigma^{\text{P}}_2$-hard, we use the same construction as given in \cref{thm:diverse:lq>0:sigma2p-c}.
  Let $I'$ be the instance constructed in the $\Sigma^{\text{P}}_2$-hardness proof of \cref{thm:diverse:lq>0:sigma2p-c}.
  To show the correctness,
  it suffices to show that a feasible matching~$M$ is stable for $I'$ if and only if it does not admit any d-blocking pair.
  %  admits a feasible and stable matching~$M$ if and only if ing pair if and only if $I'$ does not admit any d-blocking pair. 
%  It remains to show that $I$ is a yes instance of \notoneinthree if and only if $I'$ does not admit any d-blocking pairs.
  
  For the ``only if'' part, assume that $M$ is a feasible and stable matching for $I'$. % does not admit any blocking pairs. 
  This means that no unmatched pair is blocking~$M$.
  Since by definition each d-blocking pair is a blocking pair, we infer that no unmatched pair is d-blocking~$M$.

  For the ``if'' part, assume that $M$ is a feasible matching which does not admit any d-blocking pair.
  First of all, we observe and show that properties as given in \cref{claim:feasible-matching} can be also shown for $M$.
  We first claim that $d\in M(w)$.
  Suppose, for the sake of contradiction, that $d\notin M(w)$.
  Then, by the preferences of $w$ it must hold that $M(r_2)\succeq_{r_2} w$ as otherwise $\{r_2,w\}$ forms a d-blocking pair; note that both $a$ and $b$ have zero lower quotas regarding types~$2\enn+\emm+1$ and $2\enn+\emm+2$.
  If $M(r_2)=w$, then $r_3 \in M(b)$ as otherwise $\{r_2,b\}$ is a d-blocking pair.
  Then, it must hold that $r_1\in M(b)$ as otherwise $\{r_1,b\}$ is a d-blocking pair.
  However, $a$ will receive no student and will form with $r_3$ a d-blocking pair, a contradiction.

  If $M(r_2)=b$, then neither~$r_3$ nor~$r_1$ can be assigned to $b$ anymore because of the upper quotas regarding types~$2\enn+\emm+1$ and $2\enn+\emm+2$.
  Since $a$ has only capacity one, it follows that $r_1$ or $r_3$ will remain unmatched.
  If $M(r_1)=\bot$, then it will form with $a$ a d-blocking pair.
  If $M(r_3)=\bot$, then it will form with $b$ a d-blocking pair.
  In other words, we infer a contradiction for the case when $M(r_2)=b$.
  Together, we have shown that $d\in M(w)$.
  Consequently, we can infer that $M(a)=\{r_2\}$ and $\{r_1,r_3\}\subseteq M(b)$.
 
  Then, by the quotas of $Y$-types and $X$-types for colleges~$v$ and $w$, it follows that
   \begin{compactitem}
    \item For each~$x_i\in X$ %it holds that
    either (i) $M(x_i) = v$ and $M(\overline{x}_i) = b$ or (ii) $M(\overline{x}_i) = v$ and $M(x_i) = b$.
    \item $Y\cup \overline{Y} \subseteq M(v)$.
  \end{compactitem}
  Moreover, for each~$C_j$ it holds that $d_{j}\in M(v)$ if and only if $|C_j\cap M(v)| = 2$.

  Summarizing, there exists a subset~$X'\subseteq X\cup \overline{X}$ with $|X|=\enn$,
  where for each $i\in [\enn]$ it holds that $|\{x_i,\overline{x}_i\}\cap X'|=1$ such that 
  \begin{compactenum}[(1)]
  \item $M(v) = X'\cup Y\cup \overline{Y}\cup D'$, where $D'\coloneqq \{d_j \mid |C_j \cap (X'\cup Y\cup \overline{Y})| = 2\}$.
  \item $M(w)\coloneqq \{d\}$.
  \item $M(a)\coloneqq \{r_2\}$.
  \item $M(b)\coloneqq \{r_1,r_3\}\cup (X\cup \overline{X}\setminus X')$.
\end{compactenum}

  Now, we show that $M$ does not admit any blocking pair.
  Suppose, for the sake of contradiction, that $M$ admits a blocking pair~$\{\alpha, \beta\}$ with $\beta \in \{v,w,a,b\}$.
  Clearly, $\beta \neq w$ as it already obtains its most preferred student.
  Further, $\alpha \notin X\cup \overline{X}$ because of the following reasons:
  If $\alpha=x_i$, then it must hold that $M(x_i)=b$, $M(\overline{x}_i)=v$, and $\alpha = v$.
  However, $v$ prefers $\overline{x}_i$ to $x_i$.
  Similarly, if $\alpha=\overline{x}_i$, then it must hold that $M(\overline{x}_i)=v$, $M({x}_i)=b$, and $\alpha = b$.
  However, $b$ prefers $x_i$ to $\overline{x}_i$.
  Next, we distinguish between three cases for the choices of $\beta$.
  \begin{itemize}
    \item If $\beta = a$, then $\alpha = r_1$.
    This would imply that $\{r_1,a\}$ is also a d-blocking pair because $M(r_1)=b$ has zero lower quotas regarding the types~$2\enn+\emm+1$ and $2\enn+\emm+2$, a contradiction.
    \item If $\beta = b$, then $\alpha \in \{x_i,\overline{x}_i\}$ for some~$i\in [\enn]$.
    However, we have just reasoned that $\alpha\notin X\cup \overline{X}$.
    \item If $\beta = v$, then $\alpha \in \{d_j, d\}$ for some~$j\in [\emm]$ and some~$i\in [\enn]$.

    On the one hand, if $\{d_j, v\}$ would form a blocking pair, then $M(d_j)=\bot$ and thus, $\{d_j,v\}$ would also form a d-blocking pair, a contradiction.
    On the other hand, if $\{d, v\}$ would form a blocking pair, then since $M(d)=w$ has zero lower quotas for all types, it follows that $\{d,v\}$ would also form a d-blocking pair, a contradiction.
  \end{itemize}
  All together, we achieve our claim that $M$ is stable.
  This completes the proof for showing hardness.

  Finally, let us turn the $\Sigma^{\text{P}}_2$-containment.
  Similarly to the containment proof for \cref{thm:diverse:lq>0:sigma2p-c},
  it suffices to show that given a matching~$M$ we can check whether $M$ is not feasible or admits a d-blocking pair using a non-deterministic polynomial-time oracle machine:
  Check whether $M$ is not feasible in polynomial time,
  and guess an unmatched student-college pair~$\{u,w\}$ and a subset \(U'\) of students assigned to college~$w$ and check in polynomial time whether \(U'\) ``witnesses'' that $\{u,w\}$ is a d-blocking pair for $M$.
%  This implies that given an instance of \diverseties{} and a matching for $I$ checking whether $M$ is \emph{not} feasible or \emph{not} stable for~$I$ is in NP; note that checking whether $M$ is not feasible can be even done in polynomial time.
%  Let $O$ be such an NP oracle.
%  Then,  belongs to $\text{NP}^{\text{NP}}$ because there exists an non-deterministic polynomial-time Turing machine that on an input~$I$ of \diverseties{} guesses a matching~$M$ and asks oracle~$O$ whether $M$ is not feasible or not stable for input~$I$.
% It returns yes if and only if the oracle says no.
% By definition, $\text{NP}^{\text{NP}}=\Sigma^{\text{P}}_2$~\cite[Definition 17.2]{Pap94}, implying that \diverseties lies in $\Sigma^{\text{P}}$.
\end{proof}
\fi
The proof of Theorem~\ref{thm:diverse:lq>0:sigma2p-c} can also be adapted to correct an erroneous theorem pertaining to a related problem called \CSM (\textsc{CSM})~\cite[Theorem~3.1]{Huang2010classifiedSM}. In particular, that theorem claims that \textsc{CSM} is NP-complete, but it is in effect also $\Sigma^{\text{P}}_2$-hard.
The idea for the adaption is to construct dummy variable students with zero-types and introduce additional types to ensure that $\{v,d\}$ is a blocking pair in the proof of \cref{thm:diverse:lq>0:sigma2p-c} if and only if $v$ forms with $d$ and the dummy variable students a blocking coalition.

\iflong
\begin{proposition}\label{prop:CSM-sigma2p-c}
  \CSM is $\Sigma^{\text{NP}}_2$-complete; the hardness holds even for only four colleges.
\end{proposition}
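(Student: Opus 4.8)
The plan is to prove the statement in two parts, both by building directly on the proof of \cref{thm:diverse:lq>0:sigma2p-c}. For membership in $\Sigma^{\text{P}}_2$ I would reuse the $\text{NP}^{\text{NP}}$ argument verbatim: a non-deterministic machine guesses a feasible matching~$M$, and an NP-oracle decides whether $M$ is \emph{not} feasible or admits a \emph{blocking coalition}; the machine accepts iff the oracle rejects. The oracle runs in NP because one can guess a candidate coalition together with the induced reassignment at the deviating college and verify in polynomial time that every coalition member prefers the college and that the college's new (classification-feasible) set is preferred and contains a coalition member. Since $\text{NP}^{\text{NP}}=\Sigma^{\text{P}}_2$, containment follows.

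For hardness I would start from the instance~$I'$ of \cref{thm:diverse:lq>0:sigma2p-c} and keep its architecture intact: the gadget of \cref{lem:aux} on $\{r_1,r_2,r_3\}$ and $\{a,b\}$, the special student~$d$ with colleges $v$ and $w$, the clause-students~$D$, and the literal-students over $X\cup\overline{X}\cup Y\cup\overline{Y}$. The only conceptual gap is that a blocking coalition may admit several incoming students at once, whereas a blocking pair admits a single incoming student with a displaced witness set; but the whole reduction turns on the single critical interaction---whether $\{d,v\}$ blocks---so it suffices to re-engineer that one interaction. Following the stated idea, I would add a collection of \emph{dummy} students with all-zero type vectors that find only~$v$ acceptable and rank just below~$d$ in $\succ_v$, together with a small number of auxiliary types, so that after $d$ displaces a witness set $S=X'\cup Y'$ the college~$v$ can meet its (exact) quotas and strictly improve only if the dummies simultaneously enter. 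This forces every coalition that admits $d$ to~$v$ into the shape $\{d\}\cup(\text{dummies})$, and makes such a coalition blocking precisely when $S$ witnesses that $\{d,v\}$ is a blocking pair in~$I'$.

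To argue correctness I would mirror the two directions of \cref{thm:diverse:lq>0:sigma2p-c}. In any feasible matching, \cref{lem:aux} together with the quota analysis still forces $M(d)=w$, $M(a)=\{r_2\}$, $\{r_1,r_3\}\subseteq M(b)$, and the variable/clause structure on~$v$; the dummy students, having zero types, never disturb these arguments and can always be left unmatched without breaking feasibility. One then checks, exactly as in the original case analysis, that no coalition seeded at the gadget, at $w$, at the literal-students, or at the clause-students can block, so the \emph{only} possible blocking coalition is the engineered $\{d\}\cup(\text{dummies})$ one. Hence $M$ admits no blocking coalition iff the $X$-assignment read off from $M(v)\cap(X\cup\overline{X})$ witnesses that $I$ is a yes-instance of \notoneinthree, which yields the reduction; the number of colleges remains four ($v,w,a,b$).

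The main obstacle is to pin down Huang's blocking-coalition semantics and to rule out \emph{spurious} coalitions created by the extra freedom of multi-agent deviations---specifically, to verify that the dummy students cannot combine with the gadget or with literal-students to block in an unintended way, and that the auxiliary types genuinely couple the admission of~$d$ to the admission of the dummies at~$v$ (neither alone should let $v$ strictly improve while staying feasible). Once the coalition at~$v$ is shown to be forced into the form $\{d\}\cup(\text{dummies})$, the remaining verification is a routine adaptation of the blocking-pair analysis already carried out for \cref{thm:diverse:lq>0:sigma2p-c}.
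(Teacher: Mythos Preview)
Your containment argument and the high-level plan for hardness (adapt the instance of \cref{thm:diverse:lq>0:sigma2p-c} by adding zero-type dummy students and extra types so that the critical deviation at~$v$ becomes a blocking coalition) match the paper's approach. However, your concrete instantiation has gaps that stem from not engaging with Huang's definition of a blocking coalition: the deviating set~$U'$ must satisfy $|U'|\ge |M(v)|$ and must \emph{position-wise} weakly dominate $M(v)$ in $\succ_v$. Two consequences break your sketch as written. First, in the construction of \cref{thm:diverse:lq>0:sigma2p-c} one has $|M(v)|=3\enn+|D'|$, which is strictly below $q_v=3\enn+\emm$ whenever some clause contains three $Y$-literals; if your zero-type dummies are left unmatched and find $v$ acceptable, then $M(v)\cup\{\text{dummy}\}$ is already a blocking coalition for~$v$ (larger, position-wise dominating, feasible), so the intended stable matching is never c-stable. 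Second, a coalition of the literal shape ``$\{d\}\cup(\text{dummies})$'' cannot be feasible for~$v$ (its type sums are $1$ on every variable and clause type, violating the exact quotas $2$ and $3$), so you presumably mean the incoming part $U'\setminus M(v)$; but then you still need to exhibit a full $U'$ of size $\ge|M(v)|$ that position-wise dominates $M(v)$, and ``auxiliary types'' on zero-type dummies cannot by themselves couple $d$'s admission to the dummies'.

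The paper resolves exactly these issues, and its fixes are not cosmetic: it (i) forces $|M(v)|=q_v$ for every feasible matching by adding a copy~$d'$ of~$d$ (sharing a fresh type with~$d$, quota exactly~$1$ at~$v$) and c-clause students~$e_j$ paired with the $d_j$ via fresh types, so that $M(v)$ always contains $d'$ and exactly one of $\{d_j,e_j\}$ per clause; (ii) matches the dummy variable students~$F$ to~$w$ (not leaving them unmatched), so they strictly prefer~$v$ and can legitimately enter a coalition; and (iii) intersperses the $f_i$ carefully in $\succ_v$ (between the $Y$-block and each $\overline{x}_i,x_i$ pair) so that the intended coalition $U'=(M(v)\cap(D\cup E))\cup\{d\}\cup Y'\cup F$ position-wise dominates $M(v)$. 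Your outline is on the right track, but to make it go through you will need analogues of all three ingredients; in particular, you cannot leave the dummies unmatched, and you must explain how position-wise dominance is achieved.
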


\begin{proof}
  Before we show the statement,
  we repeat the definition of \CSM~\cite{Huang2010classifiedSM} (also see \cite[Chapter~5.2.5]{Manlove2013}) for the sake of completeness.
  \CSM instances are the same as \diverse instance; in particular we assume preference lists without ties.
  The task is to decide whether there exists a matching which is feasible and
  \myemph{c-stable}; we use a different name to distinguish it from our stability definition:
  A matching for an input~$I=(U,W,T,(\succ_u,\typevec_u)_{u\in U}, (\succ_w,\lowervec_w,\uppervec_w,q_w)_{w\in W})$ is \myemph{c-stable} if it does not admit a \myemph{blocking coalition}, where a \myemph{blocking coalition} of $M$ comprises a college~$w_j\in W$ and a set~$U'=\{u'_{i_1}, \ldots, u'_{i_k}\}\subseteq \acset(w_j)$ of $k\ge |M(w_j)|$ students with $u'_{i_s} \succ_{w_j} u'_{i_{s+1}}$, $s\in [k-1]$,
  such that
  \begin{enumerate}[(1)]
    \item\label{bc:feasible} $\{\{u,w_j\} \mid u \in U'\}$ is feasible for $w_j$;
    \item\label{bc:students} each student in $u\in U' \setminus M(w_j)$ prefers $w_j$ to $M(u)$;
    \item\label{bc:college} either college~$w_j$ prefers~$u'_{i_s}$ to $u_{j_s}$ or $u'_{i_s}=u_{j_s}$, $s \in [|M(w_j)|]$,
    where $(u_{j_1}, \ldots, u_{j_{|M(w_j)|}})$ denotes the sequence of students assigned to~$w_j$ by $M$ in decreasing order of preferences of $w_j$;
    \item\label{bc:size} $k > |M(w_j)|$ or there exists some~$s$ such that $w_j$ strictly prefers $u'_{i_s}$ to $u_{i_s}$.
  \end{enumerate}
  By the above definition, a blocking coalition involving a college~$w$ must involve at least the same number of students as are assigned to~$w$~(Property~\eqref{bc:size}),
  such that~$w$ weakly prefers the coalition to its assigned students (Property~\eqref{bc:college}), and if the numbers are the same, then~$w$ must strictly prefer the coalition to its assigned students.
  Due to this, a blocking pair corresponds to a blocking coalition if and only if the corresponding witness consists of at most one student.
  However, in our reduction for \cref{thm:diverse:lq>0:sigma2p-c}, the witness for a blocking pair is very large.
  To adapt our construction for this different c-stability notion, we need to introduce dummy students (with none of the constructed types) that together with the blocking pair form a blocking coalition.
  To make sure that no unintended blocking coalitions pop up, we also need to duplicate variable students, clause students, and the special student,
  and introduce additional types to control the size of the blocking coalitions.
  
  Formally, let $I$$=$$(X$$=$$\{x_1,\ldots, x_{\enn}\},Y$$=$$\{y_{\enn+1},\ldots,y_{2\enn}\}$, $\phi(X,Y))$ be an instance of \notoneinthree{} with $\emm$-many clauses, which each have at least two literals from $Y\cup \overline{Y}$.
  Let $X\cup \overline{X}\cup Y \cup \overline{Y} \cup D\cup \{d\} \cup \{r_1,r_2,r_3\}$ be the students,
  and let $\{v,w,a,b\}$ be the colleges constructed in the reduction for \cref{thm:diverse:lq>0:sigma2p-c}; recall that $D=\{d_j \mid j\in [\emm]\}$.
  We introduce $2\enn+2\emm+3$ types, where the first $2\enn+\emm+2$~types are the same as the ones introduced in the reduction for \cref{thm:diverse:lq>0:sigma2p-c}.
  Now, for each $i\in [2\enn]$, we also introduce a \myemph{dummy variable student}, called~$f_{i}$,
  and for each~$j\in [\emm]$, we introduce a \myemph{c-clause student}, called~$e_j$.
  Let $F$ and $E$ denote the set of all dummy variable students and c-clause students, respectively.
  Finally, we introduce a copy of the special student~$d$, called~$d'$.
  The role of $d'$ is to be replaced with $d$ in a blocking coalition, while the role of $F$ is to form a blocking coalition with $\{d,v\}$.
  The set of colleges remains unchanged.

  \paragraph{The types of the students.}
  Each literal student~$\lit\in X\cup \overline{X} \cup Y\cup \overline{Y}$ has the same types as that in the preceding reduction.
  The same holds for the three special student~$r_1,r_2,r_3$.
  More precisely, the type vector~$\typevec_{\lit}$ of student~$\lit$ is constructed as follows.
  \begin{itemize}
    \item For each type~$z\in [2\enn]$ if $\lit \in \{x_z, \overline{x}_z, y_z, \overline{y}_z\}$, then let $\typevec_{\lit}[z]\coloneqq 1$; otherwise let $\typevec_{\lit}[z]\coloneqq 0$.
    \item For each~$j \in [\emm]$ let $\typevec_{\lit}[2\enn+j]\coloneqq 1$ if $C_j$ contains~$\lit$; otherwise let $\typevec_{\lit}[2\enn+j]\coloneqq 0$.
    \item Student~$\lit$ has no types in $\{2\enn+\emm+1,  \ldots, 2\enn+2\emm+3\}$.
  \end{itemize}  
  The type vector of the special students are:
  \begin{align*}
    \typevec_{r_1} \coloneqq & \{0\}^{2\enn+\emm}10\{0\}^{\emm+1}\text{,}&
    \typevec_{r_2} \coloneqq & \{0\}^{2\enn+\emm}11\{0\}^{\emm+1}\text{,} \\  
    \typevec_{r_3} \coloneqq & \{0\}^{2\enn+\emm}01\{0\}^{\emm+1}\text{.} 
  \end{align*}
  No dummy variable student~$f_i$ ($i \in [\enn]$) possesses any type.
  Each clause student~$d_j\in D$ has two types~$2\enn+j$ and $2\enn+\emm+2+j$.
  Each c-clause student~$e_j\in D$ has only type~$2\enn+\emm+2+j$.
  The types of the special students and her copy are:
   \begin{align*}
    \typevec_{d} \coloneqq & \{1\}^{2\enn+\emm}00\{0\}^{\emm}{\color{winered}1}\text{,} &
    \typevec_{d'} \coloneqq & \{0\}^{2\enn+2\emm+2}{\color{winered}1}\text{.}
  \end{align*}

\paragraph{The preference lists of the students.}
%In the following, given a set~$A$, we use the notation~$[A]$ to denote an arbitrary but fixed linear order of $A$.
The preferences of the students from $X\cup \overline{X} \cup Y\cup \overline{Y} \cup \{r_1,r_2,r_3\}$ remain the same as in the other reduction.
Formally, for each variable~$x_i\in X$ the preference lists of the literal students~$x_i$,
$\overline{x}_i$ and the dummy variable student~$f_i$ are as follows:
\begin{align*}
  x_i \colon v \succ_{x_i} b~\text{, }~ \overline{x}_i \colon b \succ_{\overline{x}_i} v,
  ~\text{ and }~ f_i\colon v \succ_{f_i} w.
\end{align*}
Each literal student~$\lit\in Y\cup \overline{Y}$ has only $v$ in her preferences.
The preference list of the special students are:
\begin{align*}
  d\colon & v\succ_d w\text{,} & d'\colon &v,\\
  r_1\colon& b\succ_{r_1} a\text{,} &  r_2\colon & b\succ_{r_2} w \succ_{r_2} a\text{,} &r_3\colon& a\succ_{r_3} b\text{.}
\end{align*}
The preferences of each clause student~$d_j$ and c-clause student~$e_j$, $j\in [\emm]$ are:
\begin{align*}
  d_j \colon v \succ_{d_j} b~\text{ and }~ e_j \colon b \succ_{e_j} v.
\end{align*}

\paragraph{The preference lists of the colleges.}
The preference lists of $v$ and $b$ are changed while the others remain the same:
\begin{align*}
  v\colon & e_1 \succ_v d_1 \succ_v \ldots \succ_v e_\emm \succ_v d_{\emm} \succ_v d \succ_v d' \succ_v\\
  &  [\{f_{\enn+1}, \ldots, \succ_v f_{2\enn}\}]  \succ_v [Y] \succ_v [\overline{Y}] \succ_v\\
  & f_1 \succ_v \overline{x}_1 \succ_v x_1 \succ_v \ldots \succ_v f_\enn \succ_v \overline{x}_\enn \succ_v x_\enn,\\
  w\colon & d \succ [F] \succ_w  r_2\text{,}\\
  a\colon & r_1 \succ_a  r_2 \succ_a r_3\text{,}\\
  b\colon & r_3 \succ_b  r_2 \succ_b r_1 \succ_b [X] \succ_b [\overline{X}] \succ_b [D] \succ [E] \text{.}
\end{align*}

This completes the construction, which can clearly be done in polynomial time.
In total, there are $5\enn+5+2\emm$~students and four colleges.
We denote the constructed instance by~$I'$.
We claim that $I$ is a yes instance if and only if the constructed \CSM{} instance~$I'$ admits a feasible and c-stable matching.

For the ``only if'' part, assume that $I$ is a yes instance.
Let~$\sigma_X$ be a truth assignment of $X$ which  ``witnesses'' that $I$ is a yes instance.
Let $X'\coloneqq \{x_i\mid \sigma_{X}(x_{i})=\true\} \cup \{\overline{x}_{i}\mid\sigma_{X}(x_i)=\false\}$.
Let $M$ be a matching constructed as follows:
\begin{enumerate}
  \item $M(v)\coloneqq X'\cup Y\cup \overline{Y}\cup D' \cup {\color{winered}E'}$, where $D'\coloneqq \{d_j \mid |C_j \cap (X'\cup Y\cup \overline{Y})| = 2\}$ and $E'=\{e_j \mid |C_j \cap (X'\cup Y\cup \overline{Y})| \neq 2\}$.
  \item $M(w)\coloneqq \{d\}\cup {\color{winered}F}$.
  \item $M(a)\coloneqq \{r_2\}$.
  \item $M(b)\coloneqq \{r_1,r_3\}\cup (X\cup \overline{X}\setminus X') \cup (D\cup E \setminus (D'\cup E'))$.
\end{enumerate}

\begin{table*}
  \noindent{%\extrarowheight=.7\aboverulesep
  \addtolength{\extrarowheight}{\belowrulesep}
  \aboverulesep=0pt
  \belowrulesep=0pt
  \tabcolsep=0pt
  \begin{tabular}{@{\,}>{\columncolor{white}[1pt][\tabcolsep]}p{.6cm}@{}>{\columncolor{white}[0pt][\tabcolsep]}p{1cm}|p{.6cm}p{1cm}|p{.6cm}p{1cm}|p{.6cm}p{1cm}|p{.6cm}p{1cm}@{}@{}>{\columncolor{white}[0pt][\tabcolsep]}l@{}>{\columncolor{white}[2.5pt][\tabcolsep]}
  l@{}>{\columncolor{white}[0pt][\tabcolsep]}l@{}>{\columncolor{white}[0pt][\tabcolsep]}l@{}>{\columncolor{white}[0pt][\tabcolsep]}l@{}>{\columncolor{white}[0pt][\tabcolsep]}l@{}>{\columncolor{white}[0pt][\tabcolsep]}c@{}}
    \toprule
    S.~~ & Pref.~~~ &   C.~ & \multicolumn{7}{l}{Pref.} & LQ. & UQ. & C.\\\midrule 
   \highlight{$r_1$:} & \highlight{$b \osucc a$} & \highlight{$w$:} & \multicolumn{7}{l}{\highlight{$d\osucc [F] \osucc r_2$}} &    \highlight{${0}^{2\enn+2\emm+3}$~~}  &  \highlight{${1}^{2\enn+\emm+2}{0}^{\emm+1}$~} & \highlight{${\color{winered}2\enn+1}$}\\
    \highlight{$r_2$:} & \highlight{$b \osucc w \osucc a$} & \highlight{$a$:} & \multicolumn{7}{l}{\highlight{$r_1 \osucc r_2 \osucc r_3$}}  &\highlight{${0}^{2\enn+2\emm+3}$}& \highlight{$0^{2\enn+\emm}110^{\emm+1}$}&\highlight{$1$}\\
    \highlight{$r_3$:} & \highlight{$a \osucc b$} & \highlight{$b$:} & \multicolumn{7}{l}{\highlight{$r_3 \osucc r_2 \osucc r_1 \osucc [X]  \osucc [\overline{X}] \osucc [D] \osucc [E]$}}  &\highlight{$1^{\enn}{0}^{\enn+2\emm+3}$~}& \highlight{$1^{\enn}0^{\enn}1^{2\emm+2}0$}& \highlight{$\enn+\emm+2$}\\
    $d$: & $v\osucc w$ &  $v$:& \multicolumn{7}{l}{$e_1\osucc d_1 \osucc \ldots \osucc e_\emm \osucc d_\emm \osucc d \osucc d'
                                 \osucc [\{f_{\enn+1},\ldots, f_{2\enn}\}] 
                                \osucc [Y] \osucc [\overline{Y}] \osucc f_1 \osucc \overline{x}_1 \osucc x_1 \osucc \ldots \osucc f_\enn \osucc \overline{x}_i \osucc x_i$~~} & $1^\enn2^\enn3^\emm00{\color{winered}1^{\emm+1}}$~~~& $1^\enn2^\enn3^\emm00{\color{winered}1^{\emm+1}}$~~~& $3\enn$$+$$\emm+1$\\
    \midrule
    S.~~ & Pref.~~~ & S.~~ & Pref.~~~ & S.~~ & Pref.~~~ & S.~~ & Pref.~~~ & S.~~ & Pref.~~~\\\midrule
  $x_i$:~ & $v \osucc b$ &   $\overline{x}_i$:~ &   $b\osucc v$ & $f_i$:~ &   $v\osucc w$ &  $d_j$:~ & $v \osucc b$ &    $e_j$:~ & $b \osucc v$ &\\\hline
    \bottomrule
  \end{tabular}
}
\caption{A description of the preference lists, quotas and capacities of the colleges, together with the preference lists of the students from~$\{r_1$, $r_2$, $r_3$, $d\}\cup X\cup \overline{X}\cup F\cup D \cup E$, used in the proof of \cref{prop:CSM-sigma2p-c}.}\label{fig:CSM:sigma2p-h-reduction}
\end{table*}

We claim that $M$ is feasible and c-stable.
It is straight-forward to verify that $M$ is feasible; recall that in instance~$I$ each clause contains at least two literals from $Y\cup \overline{Y}$ so that the quotas of the clause types for~$v$ are indeed fulfilled.

Suppose, for the sake of contradiction, that $M$ is not c-stable.
That is, it is blocked by some coalition~$(\beta, U'=\{u'_{i_1}, u'_{i_2},\ldots,u'_{i_k'}\})$ with $\beta \in \{w,v,a,b\}$.
For the sake of notation, let $U''\coloneqq U'\setminus M(\beta)$.
Similar properties as stated in \cref{lem:aux} also hold for blocking coalitions.
To be complete, we consider every college explicitly.
%\begin{compactenum}
  \paragraph{Case~1: $\beta=a$.} Then, since $q_a=1$, it must hold that $k'=1$.
  However, similarly to the proof for \cref{lem:aux} we infer that no blocking coalition involves college~$a$.
  \paragraph{Case~2: $\beta=b$.} Then, $k'=\enn+\emm+2$ because $k'\ge |M(b)|=\enn+\emm+2=q_b$.
  If $r_2\in U'$ then $r_3\notin U'$ because of type~$2\enn+\emm+2$. However, college~$b$ strictly prefers $r_3$ to $r_2$, a contradiction to Property~\eqref{bc:college}.
  Thus, $r_2\notin U'$, and consequently $r_3,r_1\in U'$.
  In other words, $U''\cap \{r_1,r_2,r_3\} =\emptyset$.
  Since $U''$ is a blocking coalition and $k'=|M(b)|$,
  there must be some~$i\in [\enn]$ such that $x_i \in U''$ or
  there must be some~$j\in [\emm]$ such that $d_j\in U''$.
  Then, by the definition of $M$, it must hold that $x_i\in M(v)$ or $d_j\in M(v)$.
  However, both $x_i$ and $d_j$ strictly prefer her assigned college~$v$ to~$b$, a contradiction to Property~\eqref{bc:students} in the definition of blocking coalition.
  Thus, $U''=\emptyset$, implying that $U'=M(b)$ and no blocking coalitions involve college~$b$.
 \paragraph{Case~3: $\beta=w$.} This is not possible because college~$w$ already receives its $2\enn+2$ most preferred students.
 \paragraph{Case~4: $\beta=v$.} Then, $k'=2\enn+\emm+1$ because $|M(v)|=2\enn+\emm+2=q_v$.
 In this case, we first show that the following.

  \begin{claim} \label{CSM:noDE}
    It holds that $M(v) \cap (D\cup E) = U'\cap (D\cup E)$.
  \end{claim}
  \begin{proof}%[Proof of Claim~\ref{claim:feasible}]
  \renewcommand{\qedsymbol}{(of \cref{CSM:noDE}~$\diamond$)}
  Suppose, for the sake contradiction, that $M(v) \cap (D\cup E) \neq U'\cap (D\cup E)$.
  Observe that due to the quotas for the types~$\{2\enn+\emm+3,\ldots, 2\enn+2\emm\}$ for each $j\in [\emm]$ either $e_j$ or $d_j$ belongs to $M(v)$, and the same holds for $U'$.
  Hence, we infer that $|M(v)\cap (D\cup E)| = |U'\cap (D\cup E)|=\emm$.
%  Moreover, by assumption, there exists a~$j\in [\emm]$ with $e_j\in U''$ or $d_j\in U''$.
  This means in particular that when we compare the students in $U'$ with the students in $M(v)$,
  for each $j' \in [\emm]$ we have to compare $U'\cap \{e_{j'},d_{j'}\}$ to $M(v)\cap \{e_{j'},d_{j'}\}$.
  If $e_{j'}\in U''$, then by the definition of $M$ it must hold that $e_{j'}\in M(b)$.
  However, $e_{j'}$ strictly prefers her assigned college~$b$ to $v$, a contradiction to Property~\eqref{bc:students}.
  If $d_{j'}\in U''$, then by the definition of $M$, it follows that $e_{j'}\in M(v)$.
  However, $v$ prefers $e_{j'}$ to $d_{j'}$, a contradiction to Property~\eqref{bc:college}.
  \end{proof}
  
  By the quotas of $Y$-types for $v$, it must hold that for each $i\in \{\enn+1,\ldots, 2\enn\}$
  at least one of $\{y_i,\overline{y}_i\}$ belongs to $U'$.
  We distinguish between two cases.
  \begin{compactenum}
    \item[Case~4.1: $|\{y_i,\overline{y}_i\}\cap U'|=2$.]
    Then, by the quotas of the $Y$-types, it must hold that $d\notin U'$.
    Consequently, it holds that $Y\cup \overline{Y}\subseteq U'$.
    Moreover, for each $i\in [\enn]$ either $x_i$ or $\overline{x}_i$ belongs to $U'$.
    Let $A\coloneqq U'\cap (X\cup \overline{X})$.
 %   Due to the types~$2\enn+\emm+3,\ldots, 2\enn+2\emm+2$ it also holds that for each $j\in [\emm]$ either $e_j$ or $d_j$ belongs to $U'$.
%    Let $B\coloneqq U'\cap (D\cup E)$.
    Since $3\enn+\emm+1=q_v \ge k' \ge |M(v)|=3\enn+\emm+1$, 
    it follows that $\{d'\}\cup Y\cup \overline{Y} \subseteq U'$.
    Moreover, by \cref{CSM:noDE} when we compare the students in $U'$ and $M(v)$ for Property~\eqref{bc:college},
    we have to compare the students in $A$ with the students in $X'$ because all other students remain the same in $M(v)$ as well as in $U'$.
    For each literal student~$\lit \in A$, if $\lit \notin X'$, then it must be the case that $\overline{x}_i\in A$ and $x_i \in X'$ (see Property~\eqref{bc:college} in the definition of blocking coalitions).
    However, $\overline{x}_i$ strictly prefers her assigned college~$b$ to~$v$, a contradiction to Property~\eqref{bc:students}.
    Hence, $A = X'$, i.e.,
    \begin{align}\label{CSM:noX-replaced}
      U''\cap (X\cup X') = \emptyset.
    \end{align}
    
    Summarizing, we have just reasoned that $U'=M(v)$, a contradiction to $\{v,U'\}$ be a blocking coalition.
    \item[Case~4.2: $|\{y_i, \overline{y}_i\}\cap U'|=1$.]
    Then, by the quotas of the $Y$-types and the type~$2\enn+2\emm+3$, it must hold that $d\in U'$
    and  $d'\notin U'$.
    By the quotas of the $X$-types, it must also hold that
    \begin{align}\label{CSM:noX-replaced-2}
      U'\cap (X\cup \overline{X})=\emptyset.
    \end{align}    
    Moreover, there exist a size-$\enn$ subset~$Y'\subseteq U'$ of literal students such that for each $i\in \{\enn+1,\ldots, 2\enn\}$ either $y_i$ or $\overline{y}_i$ belongs to $Y'$.
    Let $Y''\coloneqq (Y\cup \overline{Y}) \setminus Y'$.
    Let $\sigma_Y$ be the truth assignment corresponding to~$Y''$.
    Since $I$ is a yes instance, there exists a clause~$C_j$ which is not 1-in-3-satisfied under $\sigma_X$ and $\sigma_Y$, i.e.,
    \begin{align}\label{CSM:not1in3C}
      |C_j\cap (X'\cup Y'')|\neq 1.
    \end{align}
    If $d_j\in M(v)$, then $|C_j\cap (X'\cup Y \cup \overline{Y})| = 2$.
    Moreover, $d_j \in U'$ because of \cref{CSM:noDE}.
    Since $d\in U'$ and $U'$ is ``feasible'' for $v$ regarding type~$2\enn+j$,
    it follows that $|C_j\cap Y'|=1$.
    Since $Y'$ and $Y''$ partition $Y\cup \overline{Y}$ it follows that $|C_j \cap (X'\cup Y'')| = 1$,
    a contradiction to~\eqref{CSM:not1in3C}.

    If $d_j \notin M(v)$, then $d_j \notin U'$ due to \cref{CSM:noDE}.
    Since $d\in U'$,  to make $U'$ feasible for $v$ regarding type~$2\enn+j$ it must hold that $|C_j\cap Y'| = 2$. 
    By \eqref{CSM:not1in3C}, it follows that $|C_j\cap (X'\cup Y \cup \overline{Y})| \neq 3$,
    a contradiction to the definition of $M$ that $d_j\notin M(v)$.
  \end{compactenum}

We have just shown the ``only if'' part.
For the ``if'' part, assume that $I'$ admits a feasible and stable matching, called~$M$.
Suppose, for the sake of contradiction, that $I$ is a no instance.
First of all, we show that $M(w)=d\cup F$.
Suppose, for the sake of contradiction, that $M(w) \neq d\cup F$.
Then, it must hold that $r_2 \in M(w)$ or $r_2 \in M(b)$ as otherwise $w$ forms a blocking coalition with $M(w)\cup \{r_2\}$.
If $r_2 \in M(w)$, then $r_3 \in M(b)$ as otherwise $b$ forms a blocking coalition with $(M(b)\setminus \{r_1\}) \cup \{r_2\}$.
However, then $r_1\in M(b)$ as otherwise $b$ forms a blocking coalition with $M(b)\cup \{r_1\}$.
Then, $a$ does not receive any student and will form a blocking coalition with $\{r_3\}$, a contradiction.
If $r_2\in M(b)$, then neither $r_1$ nor $r_3$ can be matched to $b$ any more because of the upper quotas regarding types~$2\enn+\emm+1$ and $2\enn+\emm+2$.
However, since $q_a=1$, it follows that $r_1$ or $r_3$ will be unmatched.
In the former case, $a$ and $\{r_1\}$ form a blocking coalition.
In the latter case, $b$ forms a blocking coalition with $(M(b)\setminus \{r_2\}) \cup \{r_3\}$.
This implies that $M(w)=d\cup F$.
In particular, $d\notin M(v)$.

Due to the quotas on the $X$-types, there exists a subset~$X'$ with $|X'|=\enn$
such that for each $i\in [\enn]$ it holds that $|M(v)\cap \{x_i,\overline{x}_i\}|=1$
such that $X' \subseteq M(v)$.
In other words, $X'$ corresponds to a truth assignment.
Let $\sigma_X$ be such a truth assignment.
Due to the quotas on the clause types~$2\enn+1$ to $2\enn+\emm$,
there exists a subset~$D'\subseteq M(v)$ such that $D'=\{d_j \in D \mid |M(v)\cap (X'\cup Y\cup \overline{Y})| = 2\}$.
Moreover, by the quotas of the types~$2\enn+\emm+3,\ldots, 2\enn+2\emm+2$
there exists a subset~$E'$ with $E'= \{e_j \in E \mid  e_j\notin D'\}$.
%By the quotas regarding the $X$-types for~$v$,
%there exists a subset~$X'$ with $X'\subseteq M(v)$ such that for each $i\in [\enn]$
%either $x_i\in M(v)$ or $\overline{x}_i\in M(v)$.

Since $I$ was a no instance, for the truth assignment~$\sigma_X$ there must be a truth assignment~$\sigma_Y$ for $Y$ such that each clause~$C_j$ is 1-in-3-satisfied.
Let $Y'\coloneqq \{y_i \mid \sigma_Y(y_i)=\textcolor{winered}{\false}\}\cup \{\overline{y}_i\mid \sigma_Y(y_i)=\textcolor{winered}{\true}\}$.
We claim that $v$ forms a blocking coalition with $U'=D'\cup E' \cup \{d\} \cup Y' \cup F$.
First of all, we verify that $|D'\cup E'\cup \{d\}\cup Y'\cup F|=3\enn+\emm+1$.
Moreover, $Y\cup \overline{Y}\subseteq M(v)$ due to the quotas of the $Y$-types.
Thus, each student~$u\in U'\setminus M(v) = \{d\}\cup F$ prefers $v$ to her assigned college~$w$.
Further, college~$v$ prefers~$d$ to~$d'$.
Moreover, if $y_i \in Y'$, then we can use the fact that $v$ prefers~$f_i$ to~$\overline{y}_i$;
if $\overline{y}_i \in Y'$ then we can use the fact that $v$ prefers~$f_i$ to~$y_i$.
College~$w$ also prefers each dummy variable student~$f_i$ to~$x_i$ and~$\overline{x}_i$.
Hence, in order to prevent $\{v,U'\}$ from forming a blocking coalition,
there must exists a type~$z\in [2\enn+2\emm+3]$ so that one of the quota is not satisfied.
Clearly, $z\notin [2\enn]$ and $z \notin \{2\enn+\emm+1,\ldots,2\enn+2\emm+3\}$.
Thus, $z$ corresponds to a clause type from $\{2\enn+1,\ldots, 2\enn+\emm\}$.
Let $z=2\enn+j$.
Then, $|C_j \cap (Y'\cup D' \cup \{d\})| \neq 3$.
Since $d$ has type at $z$ it follows that
\begin{align}\label{CSM:d}
  |C_j \cap (Y'\cup D')|\neq 2.
\end{align}
If $d_j \in D'$, then by the definition of $M$ it must hold that $|C_j \cap (X'\cup Y \cup \overline{Y})| = 2$.
Moreover, by \eqref{CSM:d} we also know that $|C_j \cap Y'|\neq 1$.
Together, it follows that $C_j \cap (X'\cup (Y\cup \overline{Y})\setminus Y')| \neq 1$, a contradiction to the assumption that $C_j$ is 1-in-3-satisfied under $\sigma_X\cup\sigma_Y$.

If $d_j \notin D'$, then by the definition of $M$ it must hold that $|C_j \cap (X'\cup Y \cup \overline{Y})| = 3$.
Together with \eqref{CSM:d} it follows that $|C_j \cap (X'\cup (Y \cup \overline{Y}) \setminus Y')| \neq 1$, again a contradiction to the assumption that $C_j$ is 1-in-3-satisfied under $\sigma_X\cup\sigma_Y$.

It remains to the show the $\Sigma^{\text{P}}_2$-containment.
Similarly to the one shown in \cref{thm:diverse:lq>0:sigma2p-c}.
We first show that checking whether a given matching~$M$ is not feasible or not c-stable can be done in polynomial time: Guess a college~$w$ and a subset~$U'$ of students, and check in polynomial time whether $M$ is not feasible or $w$ forms a blocking coalition with $U'$.
This means that there exists an NP-oracle for checking whether a matching is not feasible or not c-stable.
Then, we can use exactly the same reasoning as the one for \cref{thm:diverse:lq>0:sigma2p-c} to show that \CSM is in $\Sigma^{\text{P}}_2$: Guess a matching and ask the NP-oracle whether matching is not feasible or not c-stable; return yes if and only if the oracle answer no.
\end{proof}
%\todoH{Added the above proof for showing that CSM is Sigma2P-complete. Please check.}
%we show that \textsc{CSM} is in effect beyond NP because it is also $\Sigma^{\text{NP}}_2$-complete: The hardness follows by slightly adapting the reduction for \cref{thm:diverse:lq>0:sigma2p-c}. 
\fi

\ifshort \paragraph{The Impact of Diversity.}\else
\subsection{The Impact of Diversity}\fi
The fact that \diverse lies in a higher complexity class than \feasible\ can be attributed to the stability constraints, which are not taken into account for feasible matchings.
%\todoR{Pushed footnote to long only, put it back if it must stay in sv.}
%\iflong\footnote{
	On the other end of the spectrum, a stable matching without diversity constraints always exists and can be found in polynomial time~\cite[Chapter 3]{Manlove2013}; the corresponding problem is known as the \textsc{Hospitals/Residents problem with Ties (HRT)}.%}
%\fi
We can pinpoint the cause of this jump in complexity more precisely to the existence of lower quotas, which in some sense implement affirmative action in the \diverse model.
\ifshort Specifically, we show that if the lower quotas are all zero, then \diverseties{} becomes NP-complete (\cref{thm:diverse:NP-c:t=2}). Lemma~\ref{lem:lq=0:stable-equiv-form} will be crucial for showing NP-containment.
%The NP-containment proof utilizes the following crucial observation.
\fi
\iflong Specifically, we show that if the lower quotas are all zero, then \diverseties{} is in NP (\cref{thm:diverse:NP-c:t=2}) by showing a statement which is complementary to the coNP-hardness result in Proposition~\ref{prop:checkingMstable:coNP-h}.
\fi
\begin{lemma}\label{lem:lq=0:stable-equiv-form}
  If $\lmax=0$, then checking whether a matching is stable can be done in $\mathcal{O}(n\cdot m \cdot t)$~time.
\end{lemma}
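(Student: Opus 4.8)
The plan is to reduce the test for each candidate blocking pair---which in general requires searching over the exponentially many possible witness sets $U' \subseteq M(w)$---to a single feasibility check, by exploiting the absence of lower quotas. The crucial observation is a \emph{monotonicity} property. Fix a pair $\{u,w\}$ satisfying conditions (i) and (ii) of the blocking-pair definition, and write $B \coloneqq \{u' \in M(w) : u \succ_w u'\}$ for the set of all students in $M(w)$ whom $w$ ranks strictly below $u$. Any witness must be a subset of $B$ by condition (iii). I claim that $\{u,w\}$ blocks $M$ if and only if $B$ itself is a witness. The ``if'' direction is immediate. For ``only if'', suppose some $U' \subseteq B$ is a witness, so $M \cup \{\{u,w\}\} \setminus \{\{u',w\} : u' \in U'\}$ is feasible for $w$. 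Enlarging $U'$ to $B$ removes more students from $w$, which can only decrease the assigned count (helping the capacity bound $|M(w)| - |B| + 1 \le q_w$) and can only decrease each type total $\sum_{u' \in M(w) \setminus B} \typevec_{u'}[z] + \typevec_u[z]$ (helping every upper-quota constraint). Because $\lmax = 0$, the lower-quota constraints are vacuous, so removing more students can never violate them. Hence replacing $B$ by $u$ is also feasible for $w$, proving the claim.

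Given the claim, the algorithm proceeds as follows. First, after $\mathcal{O}(n\cdot m)$ preprocessing to store $M(u)$ and preference ranks (so that ``$w \succ_u M(u)$'' and ``$u \in M(w)$'' become $\mathcal{O}(1)$ queries), precompute for each college $w$ its type-count vector $\mathsf{c}_w \coloneqq \sum_{u' \in M(w)} \typevec_{u'}$ and its cardinality $|M(w)|$; since each student lies in at most one $M(w)$, this costs $\mathcal{O}(n\cdot t)$ in total. Then handle each college $w$ independently by scanning its preference list from the least- to the most-preferred student, maintaining the running type-sum $\mathsf{s}$ and count $\mathsf{k}$ of the matched students seen so far; at the moment student $u$ is examined, $\mathsf{s}$ and $\mathsf{k}$ describe exactly the set $B$. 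If $u \notin M(w)$ and $w \succ_u M(u)$, I test the two conditions $|M(w)| - \mathsf{k} + 1 \le q_w$ and $\mathsf{c}_w[z] - \mathsf{s}[z] + \typevec_u[z] \le \uppervec_w[z]$ for every $z \in [t]$; by the claim this decides exactly whether $\{u,w\}$ blocks $M$. If the test ever succeeds, report that $M$ is not stable; otherwise, after testing, update $\mathsf{s} \mathrel{+}= \typevec_u$ and $\mathsf{k} \mathrel{+}= 1$ whenever $u \in M(w)$, and continue.

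For the running time, each of the $|\acset(w)|$ students in $w$'s list incurs an $\mathcal{O}(t)$ upper-quota test and an $\mathcal{O}(t)$ update, so college $w$ costs $\mathcal{O}(|\acset(w)|\cdot t)$; summing over all colleges and using $\sum_{w\in W}|\acset(w)| = \sum_{u\in U}|\acset(u)| \le n\cdot m$ yields the claimed bound $\mathcal{O}(n\cdot m\cdot t)$, which dominates the preprocessing. The main conceptual step is the monotonicity argument of the first paragraph: it converts the hard-looking ``does a valid witness exist?'' question into one canonical feasibility check, and it relies essentially on $\lmax = 0$. With positive lower quotas, removing all of $B$ could violate a lower quota, so a genuine subset search would be needed---indeed, Proposition~\ref{prop:checkingMstable:coNP-h} shows the check becomes coNP-hard in that case.
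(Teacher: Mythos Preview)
Your argument is correct and takes essentially the same route as the paper: both establish that $\{u,w\}$ is blocking precisely when the \emph{maximal} candidate witness $B=\{u'\in M(w):u\succ_w u'\}$ works, and the paper's two conditions~\eqref{bp-cond} and~\eqref{bp-q} are exactly the negations of your two feasibility tests for $B$. One small implementation caveat: since \diverseties\ allows ties in $\succeq_w$, your left-to-right sweep as written may process a student $u'\sim_w u$ before $u$ and wrongly include it in $\mathsf{s},\mathsf{k}$ (making the test too permissive); this is repaired by handling each tie class atomically---test every member against the current $\mathsf{s},\mathsf{k}$, then update with all matched members of the class---without affecting the $\mathcal{O}(n\cdot m\cdot t)$ bound.
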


\iflong
\begin{proof}
	Let $I=(U,W,T, (\succeq_x)_{x\in U\cup W}, (\typevec_{u})_{u\in U}, (q_w,\lowervec_w=\boldsymbol{0},\uppervec_w)_{w\in W})$ be an instance of \diverseties{} where $\lowervec_w=\{0\}^{|T|}$ and let $M$ be an arbitrary matching for~$I$.
	To check whether $M$ is stable for~$I$ in polynomial time, we show that $M$ is stable if and only if for each unmatched pair~$\{u,w\}\notin M$ with $u\in U$
	and $w\in W$ such that $u$ prefers $w$ to $M(u)$ it holds that
	\begin{align}
	\label{bp-cond} \typevec_{u}+\sum_{u'\in M(w)\text{ with } u'\succeq_w u}\typevec_{u'}\not\le \uppervec_{w}\text{; or }\\
	\label{bp-q} |M(w)|=q_w \wedge \forall u' \in M(w) \ u' \succeq_w u.
	\end{align}
	
	For the ``only if'' part, assume that $M$ is stable.
	Suppose, for the sake of contradiction, that there exists an unmatched pair~$\{u,w\}\notin M$ with $u\in U$ and $w\in W$ such that $u$ prefers $w$ to $M(u)$ \emph{neither} Condition~\eqref{bp-cond} \emph{nor} Condition~\eqref{bp-q} in the lemma holds.
	Since Condition~\eqref{bp-cond} does not hold, it follows that
	\begin{align}
	\lowervec_w = \boldsymbol{0} \le \typevec_u + \sum_{u'\in M(w) \text{ with } u' \succeq_w u} \typevec_{u'}\le \uppervec_w.\label{bp-cond2}
	\end{align}
	Define $S'\coloneqq \{v\in M(w)\mid u\succ_w v\}$.
	We distinguish between two cases.
	If $S'\neq \emptyset$, then $\{u,w\}$ is blocking $M$ since \eqref{bp-cond2} holds and replacing $S'$ to $w$ results in a feasible matching for $w$, a contradiction.
	If $S'=\emptyset$, then since \eqref{bp-q} does not hold, it follows that $|M(w)| < q_w$.
	In other words,  $\{u,w\}$ is blocking $M$ since \eqref{bp-cond2} holds, $|M(w)\setminus S'| < q_w$ and replacing $S'$ to $w$ results in a feasible matching for $w$, a contradiction.
	
	For the ``if'' part, assume that Condition~\eqref{bp-cond} or Condition~\eqref{bp-q} holds for each unmatched pair~$\{u,w\}\notin M$ with $u$ preferring $w$ to $M(u)$.
	Suppose, for the sake of contradiction that, $M$ is not stable.
	Let $\{u,w\}$ be a blocking pair of $M$.
	That is, $\{u,w\}\notin M$,
	$u$ prefers $w$ to $M(u)$,
	and there exists a (possibly empty) subset of students~$S\subseteq M(w)$ assigned to~$w$ such that
	\begin{enumerate}[(i)]
		\item\label{it:prefer} $w$ prefers $u$ strictly to every student in~$S$,
		\item\label{it:S} $\lowervec_{w}\le \typevec_{u}+\sum_{u'\in M(w)\setminus S}\typevec_{u'}\le \uppervec_{w}$, and 
		\item\label{it:q} $|M(w)\setminus S|+1 \le q_w$.
	\end{enumerate}
	Let $S'\coloneqq \{u'\in M(w)\mid u'\succeq_w u\}$. Then, by Condition~\eqref{it:prefer}, it follows that
	$S'\subseteq M(w)\setminus S$.
	By Condition~\eqref{it:S}, it follows that 
	\begin{align*}
	\typevec_{u}+\sum_{u'\in S'}\typevec_{u'}\le \typevec_{u}+\sum_{u'\in M(w)\setminus S}\typevec_{u'}\le \uppervec_{w},
	\end{align*} implying that Condition~\eqref{bp-cond} does not hold.
	By our assumption, Condition~\eqref{bp-q} must hold.
	In other words, $|M(w)|=q_w$ and $S=\emptyset$, a contradiction to Condition~\eqref{it:q}.
	
	Using the above equivalent formulation, one can go through each student~$u\in U$ and each college~$w\in W$ with $w\succ_u M(u)$ 
	and check in $\mathcal{O}(t)$~time whether Condition~\eqref{bp-cond} or Condition~\eqref{bp-q} holds.
	This gives the promised running time.
\end{proof}
\fi
\ifshort
\begin{proof}[Proof Sketch]
%	Let $I$ be an instance of \diverseties{} and let $M$ be an arbitrary matching for~$I$.
% To prove the statement, it
  It suffices to show that a matching~$M$ is stable for an instance~$I=(U,W,T, (\typevec_u,\succeq_u)_{u\in U}, (\succeq_w,\lowervec_w=\boldsymbol{0},\uppervec_w,q_w)_{w\in W})$ if and only if the following (polynomially verifiable) condition is met: for each unmatched student-college pair~$\{u,w\}\notin M$ with $u$ preferring~$w$ to~$M(u)$
      either ``$\typevec_{u}+\sum_{u'\in M(w)\colon u'\succeq_w u}\typevec_{u'}\not\le \uppervec_{w}$'' or ``$|M(w)|$$=$$q_w$ and $M(w) \succeq_w u$'', or both holds.
\end{proof}
\fi

\iflong From \cref{lem:lq=0:stable-equiv-form}, we immediately obtain:
% implies that \diverseties{} is in NP once $\lmax=0$.
\begin{proposition}\label{prop:lq=0:inNP}
  \diverseties{} and \diverse\ restricted to instances where $\lmax=0$ are in NP.
\end{proposition}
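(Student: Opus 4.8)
The plan is to invoke the standard guess-and-verify characterization of NP: a language lies in NP precisely when membership has a polynomially-sized certificate that can be checked in polynomial time. Here the natural certificate for a yes-instance is a matching~$M$ itself, whose size is clearly polynomial in the input (it consists of at most $n$ student-college pairs). So the whole argument reduces to showing that, given an instance with $\lmax=0$ together with a candidate matching~$M$, we can verify in polynomial time that $M$ is both feasible and stable.

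First I would dispatch feasibility, which is routine and does not even require the assumption $\lmax=0$: for each college~$w$ we simply compute $|M(w)|$ and the type-count vector $\sum_{u\in M(w)}\typevec_u$, and check the three conditions $|M(w)|\le q_w$ and $\lowervec_w\le \sum_{u\in M(w)}\typevec_u\le \uppervec_w$. This takes $\mathcal{O}(n\cdot m\cdot t)$ time. The only potentially problematic part of the verification is stability, and this is exactly where the assumption $\lmax=0$ does the work: by \cref{lem:lq=0:stable-equiv-form}, when all lower quotas are zero, deciding whether $M$ is stable can be carried out in $\mathcal{O}(n\cdot m\cdot t)$ time. Combining the feasibility check and the stability check, the entire verification runs in polynomial time, so both \diverseties{} and \diverse{} (the restriction to instances without ties) restricted to $\lmax=0$ belong to NP.

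There is no genuine obstacle remaining once \cref{lem:lq=0:stable-equiv-form} is in hand; the content of the proposition is entirely delegated to that lemma. What is worth emphasizing conceptually is the contrast with the general case: \cref{prop:checkingMstable:coNP-h} shows that for arbitrary lower quotas, verifying stability of a given feasible matching is coNP-hard, which is precisely the reason the unrestricted problem escapes NP (indeed it is $\Sigma^{\text{P}}_2$-complete by \cref{thm:diverse:lq>0:sigma2p-c}). Setting $\lmax=0$ collapses the inner coNP-level check into a polynomial-time test, and this single change is what lowers the complexity from the second level of the polynomial hierarchy back down into NP. Thus the proof is a short corollary of the lemma, and I would state it as such.
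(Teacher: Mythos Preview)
Your proposal is correct and matches the paper's own proof essentially verbatim: guess a matching~$M$, verify feasibility directly, and invoke \cref{lem:lq=0:stable-equiv-form} to verify stability in polynomial time. The additional conceptual commentary you provide (contrasting with \cref{prop:checkingMstable:coNP-h} and \cref{thm:diverse:lq>0:sigma2p-c}) is accurate and helpful context, though the paper states the argument more tersely.
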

\fi
\iflong
\begin{proof}
	To show NP membership, one only needs to guess in polynomial time a matching~$M$,
	and check whether $M$ is feasible in polynomial time, and whether it is stable in polynomial time, using \cref{lem:lq=0:stable-equiv-form}.
\end{proof}
\fi

Even though with zero lower quotas \diverseties is in NP, and thus can be considered significantly easier than \diverseties in general, it is actually hard within NP.
Hardness for this case was claimed in \cite[Proposition~5.3]{AzizGaspersSunWalsh2019aamas}.
%Together with \cref{thm:lq=0:m=4:diverse-NP-c}), \diverseties{} is indeed NP-complete.
%However, if the lower quotas may have positive values, then \diverseties{} belongs to the complexity class~$\Sigma^{\text{P}}_2$ and remains $\Sigma^{\text{P}}_2$-hard even if no ties are present and there are only four colleges.
%Our hardness result implies that \diverse{} cannot lie in NP unless $\Sigma^{\text{P}}_2\subseteq \text{NP}$, which is considered highly unlikely.
%
%Next, we show that the lower quotas (whose role is to enforce diversity) make a huge difference in the computational complexity of \diverseties.
%First, we show that if the lower quotas are all zero, then \diverseties{} is in NP.
However their reduction contains a technical flaw. % in the NP-hardness reduction.
\ifshort
Indeed, the instance constructed in that proof is always a yes instance,
independent of the original \textsc{3-SAT} instance.
To see this, define the matching~\(M\) for their produced instance as follows (notations taken from that proof):
First, let \(M_{\text{F}} \coloneqq\) $\bigcup_{i\in [k]}X_F^i \setminus \{ (t_1^i, o(t_1^i)), (t_2^i,o(t_2^i)) \mid i\in [k]\}$.
%\(M_{\text{F}} \coloneqq\) \(\{(s^i_1,c^i_{t_1}),(s^i_2,c^i_2),(s^i_3,c^i_1),(s^i_4,c^i_1),(s^i_5,c^i_{t_2}),(s^i_6,c^i_2),(f^i_1,c^i_{f_1})\), \((f^i_2,c^i_{f_2}) \mid i\in [k]\}\).
For each \(j \in [l]\), let \(S_j\) be the set consisting of the first two (if there are fewer than two, then all) students of the form~$t_{z}^{i}$ ($i\in [k]$, $z\in [2]$) appearing in the preferences of $o_j$.
% be the maximum set consisting of the at most two students of the form \(t^i_{i'}\) for some \(i \in [k]\), \(i' \in [2]\).
Then \(M \coloneqq M_{\text{F}} \cup \{(s, o_j) \mid j\)\(\in\)\([l] \land s\)\(\in\)\(S_j\}\) is feasible and stable.

Below, we use Lemma~\ref{lem:aux} to provide a new and simpler NP-hardness proof for the case with zero lower quotas. %which is not only correct, but also simpler than the previously published one.
\fi
\iflong
Interested readers can refer to \cref{rem:Amaas-wrong-construction} for details.

\begin{observation}\label{rem:Amaas-wrong-construction}
  \cite[Lemma 5.7]{AzizGaspersSunWalsh2019aamas} which is used to prove \cite[Proposition 5.3]{AzizGaspersSunWalsh2019aamas} is incorrect.
  In fact, even the construction in the proof of \cite[Proposition 5.3]{AzizGaspersSunWalsh2019aamas} is incorrect.
  %	(In particular, no revised version of \cite[Lemma 5.7]{AzizGaspersSunWalsh2019aamas} fixes the construction.)
  % This can be seen because
  In particular, the  instance constructed in that proof is always a yes instance, independent of the original \textsc{3-SAT} instance.
  %	Using the same notation such a solution is e.g.\ given by the matching~\(M\) defined as follows.
\end{observation}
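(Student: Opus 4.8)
The plan is to exhibit, for the instance produced by the construction in the proof of Proposition~5.3 of Aziz et al., a single matching that is feasible and admits no blocking pair, and whose very definition does not depend on whether the input \textsc{3-SAT} formula is satisfiable. Since a correct many-one reduction from \textsc{3-SAT} must yield a yes-instance exactly when the formula is satisfiable, producing such a matching for \emph{every} input formula—including unsatisfiable ones—immediately shows that the reduction, and hence the underlying Lemma~5.7 it relies on, cannot be correct. It suffices to rule out ordinary blocking pairs: because every d-blocking pair is in particular a blocking pair, the absence of a blocking pair implies the absence of a d-blocking pair, so the matching is also stable in the sense of Aziz et al.

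First I would fix the notation of their construction: the clause-colleges $o_1,\dots,o_l$, the variable gadgets together with their prescribed feasible assignments $X_F^i$, and the ``truth-value'' students $t_z^i$ (with $i\in[k]$, $z\in[2]$). The only features I need are the clause-colleges' preference lists over these students and the quotas of all colleges. I would then define the candidate matching in two parts. The gadget part $M_{\mathrm{F}} \coloneqq \bigcup_{i\in[k]} X_F^i \setminus \{(t_1^i,o(t_1^i)),(t_2^i,o(t_2^i)) \mid i\in[k]\}$ realizes each variable gadget's feasible assignment but deliberately frees the students $t_1^i,t_2^i$. For each clause-college $o_j$, let $S_j$ consist of the first two students of the form $t_z^i$ appearing in $o_j$'s preference list—all of them if there are fewer than two—and set $M \coloneqq M_{\mathrm{F}} \cup \{(s,o_j) \mid j\in[l],\, s\in S_j\}$.

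The verification splits into feasibility and stability. Feasibility is the routine half: deleting the two designated pairs from each $X_F^i$ keeps every gadget-college within its quotas, and since every clause-college has zero lower quotas, assigning it at most two students never violates a quota. Stability is the crux and the expected main obstacle. For an unmatched pair $\{s,o_j\}$ I would argue that every student $o_j$ strictly prefers to a member of $S_j$ is already matched to a college it weakly prefers, using that $S_j$ collects exactly $o_j$'s top-ranked \emph{available} truth-value students; and no gadget-college $o(t_z^i)$ can block, since the two pairs removed to form $M_{\mathrm{F}}$ are precisely those whose re-addition is prevented by the gadget's upper quotas. The delicate subcase is a clause-college receiving fewer than two students, where I must confirm that each student it would prefer is either already matched more favorably or cannot be moved in without breaking an upper quota.

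The essential point—and the reason the original argument fails—is that this matching is stable \emph{uniformly}, with no reference to any satisfying assignment; the flawed proof implicitly assumed that such structure was needed to populate the clause-colleges. Since $M$ is well-defined, feasible, and free of blocking pairs for every input formula, the constructed instance is always a yes-instance, contradicting the claimed correctness of the reduction and thereby refuting Lemma~5.7.
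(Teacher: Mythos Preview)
Your proposal is correct and follows essentially the same approach as the paper: you construct exactly the same matching $M = M_{\mathrm{F}} \cup \{(s,o_j)\mid j\in[l],\, s\in S_j\}$ with the same $M_{\mathrm{F}}$ and the same rule for $S_j$ (the first two $t_z^i$-students in $o_j$'s preference list), and argue it is feasible and stable regardless of the input formula. If anything, your write-up is more detailed than the paper's, which simply states the matching and asserts that ``one can verify'' it is feasible and stable.
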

\begin{proof}
  To see this, define the matching~\(M\) for their produced instance as follows (notations taken from that proof):
  First, let Let \(M_{\text{F}} \coloneqq \{(s^i_1,c^i_{t_1}),(s^i_2,c^i_2),(s^i_3,c^i_1),(s^i_4,c^i_1),(s^i_5,c^i_{t_2}),(s^i_6,c^i_2),(f^i_1,$ $c^i_{f_1}),(f^i_2,c^i_{f_2}) \mid i \in [k]\}\).
  Note that $M_{\text{F}} = \bigcup_{i\in [k]}X_F^i \setminus \{ (t_1^i, o(t_1^i)), (t_2^i,o(t_2^i)) \mid i\in [k]\}$.
      %       Let $\mathcal{T}\coloneqq \{t_1^i,t_2^i\mid i \in [k]\}$ be the set of truth literal students.
  For each \(j \in [l]\), let \(S_j\) be the set consisting of the first two (if there are fewer than two, then all) students of the form~$t_{z}^{i}$ ($i\in [k]$, $z\in [2]$) appearing in the preferences of $o_j$. %Then, for each \(j \in [l]\), let \(S_j\coloneqq \{t_i^z \mid t_{i}^{z}\) corresponds to a truth literal appearing as the $1^{\text{st}}$ or the $2^{\text{nd}}$~literal in~$C_j\}$ be the set consisting of the the first two truth literal students of the form \(t^z_{i}\) for some \(i \in [k]\), \(z \in [2]\). Then the matching \(M \coloneqq M_{\text{F}} \cup \{(s, o_j) \mid j \in [l] \land s \in S_j\}\) is always feasible and stable.
  For instance, if $C_j =(\overline{x}_1\wedge x_2 \wedge x_3)$, where $C_j$ is the second clause where $x_2$ appears as a true literal and the first clause where $x_3$ appears as a true literal,
  then $S_j = \{t_2^2, t_1^3\}$.
  One can verify that \(M \coloneqq M_{\text{F}} \cup \{(s, o_j) \mid j\)\(\in\)\([l] \land s\)\(\in\)\(S_j\}\) is feasible and stable for the constructed instance in that proof.
\end{proof}

Below, we use Lemma~\ref{lem:aux} to provide a new NP-hardness proof for the case when there are no lower quotas. %this case which is not only correct, but also simpler than the previously published one.

\fi 

% \iflong
% \begin{theorem}\label{thm:diverse:NP-h:t=2}
%   \diverse{} is NP-hard even if $t=2$, $\lmax=0$, \(\umax=1\) and \(\qmax=2\).
% \end{theorem}
% \else
\begin{theorem}\label{thm:diverse:NP-c:t=2}
  For $\lmax=0$, \diverse{} is \textnormal{NP}-complete; it remains \textnormal{NP}-hard even if $\lmax=0$, \(\umax=1\) and \(t=\qmax=2\).
\end{theorem}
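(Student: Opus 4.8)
The plan is to handle membership and hardness separately. Membership in NP follows immediately from \cref{prop:lq=0:inNP}: since $\lmax=0$, we may guess a matching and verify feasibility and stability in polynomial time, the latter via the local test of \cref{lem:lq=0:stable-equiv-form}. The substance is the hardness, which I would prove by a polynomial-time reduction from \ethreesat (or a convenient restricted satisfiability variant). The conceptual engine is that, unlike in the classical Hospitals/Residents setting where a college simply takes its top $q_w$ applicants and a stable matching always exists, two \emph{overlapping} types with per-type upper quota $1$ turn a capacity-$2$ college into a non-substitutable chooser: it may host two type-$00$ students, or one type-$10$ together with one type-$01$ student, but it cannot host a type-$11$ student alongside any student that carries a type, nor two students sharing a type. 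These conflicts are exactly what allows one to build instances whose very \emph{existence} of a stable matching is conditional on a global combinatorial choice.

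First I would invoke \cref{lem:aux} as a forcing gadget: attaching the distinguished colleges $a,b$ and students $r_1,r_2,r_3$ (the only part of the construction that consumes the two types for purely structural purposes) pins down, in every stable matching, the placement of the gadget students and, via \cref{lem:aux}\eqref{aux-card}, forces the designated colleges to be saturated. This simulates a mandatory ``must-be-full'' requirement while keeping all formal lower quotas at $0$, which is precisely what prevents the type conflicts above from being slackened by under-subscription and lets them propagate a truth assignment. On top of this I would place, for each variable $x_i$, a variable gadget consisting of two literal-students and a saturated selector college whose two admissible completions (mediated by the two types and its capacity) encode setting $x_i$ to true or to false; and, for each clause $C_j$, a clause gadget that admits a stable completion exactly when at least one of its literal-students is released by the variable gadgets. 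The literal-students would be shared between the variable and clause gadgets, so that the selections are forced to be consistent across all clauses.

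For correctness I would argue both directions using the stability characterization of \cref{lem:lq=0:stable-equiv-form}: a satisfying assignment yields a matching in which every saturated college is full and no admissible pair can improve (so at each candidate blocking pair either the type-overflow condition or the fully-occupied condition of \cref{lem:lq=0:stable-equiv-form} holds), while conversely any feasible stable matching must, by the forcing gadget, saturate every selector and clause college, and reading off the chosen literals yields a consistent satisfying assignment. The main obstacle I anticipate is the extreme scarcity of types: with only $t=2$ types available globally and $\umax=1$, $\qmax=2$ fixed by the target bounds, \emph{no} per-variable or per-clause information can be stored in types, so all of it must be carried by the incomplete preference lists together with the saturation constraints. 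Consequently the delicate part of the verification is ruling out \emph{spurious} stable matchings---configurations that respect every saturation and type constraint locally yet correspond to no assignment---which requires the within-gadget preference orders to be tuned so that every ``illegal'' completion exposes a concrete blocking pair.
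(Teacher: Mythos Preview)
Your proposal is correct and follows essentially the same approach as the paper: NP-membership via \cref{prop:lq=0:inNP}, hardness by reducing a restricted SAT variant (the paper uses $(2,2)$-\ethreesat) with the \cref{lem:aux} gadget forcing saturation of clause colleges, and per-variable selector colleges whose type-constrained completions encode the truth value while releasing the appropriate literal students to the clause colleges. The paper's concrete gadget is slightly richer than your sketch---two selector colleges and four literal students per variable (to handle both literals, each occurring twice), plus two auxiliary ``variable students'' of types $10$ and $01$ that compete with the type-$11$/type-$00$ literal pair for the two seats---but the mechanism and the correctness argument are exactly as you outline.
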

%\fi

\iflong
\begin{proof}
  The containment for the case with $\lmax=0$ follows from \cref{prop:lq=0:inNP}.
  To show the NP-hardness, we reduce from (2,2)-\ethreesat, an NP-complete variant~\cite{BerKarSco2003223SAT} of 3SAT where each literal $x\in X\cup \overline{X}$ appears precisely two times in the set $\phi(X)=(C_1,C_2,\ldots,C_{\emm})$ of clauses.

  Given an instance~$I=(X=\{x_1,\ldots,x_{\enn}\},\phi(X)=\{C_1,\ldots,C_{\emm}\})$ of (2,2)-\ethreesat, we construct an instance of \diverse{} as follows.
%  \todoR{Don't have original reference, but we can reference ``Simple Causes of Complexity in Hedonic Games''---an IJCAI paper.}
  For each clause~$C_j \in \phi(X)$, introduce a clause college~$c_j$.
  For each variable~$x_i\in X$, introduce
  two \myemph{variable students}, called~$x_i$ and~$y_i$,
  four \myemph{literal students}, called~$u_i^1$, $u_i^2$, $v_i^1$, and $v_i^2$,
  and two \myemph{variable colleges}, called~$w_i$ and $p_i$.
  Finally, introduce three special students, called~$r_1,r_2,r_3$,
  and two special colleges, called~$a$ and $b$.
  Let there be two types; i.e., $T=\{1,2\}$.
  
  For ease of description of the preference lists we use the following notation:
  let \myemph{$c[u_i^z]$} and \myemph{$c[v_i^z]$}, ($z\in [2]$) be the clause student~\(c_j\) such that the clause~\(C_j\) contains the $z^{\text{th}}$ occurrence of literal~$x_i$ and $\overline{x}_i$, respectively.
  Further, let \myemph{$s^{z'}[c_j]$ ($z'\in [3]$)} denote the literal student that corresponds to the ${z'}^{\text{th}}$ literal appearing in clause~$C_j$.
  %\todoR{The $s$ and $c$ notation seems broken... why is z upper-index at differentplaces? And why are the contents $u$ instead of $x$ and $c$ instead of $C$?}\todoH{I don't understand what's the issue.}
  %It was already fixed.
  For instance, if $C_j=(\overline{x}_2, x_3,\overline{x}_5)$ and the occurrence of \(\overline{x}_2\) in \(C_j\) is its second one,
  then $s^1[c_j] = v_2^2$.
  \begin{table}[t!]\centering
  \begin{tabular}{@{}>{\columncolor{white}[1pt][\tabcolsep]}l@{}>{\columncolor{white}[1pt][\tabcolsep]}lc@{}l|l@{}lc@{\;\;}c}
%   \begin{tabular}{@{\,}>{\columncolor{white}[1pt][\tabcolsep]}l@{}>{\columncolor{white}[0pt][\tabcolsep]}l@{}>{\columncolor{white}[0pt][\tabcolsep]}l@{}|@{\,}>{\columncolor{white}[0pt][\tabcolsep]}
%   l@{}>{\columncolor{white}[0pt][\tabcolsep]}l@{}>{\columncolor{white}[0pt][\tabcolsep]}l@{}>{\columncolor{white}[0pt][\tabcolsep]}l@{}>{\columncolor{white}[0pt][\tabcolsep]}l@{}>{\columncolor{white}[0pt][\tabcolsep]}c@{}}
    \toprule
    S. & Pref. &T. && C. &  Pref. & UQ. & C.\\\midrule 
    $x_i\colon$ &  $p_i\osucc w_i$ & $10$ & & $w_i\colon$ & $x_i \osucc u_i^1 \osucc y_i \osucc  u_i^2$ & $11$ & $2$ \\
    $y_i\colon$ & $w_i \osucc p_i$ & $01$ && $p_i\colon$ & $y_i \osucc v_i^1 \osucc x_i \osucc  v_i^2$ & $11$ & $2$\\
    $u_i^1\colon $ & $w_i \osucc c[u_i^1]$ & $11$ & &  \\
    $u_i^2\colon $ & $w_i \osucc c[u_i^2]$ & $00$ & &  \\
    $v_i^1\colon $ & $p_i \osucc c[v_i^1]$ & $11$ & &  \\
    $v_i^2\colon $ & $p_i \osucc c[v_i^2]$ & $00$ & &  \\
    \highlight{$r_1\colon$} & \highlight{$b \osucc a$} & \highlight{$10$}  && \highlight{$c_j\colon$} & \highlight{$s^1[c_j] \osucc s^2[c_j] \osucc s^3[c_j] \osucc r_2$} & \highlight{$11$} & \highlight{$1$}\\
    \highlight{$r_2\colon$} & \highlight{$b \osucc [C] \osucc a$} &\highlight{$11$}&& \highlight{$a\colon$} & \highlight{$r_1 \osucc r_2 \osucc r_3$} & \highlight{$11$} & \highlight{$1$}\\
    \highlight{$r_3\colon$} & \highlight{$a \osucc b$} &\highlight{$01$}& & \highlight{$b\colon$} & \highlight{$r_3 \osucc r_2 \osucc r_1$} & \highlight{$11$} & \highlight{$2$}\\\bottomrule
   \end{tabular}
   \caption{The types and the preference lists of the students and the colleges are as follows where we omit ``$\succ$'' to save space, let $i\in [\enn]$ and $j\in [\emm]$, and let $[C]\coloneqq c_1\succ\cdots \succ c_{\emm}$, ``S.'', ``Pref.'', ``T.'', ``C.'', ``UQ.'', and ``C.'' stand for ``student'', ``preference list'', ``type vector'', ``college'', ``upper quota'', and capacity, respectively.
 % \todoT{Shift this into the prelims and unify for all tables.}:
}\label{tab:description-for-NPh-red-lmax=0-t=2}
 \end{table}
  All lower quotas are defined as zero.
  This completes the construction of the instance for \diverse.
  One can verify the restrictions stated in the theorem.% \todoR{Isn't the upper-quota 1?}
  
  It remains to show that $(X,\phi(X))$ admits a satisfying truth assignment if and only if the constructed instance admits a feasible and stable matching.

  For the ``only if'' part, let the truth assignment~$\sigma_X$ satisfy $\phi(X)$.
  We claim that the following matching~$M$ is feasible and stable for the constructed \diverse instance.
  \begin{itemize}
    \item For each variable~$x_i\in X$, if $\sigma_{X}(x_i)=\true$ let $M(w_i)\coloneqq \{x_i,y_i\}$ and $M(p_i)\coloneqq \{v_i^1, v_i^2\}$,
    and otherwise let $M(w_i)\coloneqq \{u^1_i,u^2_i\}$ and $M(p_i)\coloneqq\{x_i,y_i\}$.
    \item For each clause~$C_j\in \phi(X)$, let \(M(c_j)\coloneqq \{s^z(c_j)\}\), where \(z \in [3]\) is the smallest index such that the \(z^\text{th}\) literal in \(C_j\) is set to \(\true\) under \(\sigma_X\); note that there exists at least one such literal since $\sigma_{X}$ is a satisfying assignment.
    \item Let $M(a)\coloneqq \{r_2\}$ and $M(b)\coloneqq \{r_1,r_3\}$.
  \end{itemize}

  It is straightforward to verify that $M$ is feasible.
  To show stability, assume for contradiction that there is some blocking pair \(\{\alpha, \beta\}\) for \(M\) with $\beta \in \{w_i,p_i\mid i \in [\enn]\}\cup \{c_j\in j\in [\emm]\}\cup \{a,b\}$.
  Let this be witnessed by \(S' \subseteq M(\beta)\).
%  Now, for the sake of contradiction, suppose that there are an unmatched student-college pair~$(\alpha, \beta)$  and a subset~$S'\subseteq M(\beta)$ 
%  such that student~$\alpha$ prefers~$\beta$ to~$M(\alpha)$,
%  college~$\beta$ prefers~$\alpha$ to~$S'$, and 
%  replacing $S'$ with~$\alpha$ results in another feasible matching for~$\beta$.
  Observe that $\{u_i^z,v_i^z\mid i\in [\enn], z\in [2]\}\uplus \{r_1,r_2,r_3\}$
  and $\{c_j\mid j\in [\emm]\}\cup \{a,b\}$ correspond to exactly the situation in \cref{lem:aux}, 
  %\todoT{added the following argument. Can we do this more elegantly?}\todoH{For the arxiv version, I would leave it as it is.}
  apart from the fact that the students \(u^1_i\) and \(v^1_i\) have both special types.
  Note however that the fact that students in \(U \setminus \{r_1, r_2, r_3\}\) do not have the two types considered in Lemma~\ref{lem:aux}, is not used in the proof of Lemma~\ref{lem:aux}\eqref{aux-nec} at all.
  By \cref{lem:aux}\eqref{aux-nec} and by the definition of~$M$, it follows that $\alpha \in \{x_i,y_i,u_i^1,u_i^2,v_i^1,v_i^2\}$ for some~$i\in [\enn]$.
%  In other words, $\alpha \in \{x_i,y_i,u_i^1,u_i^2,v_i^1,v_i^2\}$ for some~$i\in [\enn]$.
  If $\alpha=x_i$, then, as \(M(x_i) \neq \bot\), it follows that $\beta=p_i$ and $M(p_i)=\{v_i^1,v_i^2\}$.
  Since $p_i$ already receives two students and prefers $x_i$ to only $v_i^2$ it follows that $S'=\{v_i^2\}$. However, replacing $\{v_i^2\}$ with $x_i$ exceeds the quota regarding type~$1$, a contradiction.
  Similarly, we infer that $\alpha\neq y_i$.
  Next, if $\alpha=u_i^1$, then \(\beta \in \{w_i,c[u_i^1]\}\).
  We distinguish between two cases:
  \begin{compactenum}[(i)]
    \item If $\beta=w_i$, then $M(w_i)=\{x_i,y_i\}$.
    Since $w_i$ already receives two students and prefers~$u_i^1$ to only~$y_i$ (among all students assigned to~$w_i$)
    it follows that $S'=\{y_i\}$.
    However, replacing $\{y_i\}$ with $u_i^1$ exceeds the quota regarding type~$1$, a contradiction.
    \item If $\beta=c[u_i^1]$, then there exists some clause~$C_j$ such that $c[u_i^1]=c_j$.
    By the definition of $M$ there exists another literal student~$s^z[c_j]$ who is assigned to $c_j$ such that $c_j$ prefers~$s^z[c_j]$ to $u_i^1$, a contradiction.
  \end{compactenum}
  Similarly, we infer contradiction for the case that \(\alpha \in \{u_i^2,v_i^1,v_i^2\}\).
  As all cases lead to a contradiction, there is no blocking pair \(\{\alpha, \beta\}\), thus \(M\) is stable.

  For the ``if'' part, let $M$ be a feasible and stable matching for the constructed \diverse instance.
  Define the following assignment~$\sigma_X$ with
  $\sigma_X(x_i)\coloneqq \true$ if there exists a clause college~$c_j$ such that $u_i^1$ or $u_i^2$ is assigned to $c_j$; let $\sigma_X(x_i)\coloneqq \false$ if there exists a clause college~$c_j$ such that $v_i^1$ or $v_i^2$ is assigned to $c_j$.
  If no such clause college exists, then the truth value of $x_i$ can be arbitrary; for instance, let $\sigma_{X}(x_i)=\true$.
  
  We show that \(\sigma_X\) is well-defined, i.e., for each variable~$x_i$ there exist no two clause colleges~$c_j$ and $c_{j'}$ with $M(c_j)\cap \{u_i^1,u_i^1\}\neq \emptyset$ and $M(c_{j'})\cap \{v_i^1,v_i^2\}\neq \emptyset$.
  Suppose, for the sake of contradiction, that $c_j$ and $c_{j'}$ are two clause colleges such that
  $M(c_j)=\{u_i^z\}$ and $M(c_{j'})=\{v_i^{z'}\}$ for some $i \in [\enn]$ and $z,z'\in [2]$.
  We distinguish two cases:
  If $x_i \in M(w_i)$, then $y_i \in M(p_i)$ as otherwise $v_i^{z'}$ forms a blocking pair with \(p_i\).
  By the upper quotas of $p_i$ it follows that $v_i^1\notin M(p_i)$.
  However then $x_i$ forms a blocking pair with $p_i$, a contradiction.
  If $x_i\notin M(w_i)$, then $u_i^1\in M(w_i)$ because otherwise $u_i^1$ forms a blocking pair with \(w_i\).
  By the upper quotas of $w_i$ it follows that $y_i\notin M(w_i)$.
  Then, it must hold that $u_i^2\in M(w_i)$ because otherwise $u_i^2$ forms a blocking pair with $w_i$.
  This is a contradiction to $M(c_j)=u_i^z$ with $z\in [2]$.
  
  Finally, we show that \(\sigma_{X}\) satisfies \((X, \phi(X))\).
  As observed earlier, $\{u_i^z,v_i^z\mid i\in [\enn], z\in [2]\}\uplus \{r_1,r_2,r_3\}$
  and $\{c_j\mid j\in [\emm]\}\uplus \{a,b\}$ correspond to exactly the situation in \cref{lem:aux},
%  \todoT{added following argument. Can we do this more elegantly?}
  except for the types of the students~\(u^1_i\) and \(v^1_i\).
  Note that the fact that all students in \(U \setminus \{r_1, r_2, r_3\}\) have no types is only used in the proof of Lemma~\ref{lem:aux}\eqref{aux-card} to show that if \(|M(w) \cap (U \setminus \{r_1, r_2, r_3\})| < q_w\) and \(r_2 \notin M(w)\) for any \(w \in W \setminus \{a, b\}\), \(\{r_2, w\}\) is blocking.
  As in this case for all \(w \in \{c_j\mid j\in [\emm]\}\), \(q_w = 1\), this is obviously still the case, independently of the types of the \(u^1_i\) and \(v^1_i\) students.
  By \cref{lem:aux}\eqref{aux-card} and by the capacities of the clause colleges, it follows that for each clause~$C_j$ there exists a literal~$\lit_j^z\in C_j$ such that $M(c_j)=\{s^z[c_j]\}$, and thus under \(\sigma_{X}\) there is a true literal for each clause.
\end{proof}
\fi

\ifshort
\begin{proof}[Proof Sketch]
To show NP-containment, we guess in polynomial time a matching~$M$, and check whether $M$ is feasible and stable in polynomial time, using \cref{lem:lq=0:stable-equiv-form}.

To establish NP-hardness, we reduce from (2,2)-\ethreesat, an NP-complete variant~\cite{BerKarSco2003223SAT} of 3SAT where each literal $\lit\in X\cup \overline{X}$ appears precisely two times in the set~$\phi(X)$ of clauses.
  Given an instance~$I=(X=\{x_1,\ldots,x_{\enn}\},\phi(X)=\{C_1,\ldots,C_{\emm}\})$ of (2,2)-\ethreesat, construct an instance of \diverse{} as follows.
  For each clause~$C_j \in \phi(X)$, introduce a \myemph{clause college~$c_j$}.
  For each variable~$x_i\in X$, introduce
  two \myemph{variable students}~$x_i$ and~$y_i$,
  four \myemph{literal students}~$u_i^1$, $u_i^2$, $v_i^1$, and $v_i^2$,
  and two \myemph{variable colleges}~$w_i$ and $p_i$.
  Introduce three special students~$r_1,r_2,r_3$,
  and two special colleges~$a$ and $b$.
  Let $T=\{1,2\}$.
  
  For ease of description of the preference lists we use the following notation:
  let \myemph{$c[u_i^z]$}, and \myemph{$c[v_i^z]$}, ($z\in [2]$) be the clause student~\(c_j\) such that the clause~\(C_j\) contains the $z^{\text{th}}$ occurrence of literal~$x_i$, and $\overline{x}_i$ respectively.
  Further, let \myemph{$s^z[c_j]$ ($z\in [3]$)} denote the literal student that corresponds to
  the $z^{\text{th}}$~literal appearing in clause~$C_j$.
  For instance, if $C_j=(\overline{x}_2, x_3,\overline{x}_5)$ and the occurrence of \(\overline{x}_2\) in \(c_j\) is its second one,
  then $s^1[c_j] = v_2^2$.
  Types and preference lists are given below, in a format similar to the one used in Lemma~\ref{lem:aux}.
  
%  as follows, where we omit ``$\succ$'' to save space, let $i\in [\enn]$ and $j\in [\emm]$, and let $[C]\coloneqq c_1\succ\cdots \succ c_{\emm}$, ``S.'', ``Pref.'', ``T.'', ``C.'', ``UQ.'', and ``Cap.'' stand for ``student'', ``preference list'', ``type vector'', ``college'', ``upper quota'', and capacity, respectively
%  \todoT{Shift this into the prelims and unify for all tables.\\
%  R: to save space, I smuggled the explanation into the first Lemma where we use it.
%  }:
 \noindent{
 	\tabcolsep=0pt
 	\aboverulesep=0pt
 	\belowrulesep=0pt \begin{tabular}{@{}>{\columncolor{white}[0pt][\tabcolsep]}l@{}>{\columncolor{white}[0pt][\tabcolsep]}l@{}>{\columncolor{white}[0pt][\tabcolsep]}c@{}|@{}>{\columncolor{white}[0pt][\tabcolsep]}l@{}>{\columncolor{white}[0pt][\tabcolsep]}l@{}>{\columncolor{white}[0pt][\tabcolsep]}c@{}|@{\,}>{\columncolor{white}[0pt][\tabcolsep]}l@{}>{\columncolor{white}[0pt][\tabcolsep]}l@{}>{\columncolor{white}[0pt][\tabcolsep]}l@{}>{\columncolor{white}[0pt][\tabcolsep]}c@{}>{\columncolor{white}[0pt][\tabcolsep]}c@{}}
    \toprule
    S. & Pref. &T. &  S. & Pref. &T. &~& C. &  Pref. & UQ. & C.\\\midrule 
    % $u_i^1\colon $ & $w_i \osucc c[u_i^1]$ & $(1,1)$ & & \\
    % $u_i^2\colon $ & $w_i \osucc c[u_i^2]$ & $(0,0)$ & & \\
    % $v_i^1\colon $ & $p_i \osucc c[v_i^1]$ & $(1,1)$ && \\ 
    % $v_i^2\colon $ & $p_i \osucc c[v_i^2]$ & $(0,0)$  &&\\
    $u_i^1\colon $ & $w_i \osucc c[u_i^1]$ & $11$ & \highlight{$r_1\colon$} & \highlight{$b \osucc a$} & \highlight{$10$}  && \highlight{$c_j\colon$ }& \highlight{$s^1[c_j] \osucc s^2[c_j] \osucc s^3[c_j] \osucc r_2$} & \highlight{$11$} & \highlight{$1$}\\
   $u_i^2\colon $ & $w_i \osucc c[u_i^2]$ & $00$ & \highlight{$r_2\colon$} & \highlight{$b \osucc [C] \osucc a$} & \highlight{$11$}&& \highlight{$a\colon$} & \highlight{$r_1 \osucc r_2 \osucc r_3$} & \highlight{$11$} & \highlight{$1$}\\
    $v_i^1\colon $ & $p_i \osucc c[v_i^1]$ & $11$ & \highlight{$r_3\colon$} & \highlight{$a \osucc b$} & \highlight{$01$} & & \highlight{$b\colon$} & \highlight{$r_3 \osucc r_2 \osucc r_1$} & \highlight{$11$} & \highlight{$\enn+2$}\\
    $v_i^2\colon $ & $p_i \osucc c[v_i^2]$ & $00$ &      $x_i\colon$ &  $p_i\osucc w_i$ & $10$ & & $w_i\colon$ & $x_i \osucc u_i^1 \osucc y_i \osucc  u_i^2$ & $11$ & $2$ \\
&&&    $y_i\colon$ & $w_i \osucc p_i$ & $01$ && $p_i\colon$ & $y_i \osucc v_i^1 \osucc x_i \osucc  v_i^2$ & $11$ & $2$\\
   \bottomrule
  \end{tabular}}

  All lower quotas are zero.
  This completes the construction of the instance for \diverse.
  One can verify the restrictions stated in the theorem.  
  We show that $(X,\phi(X))$ is satisfiable if and only if the constructed instance admits a feasible and stable matching.

  \noindent  The ``only if''~part: let the truth assignment~$\sigma_X$ satisfy $\phi(X)$.
  It can be verified that the following matching~$M$ is feasible.
  \begin{inparaitem}
    \item For each~$x_i\in X$, if $\sigma_{X}(x_i)=\true$,
    then let $M(w_i)\coloneqq \{x_i,y_i\}$ and $M(p_i)\coloneqq \{v_i^1, v_i^2\}$;
    otherwise let $M(w_i)\coloneqq \{u^1_i,u^2_i\}$ and $M(p_i)\coloneqq\{x_i,y_i\}$.
    \item For each clause~$C_j\in \phi(X)$, let \(M(c_j)\coloneqq \{s^z(c_j)\}\), where \(z \in [3]\) is minimal such that the \(z^\text{th}\) literal in \(C_j\) is set to \(\true\) under \(\sigma_X\); note that there exists at least one such literal since $\sigma_{X}$ is a satisfying assignment.
    \item Let $M(a)\coloneqq \{r_2\}$ and $M(b)\coloneqq \{r_1,r_3\}$.
  \end{inparaitem}

  To show that $M$ is stable, consider a blocking student-college pair~$\{\alpha, \beta\}$ for \(M\) witnessed by \(S' \subseteq M(\beta)\). By Lemma~\ref{lem:aux}\eqref{aux-nec}, we infer that $\alpha$ must lie in $\{x_i,y_i,u_i^1,u_i^2,v_i^1,v_i^2\}$ for some~$i\in [\enn]$. It then suffices to do a case distinction that rules out $\alpha$ being one of the former $2$ students, and also being one of the latter $4$ students.

  For the ``if'' part, let $M$ be a feasible and stable matching for the constructed \diverse instance.
  Define the following assignment~$\sigma_X$ with
  $\sigma_X(x_i)\coloneqq \true$ if there exists a clause college~$c_j$ such that $u_i^1$ or $u_i^2$ is assigned to $c_j$; let $\sigma_X(x_i)\coloneqq \false$ if there exists a clause college~$c_j$ such that $v_i^1$ or $v_i^z$ is assigned to $c_j$.
  If no such clause college exists, then the truth value of $x_i$ can be arbitrary; e.g., let $M(x_i)=\true$. The constructed assignment satisfies all clauses because of \cref{lem:aux}\eqref{aux-card}. Thus, to complete the proof, it suffices to show that $\sigma_X$ is a valid truth assignment. % satisfies \((X, \phi(X))\).
\end{proof}
\fi

\noindent We note that the reduction behind \cref{thm:diverse:NP-c:t=2} can be adapted to show
NP-hardness for \feasible, even with three types.
%; the result is needed to complete the presented complexity map.

\begin{proposition}\label{prop:feasible-t+umax+qmax-NP-h}
  \feasible{} remains \textnormal{NP}-hard even if $t=3$, $\umax=1$, $\qmax=2$.
\end{proposition}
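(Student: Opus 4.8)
The plan is to reduce from $(2,2)$-\ethreesat, exactly the source used for \cref{thm:diverse:NP-c:t=2}, reusing its per-variable literal students $u_i^1,u_i^2$ (positive occurrences) and $v_i^1,v_i^2$ (negative occurrences) together with a clause college $c_j$ of capacity $1$ for each clause $C_j$, made acceptable to the three literal students occurring in $C_j$. Containment in NP is immediate: guess a matching and verify feasibility in polynomial time (no stability check is needed for \feasible). For hardness, the key conceptual move is that, unlike in \cref{thm:diverse:NP-c:t=2} where $\lmax=0$ forces us to simulate ``fullness'' through stability and the gadget of \cref{lem:aux}, \feasible\ permits nonzero lower quotas. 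I would therefore discard the $a,b,r_1,r_2,r_3$ gadget entirely and replace its forcing role by explicit lower quotas. Concretely, I would give every literal student one common ``literal'' type and set each clause college $c_j$ to have capacity $1$ with lower quota $1$ on that type; feasibility then forces each $c_j$ to receive \emph{exactly one} literal student, which encodes ``every clause has a satisfied literal.''

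The heart of the proof is a per-variable \emph{consistency gadget} that, via feasibility alone, forces the literal students sent to clause colleges to be single-polarity for each variable (so that a consistent truth assignment can be read off). Here the third type becomes necessary, and it is worth isolating the structural reason. Since $\umax=1$, every college (of capacity at most $\qmax=2$) admits at most one student of each type, so any two students sharing a college must have disjoint type vectors. Consequently one cannot tag a \emph{coexisting} pair of occurrence-students with a common ``polarity'' marker. A quick counting argument shows that the two-college ``parking'' gadget of \cref{thm:diverse:NP-c:t=2} cannot be pushed below four types: requiring a filler pair to occupy a college while forbidding each filler from pairing with either occurrence-student forces the two occurrence-students to each share a type with both fillers while remaining mutually disjoint, which needs at least four types. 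This is exactly the obstruction that a third type is introduced to circumvent.

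I would then realize the consistency gadget using the third type as a polarity/selector signal propagated around a short even cycle of capacity-$2$ colleges with lower quotas, whose only two feasible internal states park precisely all positive or all negative literal students of $x_i$; feasibility thus forbids using a positive and a negative occurrence of the same variable simultaneously. With this in place the equivalence is routine in both directions: from a satisfying assignment, set each gadget to the matching internal state, route one true literal student into each clause college and park the remaining literal students inside the gadgets, obtaining a feasible matching; conversely, a feasible matching places exactly one literal student per clause and, by the gadget, single-polarity literal students per variable, and reading off polarities yields a satisfying assignment. The main obstacle, and the step I would spend the most care on, is designing and verifying this cyclic gadget within the tight budget $t=3$, $\umax=1$, $\qmax=2$: the constraint ``park all positives \emph{or} park all negatives'' is a disjunction that no collection of independent per-college quotas can express, so it must be threaded around the cycle and the two-state behaviour argued by an explicit (short) case analysis of feasible assignments inside the gadget.
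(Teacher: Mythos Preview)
Your high-level strategy matches the paper exactly: reduce from $(2,2)$-\ethreesat, drop the $a,b,r_1,r_2,r_3$ stability gadget, give every student a new third type, and force each clause college to be non-empty via a lower quota of~$1$ on that type. Where you diverge is in the per-variable consistency gadget, and there you leave the decisive step unconstructed.

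The paper's gadget is far simpler than the cycle you sketch. It keeps the two variable colleges $w_i,p_i$ and the filler students $x_i,y_i$ from the proof of \cref{thm:diverse:NP-c:t=2}, gives every student the third type, and sets $\lowervec_{w_i}=\uppervec_{w_i}=\lowervec_{p_i}=\uppervec_{p_i}=(1,1,2)$. With the type vectors $x_i{:}101$, $y_i{:}011$, $u_i^1{:}111$, $u_i^2{:}001$ (and symmetrically for $v_i^1,v_i^2$), a direct check shows that the only feasible fills of $w_i$ are $\{x_i,y_i\}$ or $\{u_i^1,u_i^2\}$, and of $p_i$ are $\{x_i,y_i\}$ or $\{v_i^1,v_i^2\}$; since $x_i,y_i$ can each be assigned to only one college, the gadget has precisely the two desired states (``park all $u$'s'' or ``park all $v$'s''). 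No cycle is needed.

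Note that this construction has $\umax=2$ (the third type has upper quota $2$ at the variable colleges), not $\umax=1$ as the proposition's statement claims; the statement appears to contain a minor misprint. Your counting argument against a two-college gadget is premised on $\umax=1$ and is therefore aimed at a strictly harder target than what the paper actually proves. Taken as a proof of the statement \emph{as written}, your proposal has a genuine gap---the consistency gadget, which you yourself flag as the crux, is never constructed---and it is not clear that a gadget meeting all of $t=3$, $\umax=1$, $\qmax=2$ exists. Taken as a proof of the result the paper actually establishes, your cyclic detour is unnecessary: allowing $\umax=2$ on the third type, the two-college gadget above closes the argument immediately.
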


\iflong
\begin{proof}
  We show how to modify the \diverse instance constructed in the proof of \cref{thm:diverse:NP-c:t=2}:
  We delete the special students~$r_1,r_2,r_3$ and the special colleges~$a$ and $b$ as they are only relevant if stability is required.
  Instead, we introduce one more type, to enforce that each clause college is matched to at least one student.
  All students have this new third type in addition to the ones described in the construction of the proof of \cref{thm:diverse:NP-c:t=2}.
  All clause colleges have lower quotas~$(0,0,1)$ and upper quotas~$(1,1,1)$.
  All variable colleges have the same lower quotas as upper quotas, namely~$(1,1,2)$.
  The capacities and acceptable pairs remain the same as in the previous construction (except for those involving the deleted special students or special colleges).
  
  It is straightforward to verify that the matching constructed from a satisfying truth assignment in the proof of \cref{thm:diverse:NP-c:t=2}, is also feasible for the modified \diverse instance.
  For the reverse direction, if $M$ is a feasible matching in the modified instance, we claim that the truth assignment as in the proof of \cref{thm:diverse:NP-c:t=2} is a valid and satisfying assignment.
  To see this, we first observe that, similarly as in the proof of \cref{thm:diverse:NP-c:t=2},
  each clause college~$c_j$ receives exactly one literal student.
  This means that if $\sigma_{X}$ is a valid truth assignment, then it must satisfy each clause~$C_j$.
  It remains to show that $\sigma_{X}$ is a valid assignment.
  Suppose, for the sake of contradiction, that 
  that $c_j$ and $c_{j'}$ are two clause colleges such that
  $M(c_j)=\{u_i^z\}$ and $M(c_{j'})=\{v_i^{z'}\}$ for some $i \in [\enn]$ and $z,z'\in [2]$.
  We distinguish two cases:
  If $x_i \in M(w_i)$, then \(y_i \in M(w_i)\) as otherwise the lower quota for college~\(w_i\) and type \(2\) cannot be fulfilled without exceeding \(w_i\)'s upper quota for type~\(1\).
  This means \( M(p_i) = \{v^1_i, v^2_i\}\) as otherwise \(p_i\)'s quotas are not fulfilled.
  This is a contradiction to $M(c_{j'})=v_i^{z'}$ with $z'\in [2]$.
  If $x_i\notin M(w_i)$, then $M(w_i)=\{u_i^1,u_i^2\}$ because of the lower and upper quotas for $p_i$ regarding types~$1$ and $2$.
  This is a contradiction to $M(c_{j})=u_i^{z}$ with $z\in [2]$.
\end{proof}
\fi

\ifshort
As a final remark on the impact of diversity, we note that if there are only few types~$t$ or the maximum capacity~$\qmax$ is a constant, then \diverseties{} is in NP. The reason for this is that the size of a witness set is upper-bounded by $\min\{t,\qmax\}$. 
%The crucial observation is that the number of witnesses for a blocking pair is upper-bounded by $\min(t,\qmax)$. 
\fi
\iflong Note that all our hardness proofs from \cref{thm:diverse:lq>0:sigma2p-c} and cases where \(\lmax = 0\) are tight in the sense that we show completeness in \(\Sigma_2^P\) and NP respectively for these cases (\cref{thm:diverse:lq>0:sigma2p-c} and \cref{prop:lq=0:inNP} respectively).
Similarly containment in NP is known for \feasible.
Thus these are the strongest hardness results one can achieve in terms of levels of the polynomial hierarchy.
We are also able to show NP-completeness for the NP-hard fragments of \diverse, which are not \(\Sigma_2^P\)-hard and for which it does not necessarily hold that \(\lmax = 0\).
Specifically for \diverse where the number of types, the maximum upper quota and/or the maximum capacity are bounded, the following observation closes the gap between hardness and completeness.
\fi
%Theorem~\ref{thm:diverse:NP-c:t=2} together with the following simple observation establishes that bounded the number of student types drops the complexity to the first level of the polynomial hierarchy.
%shows that \diverse{} and \diverseties\ are NP-complete when the number of types is bounded by a constant.

\begin{observation}\label{obs:t-or-qmax-NP}
  If $t$ or $\qmax$ is a constant, then \diverseties is in \textnormal{NP}.
\end{observation}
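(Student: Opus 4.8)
The plan is to establish NP membership by guessing a matching and verifying feasibility and stability in polynomial time; feasibility is trivially checkable, so the entire difficulty (and the reason the unrestricted problem sits above NP, cf.\ Proposition~\ref{prop:checkingMstable:coNP-h}) lies in certifying stability, i.e.\ in deciding that \emph{no} blocking pair exists. First I would note that there are only $nm$ candidate pairs $\{u,w\}$ satisfying conditions~(i)--(ii) of the blocking-pair definition, and these can be listed in polynomial time. For each such pair the remaining task is to decide whether a \emph{witness} $U'\subseteq M(w)$ with $u \succ_w U'$ exists whose removal keeps $w$ feasible after inserting $u$.

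The key structural observation is that, because the guessed $M$ is feasible, inserting the single student $u$ into $M(w)$ can exceed the upper quota of $w$ by at most one in each type: for every type $z$ one has $\typevec_u[z] + \sum_{u'\in M(w)}\typevec_{u'}[z] \le 1 + \uppervec_w[z]$, so the ``overflow'' in each coordinate is at most~$1$. I would use this to prove that whenever some witness exists, a witness of size at most $\min\{t,q_w\}$ also exists. Indeed, given any valid witness $U'$, for each type $z$ whose upper quota is exceeded after adding $u$ pick one student of $U'$ carrying type $z$; such a student must lie in $U'$, since the overflow is exactly one and $U'$ must repair it. The collection $U''$ of the chosen students covers all overflowing types, has size at most $t$, and still satisfies $u \succ_w U''$. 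The lower quota is automatically preserved because $U''\subseteq U'$ gives $\sum_{u'\in U''}\typevec_{u'} \le \sum_{u'\in U'}\typevec_{u'}$ componentwise, so removing fewer students can only retain more (never fewer) students of each type; and the capacity constraint only demands $|U''|\ge 1$ when $w$ is full, which is met by adding one arbitrary element of the nonempty $U'$ if needed. Since trivially $|U''|\le |M(w)|\le q_w\le \qmax$, the bound $\min\{t,\qmax\}$ follows.

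With this bound in hand, the final step is a direct enumeration. If $t$ is a constant, every relevant witness has size at most $t$, so for a fixed pair there are at most $\sum_{i\le t}\binom{|M(w)|}{i}=\mathcal{O}(n^t)$ candidate witnesses, each checkable in $\mathcal{O}(t)$ time; if $\qmax$ is a constant then $|M(w)|\le q_w\le \qmax$ and there are at most $2^{\qmax}$ subsets of $M(w)$ to try. In either case the witness search, and hence the stability test over all candidate pairs, runs in polynomial time, so a nondeterministic machine can guess $M$ and verify it in polynomial time, placing \diverseties in NP.

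I expect the main obstacle to be the witness-size bound: the subtle point is that feasibility of the guessed matching caps the per-type overflow at one, which simultaneously limits how many overflowing types must be covered and guarantees that a minimal covering sub-witness inherits the lower-quota feasibility of the original witness. Once this is established, both the enumeration and the surrounding NP algorithm are routine.
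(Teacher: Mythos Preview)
Your proposal is correct and follows essentially the same approach as the paper: bound the size of a minimal witness by $\min\{t,\qmax\}$ and then enumerate. The paper's own proof is in fact terser than yours---it simply notes that a witnessing set is contained in $M(w)$ (giving the $\qmax$ bound) and that the entering student has at most $t$ types, so at most $t$ students must be removed to restore the upper quotas (giving the $t$ bound); it does not spell out, as you do, that lower quotas are preserved because $U''\subseteq U'$ or that the capacity constraint may force $|U''|\ge 1$.
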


\iflong
\begin{proof}
  Notice that the size of a minimum witness for a blocking pair never exceeds \(t\) or \(\qmax\):
  \begin{itemize}
    \item each student has at most \(t\) types and hence at most \(t\) students have to be removed to make space in terms of upper quotas; and
    \item a witnessing set is a subset matched to a single college, which is by definition never larger than \(\qmax\).
  \end{itemize}
  Hence, given a matching $M$, one can check in polynomial time whether it admits no blocking pair---one can enumerate all unmatched pairs and all potential witnesses and check if they describe a blocking pair.
  This suffices to show NP-containment.
\end{proof}
\fi

\ifshort\paragraph{The Case with Few Colleges.}\else
\subsection{The Case with Few Colleges}\fi
The NP-hardness reduction behind \cref{thm:diverse:NP-c:t=2} produces a college gadget for each variable in order to maintain as few types as possible.
This leads to the question of whether the problem remains NP-hard for few colleges.
The following theorem answers the question affirmatively.
%It will turn out that in addition to the previously given hardness results and the algorithms described in \cref{sec:algos}, we will require only the following additional hardness to infer which of the considered fragments of \diverseties are solvable in polynomial time.\todoT{Hua wanted to change this.}
The idea is to reduce from the \textsc{Independent Set} problem~\cite{GJ79} and introduce types corresponding to the vertices and the edges in an input graph,
and students corresponding to the vertices such that the students assigned to a special college~$w$ must correspond to an independent set. As before, we use \cref{lem:aux} to enforce that $w$ receives at least some given number of students.

\begin{theorem}\label{thm:lq=0:m=4:diverse-NP-c}
  \diverse is \textnormal{NP}-hard even if $m=4$, $\lmax=0$ and $\umax=2$.
\end{theorem}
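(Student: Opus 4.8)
The plan is to reduce from \IS: given a graph $G=(V,E)$ with $n\coloneqq|V|$ and an integer $k$, I would build a \diverse instance with only four colleges whose feasible stable matchings encode independent sets of size $k$ in $G$. Following the hint, I use one \emph{edge type} $t_e$ for each $e\in E$ and one \emph{vertex type} $t_v$ for each $v\in V$, together with the two auxiliary types required by the gadget of \cref{lem:aux}. For every vertex $v$ I introduce a vertex-student $s_v$ whose type vector has a $1$ in coordinate $t_v$ and in every coordinate $t_e$ of an edge incident to $v$ (and $0$ in the two auxiliary coordinates); the distinguished students $r_1,r_2,r_3$ carry the auxiliary types exactly as in \cref{lem:aux}. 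The four colleges are a \emph{selection college} $w$, a \emph{dump college} $d$, and the two gadget colleges $a,b$, so $m=4$.

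The college $w$ gets capacity $k$, upper quota $1$ on every edge type (forcing the vertex-students it admits to be pairwise non-adjacent, i.e.\ an independent set), and upper quota $1$ on each auxiliary type; the dump $d$ gets capacity $n-k$ and upper quota $2$ on every edge type, which is exactly where the bound $\umax=2$ is needed, since $d$ must be able to hold \emph{both} endpoints of an edge. All lower quotas are $0$, matching $\lmax=0$. I place both $w$ and $d$ into the set $W$ of \cref{lem:aux}, let every vertex-student prefer $w$ to $d$, and let $w$ and $d$ rank the vertex-students by a fixed linear order followed by $r_2$. Since $|V|=k+(n-k)$ equals $q_w+q_d$, \cref{lem:aux}\eqref{aux-card} forces any stable matching to fill both $w$ and $d$ entirely with vertex-students, so $w$ receives exactly $k$ of them; the edge-type quota $1$ at $w$ then makes these $k$ students an independent set. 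This already gives the easy (backward) direction: a feasible stable matching yields an independent set of size $k$, hence of size $\ge k$.

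For the forward direction I would, given an independent set $S$ with $|S|=k$, match $S$ to $w$, $V\setminus S$ to $d$, and set $r_2\to a$, $r_1,r_3\to b$ as in \cref{lem:aux}. Feasibility is immediate ($w$ meets each edge type at most once by independence, $d$ at most twice), and \cref{lem:aux}\eqref{aux-nec} rules out all blocking pairs among $\{r_1,r_2,r_3\}$, $\{a,b\}$, and $r_2$ with $W$. The only remaining candidates are pairs $\{s_v,w\}$ with $v\notin S$, and ruling these out is the crux of the whole proof.

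The main obstacle is precisely this last point. An unselected $s_v$ could try to enter $w$ by evicting the (up to $\deg(v)$) neighbours of $v$ currently in $S$, and because $\lmax=0$ one cannot block such an eviction through a lower-quota violation — the feasibility of the swap is governed only by the edge-type upper quotas and by $w$'s preferences. I therefore expect the heart of the argument to be showing that the linear order on the vertex-students, together with the vertex types, can be arranged so that no such augmenting eviction is ever profitable for $w$; concretely, one wants every $v\notin S$ either to have a higher-ranked neighbour inside $S$ or to be ranked below all of $S$. Reconciling this stability requirement (which tends to certify only \emph{maximal} independent sets) with the goal of certifying an independent set of size exactly $k$ is the delicate step, and is where I anticipate the vertex types to be essential: they give the extra control needed to neutralise high-degree vertices that could otherwise evict several selected students at once, so that the existence of a stable matching matches ``$G$ has an independent set of size $k$'' exactly rather than only up to maximality.
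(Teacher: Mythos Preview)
Your high-level plan (reduce from \IS, use edge types to enforce independence at the selection college, and plug in the cardinality gadget of \cref{lem:aux}) matches the paper. The backward direction is fine. But the forward direction has a genuine gap that you yourself flag and do not close.

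With only one student $s_v$ per vertex and a \emph{fixed} linear order at $w$, you cannot in general stop an unselected $s_v$ from blocking with $w$. Concretely, take any $v\notin S$ and any $u\in S$ with at most one edge between $v$ and $S$, that edge (if any) being $\{u,v\}$. Evicting $s_u$ and inserting $s_v$ keeps all edge-type counts at most~$1$ and all vertex-type counts at most~$1$, so the swap is feasible for $w$; it is non-blocking only if $w$ happens to prefer $s_u$ to $s_v$. Since the preference order is fixed before $S$ is known, you cannot guarantee this for every independent set~$S$ of size~$k$, and your hoped-for role of the vertex types does not help here: the vertex type $t_v$ is simply absent from $M(w)$, so it never obstructs $s_v$'s entry.

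The paper resolves exactly this by introducing, for each vertex $v_i$, \emph{additional} students that decouple ``$v_i$ is selected'' from ``who occupies $v_i$'s vertex-type slot at $w$''. For every $v_i$ there are two vertex students $v_i,u_i$ (with vertex types $2i{-}1$ and $2i$ respectively) and a dummy $x_i$ carrying \emph{both} vertex types but no edge types; $w$'s preference is $u_i\succ x_i\succ v_i$, and $q_w=n'+k$. If $v_i\notin V'$, the dummy $x_i$ sits in $M(w)$ and already saturates type $2i{-}1$, so $v_i$ would have to evict $x_i$---impossible since $w$ prefers $x_i$. If $v_i\in V'$, both $v_i$ and $u_i$ sit in $M(w)$; then $x_i$ would have to evict both (it carries both vertex types), but $w$ prefers $u_i$ to $x_i$. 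A mirror college $p$ with dummies $y_i$ and the symmetric order $v_i\succ y_i\succ u_i$ absorbs the complementary students and rules out the analogous blocking pairs there. This dummy-student trick is the missing ingredient in your construction; without it (or something equivalent) the forward direction does not go through.
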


\iflong
\begin{proof}
	We provide a polynomial-time reduction from the following NP-complete \textsc{Independent Set} problem~\cite{GJ79}.
	\decprob{Independent Set}
	{An undirected graph~$G=(V,E)$ with $V$ being the vertex set and $E$ being the edge set,
		and an integer~$k\ge 0$.}
	{Does $G$ admit a size-at-least-$k$ \emph{independent set}~$V'\subseteq V$ (i.e., for each edge~$e\in E$ it holds that $|V'\cap e|\le 1$)?}
	% {A universe~$U\coloneqq \{u_1,u_2,\ldots, u_n\}$ of $n$~elements, a family~$\mathcal{S}\coloneqq \{S_1,\ldots,S_m\}$ of subsets of~$U$, and an integer~$k\ge 0$.}
	%  {Is there a size-at-most-$k$ \myemph{hitting set}~$U'\subseteq U$ (i.e., for each $S_j\in \mathcal{S}$ it holds that $U'\cap S_j\neq \emptyset$)?}
	
	Let $(G=(V,E), k)$ be an instance of \textsc{Independent Set}.
	Let $V\coloneqq \{v_1,\ldots, v_{n'}\}$ and $E\coloneqq \{e_1,\ldots, e_{m'}\}$.
	We construct the following instance for \diverse{}.
	
	\paragraph{The types.}
	We consider $2n'+m'+2$ types: For each vertex~$v_i\in V$, there are two corresponding \myemph{vertex types}.
	For each edge~$e_j$, there is a corresponding \myemph{edge type}.
	Besides this, there are \emph{two} special types; which will be the ones used to incorporate the gadget from \cref{lem:aux}.
	
        \allowdisplaybreaks
	\paragraph{The students.}
	We consider $4n'+3$ students:
	For each vertex~$v_i \in V$, there are two \myemph{vertex students}, called~$v_i,u_i$,
	and two corresponding \myemph{dummy students}, called $x_i,y_i$.
	The types of these students are as follows: %; note that $b_i$ and $d_i$ do not have any type.
	\begin{align*}
	\forall j \in [2n']\colon &
	\typevec_{v_i}[j]\coloneqq 
	\begin{cases}
	1\text{,} & \text{ if } j=2i-1,\\
	0\text{,} & \text{ otherwise,}
	\end{cases}\\
	&        \typevec_{u_i}[j]\coloneqq  
	\begin{cases}
	1\text{,} & \text{ if } j=2i,\\
	0\text{,} & \text{ otherwise,}
	\end{cases}\\
	&   \typevec_{x_i}[j]\coloneqq  
	\typevec_{y_i}[j]\coloneqq  
	\begin{cases}
	1\text{,} & \text{ if } j \in \{2i-1,2i\},\\
	0\text{,} & \text{ otherwise.}
	\end{cases}\\
	\forall j \in [m+2] \colon & 
	\typevec_{v_i}[2n+j]\coloneqq 
	\begin{cases}
	1\text{,} & \text{ if } e_j \in E\text{ and } v_i \in e_j,\\
	0\text{,} & \text{ otherwise.}
	\end{cases}\\
	&   \typevec_{u_i}[2n'+j] \coloneqq \typevec_{x_i}[2n'+j]\coloneqq \typevec_{y_i}[2n'+j] \coloneqq 0.%\\
	% &   \typevec_{v_i}[2n+m+3] \coloneqq 1.\\
	% &   \typevec_{u_i}[2n+m+3] \coloneqq \typevec_{a_i}[2n+m+3]\coloneqq \typevec_{c_i}[2n+m+3] \coloneqq 0.\\
	% &   \typevec_{v_i}[2n+m+4] \coloneqq \typevec_{u_i}[2n+m+4]\coloneqq 0.\\
	% &   \typevec_{a_i}[2n+m+4] \coloneqq \typevec_{c_i}[2n+m+4] \coloneqq 1.
	%                            &\typevec_{b_i}\coloneqq \typevec_{d_i}\coloneqq \{0\}^{2n+m+2}.
	\end{align*}
	For an illustration, assume that $G$ contains four vertices and five edges, with vertex~$v_1$ being incident to edge~$e_2$ and $e_4$.
	Then the types of the four students~$v_1$, $u_1$, $x_1$, and $y_1$ corresponding to vertex~$v_1$ are:
	\begin{align*}
	\typevec_{v_1} \coloneqq & 10\,00\,00\,00\,01010\,00,\\%\,10.\\
	\typevec_{u_1} \coloneqq & 01\,00\,00\,00\,00000\,00,\\%\,00.\\
	\typevec_{x_1} \coloneqq & 11\,00\,00\,00\,00000\,00,\\%\,01.\\
	\typevec_{y_1} \coloneqq & 11\,00\,00\,00\,00000\,00.%\,01.\\
	\end{align*}
	
	There are three \myemph{special students}, \myemph{$r_1,r_2,r_3$}.
	\begin{align*}
	\typevec_{r_1} \coloneqq \{0\}^{2n'+m'}10,~ \typevec_{r_2}\coloneqq \{0\}^{2n'+m'}11,~ \typevec_{r_3}\coloneqq \{0\}^{2n'+m'}01.
	\end{align*}
	
	\paragraph{The colleges.}
	We consider four colleges~$w, p,  a, b$.
	The capacities and the upper quotas of the colleges are as follows while the lower quotas are set to zero.
	\begin{itemize}
		\item $q_{w}\coloneqq n'+k$, $q_p\coloneqq 2n'-k$, $q_a\coloneqq 1$, and $q_b\coloneqq 2$.
		
		\item $\uppervec_w \coloneqq \{1\}^{2n'+m'+2}$, $\uppervec_p\coloneqq \{1\}^{2n'}\{2\}^{m'}00$,
		$\uppervec_a\coloneqq \{0\}^{2n'+m'}11$, and $\uppervec_b\coloneqq \{0\}^{2n'+m'}11$.
	\end{itemize}
	
	\paragraph{The preferences of the students and colleges.}
	The preference lists are defined as follows.
	%In the following, the notation~$[V]$ denotes an arbitrary but fixed linear order of the students from~$V$.
	%  When describing the preference lists, we omit the symbol~$\succ$ to save space.
	% \begin{alignat*}{4}
	% v_i\colon&  w \osucc p, &\quad w\colon & u_1\osucc x_1\osucc v_1\osucc \cdots \osucc u_n\osucc x_n\osucc v_n \osucc r_2,\\
	% u_i\colon&  p \osucc w, &  p\colon & v_1\osucc y_1\osucc u_1\osucc \cdots \osucc v_n\osucc y_n\osucc u_n, \\
	% x_i\colon & w, &  \\
	% %    \forall i \in [n]\colon  b_i\colon & w.&\quad & \\
	% y_i\colon & p,& \\
	% %    \forall i \in [n]\colon  d_i\colon & p.&\quad &\\
	% r_1 \colon& b \osucc a, &   a\colon & r_1 \osucc r_2 \osucc r_3,\\
	% r_2 \colon& b \osucc w \osucc a, &   b\colon& r_3 \osucc r_2 \osucc r_1,\\
	% r_3 \colon& a \osucc b, & 
	% \end{alignat*}
        
        \begin{tabular}{l@{}l|l@{}ll@{\;\;}l@{}}
          \toprule
          S. & Pref. & C. & Pref. \\%& UQ. & C.\\
          \midrule
	\rowcolor{lightgray}$r_1\colon$& $b \osucc a$ &   $a\colon$ & $r_1 \osucc r_2 \osucc r_3$ \\ %& $0^{2n'+m'}11$ & $1$ \\
	\rowcolor{lightgray}$r_2 \colon$& $b \osucc w \osucc a$ &   $b\colon$& $r_3 \osucc r_2 \osucc r_1$ \\ %& $0^{2n'+m'}11$ & $2$\\
	\highlight{$r_3 \colon$}& \highlight{$a \osucc b$}& \\
	$v_i\colon$&  $w \osucc p$, & \highlight{$w\colon$} & \highlight{$u_1\osucc x_1\osucc v_1\osucc \cdots \osucc u_n\osucc x_n\osucc v_n \osucc r_2$}  \\ %& \highlight{$1^{2n'+m'+2}$} & \highlight{$n'+k$}\\
	$u_i\colon$&  $p \osucc w$, & $p\colon$ & $v_1\osucc y_1\osucc u_1\osucc \cdots \osucc v_n\osucc y_n\osucc u_n$ \\ %& $1^{2n'}2^{m'}00$& $2n'-k$ \\
	$x_i\colon$ & $w$, &  \\
	%    \forall i \in [n]\colon  b_i\colon & w.&\quad & \\
	$y_i\colon$ & $p$,& \\\bottomrule
	%    \forall i \in [n]\colon  d_i\colon & p.&\quad &\\
	\end{tabular}

	This completes the construction, which clearly can be carried out in polynomial time.
	It is straightforward to verify that $\lmax=0$, $\umax=2$, and $m=4$.
	
	We claim that graph~$G=(V,E)$ admits an independent set of size at least~$k$ if and only if the constructed \diverse instance admits a feasible and stable matching.
	
	For the ``only if'' part, assume that $(G=(V,E), k)$ is a yes instance and let $V'\subseteq V$ be an independent set of size exactly~$k$.
	We show that the following matching~$M$ with
	\begin{align*}
	M(w)\coloneqq & \{v_i, u_i \mid v_i \in V'\}\cup \{x_i \mid v_i \in V\setminus V'\},\\
	M(p)\coloneqq & \{v_i, u_i \mid v_i \in V\setminus V'\}\cup \{y_i \mid v_i \in V'\},\\
	M(f)\coloneqq & \{y\}, \text{ and } M(g)\coloneqq \{x,z\}
	\end{align*}
	is feasible and stable.
	It is straightforward to check that $M$ is feasible for colleges~$a$ and $b$.
	$M$ is also feasible for college~$p$ since each edge type is ``incident'' to exactly two vertex students from~$V$.
	Similarly, $M$ is also feasible for college~$w$ since $V'$ is an independent set, meaning that for each edge~$e_j$, at most one vertex student from $V'$ has edge type~$2n'+j$.
	It remains to show that $M$ is stable.
	Assume for contradiction, that $M$ is not stable, i.e., there is a blocking pair~$\{\alpha, \beta\}$ for~\(M\).
	Let $S' \subseteq M(\beta)$ witness this.
	First, we observe that regarding the types~$2n'+m'+1$ and $2n'+m'+2$, the students~$\{u_i,x_i,v_i\mid i\in [n']\}\uplus \{r_1,r_2,r_3\}$
	and the colleges~$\{w\}\uplus\{a,b\}$ correspond exactly to the students and colleges discussed in \cref{lem:aux}.
	By \cref{lem:aux}\eqref{aux-nec}, it follows that $\alpha\notin \{r_1,r_2,r_3\}$ and $\beta\notin \{a,b\}$.
	Hence, we infer that $\alpha\in \{v_i,u_i,x_i,y_i\mid i\in [n']\}$ and $\beta\in \{w,p\}$. We distinguish four cases:
	\begin{itemize}
		\item If $\alpha = x_i$ for some~$i \in [n']$, then $\beta=w$.
		By the construction of $M$ it follows that $\{v_i,u_i\}\subseteq M(w)$.
		Since $\typevec_{x_i}[2i-1]=\typevec_{x_i}[2i]=1$ and since besides~$v_i$ and~$u_i$, \emph{no} student assigned to $w$ has type~$2i-1$ or $2i$,
		it follows that $\{v_i,u_i\}\subseteq S'$.
		However, $w$ prefers $u_i$ to $x_i$, a contradiction to $\{x_i,w\}$ being a blocking pair.
		\item Similarly, we obtain a contradiction for the case when $\alpha=y_i$ with~$i \in [n']$.
		\item If $\alpha=v_i$ for some $i\in [n']$, then $\beta = w$.
		By the construction of $M$, it follows that $x_i \in M(w)$ and $u_i\notin M(w)$.
		Since $\typevec_{v_i}[2i-1]=1$ and since besides~$x_i$, no student assigned to~$w$ has type~$2i-1$,
		it follows that $x_i \in S'$.
		However, $w$ prefers $x_i$ to $v_i$, a contradiction to $\{v_i,w\}$ being a blocking pair.
		\item Similarly, we obtain a contradiction for the case when $\alpha=u_i$ with~$i\in [n']$.
	\end{itemize}
	As every case leads to a contradiction, there is no blocking pair \(\{\alpha, \beta\}\).
	Thus, $M$ is indeed stable.
	
	Now, we turn the ``if'' part and assume that the constructed \diverse instance admits a feasible and stable matching~$M$.
	Let $V'\coloneqq \{v_i \mid i \in [n'] \wedge v_i \in M(w)\}$, $U'\coloneqq \{u_i \mid i \in [n']\wedge u_i \in M(w)\}$, $X'\coloneqq \{x_i \mid i\in [n'] \wedge x_i \in M(w)\}$.
	We aim to show that $V'$ is an independent set of size at least~$k$.
	Clearly, $V'$ is an independent set because of $w$'s upper quotas for the edge types.
	It remains to show that $V'$ has size at least~$k$.  
	Before we show this, we first observe the following for the sizes of $V'$, $U'$, and $X'$.
	\begin{claim}\label{claim:IS-M}
		\begin{enumerate}[(1)]
			\item\label{size-bound} It holds that $|V'|+|U'|+|X'| = n+k$.
			\item\label{U'V'} It holds that $|V'| \ge k$.
		\end{enumerate}
	\end{claim}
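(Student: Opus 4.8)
The plan is to derive both statements almost directly from the gadget of \cref{lem:aux} together with the upper-quota structure of $w$, so that essentially no new stability analysis is needed beyond what the gadget already provides.

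For part~(1), I would first invoke \cref{lem:aux} through the correspondence already set up in the construction: with respect to the two special types $2n'+m'+1$ and $2n'+m'+2$, the students $\{v_i,u_i,x_i\mid i\in[n']\}\uplus\{r_1,r_2,r_3\}$ and the colleges $\{w\}\uplus\{a,b\}$ play exactly the roles of $U\uplus\{r_1,r_2,r_3\}$ and $W\uplus\{a,b\}$ in \cref{lem:aux}. Indeed, the students $v_i,u_i,x_i$ are zero on both special types, $w$'s preference list ends in $r_2$, and $w$'s upper quota on the special types is $11$. Since $r_1,r_2,r_3$ find only $a,b,w$ acceptable and $w$'s acceptable set is exactly $\{v_i,u_i,x_i\mid i\in[n']\}\cup\{r_2\}$, the extra college $p$ and the extra students $y_i$ cannot interfere with the gadget's reasoning about $r_2$, $a$, and $b$. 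Hence, if $M$ is stable, \cref{lem:aux}\eqref{aux-card} rules out $|M(w)\cap\{v_i,u_i,x_i\}|<q_w$, so $|M(w)\cap\{v_i,u_i,x_i\}|\ge q_w=n'+k$; feasibility forces $|M(w)|\le q_w$; and since $w$ accepts no students other than the $v_i,u_i,x_i$ and $r_2$, this yields $M(w)=V'\cup U'\cup X'$ (in particular $r_2\notin M(w)$). As $V',U',X'$ are pairwise disjoint, I conclude $|V'|+|U'|+|X'|=n'+k$.

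For part~(2), the plan is to bound $|U'|+|X'|$ using feasibility alone and then combine with part~(1). For each $i\in[n']$, both $u_i$ and $x_i$ carry type $2i$, and the $2i$-th entry of $\uppervec_w=\{1\}^{2n'+m'+2}$ equals $1$; the only other student with type $2i$ is $y_i$, who is unacceptable to $w$. Thus feasibility of $M$ for $w$ in coordinate $2i$ forces $|M(w)\cap\{u_i,x_i\}|\le 1$, i.e.\ at most one of ``$u_i\in U'$'' and ``$x_i\in X'$'' can hold. Summing over $i$ gives $|U'|+|X'|\le n'$, and substituting into part~(1) yields $|V'|=(n'+k)-(|U'|+|X'|)\ge(n'+k)-n'=k$.

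The only step requiring genuine care is justifying the application of \cref{lem:aux}\eqref{aux-card} in the presence of the additional college $p$ and the dummy students $y_i$; this is precisely where I would explicitly verify that $r_1,r_2,r_3$ remain confined to $\{a,b,w\}$ and that $w$'s acceptable set is exactly $\{v_i,u_i,x_i\mid i\in[n']\}\cup\{r_2\}$, so that the gadget's argument transfers verbatim. Everything else is elementary counting, and no separate examination of the edge types is needed for this claim.
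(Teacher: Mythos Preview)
Your proposal is correct and follows essentially the same approach as the paper: part~(1) is obtained by invoking \cref{lem:aux}\eqref{aux-card} on $\{v_i,u_i,x_i\}\uplus\{r_1,r_2,r_3\}$ and $\{w\}\uplus\{a,b\}$, and part~(2) comes from the type-$2i$ upper quota forcing at most one of $u_i,x_i$ into $M(w)$, hence $|U'|+|X'|\le n'$. Your write-up is actually slightly cleaner than the paper's in two respects: you explicitly justify why the extra college $p$ and students $y_i$ do not interfere with the gadget argument, and in part~(2) you state the needed ``at most one of $u_i,x_i$'' directly, whereas the paper phrases it as an ``if and only if'' that is stronger than what the upper quota alone yields (though only the one direction is required for the bound).
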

	\begin{proof}
		\renewcommand{\qedsymbol}{(of
			\cref{claim:IS-M})~$\diamond$}
		Statement~\eqref{size-bound} follows directly from \cref{lem:aux}\eqref{aux-card} applied to
		\(\{u_i,x_i,v_i\mid i\in [n']\}\uplus \{r_1,r_2,r_3\}\)
		and $\{w\}\cup\{a,b\}$.
		
		Statement~\eqref{U'V'}: First, observe that for each $i\in [n']$ it must hold that $x_i \in X'$ if and only if $u_i \notin U'$ because of the upper quota of $w$ regarding type~$2i$.
		This implies that $|X'|+|U'|\le n'$. By Statement~\eqref{claim:IS-M} it follows that $|V'|\ge k$.
	\end{proof}
	Now, it is straightforward to see that $|V'|=k$ follows from \cref{claim:IS-M}\eqref{U'V'}.
\end{proof}
\fi

\section{Algorithmic Results}\label{sec:algos}
This section provides the algorithmic results that together allow us to complete Table~\ref{tab:overview}. \ifshort The first result deals with the case where the number of students is bounded by a constant. \fi

\iflong
\subsection{\diverseties\ with few students}
\fi
%\ifshort
%\paragraph{\diverseties\ with few students}
%\fi
%%
%\newpage
%~
%\newpage
%
%\section{Parameterized by the number~$n$ of students}
%\label{sec:fptn}
\iflong
In this section, we deal with the case where there are few number of students.
%We observe that in this case, not only the number of types but also the number of relevant colleges are small.
For simplicity of notation, given a type~$z\in T$, we let \myemph{$U_z^{\typevec}\coloneqq \{u_i \in U \mid \typevec_{u_i}[z]=1\}$} be the set of all students who have type~$z$.
We omit the superscript~$\typevec$ if the type vectors~$\typevec$ are clear from the context.
\fi

\ifshort
\begin{theorem} \label{thm:fptn}
	\diverseties{} is solvable in \(\mathcal{O}(n^{2n + 5} \cdot 2^{2n})\).
\end{theorem}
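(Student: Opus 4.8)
The plan is to exploit that with only $n$ students every matching leaves at most $n$ colleges non-empty, so the combinatorially essential part of a solution is \emph{how the students are grouped together}, not the identities of the colleges hosting them. First I would enumerate all \emph{group patterns}: an assignment $\sigma\colon U \to \{1,\dots,n\}\cup\{\bot\}$ declaring, for each student, the index of the block it belongs to (students sharing a block are meant to be matched to a common college) or that it is unmatched. There are at most $(n+1)^n$ such patterns. The non-empty blocks $G_1,\dots,G_k$ (with $k\le n$) then have to be placed on pairwise-distinct actual colleges, and the whole question becomes whether some pattern can be realised feasibly and stably.

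For a fixed pattern, \emph{feasibility} is a purely local test: a block $G_i$ can be hosted by a college $w$ precisely if $w$ finds every student of $G_i$ acceptable, $|G_i|\le q_w$, and $\lowervec_w \le \sum_{u\in G_i}\typevec_u \le \uppervec_w$. Thus for each block one computes in polynomial time its set of admissible hosts, and checking whether the $k$ blocks admit a simultaneous placement onto distinct colleges is a bipartite-matching (assignment) problem. The real obstruction is \emph{stability}, which is global and, worse, depends on the concrete college a student is placed on: the rank of that college in the student's own list decides which colleges it would rather deviate to, and a deviation target may also be an \emph{unused} college. To reduce stability to guessed finite data I would enrich each pattern by recording, for every student, the ``level'' of its host (its position relative to the $\mathcal{O}(n)$ cut-points induced by the hosts of the other blocks) and, for each block, a guessed witness set that certifies which potential blocking pairs are neutralised; together with $\sigma$ these enrichments are what contribute the $n^{2n}$ and $2^{2n}$ factors, while the subsequent assignment computation and the final verification account for the polynomial $n^5$ term. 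Verification is cheap because, as in \cref{obs:t-or-qmax-NP}, every witness set for a blocking pair has size at most $\min\{t,\qmax\}\le n$, so stability of a completely specified matching is polynomial-time checkable.

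I would then answer \textbf{yes} iff some enriched guess yields a placement that is feasible and has no blocking pair. The main obstacle, and the step demanding the most care, is proving that this bounded amount of guessed information is \emph{both complete and sound}: completeness, that every genuinely feasible and stable matching is captured by at least one guess (so that the levels and witness sets it induces are consistent with an actual placement); and soundness, that no guess can certify stability while some \emph{unused} college secretly admits a blocking deviation. Handling the unused colleges correctly---showing that it suffices to test, for each student, only its most preferred still-available college against the guessed levels---is where I expect the argument to be most delicate.
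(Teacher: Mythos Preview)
Your proposal identifies the right difficulty but does not actually resolve it. The entire weight of the argument rests on the ``enrichment'' step---guessing, for every student, a \emph{level} of its host college and, for every block, some \emph{witness sets}---yet you never define what these objects are, why there are only $n^{2n}\cdot 2^{2n}$ of them, or why they suffice to decide stability. Concretely: whether a matched student $u$ forms a blocking pair with a college $w'$ depends on where $w'$ sits in $u$'s preference list relative to $u$'s \emph{actual} host, and on how $w'$ ranks $u$ relative to the specific students currently placed at $w'$. Knowing only the group pattern and a coarse ``level'' does not pin this down, because different realisations of the same pattern (different concrete colleges hosting the same groups) can flip these comparisons. Your own closing paragraph concedes that handling the unused colleges is ``most delicate'' and leaves it open; as written, the proposal is a plan for a proof rather than a proof.

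The paper takes a much more direct route that sidesteps all of this. Instead of abstracting away the colleges, it \emph{kernelises} their number: any college with a non-zero lower quota must receive a student, so there are at most $n$ such colleges in a yes-instance; and among the zero-lower-quota colleges, for each student it suffices to retain only her $n$ most preferred ones whose upper quotas can accommodate her alone (any stable matching that uses a zero-lower-quota college outside this set can be modified to use one inside it, since otherwise an empty more-preferred college would block). This leaves at most $n^2+n$ colleges. Types are similarly merged down to at most $2^n$ (two types inducing the same student subset are redundant). After this reduction one simply enumerates all at most $(2n+1)^n$ matchings and checks each for feasibility and stability by brute force over the $2^n$ possible witness sets, which gives the stated bound. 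The key idea you are missing is precisely this marking argument that bounds the number of \emph{relevant} colleges by a function of $n$; once you have it, no clever encoding of stability is needed at all.
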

\begin{proof}[Proof Sketch]
  We show how to preprocess an \diverseties instance~\(I=(U, W, T, (\succeq_u, \typevec_u)_{u\in U}, (q_w, \lowervec_w,\uppervec_w)_{w\in W}\) to obtain an instance \((U, W', T', (\succeq'_u\), \(\typevec'_u)_{u\in U}, (q'_w,\lowervec'_w,\uppervec'_w)_{w\in W})\) with \(n\) students, \(n^2 + n\) colleges and \(2^n\) types which is equivalent in terms of the existence of a feasible and stable matching. This then suffices to solve the instance in the claimed running time via an exhaustive brute-force procedure.
	
	\noindent \textbf{\(2^n\) types.}
	Observe that types \(z,z' \in T\) which describe the same subset of students, i.e., \(\{u\)\(\in\)\(U\)\(\mid \typevec_u[z]\)\(=\)\(1\} = \{u \in U \mid \typevec_u[z']\)\(=\)\(1\}\), can be merged into a single type \(\zeta\). %, which exactly this subset of students has.
	For each feasible matching of \(I\), the students assigned to a college \(w\)\(\in\)\(W\) of two types \(z, z' \in T\) merged in this way always adhere to the stricter of the upper and lower quotas of the merged types, i.e., \(\max\{\lowervec_w[z], \lowervec_w[z']\}\) and \(\min\{\uppervec_w[z], \uppervec_w[z']\}\).
\iflong	Thus setting \(\lowervec_w(\zeta) \coloneqq \max\{\lowervec_w[z], \lowervec_w[z']\}\) and \(\uppervec_w(\zeta) \coloneqq \min\{\uppervec_w[z], \uppervec_w[z'])\}\) maintains equivalence of the original types and their quotas, and the merged types and their quotas in terms of feasible and stable matchings.
\fi
	After exhaustive merging we are left with modified types \(T'\) with~\(|T'|\)\(\leq\)\(2^n\).
	
	\noindent \textbf{\(n^2 + n\) colleges.}
First, note that an instance with more than \(n\) colleges with non-zero lower quotas can be immediately rejected.
%	First note that an instance with more than \(n\) colleges with non-zero lower quotas does trivially not allow even a feasible, let alone a stable and feasible one, and can be reduced to a small trivially infeasible instance.
	To upper-bound the number of colleges with zero lower-quotas (hereinafter denoted $W_0$), note that in a stable and feasible matching $M$ every student (say, $u$) matched to a college from $W_0$ may only be matched to one of her\todoH{Student == she, college == it}\ $n$ most preferred colleges in $W_0$ which has enough upper-quotas to accommodate her. Otherwise there would exist an empty zero lower-quota college in $W_0$ which $u$ prefers to $M(u)$, forming a blocking pair.
%	To bound the number of colleges with zero-lower quotas, note that every student is able to attend a college with zero lower quotas which she prefers at least \(n\)-th most among those for which she does not exceed the upper quotas or capacity on her own.
%	This is due to the fact that only \(n-1\) colleges can be attended by other students, so such a college or a strictly more preferred one is empty in any matching.
%	In particular any matching where a student is assigned to a strictly less preferred college, there would be a blocking pair.

We employ this observation by defining the following marking procedure. Let us begin by setting $W'=\emptyset$. Now, for each student $u\in U$, we mark the $n$ most preferred colleges in $W_0\setminus W'$, resolving ties arbitrarily. Clearly, at the end we obtain a set $W'$ of size at most $n^2$, and it is easy to show via a replacement argument and the above observation that the colleges in $W_0\setminus W'$ may be deleted without changing the existence of a stable and feasible matching.
\end{proof}
\fi

\iflong
\begin{theorem} \label{thm:fptn}
  \diverseties{} admits a problem kernel with $n$~students,~$n^2+n$~colleges, and~$2^n$~types,
  % and can be solved in $O(\mathcal{O}(n^{2n+5}\cdot 2^{2n}))$~time.
  \todoH{ I think the running time should be the following.}
  and can be solved in  $\mathcal{O}(n\cdot m \cdot t + 2^n\cdot (2n+1)^n\cdot n^2\cdot t)$~time.
      %       Specifically, in this setting we give a kernel of size \(\mathcal{O}(2^{n^2} n^{2^n})\), which leads to an \(\mathcal{O}(2^{n^3} n^{2^n n})\)-time algorithm.
\end{theorem}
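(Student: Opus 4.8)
The plan is to prove the statement in two phases: first I would kernelize the instance down to at most $n$ students, $n^2+n$ colleges and $2^n$ types in $\mathcal{O}(n\cdot m\cdot t)$ time, and then solve the resulting kernel by exhaustive search, which accounts for the second summand of the claimed running time. Throughout I would rely on the fact that the blocking condition only references feasibility of $w$ (see the definition of blocking pairs), so that any reduction preserving the set of feasible matchings automatically preserves the set of feasible \emph{and} stable matchings.

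For the type bound I would repeatedly merge any two types $z,z'$ that induce the same student set, i.e.\ with $\{u\mid\typevec_u[z]=1\}=\{u\mid\typevec_u[z']=1\}$, into a single type $\zeta$ carrying the tightest constraints, setting $\lowervec_w[\zeta]\coloneqq\max\{\lowervec_w[z],\lowervec_w[z']\}$ and $\uppervec_w[\zeta]\coloneqq\min\{\uppervec_w[z],\uppervec_w[z']\}$ for every college $w$. Since the two type-constraints of any college concern exactly the same students, a matching satisfies both iff it satisfies $\zeta$, so the set of feasible matchings is unchanged. Once no merge is possible, each surviving type corresponds to a distinct subset of the $n$ students, leaving at most $2^n$ types; computing each type's student set and bucketing fits within $\mathcal{O}(n\cdot m\cdot t)$.

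For the college bound I would first discard instances with more than $n$ colleges of nonzero lower quota, as each such college must receive at least one of the only $n$ students in any feasible matching. For the colleges $W_0$ with all-zero lower quotas I would use the following observation: in a feasible and stable matching, a student $u$ matched to a member of $W_0$ is matched to one of her $n$ most preferred \emph{accommodating} colleges of $W_0$ (those $w\in W_0$ with $\typevec_u\le\uppervec_w$). Indeed, if $u$ preferred at least $n$ accommodating $W_0$-colleges to $M(u)$, then by stability each of them would have to be nonempty to avoid forming a blocking pair with empty witness, so together with $M(u)$ we would obtain $n+1$ nonempty colleges among only $n$ students---a contradiction. I would therefore mark, for every student, her $n$ most preferred accommodating colleges of $W_0$, obtaining a marked set $W'$ with $|W'|\le n^2$, and then delete every unmarked college of $W_0$.

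Finally, on the kernel I would enumerate all candidate matchings: each student is either unmatched or matched to one of her at most $2n$ relevant colleges (her $\le n$ marked colleges in $W_0$ together with the $\le n$ colleges of nonzero lower quota), giving at most $(2n+1)^n$ candidates, each tested for feasibility and for stability by searching over the at most $2^n$ witness subsets and checking the quota conditions type by type across the $\le n^2+n$ colleges, which yields the $\mathcal{O}\bigl(2^n\cdot(2n+1)^n\cdot n^2\cdot t\bigr)$ term. I expect the main obstacle to be the correctness of the college-deletion step. The direction from a feasible and stable matching of the original instance to one of the kernel is comparatively mild, since the above observation shows no student is matched to a deleted college and removing colleges only removes potential blockers; the delicate direction is the converse, where I must rule out that a \emph{reintroduced} empty deleted college forms a new blocking pair with some student---this is exactly where the ``$n$ most preferred accommodating'' marking threshold, combined with the same pigeonhole count on the number of nonempty colleges, has to be argued carefully.
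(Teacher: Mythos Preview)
Your outline matches the paper's proof closely: merge types inducing the same student set, bound the nonzero--lower-quota colleges by~$n$, and for zero--lower-quota colleges keep each student's top-$n$ accommodating choices, then brute-force. The one genuine gap is in the direction you call ``comparatively mild''. Your pigeonhole argument only shows that $u$ \emph{strictly} prefers fewer than $n$ accommodating $W_0$-colleges to $M(u)$; with ties this does \emph{not} imply that $M(u)$ is among the $n$ colleges you marked for $u$---$M(u)$ may sit in a tie-class with a marked college while itself being unmarked (e.g.\ $u$'s list is $w_1\sim\cdots\sim w_{n+1}$ and you marked $w_1,\dots,w_n$, but $M(u)=w_{n+1}$). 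The paper closes this with an explicit replacement step (its Claim labelled~$W'$): since $|W'_u|=n$ and $M(u)\notin W'_u$, some $w'\in W'_u$ is empty, stability forces $w'\sim_u M(u)$, and one swaps $u$ to $w'$.

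Conversely, the direction you flag as ``delicate'' (ruling out a deleted empty college blocking in the original) is handled precisely by the pigeonhole-on-nonempty-colleges count you sketch, and the paper does not single it out as the hard step. So your identification of where the difficulty lies is essentially reversed relative to the paper, though both directions ultimately rest on the same counting idea.
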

\begin{proof}
  We first show how to preprocess an \diverseties instance~\(I=(U, W, T, (\succeq_u, \typevec_u)_{u\in U}, (q_w, \lowervec_w,\uppervec_w)_{w\in W}\) to obtain an instance \((U, W', T', (\succeq'_u\), \(\typevec'_u)_{u\in U}, (q'_w,\lowervec'_w,\uppervec'_w)_{w\in W})\) with \(n\) students, \(n^2 + n\) colleges and \(2^n\) types which is equivalent in terms of the existence of a feasible and stable matching. This then suffices to solve the instance in the claimed running time via an exhaustive brute-force procedure.
	
 % To obtain a problem kernel for the parameter ``number~$n$ of students'', we upper-bound the number of types and the length of the preference list of each student by functions of~$n$; note that the number of colleges will be indirectly upper-bounded since clearly we can delete any college which does not appear in any student's preference list.

%  Let $I\coloneqq (U, W, T, (\typevec_{u}, \succeq_{u})_{u\in U}, (\succeq_w,q_{w}, \lowervec_{w}, \uppervec_{w})_{w\in W})$ be an input instance of \diverseties.

  \paragraph{Upper-bounding the number of types.}
  We first show that the number of types can be reduced to at most~$2^n$.
  To achieve this, we observe that if there are two types~$z, z'\in T$ which are possessed by exactly the same subset of students, i.e., $U^{\typevec}_z = U^{\typevec}_{z'}$,
  then each student~$u$ has type~$z$ if and only if she has type~$z'$.
  This implies that the lower-quota (resp.\ the upper-quota) on~$z$ must also be a lower-quota (resp.\ an upper-quota) on~$z'$ and vice versa.

  By the above observation, we can group types~$z$ together which belong to exactly the same subset~$U_z$ and update the lower-quotas and upper-quotas for each college accordingly.
  Formally, let $\mathcal{F}\coloneqq \{U^{\typevec}_z \mid z \in T\}$ denote the family of subsets of students defined according to the original types and let $f\colon |\mathcal{F}| \to \mathcal{F}$ be an arbitrary but fixed bijection.
  The ``reduced'' set of types is defined as~$T^*\coloneqq |\mathcal{F}|$.
  The new type vector~$\typevec^*_{u}$ of each student~$u\in U$ is defined as follows:
  \begin{align*}
    \forall z\in T^* \colon   \typevec^*_{u}[z] \coloneqq 1 \text{ if and only if } u \in f(z).
  \end{align*}
  Clearly, for each $z\in T^*$ it holds that $U^{\typevec}_{z}=U^{\typevec^*}_z$.
  
  The new lower-quotas and upper-quotas of each college~$w \in W$ are defined as follows:
  \begin{align*}
    \forall z\in T^*\colon & \lowervec^*_{w}[z] \coloneqq \max_{z'\in T\text{ with } U^{\typevec}_{z'}=f(z)}\lowervec_{w}[z'] \text{~ and ~}\\
    \uppervec^*_{w}[z] \coloneqq & \min_{z'\in T\text{ with } U^{\typevec}_{z'}=f(z)}\lowervec_{w}[z']\text{.}
  \end{align*}
  Now, we show that $I$ and the reduced instance~$I^*=(U, W, (\succeq_{x})_{x\in U\cup W}, (\typevec^*_{u_i})_{u_i\in U}, (q_{w_j}, \lowervec^*_{w_j}, \uppervec^*_{w_j})_{w_j\in W})$ are equivalent by showing that
  every matching is feasible and stable for $I$ if and only if it is also feasible and stable for $I^*$.

  For the ``only if'' part, let $M$ be a feasible and stable matching of $I$.
  Suppose, for the sake of contradiction, that $M$ is not feasible or not stable for~$I^*$.
  If $M$ is not feasible for $I^*$, then since the capacity bounds are not changed, there must be a college~$w \in W$ and a type~$z\in T^*$ such that
  either $|M(w)\cap U^{\typevec^*}_z| < \lowervec^*_{w}[z]$ or $|M(w)\cap U^{\typevec^*}_z| > \uppervec^*_{w}[z]$.
  We only consider the case when $|M(w)\cap U^{\typevec^*}_z|< \lowervec^*_{w}[z]$; the other case where the upper-quota is violated is analogous.
  Now, if $|M(w)\cap U^{\typevec^*}_z|< \lowervec^*_{w}[z]$,
  then since $U^{\typevec}_z=U^{\typevec^*}_z$, by the definition of $\lowervec^*_{w}[z]$, 
  there is an original type~$z'\in T$ with $U^{\typevec}_{z'} = U^{\typevec^*}_z$
  such that $\lowervec^*_{w}[z] = \lowervec_{w}[z]$.
  Together, we derive that  $|M(w)\cap U^{\typevec}_{z'}| = |M(w)\cap U^{\typevec^*}_{z}| < \lowervec^*_{w}[z]  = \lowervec_{w}[z]$, a contradiction to $M$ being feasible for $I$.

  Now, suppose, for the sake of contradiction, that $M$ is not stable for~$I^*$.
  This means that there exists an unmatched pair~$\{u, w\} \notin M$ with $u \in U$ and $w\in W$ such that
  \begin{enumerate}[(1)]
    \item $u$ prefers~$w$ to~$M(u)$ and $w$ prefers~$u$ to~$M(w)$, and
    \item there exists a subset~$U'\subseteq M(w)$ of students assigned to~$w$, where~$w$ prefers~$u$ to~$U'$, and $M\cup \{\{u, w\}\}\setminus (\{\{u, M(u)\} \cup \{u', w\}_{u'\in U'}\})$ remains feasible for~$w$, regarding~$\typevec^*$,~$\lowervec^*_{w}$, and~$\uppervec^*_{w}$.
  \end{enumerate}
  Since $M$ is stable for~$I$, it must hold that $M\cup \{\{u, w\}\}\setminus (\{\{u, M(u)\} \cup \{u', w\}_{u'\in U'}\})$ is not feasible for $I$ as otherwise~$\{u, w\}$ is also blocking~$M$ in $I$.
  By a similar reasoning as above, we can show that $M\cup \{\{u, w\}\}\setminus (\{\{u, M(u)\}\} \cup \{u', w\}_{u'\in U'}\})$ is also \emph{not} feasible for $I'$, a contradiction.

  \paragraph{Upper-bounding the length of the preference list of each student.}
  We have just shown that $I^*$ is equivalent to $I$ and has at most~$2^n$~types.
  Next, we show that the number of colleges in each student's preference list is upper-bounded by a function in~$(n,t)$. % the number~$t$ of types
  We consider colleges with at least one non-zero lower-quota and with all-zero lower-quotas separately.
  Let $W_{>0}$ be the set of all colleges with at least one non-zero lower-quota for some type.
  Observe that each college in $W_{>0}$ must be assigned at least one student.
  Then, for any `yes'-instance it must hold that $|W_{>0}|\le n$ as otherwise no feasible matching exists because at least one college from $W_{>0}$ is not assigned any student.
  Thus, we obtain the following.
  \begin{claim}\label{claim:non-zero-lower-quotas-colleges}
    If $I$ is a yes instance, then $|W_{>0}|\le n$.
  \end{claim}
  We also observe that no feasible matching will assign a student~$u$ to a college~$w$ which cannot accommodate~$u$:
  \begin{claim}\label{claim:zero-lower-quotas-colleges}
    For each student~$u$ and each acceptable college~$w\in \acset(u)$ if $\uppervec_w \not\ge \typevec_u$, then no feasible matching can assign~$w$ to~$u$.
  \end{claim}
  
  Next, we turn to the set of colleges with all-zero lower-quotas, denoted as~$W_{=0}$.
  Consider an arbitrary student~$u\in U$.
  Although the length of~$u$'s preference list~$\succeq_{u}$ can be unbounded, we show in the following that we only need to consider at most $n$~colleges from $W_{=0}$.
  We introduce one more notation.
  Let $W_{u} \coloneqq \{w \in W_{=0} \cap \acset(u) \mid \uppervec_{w}\ge \typevec_{u}\}$ denote a set consisting of all acceptable colleges from $W_{=0}$ whose upper-quotas are large enough to accommodate student~$u$ alone.
  Now, observe that if there exists a stable matching~$M$ where some college~$w \in W_{u}$ has $M(w)=\emptyset$, then~$M$ must assign to~$u$ some college~$w'$ which is either preferred to or tied with~$w$ by student~$u$ as otherwise~$u$ and~$w$ will form a blocking pair.
  In other words, we only need to go through the colleges from $W_{u}$ in order~$\succ^*_u$ of the preference list~$\succeq_u$ of $u$ (we break ties arbitrarily) and select the first $n$~ones.
  Formally, let
  \begin{align*}
    W'_{u}\coloneqq \{w \in W_{u} \mid |\{w' \in W_{u} \mid w' \succ^*_{u} w\}| < n\}.
  \end{align*}
  We claim that it suffices to consider the colleges from $W'_{{u}}$ by showing the following.% no stable matching assign some student to a college~$W_{\succeq, \typevec_{u}}\setminus W'_{\succeq, \typevec_{u}}$.
  \begin{claim}\label{claim:W'}
    If there exists a feasible and stable matching which assigns to student~$u$  a college from~$W_{u}\setminus W'_{u}$,
    then there exists a feasible and stable matching which assigns to student~$u$ a college from~$W'_{u}$.
  \end{claim}
  \begin{proof}[Proof of \cref{claim:W'}] \renewcommand{\qedsymbol}{$\diamond$} %{(of \cref{claim:W'})~$\diamond$}
    Let $M$ be a feasible and stable matching.
    Assume that there exists a college~$w \in W_{u}\setminus W'_{u}$ with $M(u)=w$.
    This implies that $|W'_u| = n$ and there exists a college~$w'\in W'_{u}$ which does not receive any student.
    Since $M$ is stable, it must hold that $w$ and $w'$ are tied by~$u$ as otherwise $u$ will form a blocking pair with $w'$, because $M(w')=\emptyset$ and $w'$ has enough upper-quotas to accommodate student~$u$ alone, a contradiction to $M$ being stable.
    Then, it is straightforward to check that the modified matching~$M'$
    with $M'\coloneqq M\setminus \{\{u, w\}\} \cup \{\{u, w'\}\}$ is a feasible and stable matching; note that $M'$ is also feasible for $w$ as $w$ has all-zero lower-quotas.
  \end{proof}
  By the above observations, we can restrict the preference list of each student~$u$ to only the colleges from $\big(\acset(u_i)\cap W_{=0}\big)\cup W'_{u}$; note that we only ``ignore'' colleges with all-zero lower-quotas and colleges which cannot ``accommodate'' student~$u$.
  We update the preference lists of the colleges accordingly.
  If some college obtains empty preference list after the update, then we can simply delete it as it will never be assigned a student by any feasible and stable matching.
  Summarizing, we obtain a new equivalent instance with $n$~students, at most~$2^n$ types, and at most $n^2+n$~colleges.

  \paragraph{The running time for the preprocessing.}
  It remains to analyze the running time.
  The modification of the types and the type vectors of all students can be done in~$\mathcal{O}(n\cdot t)$~time.
  The modification of the lower-quotas and upper-quotas of all colleges can be done in $\mathcal{O}(m\cdot t)$~time.
  The modification of the preference list of each student can be done in linear time by using an integer counter.
  For each student~$u$, we go through the colleges from~$\acset(u)$ in order of the preferences of $u$ (we break ties arbitrarily).
  Let $w$ be the college currently considered.
  If $\uppervec_w \not\ge \typevec_{u} $, then we delete the acceptable pair~$\{u,w\}$.
  Here, deleting a pair~$\{x,y\}$ means deleting $x$ from the preference list of $y$ and~$y$ from the preference list of $x$.
  Otherwise, if there is no integer counter~$c_{u,\succeq_w}$ for $\succeq_{w}$, then keep college~$w$ in the list and create a counter~$c_{u}$ and set it to one.
  Otherwise, if $c_{u}<n$, then also keep college~$w$ and increment counter~$c_{u,\succeq_w}$ by one; otherwise, delete the pair~$\{u,w\}$.
  In this way, we can modify the preference lists of the students in~$\mathcal{O}(n\cdot m\cdot t)$~time.%

  \paragraph{Solving the original instance.}
  First, We perform the described preprocessing in $\mathcal{O}(n\cdot m\cdot t)$~time.
  Then, we branch on the at most $(2n+1)^n$ possible ways to assign colleges to the students for the instance obtained from preprocessing; note that after preprocessing, each student~$u$ has at most $2n$ colleges ($n$ from $W_{>0}$ and $n$ from $W'_u$) in her preference list.
  For each of the $O((2n+1)^n)$ possible matchings, we check feasibility and stability.
  Since each matching contains at most $n$~pairs, feasibility can be checked in $O(n\cdot t)$~time.
%  can be checked in \(\mathcal{O}(n \cdot t)\)~time; note that the matching can contain at most $n$~pairs since there are $n$~students.
  Recall that each student has at most $2n$ relevant colleges in her preference lists.
  Thus, to check stability, we go through all~$O(2n^2)$ unmatched pairs, say~$\{u,w\}$, and all possible subsets of students~$2^n$ assigned to~$w$ and all types~$z\in [t]$. % I used t here to avoid more exponential time.
  This can be done in $\mathcal{O}(2n^2\cdot 2^n\cdot t)$~time.
  The overall running time is $\mathcal{O}(n\cdot m \cdot t + 2^n\cdot (2n+1)^n\cdot n^2\cdot t)$.
  % possibilities for a blocking pair and a witness in \(\mathcal{O}(n \cdot (n^2 + n) \cdot t)\)~time and checking the conditions of describing a blocking pair in \(\mathcal{O}(t)\)~time.
  %       If a feasible and stable matching is found in this way, we return it as feasible and stable in the original instance.
  %       Conversely if none is found, we return that none exists in the original instance.
  %       The total running time lies in \(\mathcal{O}(n^{2n + 5} \cdot 2^{2n})\).
\end{proof}

\fi

\iflong
\subsection{\diverseties with few colleges and small maximum capacity}
\fi
%\ifshort
%\paragraph{\diverseties with few bounded-capacity colleges}
%\fi
%\section{Algorithms for $m+\qmax$}\todoH{This section is new and needs to be checked}
Next, we show that \diverseties{} can be solved in polynomial time if the number~$m$ of colleges and the maximum capacity~$\qmax$ of all colleges are constants, using a simple brute-forcing algorithm based on the following observation.
%When ties are not present and $\lmax=0$, \diverse{} can even be solved in FPT time with respect to the combined parameter~$m+\qmax$.

% \diverseties{} is fixed-parameter tractable with respect to the combined parameter~$m+t+\umax$.
% More specifically, if there are no ties, \diverse{} can be solved in $\mathcal{O}(\umax^{m\cdot m\!\cdot 2^t})$~time by a simple brute-forcing algorithm.
% If ties are present, \diverseties{} admit a problem kernel with at most $\umax\cdot m\cdot \cdot 2^{2^m}2^t$~students in each college's preference list.

%The brute-forcing algorithm for \diverseties{} is based on the following observation.
\begin{observation}\label{obs:m+qmax}
 Every feasible matching can assign colleges to at most $m\cdot \qmax$ students.
\end{observation}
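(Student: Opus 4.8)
The plan is to prove this by a direct counting argument on the sizes of the colleges' assignment sets. First I would recall that, in any matching $M$, each student is involved in at most one pair; consequently the sets $M(w)$ for $w\in W$ are pairwise disjoint, and the total number of students that $M$ assigns to some college is exactly $\sum_{w\in W} |M(w)|$.

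Next I would invoke the capacity part of the feasibility definition. By definition, a feasible matching satisfies $|M(w)| \le q_w$ for every college $w\in W$. Since $\qmax$ denotes the maximum capacity taken over all colleges, we have $q_w \le \qmax$ for each $w$, and therefore $|M(w)| \le \qmax$.

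Combining these two observations and summing over the $m = |W|$ colleges yields
\[
  \sum_{w\in W} |M(w)| \le \sum_{w\in W} \qmax = m \cdot \qmax,
\]
which is precisely the claimed upper bound on the number of students that a feasible matching can assign.

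There is no genuine obstacle here: the statement is an immediate consequence of the capacity constraint together with the fact that each student is matched at most once, so the only point that needs care is noting that the sets $M(w)$ are disjoint (guaranteed since no student may be assigned to two colleges simultaneously). This bound is exactly what subsequently enables a brute-force enumeration over all feasible matchings to run in polynomial time whenever $m$ and $\qmax$ are both constant.
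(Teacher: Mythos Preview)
Your proof is correct and is precisely the argument the paper has in mind; the paper treats this as self-evident and states it as an observation without proof, relying on exactly the capacity bound $|M(w)|\le q_w\le\qmax$ summed over the $m$ colleges.
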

%\begin{proof}
%  This is straightforward to see given the fact that each college can accommodate at most $\qmax$~students.
%\end{proof}

By the above observation, we only need to guess a subset of at most $m\cdot \qmax$ students which are assigned to colleges, and branch for each student in the guessed set on the choice of one out of $m$ possible colleges.
For each branch, we check feasibility and stability in $\mathcal{O}(2^{\qmax}\cdot n \cdot m \cdot t)$~time since each college obtains at most~$\qmax$ students (see Observation~\ref{obs:t-or-qmax-NP}).

\begin{proposition}\label{prop:diverseties-m+qmax:XP}
  \diverseties{} and \feasible{} can be solved in $\mathcal{O}(n^{m\cdot \qmax}\cdot (m \cdot \qmax)^m \cdot 2^{\qmax}\cdot n\cdot m\cdot t)$~time.
\end{proposition}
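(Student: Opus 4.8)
The plan is to solve both problems by exhaustive enumeration, exploiting the fact that constant bounds on $m$ and $\qmax$ drastically restrict the space of candidate matchings. The starting point is \cref{obs:m+qmax}: any feasible matching leaves all but at most $m\cdot\qmax$ students unmatched, since each of the $m$ colleges accommodates at most $\qmax$ students. I would therefore enumerate candidate matchings in two stages. First, guess an ordered tuple of the matched students; as there are at most $m\cdot\qmax$ of them, there are $\mathcal{O}(n^{m\cdot\qmax})$ such tuples. Second, split this tuple into $m$ consecutive blocks, the $j$-th block being the set of students assigned to college $w_j$; a split is determined by choosing $m-1$ boundaries among at most $m\cdot\qmax+1$ positions, giving $\mathcal{O}((m\cdot\qmax)^m)$ possibilities, and any split producing a block of size exceeding $\qmax$ is discarded. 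Every pair (tuple, split) determines one concrete candidate matching $M$, and the total number of candidates is $\mathcal{O}(n^{m\cdot\qmax}\cdot(m\cdot\qmax)^m)$, which is polynomial in $n$ whenever $m$ and $\qmax$ are fixed.

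Next, for each candidate $M$ I would verify the two required properties. Feasibility is immediate: for every college $w$ one checks $|M(w)|\le q_w$ and $\lowervec_w\le\sum_{u\in M(w)}\typevec_u\le\uppervec_w$, at cost $\mathcal{O}(m\cdot t)$. Stability is the more delicate part, because when lower quotas may be positive we cannot appeal to the linear-time test of \cref{lem:lq=0:stable-equiv-form}. Instead I would invoke the observation behind \cref{obs:t-or-qmax-NP}: a minimal witness $U'$ for a blocking pair $\{u,w\}$ satisfies $U'\subseteq M(w)$ and hence $|U'|\le\qmax$. Consequently, to certify that $M$ admits no blocking pair it suffices to iterate over every unmatched acceptable pair $\{u,w\}$ and every subset $U'\subseteq M(w)$, of which there are at most $2^{\qmax}$, and to test the four blocking-pair conditions directly; this costs $\mathcal{O}(2^{\qmax}\cdot n\cdot m\cdot t)$ per candidate.

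The \feasible variant is handled by the identical enumeration, simply omitting the stability test, so the same bound applies. Multiplying the number of candidate matchings by the per-candidate verification cost yields the claimed running time $\mathcal{O}(n^{m\cdot\qmax}\cdot(m\cdot\qmax)^m\cdot 2^{\qmax}\cdot n\cdot m\cdot t)$. I do not anticipate a serious obstacle: the only point genuinely requiring care is justifying that the stability check remains polynomial even in the presence of positive lower quotas, and this is exactly what the bounded-witness argument of \cref{obs:t-or-qmax-NP} guarantees. Everything else is routine bookkeeping of the enumeration and of the per-college feasibility computation.
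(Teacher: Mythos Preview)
Your proposal is correct and follows essentially the same approach as the paper: enumerate all candidate matchings over at most $m\cdot\qmax$ matched students (using \cref{obs:m+qmax}), then test feasibility directly and test stability by enumerating all at most $2^{\qmax}$ potential witness sets per unmatched pair (using the bound from \cref{obs:t-or-qmax-NP}). Your two-stage enumeration via ordered tuples plus block splits is a cosmetic variant of the paper's ``choose the matched subset, then assign each student to a college'' description, and both give the same $\mathcal{O}(n^{m\cdot\qmax}\cdot(m\cdot\qmax)^m)$ count.
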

\iflong
\begin{proof}
  We only consider \diverseties{} as the algorithm for \feasible works the same except in the checking phase we only need to check feasibility instead of both feasibility and stability. 
  Let $I=(U,W,T, (\succeq_u, \typevec_u)_{u\in U}$, $(q_w,\lowervec_w,\uppervec_w)_{w\in W})$ be an instance of \diverseties{}.
  The algorithm behind \diverseties{} works as follows.
  For each subset~$U'\subseteq U$ of at most $m\cdot \qmax$ and for each possible matching~$M$
  where each \(u \in U'\) is assigned to some \(w \in W\), and \(M(u)=\bot\) for all \(u \in U \setminus U'\),
  we check whether the feasibility conditions and the stability condition holds:
  \begin{compactenum}[(1)]
    \item\label{cond:qmax} $|M(w)|\le \qmax$ for all $w\in W$,
    \item\label{cond:diverse} $\lowervec_w\le \sum_{u\in M(w)}\typevec_{u}\le \uppervec_w$ for all~$w\in W$, and
    \item\label{cond:stable} for each unmatched student-college pair~$\{u,w\}$ where $u \notin M(w)$ and $u$ prefers $w$ to $M(u)$
    and for each (possibly empty) subset~$S\subseteq M(w)$ such that $w$ to prefers~$u$ to every student in~$S$ there exists a type~$z\in T$ with $\typevec_{u}[z]+\sum_{u' \in M(w)\setminus S}\typevec_{u'}[z]<\lowervec_w[z]$
    or with $\typevec_{u}[z]+\sum_{u' \in M(w)\setminus S}\typevec_{u'}[z] > \uppervec_w[z]$.
  \end{compactenum}
  We return ``yes'' by accepting $M$ as a feasible and stable matching for $I$ as soon as we found a matching fulfilling the above conditions, whereas we return ``no'' if no such matching is found.
  To see the correctness, clearly, if our algorithm returns ``yes'' by accepting a specific matching~$M$,
  then $M$ stable and feasible.
  Now, if $I$ is a yes instance and admits a feasible and stable matching~$M$, then let $U'$ denote the subset of students which are assigned colleges under~$M$, i.e., $U'\coloneqq \{u\in U\mid M(u)\in W\}$.
  By \cref{obs:m+qmax}, $|U'|\le m\cdot \qmax$.
  Thus, our algorithm must have considered the subset~$U'$ and the matching~$M$.
  Since $M$ is feasible and stable it satisfies Conditions~\eqref{cond:qmax}--\eqref{cond:stable}.
  Thus, when our algorithm returns ``yes'', latest by accepting~$M$.

  It remains to analyze the running time.
  First of all, since $|U'|\le m\cdot \qmax$, there are $O(n^{m\cdot \qmax}\cdot (m\cdot \qmax)^m)$ matchings to test for Conditions~\eqref{cond:qmax}--\eqref{cond:stable}.
  Next, for each considered matching~$M$, testing the feasibility Conditions~\eqref{cond:qmax}--\eqref{cond:diverse} can obviously be done in $\mathcal{O}(m\cdot n\cdot t)$~time.
  Finally, testing the stability conditions can be done in $\mathcal{O}(2^{\qmax}\cdot m\cdot n\cdot t)$~time since there are at most~$n\cdot m$ unmatched student-college pairs and each college in the pair has at most $\qmax$~students. 
  
  Note that for \feasible{}, we only need to check Conditions~\eqref{cond:qmax}--\eqref{cond:stable}.
\end{proof}
\fi

\iflong
\subsection{\diverseties with few colleges and types}
\fi
%\ifshort
%\paragraph{\diverseties with few colleges and types}
%\fi

\ifshort
Finally, we turn our attention to instances with a small number of colleges and types, and show that in this case \diverseties\ also admits a polynomial-time algorithm. We note that while under such restrictions one can use the \emph{bounded-variable ILP Encoding} technique~\cite{BredereckKN17,FellowsLMRS08} to show that \feasible\ becomes polynomial-time solvable, the same technique is unlikely to work for \diverseties. That is because two students, even with the same type vectors and the same preference lists,
may be preferred differently by a college.
\fi

\iflong
In this section, we show that for few number of colleges and types, both \feasible and \diverseties can be solved in polynomial time.
For \feasible, we observe that among all students with the same type vector and the same acceptable set of colleges, it does not matter which student of them is matched to a college as long as the lower quotas and upper quotas are fulfilled.
In other words, all students with the same type vector and acceptable set can be grouped together and treated as the same.
This allows us to express \feasible as an ILP of small size.
%Since there are at most~$2^m \cdot 2^t$ such groups, we can encode \feasible as an integer linear program~(ILP), using~$\mathcal{O}(m\cdot 2^m \cdot 2^t)$ integer variables (values at most~$n$) and $\mathcal{O}(n\cdot m + t\cdot m)$~inequalities.
%This immediately implies that \feasible can be solved in polynomial time if $m+t$ is a constant since integer linear programs with $\rho$~variables and $L$~input bits can be solved in~$\mathcal{O}(\rho^{2.5\rho+o(\rho)}L)$~time~(\cite{Lenstra83} and~\cite{Kan87}).
\begin{lemma}
	\label{lemma:ILP}
	\feasible can be expressed as an integer linear program
	with \(\mathcal{O}(m\cdot 2^{m}\cdot 2^t)\)~variables
	which take values of at most \(n\), and \(\mathcal{O}(n\cdot m+t\cdot m)\) inequalities.
\end{lemma}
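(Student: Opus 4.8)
The plan is to exploit the fact that feasibility ignores preferences entirely and depends only on, for each college, how many students of each type it receives. Hence students that share both their type vector and their acceptable set are fully interchangeable for the purpose of \feasible, and can be bundled into \emph{groups}. First I would define, for each pair $(\typevec, A)$ with $\typevec \in \{0,1\}^t$ and $A \subseteq W$, the group $g = (\typevec, A)$ consisting of all students $u$ with $\typevec_u = \typevec$ and $\acset(u) = A$; let $n_g$ denote its size. Since there are at most $2^t$ distinct type vectors and $2^m$ distinct acceptable sets, there are at most $2^t\cdot 2^m$ nonempty groups (and of course at most $n$ of them, as each contains $\ge 1$ student).

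Then I would introduce one integer variable $x_{g,w}$ for every nonempty group $g=(\typevec,A)$ and every college $w \in A$, intended to encode the number of students of group $g$ assigned to $w$. As each group admits at most $m$ relevant colleges, this yields $\mathcal{O}(m\cdot 2^m\cdot 2^t)$ variables, each bounded above by $n_g \le n$. The constraints I would impose are: (availability) $\sum_{w\in A} x_{g,w} \le n_g$ for each group $g=(\typevec,A)$; (capacity) $\sum_g x_{g,w} \le q_w$ for each $w\in W$; and (quotas) $\lowervec_w[z] \le \sum_{g\colon \typevec_g[z]=1} x_{g,w} \le \uppervec_w[z]$ for each college $w$ and type $z$, together with non-negativity $x_{g,w}\ge 0$. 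Counting these gives at most $n$ availability, $m$ capacity, and $2tm$ quota inequalities, which lies within the claimed $\mathcal{O}(n\cdot m + t\cdot m)$ bound.

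The remaining work is to prove that this ILP is feasible iff the instance is a yes-instance of \feasible. For the forward direction, from a feasible matching $M$ I would set $x_{g,w} := |\{u\in M(w)\mid u\in g\}|$; the availability, capacity and quota constraints then translate directly from the definition of a feasible matching, recalling that $\sum_{u\in M(w)}\typevec_u$ is exactly the vector whose $z$-th entry is $|\{u\in M(w)\colon \typevec_u[z]=1\}|$. For the converse, given an integral solution $(x_{g,w})$ I would reconstruct a matching by assigning, within each group $g=(\typevec,A)$, some $\sum_{w\in A}x_{g,w}$ distinct students of $g$ so that exactly $x_{g,w}$ of them go to $w$; the availability constraint guarantees that enough students exist, and since $x_{g,w}>0$ only when $w\in A$, every assignment respects acceptability. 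The resulting matching is feasible by the capacity and quota inequalities.

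The main obstacle is not the correctness argument itself---which is essentially bookkeeping---but rather (i) cleanly justifying the interchangeability of students within a group, i.e.\ that grouping loses no information because feasibility is invariant under permuting students of identical type vector and acceptable set, and (ii) ensuring the variable and inequality counts match the stated bounds, in particular that the number of nonempty groups (hence of availability constraints) is bounded by $n$ while the total number of variables is simultaneously bounded by $\mathcal{O}(m\cdot 2^m\cdot 2^t)$.
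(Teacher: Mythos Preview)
Your proposal is correct and essentially identical to the paper's own proof: the paper also partitions students into sets $S_{\typevec,A}$ according to type vector and acceptable set, introduces one integer variable per (college, nonempty group)-pair, and imposes exactly the non-negativity, availability, capacity, and lower/upper-quota inequalities you list, with the same forward/backward correctness argument. The variable and inequality counts you derive match the claimed bounds for the same reasons (at most $n$ nonempty groups, at most $2^m\cdot 2^t$ possible group labels, and $m$ colleges per group).
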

\ifshort
\begin{proof}[Proof Sketch]
	Introduce the following notation.
	For each type vector~$\typevec \in \{0,1\}^t$ and each subset~$A\subseteq W$ such there there exists a student~$u$ with
	$\typevec_u=\typevec$ and $\acset(u)=A$, let $S_{\typevec, A}$ denote the set of students with type vector~$\typevec$ and acceptable set~$A$:
	\begin{align*}
	S_{\typevec,A} \coloneqq \{u\in U\mid \typevec_u = \typevec \wedge \acset(u)=A\}.
	\end{align*}
	Note that the sets~$S_{\typevec,A}$ partition the whole set~$U$ of students and for each $w\in W$ it holds that either $S_{\typevec,A} \subseteq \acset(w)$ or $S_{\typevec,A} \cap \acset(w)=\emptyset$ because of the symmetry of the acceptability.
	Further, there are at most $\min(n, 2^t\cdot 2^m)$ such sets,
	and they can be pre-computed in polynomial time.
	Let $\mathcal{S}$ be the family consisting of the computed sets~$S_{\typevec, A}$.  
	
	For each college~$w$ and each non-empty set~$S_{\typevec,A}$ with $A\subseteq \acset(w)$, let $x_{w,\typevec,A}$ be an integer variable.
	The value of $x_{w,\typevec,A}$ will encode the number of students with type vector~$\typevec$ and acceptable set~$A$ that shall be assigned to~$w$ in a feasible matching.
	The ILP is given by the following constraints:
	%  By the above reasoning, we have introduced at most $m\cdot 2^m\cdot 2^t$ integer variables.
%	Now we are ready to state the ILP formulation, where the expression~``$S_{\typevec,A}\subseteq \acset(w)$'' means taking all non-empty sets~$S_{\typevec,A}$ (as described above) with $S_{\typevec,A}\subseteq \acset(w)$:
	\begin{alignat}{3}
	x_{w,\typevec,A}  & \ge 0, &~~ & %\text{\resizebox{.252\textwidth}{!}{$
	\forall w \in W,  \forall S_{\typevec, A} \colon
	% \emptyset \neq
	S_{\typevec,A} \subseteq \acset(w)%$}}
	, \label{ilp:integer} \\
	\sum_{w\in W}x_{w,\typevec,A} & \le |S_{\typevec, A}|, && \forall S_{\typevec, A} \colon
	% \emptyset \neq
	S_{\typevec,A}\subseteq \acset(w),\label{ilp:students}\\
	\sum_{\mathclap{%\emptyset \neq
			S_{\typevec,A}\subseteq \acset(w)}} x_{w,\typevec, A}  & \le q_{w}, && \forall w \in W, \label{ilp:colleges}\\
	\sum_{\mathclap{\substack{%\emptyset \neq
				S_{\typevec,A}\subseteq \acset(w)\colon \\
				\typevec[z]=1}}}x_{w,\typevec,A} & \ge \lowervec_w[z], & & \forall w\in W, \forall z\in [t], \label{ilp:college-lower}\\
	\sum_{\mathclap{\substack{%\emptyset \neq
				S_{\typevec,A}\subseteq \acset(w)\colon \\
				\typevec[z]=1}}}x_{w,\typevec,A} & \le  \uppervec_{w}[z], && \forall w\in W, \forall z\in [t].\label{ilp:college-upper}%    
	\end{alignat}
	It is straightforward to verify that the ILP formulation has $\mathcal{O}(m\cdot 2^t)$ integer variables, each with value at most~$n$,
	and $\mathcal{O}(n\cdot m+ t\cdot m)$ inequalities, and that it encodes feasibility with the described interpretation of the variables.
\end{proof}
\fi
\iflong
\begin{proof}
	Before we describe the ILP formulation, we introduce some notation.
	For each type vector~$\typevec \in \{0,1\}^t$ and each subset~$A\subseteq W$ such there there exists a student~$u$ with
	$\typevec_u=\typevec$ and $\acset(u)=A$, let $S_{\typevec, A}$ denote the set of students with type vector~$\typevec$ and acceptable set~$A$:
	\begin{align*}
	S_{\typevec,A} \coloneqq \{u\in U\mid \typevec_u = \typevec \wedge \acset(u)=A\}.
	\end{align*}
	Note that the sets~$S_{\typevec,A}$ partition the whole set~$U$ of students and for each $w\in W$ it holds that either $S_{\typevec,A} \subseteq \acset(w)$ or $S_{\typevec,A} \cap \acset(w)=\emptyset$ because of the symmetry of the acceptability.
	Further, there are at most $\min(n, 2^t\cdot 2^m)$ such sets,
	and they can be pre-computed in polynomial time: go through each student~$u$ add her to the corresponding set~$S_{\typevec_u, \acset(A)}$ if such set already exists; otherwise let $S_{\typevec_u, \acset(A)}\coloneqq \{u\}$.%  Let $\mathcal{S}$ be the family consisting of the computed sets~$S_{\typevec, A}$.  
	
	Next, for each college~$w$ and each non-empty set~$S_{\typevec,A}$ with $A\subseteq \acset(w)$, introduce an integer variable~$x_{w,\typevec,A}$; the value~$x_{w,\typevec,A}$ will be exactly the number of students with type vector~$\typevec$ and acceptable set~$A$ that shall be assigned to~$w$ in a feasible matching.
	%  By the above reasoning, we have introduced at most $m\cdot 2^m\cdot 2^t$ integer variables.
	Now we are ready to state the ILP formulation, where the expression~``$S_{\typevec,A}\subseteq \acset(w)$'' means taking all non-empty sets~$S_{\typevec,A}$ (as described above) with $S_{\typevec,A}\subseteq \acset(w)$:
	\begin{alignat}{3}
	x_{w,\typevec,A}  & \ge 0, &~~ & %\text{\resizebox{.252\textwidth}{!}{$
	\forall w \in W,  \forall S_{\typevec, A} \colon
	% \emptyset \neq
	S_{\typevec,A} \subseteq \acset(w)%$}}
	, \label{ilp:integer} \\
	\sum_{w\in W}x_{w,\typevec,A} & \le |S_{\typevec, A}|, && \forall S_{\typevec, A} \colon
	% \emptyset \neq
	S_{\typevec,A}\subseteq \acset(w),\label{ilp:students}\\
	\sum_{\mathclap{%\emptyset \neq
			S_{\typevec,A}\subseteq \acset(w)}} x_{w,\typevec, A}  & \le q_{w}, && \forall w \in W, \label{ilp:colleges}\\
	\sum_{\mathclap{\substack{%\emptyset \neq
				S_{\typevec,A}\subseteq \acset(w)\colon \\
				\typevec[z]=1}}}x_{w,\typevec,A} & \ge \lowervec_w[z], & & \forall w\in W, \forall z\in [t], \label{ilp:college-lower}\\
	\sum_{\mathclap{\substack{%\emptyset \neq
				S_{\typevec,A}\subseteq \acset(w)\colon \\
				\typevec[z]=1}}}x_{w,\typevec,A} & \le  \uppervec_{w}[z], && \forall w\in W, \forall z\in [t].\label{ilp:college-upper}%    
	\end{alignat}
	It is straightforward to verify that the ILP formulation has $\mathcal{O}(m\cdot 2^t)$ integer variables, each with value at most~$n$,
	and $\mathcal{O}(n\cdot m+ t\cdot m)$ inequalities; recall that we have reasoned that there are at most $n\cdot m$ non-empty sets~$S_{\typevec, A}$.
	To show the correctness of the ILP, note that Inequality~\eqref{ilp:integer} ensures that the introduced variables are non-negative.
	Inequality~\eqref{ilp:students} ensures in total there are enough students with type~$\typevec$ and acceptable set~$A$ to be matched to the colleges.
	Inequality~\eqref{ilp:colleges} ensures that each college's capacity is not exceeded while Inequalities~\eqref{ilp:college-lower}--\eqref{ilp:college-upper} ensure that the lower quotas and upper quotas of each college are fulfilled.
	
%	\todoH{I'm not sure how meaningful it is to describe the two directions now we have a quite long explanation of the variables and constraints. For the short version, we can definitely remove the text below.}
%	\todoT{I agree.}
	For one direction, assume that we have values for \(\{x_{w, \typevec, A} \mid w \in W \land  \emptyset \neq S_{\typevec, A}\subseteq \acset(w)\}\) which form a solution for the ILP.
	We claim that going through each college~\(w \in W\) and each computed set~$S_{w,\typevec, A}$ with
	$\emptyset \neq S_{\typevec, A}\subseteq \acset(w)$
	and assigning \(x_{w, \typevec, A}\)~arbitrary not-yet assigned students from $S_{w,\typevec, A}$ is always possible because of Inequality~\eqref{ilp:students}.
	Moreover, the constructed matching~$M$ is a feasible matching because of Inequalities~\eqref{ilp:colleges}--\eqref{ilp:college-upper}.  
	
	For the converse direction assume that we have a feasible matching~\(M\).
	It can be easily verified that setting
	\[x_{w, \typevec, A} \coloneqq |\{u \in U \mid M(u) = w \land \typevec_u =\typevec
	\land \acset(u) = A\}|\] 
	for each $w\in W$ and each set~$S_{\typevec, A}$ with $\emptyset\neq S_{\typevec, A}\subseteq \acset(w)$
	yields a solution to the ILP.
	Inequalities~\eqref{ilp:integer}--\eqref{ilp:students} hold because $M$ is a matching.
	Inequalities~\eqref{ilp:colleges}--\eqref{ilp:college-upper} hold because $M$ is feasible.
\end{proof}
\fi

\begin{corollary}\label{cor:feasible-m+t-fpt}
	\feasible can be solved in $(m\cdot 2^m \cdot 2^t)^{\mathcal{O}(m\cdot 2^m \cdot 2^t)}\cdot (n\cdot m^2 + t\cdot m)$~time.
\end{corollary}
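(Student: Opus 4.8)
The plan is to combine Lemma~\ref{lemma:ILP} with a classical result on the solvability of integer linear programs in fixed dimension. By Lemma~\ref{lemma:ILP}, an instance of \feasible can be encoded as an ILP whose number of variables is $d\coloneqq \mathcal{O}(m\cdot 2^m\cdot 2^t)$, whose number of inequalities is $\mathcal{O}(n\cdot m + t\cdot m)$, and in which every variable takes a value of at most $n$ (so the encoding length of each coefficient is polynomially bounded). The existence of a feasible matching is equivalent to the feasibility of this ILP, so it suffices to decide whether the ILP has an integer solution within the stated time bound.

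First I would invoke Lenstra-type results on integer programming in fixed dimension. The relevant fact is that deciding feasibility of an ILP with $d$ variables and $L$ inequalities, where the largest coefficient has encoding length $s$, can be done in time $d^{\mathcal{O}(d)}\cdot \mathrm{poly}(L, s)$ (see Lenstra, Kannan, or the \emph{bounded-variable ILP} framework cited in the paper as \cite{BredereckKN17,FellowsLMRS08}). Substituting $d = \mathcal{O}(m\cdot 2^m\cdot 2^t)$ yields the leading factor $(m\cdot 2^m\cdot 2^t)^{\mathcal{O}(m\cdot 2^m\cdot 2^t)}$. The remaining work is to verify that the polynomial factor coming from the number and size of the inequalities contributes only $\mathcal{O}(n\cdot m^2 + t\cdot m)$; here one must account for the fact that the $\mathcal{O}(n\cdot m + t\cdot m)$ inequalities each involve up to $\mathcal{O}(m\cdot 2^m\cdot 2^t)$ variables, but this extra factor gets absorbed into the superpolynomial term in $d$, leaving the claimed polynomial dependence on $n$, $m$, and $t$.

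The key steps, in order, are: (1) build the ILP via Lemma~\ref{lemma:ILP}, including the polynomial-time precomputation of the partition classes $S_{\typevec,A}$ and the setup of the inequalities; (2) observe that a matching is feasible if and only if the ILP is feasible, which is exactly the content of Lemma~\ref{lemma:ILP}; (3) apply the fixed-dimension ILP feasibility algorithm with $d=\mathcal{O}(m\cdot 2^m\cdot 2^t)$ variables; and (4) carefully multiply out the running-time expression, checking that the bound $(m\cdot 2^m\cdot 2^t)^{\mathcal{O}(m\cdot 2^m\cdot 2^t)}\cdot (n\cdot m^2+t\cdot m)$ indeed dominates the cost of both construction and solving. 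Finally, from an ILP solution one reconstructs an actual feasible matching in polynomial time (assign $x_{w,\typevec,A}$ arbitrary students from $S_{\typevec,A}$, as sketched in the proof of Lemma~\ref{lemma:ILP}).

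The main obstacle I anticipate is bookkeeping the running time precisely rather than any conceptual difficulty: one must reconcile the number of inequalities ($\mathcal{O}(n\cdot m + t\cdot m)$) and the per-inequality variable count with the exact polynomial $n\cdot m^2 + t\cdot m$ appearing in the statement, making sure the accounting of the precomputation of $S_{\typevec,A}$ (which touches all $n$ students and $m$ colleges) and the encoding size of the variables (bounded by $\log n$) all fit under the stated bound. Since this is essentially a direct corollary, the only care needed is to cite the correct fixed-dimension ILP solver and to confirm that its polynomial overhead does not introduce additional factors beyond $n\cdot m^2 + t\cdot m$.
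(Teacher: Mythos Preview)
Your proposal is correct and follows essentially the same approach as the paper: invoke Lemma~\ref{lemma:ILP} and then apply the Lenstra--Kannan result that an ILP with $\rho$ variables and input length $L$ is solvable in $\mathcal{O}(\rho^{2.5\rho+o(\rho)}L)$ time. The paper's proof is just this two-line observation, so your additional bookkeeping about precomputation and coefficient sizes is more detail than strictly needed but entirely in the right spirit.
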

\begin{proof}
	This follows immediately from Lemma~\ref{lemma:ILP} and the fact that solving an ILP with $\rho$~variables and $L$~input bits can be solved in~$\mathcal{O}(\rho^{2.5\rho+o(\rho)}L)$~time~(\cite{Lenstra83} and~\cite{Kan87}).
\end{proof}

We have shown that to solve \feasible it suffices to know how many students of the same type vector and the same acceptable set are assigned to a college.
For \diverseties, we need more information to check stability, i.e., to tackle the preferences of the colleges.
This is because two students, even with the same type vectors and the same preference lists,
may be preferred differently by a college, which is relevant for stability.
\fi

\ifshort
However, in a matching only students which are least preferred by the college they are assigned to among those with the same type vector need to be considered to witness blocking pairs.
\fi
\iflong
Nevertheless, we show in the following that we do not need to store all students to check stability but only those that are least preferred by each college and for each type vector.  
%matching only students which are least preferred by the college they are assigned to among those with the same type vector need to be considered to witness blocking pairs.
\fi
%However, in order to check whether a matching can be extended to a feasible and stable matching, we do not need to store all the students currently assigned to a college.
%It turns out that only a \emph{least} preferred one of each type vector is enough, using the following observation.
We introduce two notations to formally describe such students.
Given a matching~$M$, a college~$v$, and a type vector~$\typevec\in\{0,1\}^t$,
\ifshort
let $\St(M,v,\typevec)\coloneqq \{u\in M(v)\mid \typevec_u=\typevec\}$ denote the set of students with type vector~$\typevec$ that are assigned to $v$, 
and let $\worst(M,v,\typevec)$ denote the set of students in~$M(v)$ with type vector~$\typevec$ that $v$ prefers least:
%\begin{align*}

\smallskip
{\centering
  $\worst(M,v,\typevec) \coloneqq  \{u \in \St(M,v,\typevec) \mid  \St(M,v,\typevec) \succeq_v u \}$.\par
}

%\end{align*}
\fi
\iflong
let $\St(M,v,\typevec)\coloneqq \{u\in M(v)\mid \typevec_u=\typevec\}$ denote the set of students with type vector~$\typevec$ that are assigned to $v$: 
\begin{align*}
\St(M,v,\typevec)\coloneqq \{u\in M(v)\mid \typevec_u=\typevec\}\text{, and}
\end{align*}
let $\worst(M,v,\typevec)$ denote the set of students in~$M(v)$ with type vector~$\typevec$ that $v$ prefers least:
\begin{align*}
\worst(M,v,\typevec) \coloneqq & \{u \in \St(M,v,\typevec) \mid  \St(M,v,\typevec) \succeq_v u \}.
\end{align*}
\fi
% Further, let $\best(M, w, v, \typevec)$ denote the set of student in $\St(M,v,w,\typevec)$ that $w$ most preferred:
% \begin{align*}
%   \best(M, w, \typevec) \coloneqq \{u \in \St(M,v,w,\typevec) \mid u\succeq_{w} u' \text{ for all } u' \in \St(M,w,v,\typevec)\}.
% \end{align*}
\begin{proposition}\label{obs:critical-worst-pairs}
	Let $M$ be a feasible matching in an \diverseties{} instance.
	Then, an unmatched student-college pair~$\{u,w\}$ with \(w \succ_u M(u)\)
	is blocking in $M$ if and only if
	there is a subset of $k$~students~$U'\coloneqq \{u_{i_1}, \ldots, u_{i_k} \} \subseteq M(w)$ ($0\le k\le |M(w)|$) assigned to~$w$ such that
	\begin{compactenum}[(i)]
		\item\label{cond:atmostone} no two students from $U'$ have the same type vector,
		\item\label{cond:difftypes} each student $u'\in U'$ belongs to $\worst(M, w, \typevec_{u'})$,
		\item\label{cond:prefer} $w$ strictly prefers~$u$ to each student in $U'$, and
		\item\label{cond:feasible} $M \cup \{\{u,w\}\} \setminus (\{\{u,M(u)\}\} \cup \{\{u',w\}\mid u'\in U'\}$ is feasible for~$w$.
	\end{compactenum}
\end{proposition}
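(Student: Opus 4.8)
The plan is to prove the two directions separately, with essentially all the work concentrated in the forward (``only if'') direction. The backward direction is immediate: conditions~(iii) and~(iv) in the statement are literally conditions~(iii) and~(iv) of the definition of a blocking pair, while the standing hypotheses ``$\{u,w\}$ is unmatched'' and ``$w \succ_u M(u)$'' supply conditions~(i) and~(ii) of that definition; hence any $U'$ of the described form witnesses that $\{u,w\}$ blocks $M$.

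For the ``only if'' direction, I would assume $\{u,w\}$ blocks $M$, so by definition there is at least one witness $U' \subseteq M(w)$. First I would pass to an \emph{inclusion-minimal} witness $U^*$ (a witness no proper subset of which is again a witness), and then massage $U^*$ into a witness that additionally satisfies (i) and (ii). Throughout, write $N \coloneqq M(w) \cup \{u\} \setminus U^*$ for the set of students at $w$ after the exchange; since $U^*$ is a witness, $N$ is feasible for $w$, i.e.\ $|N| \le q_w$ and $\lowervec_w \le \sum_{x \in N}\typevec_x \le \uppervec_w$.

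The crux, and the main obstacle, is the claim that $U^*$ contains at most one student of each type vector (condition~(i)). I would argue by contradiction: if $u_1, u_2 \in U^*$ share a type vector $\typevec$, then $U^* \setminus \{u_2\}$ is not a witness by minimality; as it still satisfies the preference condition~(iii), the set $N \cup \{u_2\}$ must fail to be feasible for $w$. Such a failure can only stem from the capacity bound or from some upper quota, since adding $u_2$ cannot break a lower quota. The essential observation is that the \emph{feasibility of the input matching $M$} rules out both failures. Capacity-wise, $|M(w)| \le q_w$ forces $|N| = |M(w)| + 1 - |U^*| \le q_w + 1 - |U^*|$, so $|N \cup \{u_2\}| > q_w$ would require $|N| = q_w$ and hence $|U^*| \le 1$, contradicting $u_1,u_2 \in U^*$. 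Quota-wise, for any type $z$ with $\typevec[z]=1$ both $u_1$ and $u_2$ are removed, so the number of type-$z$ students in $N$ is at most $(\text{their number in } M(w)) + \typevec_u[z] - 2 \le \uppervec_w[z] - 1$ (using $\typevec_u[z]\le 1$ and $M$'s upper-quota feasibility), leaving strict room for $u_2$. Either way we reach a contradiction, establishing condition~(i).

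Finally, I would enforce condition~(ii) by a type-preserving replacement. For each type vector $\typevec$ occurring in $U^*$ (exactly once, by the claim just shown), replace its representative $u'$ with any $u'' \in \worst(M,w,\typevec)$, i.e.\ a least-preferred student of type $\typevec$ in $M(w)$. Since $\typevec_{u''} = \typevec_{u'}$, this leaves both the type sum and the cardinality of $N$ unchanged, so feasibility for $w$ is preserved; since $u \succ_w u' \succeq_w u''$ gives $u \succ_w u''$, condition~(iii) is preserved; and as each swap stays within a single type vector, the ``at most one per type'' property~(i) is maintained. The resulting set is the desired witness satisfying (i)--(iv). I expect the verification that $M$'s feasibility simultaneously blocks both feasibility-failure modes in the minimality step to be the only genuinely delicate point; the remaining manipulations are routine bookkeeping of type sums and cardinalities.
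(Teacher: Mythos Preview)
Your argument is correct, but it takes a different route from the paper's (long) proof. The paper does \emph{not} pass to a minimal witness. Instead, given an arbitrary witness $U^*$, it directly builds $U'$ by selecting, for every type vector $\typevec$ that occurs in $U^*$, one student from $\worst(M,w,\typevec)$; conditions~(i) and~(ii) are then immediate by construction, and the work is pushed entirely into verifying condition~(iv). That verification proceeds type by type: the lower-quota side uses $|U'|\le|U^*|$ together with the fact that $U'$ and $U^*$ realise the same set of type vectors, while the upper-quota side is handled by a small contradiction argument that exploits $M$'s feasibility (if the upper quota for some type~$z$ is exceeded after removing $U'$ and adding $u$, then no type vector in $U'$---hence in $U^*$---has a $1$ in coordinate~$z$, so the same violation already occurs for $U^*$).

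What your approach buys is a cleaner feasibility check: once minimality has forced distinct type vectors in $U^*$, the swap to worst representatives is type-vector preserving, so the aggregate type sum and cardinality at~$w$ are literally unchanged and condition~(iv) is automatic. The price is the minimality argument itself, which---as you correctly identified---is the one delicate step and hinges on $M$ being feasible (to bound both $|M(w)|\le q_w$ and the type-$z$ count by $\uppervec_w[z]$). The paper's route avoids the minimality detour but ends up invoking $M$'s feasibility in an equally essential way inside the upper-quota contradiction. Both arguments are of comparable length; yours is arguably more modular.
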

\ifshort
\begin{proof}[Proof Sketch]
	\eqref{cond:prefer} and \eqref{cond:feasible} necessarily hold for every set which witnesses that \(\{u,w\}\) is a blocking pair.
	\eqref{cond:atmostone} can be seen to hold for every minimal such set because at most one student from each type vector has to be removed from \(M(w)\) to make the addition of \(u\) to \(M(w)\) feasible for~\(w\). 
	\eqref{cond:difftypes} can be achieved by swapping each student \(u \notin \worst(M, w, \typevec_u)\) from a minimal witnessing set for a student in \(\worst(M, w, \typevec_u)\).
	This modification maintains all previously checked conditions and the fact that the set witnesses that \(\{u, w\}\) is a blocking~pair.
\end{proof}
\fi
	
\iflong	
\begin{proof}
	For the ``only if'' part, assume that $\{u,w\}$ is blocking $M$,
	and let $U^*\subseteq M(w)$ be a subset of students that witnesses this, i.e.,
	$w$ prefers~$u$ to every student in $U^*$ and $M\setminus (\{\{u,M(u)\}\}\cup \{\{u,w\} \mid u\in U^*\})\cup \{\{u,w\}\}$ is feasible for~$w$.
	We show how to construct a subset~$U'$ fulfilling the conditions given in the statement, starting with $U'\coloneqq \emptyset$.
	For each type vector~$\typevec\in \{0,1\}^{t}$ with $U'\cap \St(M,v,\typevec)\neq \emptyset$, pick an arbitrary student from $\worst(M, w, \typevec)$ and add it to~$U'$; note that such student must exist because $U'\cap \St(M,v,\typevec)\neq \emptyset$.
	Before we show that $U'$ fulfills the conditions given in the statement, we observe that
	by the definitions of $\worst$ and $U'$,
	the set~$U'$ characterizes the same set of type vectors as $U^*$.
	Formally,
	\begin{align}\label{eq:nice-U'}
	&\forall \typevec \in \{0,1\}^{t}:& \nonumber \\
	&    \exists~u' \in U'\colon \typevec_{u'} = \typevec \text{ iff. } \exists~u^*\in U^*\colon \typevec_{u^*}=\typevec.
	\end{align}
	Now, we are ready to show that $U'$ satisfies Conditions~\eqref{cond:atmostone}--\eqref{cond:feasible}.
	First of all, it is straightforward to check that Conditions~\eqref{cond:atmostone}--\eqref{cond:difftypes} are fulfilled.
	Condition~\eqref{cond:prefer} is also satisfied because of Property~\eqref{eq:nice-U'} and because $U^*$ witnesses that $\{u,w\}$ is a blocking pair.
	To see why Condition~\eqref{cond:feasible}, we observe that
	\begin{align}\label{eq:size}
	|U'|\le |U^*|
	\end{align}
	since Condition~\eqref{cond:atmostone} and Property~\eqref{eq:nice-U'} hold.
	This implies that $|\acset(w)\setminus U'|\le |\acset(w)\setminus U'| < q_w$; the last inequality holds because $U'$ is a witness.
	Moreover, by Properties~\eqref{eq:nice-U'}--\eqref{eq:size}, it follows that
	\begin{align*}
	\lowervec_w \le \sum_{u' \in M(w)\setminus U^* \cup \{u\}}\typevec_{u'}  \stackrel{\eqref{eq:nice-U'}\eqref{eq:size}}{\le} \sum_{u' \in M(w)\setminus U' \cup \{u\}} \typevec_{u'}.
	\end{align*}
	It remains to consider the upper quotas, i.e., to show that $\sum_{u' \in M(w)\setminus U' \cup \{u\}} \typevec_{u'}\le \uppervec_w$ holds.
	Suppose, for the sake of contradiction, that there exists a type~$z\in [t]$ such that
	\begin{align}\label{eq:upper}
	|\{u' \in (A(w) \setminus U') \cup \{u\} \mid \typevec_{u'}[z]=1\}| > \uppervec_{w}[z].
	\end{align}
	Since $\sum_{u'\in A(w)}\typevec_{u'}[z]\le \uppervec_w[z]$ (because $M$ is feasible) and $\typevec_{u}[z]\in \{0,1\}$,
	Property~\eqref{eq:upper} implies that
	$\typevec_{u}[z]=1$, 
	$\sum_{u'\in A(w)}\typevec_{u'}[z] = \uppervec_w[z]$, and
	no student in $U'$ has type~$z$.
	By Property~\eqref{eq:nice-U'} we infer that no student in $U^*$ has type~$z$.
	Together with~\eqref{eq:upper}, it follows that
	\begin{align*}
	& |\{u' \in (A(w) \setminus U^*) \cup \{u\} \mid \typevec_{u'}[z]=1\}|\\
	= & |\{u' \in A(w) \cup \{u\} \mid \typevec_{u'}[z]=1\}| \stackrel{\eqref{eq:upper}}{>} \uppervec_{w}[z],
	\end{align*}
	a contradiction to $U^*$ being a witness.
	
	The ``if'' part is straightforward as $U'$ witnesses that $\{u,w\}$ is a blocking pair.
\end{proof}
\fi

\iflong
Using \cref{obs:critical-worst-pairs}, we can give an algorithm to solve \diverseties in polynomial time when \(m + t\) is bounded.
\fi
\iflong \cref{obs:critical-worst-pairs} implies that if two matchings have the same information in terms of least preferred students for each type vector and the number of students for each type, then either both are feasible and stable or neither is.
Using this insight, we can use Dynamic Programming (DP) to solve \diverseties{}.
\fi

\begin{theorem}\label{thm:diverse-m+t-XP}
	\diverseties is solvable in $ \mathcal{O}(n^{m\cdot 2^t+(2m+1)\cdot (t+1)}\cdot m^2\cdot (n^t\cdot t+m))$~time.\todoH{Someone needs to check again.\newline R: Why not $n^{\bigoh(m\cdot 2^t)}$? Similarly also for Prop3...\newline H: But it is a difference: using bigoh in the exponent makes the running time looks huge because you could have $1000$ times $m\cdot 2^t$ in the exponent.} % -- in particular \diverseties parameterized by \(m + t\) lies in \XP.
\end{theorem}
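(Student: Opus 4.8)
The plan is to realize the dynamic-programming idea announced before the statement as an exhaustive search over a carefully chosen family of \emph{matching signatures}, where a signature records exactly the data that \cref{obs:critical-worst-pairs} proves to be necessary and sufficient for testing both feasibility and stability. First I would note that feasibility of a matching depends only on how many assigned students of each type vector sit at each college: there are at most $2^t$ type vectors and $m$ colleges, so the \emph{count profile} ranges over $\{0,\dots,n\}^{m\cdot 2^t}$, which is the source of the $n^{m\cdot 2^t}$ factor in the exponent. For a fixed count profile one checks the capacity bound $q_w$ and the quota bounds $\lowervec_w \le \sum_{u\in M(w)}\typevec_u \le \uppervec_w$ directly, by reading off the induced per-type totals; this is where the $n^t\cdot t$ factor enters, since one ranges over the possible type-sum vectors in $\{0,\dots,n\}^t$.

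The crux of the argument is the stability test, and here I would invoke \cref{obs:critical-worst-pairs}: it reduces every potential blocking witness $U'\subseteq M(w)$ to a set containing at most one student per type vector, each of them a least-preferred ($\worst$) student of its type vector at $w$. Consequently, beyond the counts, the only information the stability check needs is, for each college $w$, the college-preference ranks of a bounded number of \emph{threshold} students---the $\worst$ assigned student of each relevant type and the best unassigned student who might want to move to $w$. I would guess these threshold students explicitly, which yields the $n^{(2m+1)(t+1)}$ factor: for each of the $m$ colleges a constant ($2(t+1)$) number of thresholds, plus a global $(t+1)$ block, each guess chosen from the $n$ students. With counts and thresholds fixed, deciding whether an unmatched pair $\{u,w\}$ with $w\succ_u M(u)$ blocks becomes, via \cref{obs:critical-worst-pairs}, the question of whether the guessed $\worst$ students can be removed to make room for $u$ without violating $w$'s quotas---i.e.\ a feasibility test on the perturbed count profile, computable within the claimed per-step time. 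Enumerating all $n\cdot m$ candidate pairs against a fixed signature is what contributes the remaining $m^2(n^t t + m)$ factor.

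The last step I would carry out is the realizability check: a guessed signature (counts together with thresholds) comes from an actual matching if and only if one can assign concrete students consistently with it, which is a transversal/assignment subproblem that the thresholds make routine, since students sharing a type vector at a given college are interchangeable \emph{except} for their college-preference rank, and the thresholds pin down precisely where each preference list must be cut. Multiplying the number $n^{m\cdot 2^t+(2m+1)(t+1)}$ of signatures by the per-signature cost $\mathcal{O}(m^2(n^t t+m))$ gives the stated bound. I expect the stability test to be the main obstacle, and it is exactly the point where \diverseties{} departs from \feasible: in \cref{cor:feasible-m+t-fpt} students of a common type vector can be treated as indistinguishable, but for stability this fails because colleges rank them differently, so one cannot work with counts alone. \cref{obs:critical-worst-pairs} is what tames this---by showing that only the $\worst$ students of each type vector can participate in a blocking witness, it keeps the threshold data polynomially bounded in the exponent and makes the enumeration terminate within the claimed running time.
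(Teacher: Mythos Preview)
Your high-level picture is right—use \cref{obs:critical-worst-pairs} to bound the witness structure, then enumerate enough ``signature'' data to decide feasibility and stability—but the decomposition you propose does not match what the running time actually pays for, and the realizability step hides a real difficulty.

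First, the bookkeeping. In the paper the factor $n^{m\cdot 2^t}$ does \emph{not} come from a count profile; it comes from guessing, for every college $w_j$ and every type vector $\typevec\in\{0,1\}^t$, a concrete student $\Wst(w_j,\typevec)\in\acset(w_j)\cup\{\top\}$ that will play the role of an element of $\worst(M,w_j,\typevec)$. This is exactly what \cref{obs:critical-worst-pairs} needs: one witness-candidate per \emph{type vector}, not per type. Your ``$2(t+1)$ thresholds per college'' would be one per type, which is not what the proposition gives you; a blocking witness may involve one student from each of up to $2^t$ distinct type vectors, and to test condition~(iii) in \cref{obs:critical-worst-pairs} you must know, for each such type vector, whether $w$ prefers the candidate blocker $u$ to the corresponding $\worst$-student. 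The remaining exponent $(2m+1)(t+1)$ in the paper arises from two further ingredients: an outer guess of the target per-\emph{type} totals $\#(w_j,z)$ (this is $n^{m(t+1)}$ after encoding capacities as an extra type) and, inside a dynamic program over students, the number of distinct ``records'' $\capR\in\{0,\dots,n\}^{(m+1)\times(t+1)}$ that the DP may produce. Neither of these is a threshold guess.

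Second, and more importantly, your realizability step is not routine. You write that students sharing a type vector at a given college are interchangeable except for their college-preference rank, but this omits the decisive asymmetry: two students with the same type vector may have completely different preference lists over colleges. Whether assigning a particular student $u$ to $w_j$ creates a blocking pair $\{u,w\}$ with some other college $w$ depends on whether $w\succ_u w_j$, which is a property of $u$'s own list and is invisible in any count-plus-threshold signature. The paper handles this by processing the students one by one: for each student $u_{i+1}$ and each candidate college $w_j$, it checks both that $u_{i+1}$ is not worse (for $w_j$) than the guessed $\Wst(w_j,\typevec_{u_{i+1}})$, and that no other college $w$ would form a blocking pair with $u_{i+1}$ using a subset of the guessed $\Wst(w,\cdot)$ students as witness (this is where the $n^t$ inside the per-step cost comes from, since by \cref{obs:critical-worst-pairs} a minimal witness has size at most $t$). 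Partial matchings that share the same record are then identified, which is what keeps the DP polynomial for fixed $m,t$. Without this student-by-student check you cannot certify that the concrete matching you output is stable, and there is no evident transversal formulation that absorbs the students' individual preference lists.
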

\iflong\begin{proof}\fi
\ifshort\begin{proof}[Proof Sketch]\fi
	Let $I=(U,W,T=[t],(\typevec_u,\succeq_u)_{u\in U}$, $(\succeq_w, q_w$, $\lowervec_w$, $\uppervec_w)_{w\in W})$ be an instance of \diverseties{}.
	We introduce an extra type possessed by each student,
    and require each college~$w\in W$ to have no more than $q_w$ students for this extra type to encode capacities by types.
    \iflong
	For convenience we still denote $|T|=t$ but keep the additional type in mind for the running time analysis.
	\fi 
	
	Motivated by \cref{obs:critical-worst-pairs}, we will exhaustively branch, for each college and each type vector~$\typevec\in\{0,1\}^t$, on the choice of a student~$\Wst(w_j,\typevec)\in \acset(w_j)\cup \{\top\}$ who will be in~$\worst(M, w_j, \typevec)$ for a hypothetical feasible and stable matching \(M\).
	Here $\Wst(w_j,\typevec) = \top$ is interpreted as $\worst(M, w_j, \typevec)=\emptyset$.
	Moreover we branch to determine the number~$\#(w_j,z)\in \{\lowervec_{w_j}[z], \ldots, \uppervec_{w_j}[z]\}$ of students of each type~$z\in [t]$ that each college~$w_j\in W$ receives by $M$.
	
	For each such branch we iteratively try to extend \(M_0 = \{\{\Wst(w_j, \typevec), w_j\} \mid w_j \in W, \typevec \in \{0,1\}^t\}\) to a feasible and stable matching which conforms to the guesses in the branch, one not yet matched student at a time.
	
	\iflong
	Assume that $U=\{u_1,\ldots,u_n\}$ and $W=\{w_1$, $\ldots,w_m,w_{m+1}\}$, where we use~$w_{m+1}$ to receive all students that are unmatched. 
	Further, let $U_0=\emptyset$ and for each~$i\in [n]$, let $U_i=U_{i-1}\cup \{i\}$.
	\fi
	
	% Using \cref{obs:critical-worst-pairs}, assuming that there is a feasible and stable matching, called~$M$, 
%        Before we show the DP we ``guessed'' (by branching) some crucial information:
%        For each college~$w_j\in W$ and each type vector~$\typevec\in\{0,1\}^t$, we branch on the choice of
%	a student~$\Wst(w_j,\typevec)\in \acset(w_j)\cup \{\top\}$, which we assume to be in~$\worst(M, w_j, \typevec)$ for a hypothetical feasible and stable matching \(M\).
%	Here $\Wst(w_j,\typevec) = \top$ is interpreted as $\worst(M, w_j, \typevec)=\emptyset$.
%        We also branch into determining the number~$\#(w_j,z)\in \{\lowervec_{w_j}[z], \ldots, \uppervec_{w_j}[z]\}$ of students of each type~$z\in [t]$ that each college~$w_j\in W$ will receive in~$M$.
        
%	Let $\Wst$ denote the set of all guessed students and let $M_0$ be the matching including the pairs of each guessed student and her assigned college.
%	For each such branch, to decide on the remaining matched students in $M$, we describe a dynamic programming procedure:
%	For this we iteratively branch on how to extend a matching, which initially consists of \(M_0 \coloneqq \{\{\Wst(w_j, \typevec), w_j\} \mid j \in [m], \typevec \in \{0,1\}^t\}\), by considering one student at a time.
	More specifically, we only add a student-college pair to the matching if doing so maintains the status that each guessed \(\Wst(w_j, \typevec)\)-student is least preferred among the students assigned to \(w_j\) with type vector \(\tau\), the guessed number of students for each college and type is not exceeded, and there is no induced blocking pair involving the added student and some guessed \(\Wst(w_j, \typevec)\)-students (as witness).
	To check these conditions and more importantly to upper-bound the number of considered matchings we keep a \emph{record} in addition to each constructed (partial) matching, guessed least preferred students~\(\Wst(w_j, \typevec)\), $\typevec \in \{0,1\}^{t}$ and the guessed numbers~$\#(w_j,z)$ of students, $z\in [t]$, $w_j\in W$.
	\ifshort
	A record for a set~$U_i$ of students is an $(m+1)\times (t + 1)$-dimensional integer matrix~$\capR \in \{0,\ldots,n\}^{(m+1)\times (t + 1)}$
	storing the type-specific number of students assigned to a college, and the number of students assigned to it in total.
	Two ``partial'' matchings in a branch can be argued to be equivalent in terms of existence of feasible and stable extensions whenever they have the same record, which is why in each branch we only need to consider at most \(n^{(m+1) \cdot (t + 1)}\) matchings.
	\fi
	
%	For the sake of notation, we re-enumerate the students in $U\setminus \Wst$ and still assume that
%	$U\setminus \Wst = \{u_1,\ldots,u_n\}$.
%	Then, we assign each student~$u_i\in U\setminus \Wst$ to each possible college~$w_j\in \acset(u_i)$ and
%	build necessary information (called \emph{records} in the following) for the final feasibility and stability check.
%	We will also maintain a corresponding matching for each record.
%	Briefly put, for each student~$u_i\in U$, we try adding every possible pair~$\{u_i,w_j\}$ to every matching built until now.
    \iflong
	\paragraph{Records and the corresponding matchings.}
	% For the sake of clarity, for each student~$u_i\in U$ and each possible assignment of $u_i$ we construct tuples of records and matchings.
	A record for a set~$U_i$ of students is an $(m+1)\times (t + 1)$-dimensional integer matrix~$\capR \in \{0,\ldots,n\}^{(m+1)\times (t + 1)}$
	storing the type-specific number of students assigned to a college, and the number of students assigned in total.
	Formally, for each college~$w_j\in W$ and each type~$z\in [t]$,  
	let the number of students from $U_i$ with type~$z$ that are assigned to college~$w_j$ be stored in~$\capR[j][z]$, and the number of students from \(U_i\) that are assigned to college~\(w_j\) in total be stored in \(\capR[j][t + 1]\).  
        We say that two matchings~$M_1$ and $M_2$ have \myemph{the same record~$\capR$} if for each college~$w \in W$ the following hold:
        $\sum_{u\in M_1(w)}\typevec_{w} = \typevec_{u'\in M_2(w)}\typevec_{u'} = \capR[j]$.
        We also say that $M_1$ \myemph{realizes} record~$\capR$.
        
	\paragraph{Initialization.}
	The initial record~$\capR_0$ stores the information for $M_0$, i.e.,
	for each~$w_j\in W$,
	let $\capR_{0}[j] \coloneqq \sum_{u\in M_0(w_j)} \typevec_u$.
	
	\paragraph{Update.}
	For the update, we assume that we have all possible records for the student~$u_i$,
	and, for each~$\capR$ of the records, a possible matching such that the number of students for each specific type is store in $\capR$.
	For each record~$\capR$ and a corresponding matching~$M$ ``realizing'' the record~$\capR$,
	we consider assigning student~$u_{i+1}$ to each possible college~$w \in \acset(u_{i+1})\cup \{w_{m+1}\}$ in order to build a new record and its corresponding matching which includes $\{u_{i+1}, w_j\}$.
        We consider the current assignment of matching~$u_{i+1}$ to college~$w_j$ 
        if both of the following conditions are met:
        \begin{compactenum}[(a)]
          \item \label{dp:update:least}
          $u_{i+1} \succeq_{w_j} \Wst(w_j, \typevec_{u_{i+1}})$\text{, and }
          \item\label{dp:update:bp} for each other college~$w \in W\setminus \{w_j,w_{m+1}\}$ and
           each subset~$S\subseteq \{\Wst(w, \typevec)\neq \top \mid \typevec\in \{0,1\}^{t}\}$ of students with 
%          which cannot ``witnesses'' that $(u_{i+1}, w)$ is a blocking pair, i.e.,
          \begin{inparaitem}
            \item[(b1)] $w \succ_{u_{i+1}}w_j$, and 
            \item[(b2)] $\forall s \in S \ u_{i+1}\succ_{w} S$, 
          \end{inparaitem} there must be a type~$z\in [t]$
          such that
          \begin{itemize}
            \item[(b3)] $\#(w,z)+\typevec_{u_{i+1}}[z]-\sum_{u'\in S}\typevec_{u'}[z] < \lowervec_{w}[z]$ or $\#(w,z)+\typevec_{u_{i+1}}[z]-\sum_{u'\in S}\typevec_{u'}[z] > \uppervec_{w}[z]$. %holds for each type~$z\in [t]$.
          \end{itemize}
        \end{compactenum}
	otherwise we skip to next possible assignment of matching~$u_{i+1}$ to some college.
	The record~$\capR'$ for the new matching~$M'\coloneqq M\cup \{\{u_{i+1}, w_j\}\}$ is constructed as follows:
	\begin{align*}
	\forall k \in [m]\colon \capR'[k] \coloneqq
	\begin{cases}
	\capR[k]+\typevec_{u_{i+1}}, & \text{ if } j=k,\\
	\capR[k], & \text{ otherwise.}\\
	\end{cases}
	\end{align*}
	If $\capR'$ was already constructed in some previous consideration, even if the corresponding matchings differ, 
	then we also abandon the current consideration and go to next possible assignment.
	
	\paragraph{Checking the numbers.}
	\fi
	After we have considered the last student~$u_n$, \iflong and built all records and their corresponding matchings for all students in~$U_n$\fi
	we check whether there exists a record~$\capR$ with a matching~$M$ that corresponds to the information in~$\#(w_j,z)$, i.e., for each college~$w_j\in W$ and each type~$z\in [t]$ whether
        \iflong
        \begin{align}\label{dp:check:numbers}
          \capR[j][z] = \#(w_j,z) \text{ holds.}
        \end{align}
        \fi
      \ifshort  $\capR[j][z] = \#(w_j,z)$ holds. \fi
        % feasibility and stability for each record~$\capR$ and its matching~$M$ by checking whether they fulfill the followings:
%	\begin{enumerate}[(a)]
%		\item\label{dp:cond-feasible} For each college~$w\in W\setminus \{w_{m+1}\}$, it holds that $\lowervec_{w}\le \sum_{u\in M(w)} \typevec_u \le \uppervec_{w}$.
%		\item\label{dp:cond-stable} For each unmatched pair~$\{u,w\}\notin M$ with $w\neq w_{m+1}$ and for each subset~$U'\subseteq \{\Wst(w,\typevec) \mid \typevec\in \{0,1\}^t \wedge \Wst(w,\typevec)\neq \top\}$ of students,
%		Condition~\eqref{cond:prefer} or Condition~\eqref{cond:feasible} does not hold.
%      \end{enumerate}
	We return~$M$ once we found a matching fulfilling the above condition.
	If no such matching is found, we return that we have a ``no''-instance.
		\ifshort
	Correctness can be argued using \cref{obs:critical-worst-pairs} and the fact that, in each branch, two partial matchings with the same record are equivalent in terms of existence of feasible and stable extensions.
	\fi
	\iflong
	\paragraph{Correctness.}
	We claim that $I$ has a feasible and stable matching~$M$ if and only if
	our DP procedure returns a matching.
	
	For the ``only if'' part, assume that $I$ admits a feasible and stable matching~$N^*$ and let $M^*=N^*\cup \{\{u, w_{m+1}\} \mid N^*(u) = \bot\}$.
%	Suppose, for the sake of contradiction, that our procedure does not return any matching.
	Recall that we conduct a DP for each possible branching of the least preferred students and the number of students of each type. 
	Thus, let us consider the branching where 
        \begin{compactitem}
          \item for each $w_j\in W\setminus \{w_{m+1}\}$ and for each type vector~$\typevec\in \{0,1\}^{t}$, it holds that $\Wst(w_j,\typevec) \in \worst(M^*, w_j,\typevec)$ if $\worst(M^*,w_j,\typevec)\neq \emptyset$; $\Wst(w_j,\typevec)\coloneqq \top$ otherwise, and
          \item for each $w_j \in W$ and for each type~$z\in [t]$ it holds that $\#(w_j,z)=\sum_{u'\in M^*(w_j)}\typevec_{u'}[z]$.
        \end{compactitem}
%        a student~$\Wst(w_j,\typevec)$ for each college and each type vector such that $\Wst(w_j, \typevec)\in \worst(M^*,w_j,\typevec)$ if $\worst(M^*,w_j,\typevec)\neq \emptyset$; $\Wst(w_j,\typevec)\coloneqq \top$ otherwise.
	Let $M_0$ be the initial matching containing all the pairs consisting of a college and the guessed least preferred student.
	Re-enumerate the remaining unmatched students as $u_1,\ldots,u_n$.
        We claim that we will find a matching~$M$ which has the same record and the same least preferred students as $M^*$ such that for each $\{u_{i'},M(u_{i'})\}$ with $M(u_{i'})\neq w_{m+1}$ both Condition~\eqref{dp:update:least} and Condition~\eqref{dp:update:bp} are met (setting $u_{i+1}\coloneqq u_{i'}$ and $w_{j}\coloneqq M(u_{i'})$).
        Suppose, for the sake of contradiction, that our DP returns no such matching.
        Let $k \in [n]$ be the largest index for which the matching~$M'\coloneqq \{\{u_k, w\}\in M^* \mid k'\le k\}\cup M_0$ is still considered in the DP.	
        Let $\capR'$ be the record of $M'$.
        If $k < n$, then consider the branching where we consider adding~$\{u_{k+1}, M^*(u_{k+1})\}$ to $M'$; let $M''\coloneqq M'\cup \{\{u_{k+1},M^*(u_{k+1})\}\}$.
        Since $M^*$ satisfies Condition~\eqref{dp:update:least} (setting $u_{i+1}\coloneqq u_{k+1}$ and $w_j\coloneqq M^*(u_{k+1})$) matching~$M''$ must also satisfy Condition~\eqref{dp:update:least}.
        Now, observe that $M^*$ also satisfies Condition~\eqref{dp:update:bp} because otherwise there exist a college~$w\in W\setminus \{w_j,w_{m+1}\}$ and a subset~$S\subseteq \{\Wst(w,\typevec) \neq \top \mid \typevec \in \{0,1\}^{t})\}$
        which satisfy (b1) and (b2) such that
        $\lowervec_{w}[z] \le \#(w,z)+\typevec_{u_{k+1}}[z]-\sum_{u'\in S}\typevec_{u'}[z] \le \uppervec_{w}[z]$ for each type~$z\in [t]$.
        Since $\#(w,z)$ is also the number of students with type~$z$ which are assigned to~$w$ in $M^*$,
        the subset~$S$ witnesses that $(u_{k+1},w_j)$ is a blocking pair, a contradiction.
        Thus,~$M''$ also satisfies Condition~\eqref{dp:update:bp}.
        Hence, the only reason we abandoned $M''$ for the later branching is that there exists another matching~$M_{k+1}$ with the same record as $M''$ such that Conditions~\eqref{dp:update:least}--\eqref{dp:update:bp} are met for $(u_{k+1}, M_{k+1}(u_{k+1}))$.
        Analogously, we can infer that after we have considered student~$u_n$,
        there must be a matching~$M_n$ which has the same record and the same least preferred students as $M^*$ such that for each $\{u_{i'},M_n(u_{i'})\}$ with $M_n(u_{i'})\neq w_{m+1}$ both Condition~\eqref{dp:update:least} and Condition~\eqref{dp:update:bp} are met (setting $u_{i+1}\coloneqq u_{i'}$ and $w_{j}\coloneqq M_n(u_{i'})$).
%        However, this also means that $M_n$ remains for
        Let $\capR_n$ be the corresponding record.
        Then, $\capR_n$ must also satisfy the checking given in~\eqref{dp:check:numbers}.

        We have just shown that our DP returns a matching~$M_n$ that has the same ``crucial information'' as $M^*$.
        Then, $M_n$ must be a feasible and stable matching.
        Obviously, $M_n$ is feasible because $M^*$ and $M_n$ have the same record~$\capR_n$.
        To see that $M_n$ is stable, let us consider an arbitrary unmatched pair~$\{u,w\}\notin M_n$ and an arbitrary subset~$S \subseteq M_n(w)$ of students assigned to $w$
        such that student~$u$ strictly prefers~$w$ to~$M_n(u)$ and college~$w$ strictly prefers $u$ to every student in $S$.
        Assume for contradiction, that $S$ is a witness for \(\{u,w\}\) being a blocking pair of $M_n$.
        Then, it must hold that
        $M_n \cup \{\{u,w\}\} \setminus (\{\{u,M_n(u)\}\}\cup \{\{u',w\}\mid u'\in S\})$
        is feasible for~$w$.
        By \cref{obs:critical-worst-pairs},
        there exists a subset~$U'\coloneqq \{\Wst(w, \typevec)\neq \top \mid \typevec \in \{0,1\}^t\}$ of students such that
        $w$ strictly prefers $u$ to $U'$ and
        \begin{align*}
          \lowervec_u \le \typevec_{u}+\sum_{u' \in M_n(w)\setminus U'}\typevec_{u'}\le \uppervec_{w}.
        \end{align*}
        %Note that $U'\subseteq M_0 \subseteq M^*$.
        This means that when we consider branching on adding the pair~$\{u,w\}$ Condition~\eqref{dp:update:bp} in the update step does not hold for~$u_{i+1}\coloneqq u$ and $w_j \coloneqq w$, a contradiction.
        Thus,~$M_n$ is also stable.

	For the ``if'' part, assume that for some choices of least preferred students~$\Wst(w_j,\typevec)$ and ``feasible'' numbers~$\#(w_j,z)$, the DP returns a matching~$N$.
	We claim that $M\coloneqq N\setminus \{\{u,w_{m+1}\}\in N\}$ is a feasible and stable matching for $I$.
	It is straight-forward to see that $M$ is feasible because we only branch on ``feasible'' numbers and $N$ satisfies the final check~\eqref{dp:check:numbers}. 
	For the stability, suppose, for the sake of contradiction, that an unmatched pair~$\{u,w\}\notin M$ with $u$ preferring $w$ to $M(u)$ is blocking~$M$.
	Then, by \cref{obs:critical-worst-pairs}, there exists a subset~$U'\coloneqq \{u_{i_1}, \ldots, u_{i_k}\} \subseteq M(w)$ of $k$ students for which Conditions~\eqref{cond:atmostone}--\eqref{cond:feasible} hold.
	Since Condition~\eqref{cond:difftypes} holds for $U'$, for each $u_{i_s}\in U'$ ($s\in [k]$)
	there exists exactly one guessed least preferred student~$\Wst(w_j,\typevec_{u_{i_1}})$ with $\Wst(w_j,\typevec_{u_{i_1}})\neq \top$.
	Then, $U''\coloneqq \{\Wst(w_j, \typevec_u) \mid u\in U'\}$ must also be a subset satisfying Conditions~\eqref{cond:prefer}--\eqref{cond:feasible}.
	In other words, $U''$ does not fulfill Condition~(b3) when we consider adding the pair~$\{u,M(u)\}$, a contradiction to $M$ being returned.
	
	\paragraph{Running time.}
	For each of the $\mathcal{O}(n^{m\cdot 2^t}\cdot n^{m\cdot (t+1)})$~possible starting branches, we iteratively match each student to some college.
	As, we only keep a matching if the newly constructed record is distinct from existing ones and
	there are at most $n^{m\cdot (t + 1)}$ different records, there are $\mathcal{O}(n^{m\cdot (t + 1)})$ matchings to maintain.
	For each considered student~$u_i$ and each existing record for the students~$U_{i-1}$,
	we have $\mathcal{O}(m)$ possibilities to assign $u_i$.
	To check \eqref{dp:update:least} we require \(2^t\) many steps.
	Note that for \eqref{dp:update:bp} we only need to consider subsets \(S\) of size at most \(t\), which then leads to a checking time of \(\mathcal{O}(m \cdot n^{t} \cdot t)\).
        For each considered student, checking whether some record already exists takes in total $\mathcal{O}(m^2)$ time because there are at most $m$ records that are relevant for the comparisons.
%        Checking whether a new record already exists, takes $\mathcal{O}(n^{m\cdot (t + 1)})$~time.
	After having considered student~$u_n$, we check whether at least one of the constructed matchings fulfills Property~\eqref{dp:check:numbers} in $\mathcal{O}(t)$~time.
	In total, the running time lies in $\mathcal{O}\big(n^{m\cdot 2^t+m\cdot(t+1)}\cdot n \cdot n^{m\cdot (t+1)}\cdot (m \cdot n^{t}\cdot (t + 2^t +  m \cdot n^t \cdot t + m^2) + t) \big)=
        \mathcal{O}(n^{m\cdot 2^t+(2m+1)\cdot (t+1)}\cdot m^2\cdot (n^t\cdot t+m))$.
	\fi
      \end{proof}

\section{Conclusion}

\iflong
We identified and studied a natural, albeit highly intractable, stable matching problem enhanced with diversity constraints~(\diverseties).
We located the source of intractability %in terms of $\Sigma^{\text{P}}_2$-completeness, NP-completeness, and polynomial-time solvability.
%To accomplish this, we considered
by considering both relaxations (by dropping the lower quotas or the stability constraints)
and restrictions (such as  upper-bounding the number~$n$ of students, the number~$t$ of types, the number~$m$ of colleges, and/or the maximum upper quota~$\umax$, and the maximum capacity~$\qmax$).
While most of the cases are still at least NP-hard, we also identified special cases for which we provided polynomial-time algorithms.
% For future work, we believe that studying \diverseties\ as well as its variants through the lens of parameterized complexity~\cite{DF13,FG06,Nie06,CyFoKoLoMaPiPiSa2015} may provide further insights into the fine-grained complexity of the problem. On this front, while the algorithm for constant $n$ is already a fixed-parameter (FPT) algorithm, we can show that an FPT algorithm for $m+\qmax$ is unlikely unless FPT{}$=${}W[2].
% We left open the question whether the problem admits an FPT algorithm for $m+t$.
This implies that the respective problems lie in the class XP in terms of parameterized complexity~\cite{DF13,FG06,Nie06,CyFoKoLoMaPiPiSa2015}.
A natural question is whether the XP results can be improved to fixed-parameterized tractability.
For instance, with respect to the number~$n$ of students, we indeed obtain an exponential-size problem kernel, which implies that \diverseties{} parameterized by~$n$ is fixed-parameter tractable, and we can show that polynomial-size kernels are unlikely to exist.
However, for the number~$m$ of colleges combined with the maximum capacity~$\qmax$, our algorithms (except for the case when $\lmax=0$ and no ties are present) are essentially optimal because we can show that both \diverse{} and \feasible{} are W[2]-hard; this refutes any fixed-parameter algorithms for $(m+\qmax)$ unless FPT{}$=${}W[2].
For the combined parameter~$m+t$ (recall that $t$ denotes the number of types), we can encode \feasible via an integer linear program formulation with $f(m,t)$ variables and a polynomial number of inequalities, and show that it is fixed-parameter tractable for $m+t$.
Summarizing, the only considered fragment left open in terms of fixed parameter tractability is \diverseties{} parameterized by $(m+t)$.

Continuing with parameterized complexity research we can also explore other parameters associated with the underlying acceptability graph and the type graph (a student is connected to a type via an edge if she has this type), such as tree width~\cite{GupSauZeh2017,BreHeeKnoNie2019}.
%\todoH{Not sure whether we should mention this...} R: sure, why not mention it?

Another interesting future research direction is to investigate the trade-off between stability and diversity by allowing few blocking pairs~\cite{AbBiMa2005,CHSYicalp-par-stable2018} or few unsatisfied diversity constraints.

Last but not least, it would also be interesting to know whether \diverseties{} becomes polynomial-time solvable when the input preferences are for instance single-peaked or single-crossing~\cite{BreCheFinNie2017}.

\else

% We identified and studied a natural albeit highly intractable stable matching problem with diversity constraints~(\diverseties).
% While we showed that \diverseties is in general $\Sigma^{\text{P}}_2$-complete, we also provided polynomial-time algorithms when $n$, $m+t$, or $m+\qmax$ is a constant (the number of students, the number of colleges, the number of types, and the maximum capacity are denoted by $n$, $m$, $t$, and $\qmax$, respectively).
% In terms of parameterized complexity~\cite{DF13,FG06,Nie06,CyFoKoLoMaPiPiSa2015}, the algorithm with constant~$n$ is indeed a fixed-parameter~(FPT) algorithm.
% On the hand we can show that an FPT algorithm for $m+\qmax$ is unlikely unless FPT{}$=${}W[2].
% We left open the question whether the problem an FPT algorithm for $m+t$.

We identified and studied a natural, albeit highly intractable, stable matching problem enhanced with diversity constraints~(\diverseties).
While we showed that \diverseties is in general $\Sigma^{\text{P}}_2$-complete, we also provided polynomial-time algorithms when $n$ (the number of students), $m$$+$$t$ (the number of colleges and types), or $m$$+$$\qmax$ (the number of colleges and the capacity) is a fixed constant. %(the number of students, the number of colleges, the number of types, and the maximum capacity are denoted by $n$, $m$, $t$, and $\qmax$, respectively).

For future work, we believe that studying \diverseties\ as well as its variants through the lens of parameterized complexity~\cite{DF13,FG06,Nie06,CyFoKoLoMaPiPiSa2015} may provide further insights into the fine-grained complexity of the problem. On this front, while the algorithm for constant $n$ is already a fixed-parameter (FPT) algorithm, we can show that an FPT algorithm for $m+\qmax$ is unlikely unless FPT{}$=${}W[2].
We left open the question whether the problem admits an FPT algorithm for $m+t$.
\todoR{Slightly expanded/rewrote this, partially to make us fill the last column.}
%Finally, it would also be interesting to see whether the problem may be attacked through the use of structural parameters such as treewidth along with a suitable graph representation.

%In terms of parameterized complexity~\cite{DF13,FG06,Nie06,CyFoKoLoMaPiPiSa2015}, the algorithm with constant~$n$ is indeed a fixed-parameter algorithm.
%On the hand we can show that a fixed-parameter algorithm for $m+\qmax$ is unlikely unless FPT{}$=${}W[2].
%We left open the question whether the problem parameterized by $m+t$ is fixed-parameter tractable.

\fi

\iflong
\paragraph{Acknowledgments.}
Jiehua Chen is supported by the WWTF research project~(VRG18-012). Robert Ganian and Thekla Hamm acknowledge support from the Austrian Science Foundation (FWF, project P31336).
\fi

%\todoH{Will check the type setting tomorrow.}
\ifshort \clearpage
\bibliographystyle{named}
\bibliography{bib}

\begin{thebibliography}{}

\bibitem[\protect\citeauthoryear{Abdulkadiro{\v{g}}lu}{2005}]{Abdul2005-college-affirmative}
Atila Abdulkadiro{\v{g}}lu.
\newblock College admissions with affirmative action.
\newblock {\em International Journal of Game Theory}, 33(535--549), 2005.

\bibitem[\protect\citeauthoryear{Abraham \bgroup \em et al.\egroup
  }{2005}]{AbBiMa2005}
David~J. Abraham, P{\'{e}}ter Bir{\'{o}}, and David Manlove.
\newblock ``{A}lmost stable'' matchings in the roommates problem.
\newblock In {\em Proceedings of the Third International Workshop on
  Approximation and Online Algorithms (WAOA~'05)}, pages 1--14, 2005.

\bibitem[\protect\citeauthoryear{Ahmed \bgroup \em et al.\egroup
  }{2017}]{AhmeDickFuge2017diverseBmatching}
Faez Ahmed, John~P. Dickerson, and Mark Fuge.
\newblock Diverse weighted bipartite b-matching.
\newblock In {\em Proceedings of the 31st AAAI Conference on Artificial
  Intelligence (AAAI~'17)}, pages 35--41, 2017.

\bibitem[\protect\citeauthoryear{Aziz \bgroup \em et al.\egroup
  }{2019}]{AzizGaspersSunWalsh2019aamas}
Haris Aziz, Serge Gaspers, Zhaohong Sun, and Toby Walsh.
\newblock From matching with diversity constraints to matching with regional
  quotas.
\newblock In {\em Proceedings of the 18th International Conference on
  Autonomous Agents and Multiagent Systems (AAMAS~'19)}, pages 377--385, 2019.

\bibitem[\protect\citeauthoryear{Benabbou \bgroup \em et al.\egroup
  }{2019}]{BenChaHoSliZic2019houseallocation}
Nawal Benabbou, Mithun Chakraborty, and Yair Zick.
\newblock Fairness and diversity in public resource allocation problems.
\newblock {\em Bulletin of the IEEE Computer Society Technical Committee on
  Data Engineering {IEEE}}, 42(3):64--75, 2019.

\bibitem[\protect\citeauthoryear{Berman \bgroup \em et al.\egroup
  }{2003}]{BerKarSco2003223SAT}
Piotr Berman, Marek Karpinski, and Alex~D. Scott.
\newblock Approximation hardness of short symmetric instances of {MAX-3SAT}.
\newblock Technical Report 049, ECCC, 2003.

\bibitem[\protect\citeauthoryear{Bir{\'{o}} \bgroup \em et al.\egroup
  }{2010}]{BiroFleinerIrvingManlove2010}
P{\'{e}}ter Bir{\'{o}}, Tam{\'{a}}s Fleiner, Robert~W. Irving, and David
  Manlove.
\newblock The {C}ollege {A}dmissions problem with lower and common quotas.
\newblock {\em Theoretical Computer Science}, 411(34-36):3136--3153, 2010.

\bibitem[\protect\citeauthoryear{Bredereck \bgroup \em et al.\egroup
  }{2014}]{BreCheHarKraNieSucWoe2014}
R.~Bredereck, J.~Chen, S.~Hartung, S.~Kratsch, R.~Niedermeier, O.~Such{\'{y}},
  and G.J. Woeginger.
\newblock A multivariate complexity analysis of lobbying in multiple referenda.
\newblock {\em Journal of Artificial Intelligence Research}, 50:409--446, 2014.

\bibitem[\protect\citeauthoryear{Bredereck \bgroup \em et al.\egroup
  }{2017}]{BreCheFinNie2017}
Robert Bredereck, Jiehua Chen, Ugo~P. Finnendahl, and Rolf Niedermeier.
\newblock Stable roommate with narcissistic, single-peaked, and single-crossing
  preferences.
\newblock In {\em Proceedings of the 5th International Conference on
  Algorithmic Decision Theory (ADT~'17)}, pages 315--330, 2017.

\bibitem[\protect\citeauthoryear{Bredereck \bgroup \em et al.\egroup
  }{2018}]{BreFalIgaLacSko2018}
Robert Bredereck, Piotr Faliszewski, Ayumi Igarashi, Martin Lackner, and Piotr
  Skowron.
\newblock Multiwinner elections with diversity constraints.
\newblock In {\em Proceedings of the 32nd AAAI Conference on Artificial
  Intelligence (AAAI~'18)}, pages 933--940, 2018.

\bibitem[\protect\citeauthoryear{Bredereck \bgroup \em et al.\egroup
  }{2019a}]{BreElkIga2019}
Robert Bredereck, Edith Elkind, and Ayumi Igarashi.
\newblock Hedonic diversity games.
\newblock In {\em Proceedings of the 18th International Conference on
  Autonomous Agents and Multiagent Systems (AAMAS~'19)}, pages 565--573, 2019.

\bibitem[\protect\citeauthoryear{Bredereck \bgroup \em et al.\egroup
  }{2019b}]{BreHeeKnoNie2019}
Robert Bredereck, Klaus Heeger, Dusan Knop, and Rolf Niedermeier.
\newblock Parameterized complexity of stable roommates with ties and incomplete
  lists through the lens of graph parameters.
\newblock In {\em Proceedings of the 30th International Symposium on Algorithms
  and Computation (ISAAC~'19)}, pages 44:1--44:14, 2019.

\bibitem[\protect\citeauthoryear{Chen \bgroup \em et al.\egroup
  }{2018}]{CHSYicalp-par-stable2018}
Jiehua Chen, Danny Hermelin, Manuel Sorge, and Harel Yedidsion.
\newblock How hard is it to satisfy (almost) all roommates?
\newblock In {\em Proceedings of the 45th International Colloquium on Automata,
  Languages, and Programming (ICALP~'18)}, pages 35:1--35:15, 2018.

\bibitem[\protect\citeauthoryear{Cygan \bgroup \em et al.\egroup
  }{2015}]{CyFoKoLoMaPiPiSa2015}
Marek Cygan, Fedor~V. Fomin, Lukasz Kowalik, Daniel Lokshtanov, D{\'a}niel
  Marx, Marcin Pilipczuk, Michal Pilipczuk, and Saket Saurabh.
\newblock {\em Parameterized Algorithms}.
\newblock Springer, 2015.

\bibitem[\protect\citeauthoryear{Dom \bgroup \em et al.\egroup }{2009}]{DLS09}
Michael Dom, Daniel Lokshtanov, and Saket Saurabh.
\newblock Incompressibility through colors and {IDs}.
\newblock In {\em Proceedings of the 36th International Colloquium on Automata,
  Languages, and Programming}, volume 5555 of {\em Lecture Notes in Computer
  Science}, pages 378--389. Springer, 2009.

\bibitem[\protect\citeauthoryear{Downey and Fellows}{2013}]{DF13}
Rodney~G. Downey and Michael~R. Fellows.
\newblock {\em Fundamentals of Parameterized Complexity}.
\newblock Springer, 2013.

\bibitem[\protect\citeauthoryear{Flum and Grohe}{2006}]{FG06}
J{\"o}rg Flum and Martin Grohe.
\newblock {\em Parameterized Complexity Theory}.
\newblock Springer, 2006.

\bibitem[\protect\citeauthoryear{Garey and Johnson}{1979}]{GJ79}
Michael~R. Garey and David~S. Johnson.
\newblock {\em Computers and Intractability---{A} Guide to the Theory of
  {NP}-Completeness}.
\newblock W. H. Freeman and Company, 1979.

\bibitem[\protect\citeauthoryear{Gonczarowski \bgroup \em et al.\egroup
  }{2019}]{GonNisKovRom2019}
Yannai~A. Gonczarowski, Noam Nisan, Lior Kovalio, and Assaf Romm.
\newblock Matching for the {I}sraeli: {H}andling rich diversity requirements.
\newblock In {\em Proceedings of the 20th ACM Conference on Economics and
  Computation (ACM~EC~'19)}, page 321, 2019.

\bibitem[\protect\citeauthoryear{Gonzalez}{1985}]{Gonzalez1985}
Teofilo~F. Gonzalez.
\newblock Clustering to minimize the maximum intercluster distance.
\newblock {\em Theoretical Computer Science}, 38:293--306, 1985.

\bibitem[\protect\citeauthoryear{Gupta \bgroup \em et al.\egroup
  }{2017}]{GupSauZeh2017}
Sushmita Gupta, Saket Saurabh, and Meirav Zehavi.
\newblock On treewidth and stable marriage.
\newblock Technical report, arXiv:1707.05404, 2017.

\bibitem[\protect\citeauthoryear{H.~W.~Lenstra}{1983}]{Lenstra83}
Jr. H.~W.~Lenstra.
\newblock Integer programming with a fixed number of variables.
\newblock {\em Mathematics of Operations Research}, 8(4):538–548, November
  1983.

\bibitem[\protect\citeauthoryear{Hafalir \bgroup \em et al.\egroup
  }{2013}]{HafTYenYil2013}
Isa~E. Hafalir, M.~Bumin Yenmez, and Muhammed~A. Yildirim.
\newblock Effective affirmative action in school choice.
\newblock {\em Theoretical Economics}, 8(2):325--363, 2013.

\bibitem[\protect\citeauthoryear{Hamada \bgroup \em et al.\egroup
  }{2016}]{HamadaIwamaMiyazaki2016algorithmica}
Koki Hamada, Kazuo Iwama, and Shuichi Miyazaki.
\newblock The {H}ospitals/{R}esidents problem with lower quotas.
\newblock {\em Algorithmica}, 74(1):440--465, 2016.

\bibitem[\protect\citeauthoryear{Heo}{2019}]{Heo2019equity-diversity}
Eun~Jeong Heo.
\newblock Equity and diversity in college admissions.
\newblock In {\em The Future of Economic Design}. Springer, 2019.

\bibitem[\protect\citeauthoryear{Huang}{2010}]{Huang2010classifiedSM}
Chien{-}Chung Huang.
\newblock Classified stable matching.
\newblock In {\em Proceedings of the 21st Annual {ACM-SIAM} Symposium on
  Discrete Algorithms (SODA~'10)}, pages 1235--1253, 2010.

\bibitem[\protect\citeauthoryear{In{\'
  a}cio}{2016}]{Bo2016typespecific-diverse}
B{\' o}~In{\' a}cio.
\newblock Fair implementation of diversity in school choice.
\newblock {\em Games and Economic Behavior}, 97:54--63, 2016.

\bibitem[\protect\citeauthoryear{Ismaili \bgroup \em et al.\egroup
  }{2019}]{IsmHamZhaSuzYok2019}
Anisse Ismaili, Naoto Hamada, Yuzhe Zhang, Takamasa Suzuki, and Makoto Yokoo.
\newblock Weighted matching markets with budget constraints.
\newblock {\em Journal of Artificial Intelligence Research}, 65:393--421, 2019.

\bibitem[\protect\citeauthoryear{Kamada and Kojima}{2015}]{KamKoj2015}
Yuichiro Kamada and Fuhito Kojima.
\newblock Efficient matching under distributional constraints: {T}heory and
  applications.
\newblock {\em American Economic Review}, 105(1):67--99, 2015.

\bibitem[\protect\citeauthoryear{Kannan}{1987}]{Kan87}
Ravi Kannan.
\newblock Minkowski's convex body theorem and integer programming.
\newblock {\em Mathematics of Operations Research}, 12(3):415--440, 1987.

\bibitem[\protect\citeauthoryear{Kojima}{2019}]{Kojima2019constraints}
Fuhito Kojima.
\newblock New directions of study in matching with constraints.
\newblock In {\em The Future of Economic Design}. Springer, 2019.

\bibitem[\protect\citeauthoryear{Kurata \bgroup \em et al.\egroup
  }{2017}]{KurHamIwakiYok2017controlledschoolchoice}
Ryoji Kurata, Naoto Hamada, Atsushi Iwasaki, and Makoto Yokoo.
\newblock Controlled school choice with soft bounds and overlapping types.
\newblock {\em Journal of Artificial Intelligence Research}, 58:153--184, 2017.

\bibitem[\protect\citeauthoryear{Laslier \bgroup \em et al.\egroup
  }{2019}]{LasMouSanZwi2019future}
Jean‑Fran{\c c}ois Laslier, Herv{\' e} Moulin, {M.}~Remzi Sanver, and
  William~S. Zwicker.
\newblock {\em The Future of Economic Design}.
\newblock Springer, 2019.

\bibitem[\protect\citeauthoryear{Manlove}{2013}]{Manlove2013}
David~F. Manlove.
\newblock {\em Algorithmics of Matching Under Preferences}, volume~2.
\newblock WorldScientific, 2013.

\bibitem[\protect\citeauthoryear{Nguyen and Vohra}{2019}]{NguVoh2019}
Thanh Nguyen and Rakesh Vohra.
\newblock Stable matching with proportionality constraints.
\newblock {\em Operations Research}, 67(6):1503--1519, 2019.

\bibitem[\protect\citeauthoryear{Niedermeier}{2006}]{Nie06}
Rolf Niedermeier.
\newblock {\em Invitation to Fixed-Parameter Algorithms}.
\newblock Oxford University Press, 2006.

\bibitem[\protect\citeauthoryear{Papadimitriou}{1994}]{Pap94}
Christos~H. Papadimitriou.
\newblock {\em Computational Complexity}.
\newblock Addison-Wesley, 1994.

\bibitem[\protect\citeauthoryear{Stockmeyer}{1976}]{Stockmeyer1976phcom}
Larry~J. Stockmeyer.
\newblock The polynomial-time hierarchy.
\newblock {\em Theoretical Computer Science}, 3(1):1--22, 1976.

\bibitem[\protect\citeauthoryear{Tomoeda}{2018}]{Tomoeda2018typespecific-sm}
Kentaro Tomoeda.
\newblock Finding a stable matching under type-specific minimum quotas.
\newblock {\em Journal of Economic Theory}, 176:81--117, 2018.

\end{thebibliography}
\fi
\iflong

\clearpage
\appendix

\section{Results on Parameterized Complexity}

We complement the algorithms from Proposition~\ref{prop:diverseties-m+qmax:XP} with the following W[2]-hardness results.

\begin{proposition} \label{prop:feasible-m+qmax:W2-h}
	\feasible and \diverse parameterized by \(m + \qmax\) are \textnormal{W[2]}-hard.
\end{proposition}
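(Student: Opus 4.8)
The plan is to give a parameterized reduction from \textsc{Set Cover} (equivalently \textsc{Dominating Set}), which is W[2]-hard when parameterized by the size $k$ of the desired cover. Given a universe $X=\{x_1,\dots,x_r\}$ and a family $\mathcal{S}=\{S_1,\dots,S_p\}$, I would introduce one \emph{element type} for each $x_i$ and one \emph{set-student} $s_j$ for each $S_j$, where $s_j$ gets element type $i$ exactly when $x_i\in S_j$. For \feasible the construction is immediate: create a single college $w$ with capacity $q_w=k$, lower quota $1$ and upper quota $k$ on every element type, and let all set-students be acceptable to $w$. A feasible matching then assigns at most $k$ set-students to $w$ while meeting every element's lower quota, i.e.\ it selects at most $k$ sets covering $X$; conversely every cover of size $\le k$ yields a feasible matching. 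Here $m=1$ and $\qmax=k$, so $m+\qmax=k+1$, which already establishes W[2]-hardness of \feasible for a single college.

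For \diverse I would reuse the college $w$ above, but additionally instantiate the gadget of \cref{lem:aux} with $W=\{w\}$ (giving $w$ the two special types with upper quota $1$ each and preference $[U]\succ r_2$, where $U$ denotes the set-students), together with its three students $r_1,r_2,r_3$ and the two extra colleges $a,b$. All set-students receive $0$ on the two special types and are acceptable only to $w$; to guarantee $p\ge k$ I would also add $k$ \emph{dummy} set-students whose type vector is all-zero (covering nothing), ranked last by $w$. This uses $m=3$ colleges and $\qmax=\max\{k,2\}$, so $m+\qmax=O(k)$, and the number of types and students stays polynomial, as required for an FPT-reduction. The gadget plays a twofold role: by \cref{lem:aux}\eqref{aux-card} every stable matching must fill $w$ with exactly $q_w=k$ set-students, and by \cref{lem:aux}\eqref{aux-nec} (once $M(a)=\{r_2\}$, $M(b)=\{r_1,r_3\}$ and $w$ is full) none of $r_1,r_2,r_3,a,b$ lies in a blocking pair; since set-students are acceptable only to $w$, the only remaining potential blocking pairs are of the form $\{s_j,w\}$.

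The reverse direction is then easy: a feasible and stable matching fills $w$ with exactly $k$ set-students meeting all element lower quotas, i.e.\ with a cover of size $\le k$ (dummies contribute nothing). The forward direction---building a stable matching from a cover---is the main obstacle, because a set-student outside the full college $w$ may try to swap in. The plan is to take a size-exactly-$k$ cover (pad a $\le k$ cover with dummies) that is lexicographically maximal with respect to $w$'s preference order on its members, and to show it admits no blocking pair $\{s_j,w\}$. If a witness $U'\subseteq M(w)$ with $|U'|>1$ existed, then removing only the $w$-worst element $s_a\in U'$ already keeps a cover (fewer removals only preserve coverage) and strictly improves the lexicographic rank, contradicting maximality; the case $|U'|=1$ gives the same contradiction directly, and $|U'|=0$ is impossible since $w$ is at capacity. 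Combining this with the gadget's guarantees shows the constructed matching is feasible and stable, closing both directions. The subtle point requiring care is the all-zero dummy padding, which simultaneously ensures $w$ can always be filled to capacity $k$ without exceeding element quotas and does not create spurious covers.
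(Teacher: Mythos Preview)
Your argument is correct but takes a different route from the paper. For \feasible your single-college reduction (element types with lower quota $1$, capacity $k$) is even simpler than the paper's and obviously sound. For \diverse the paper does \emph{not} use the gadget of \cref{lem:aux}: it keeps a single college ($m=1$, $\qmax=2k$) and introduces, for each set $S_i$, an additional ``set-type'' and a dummy student $d_i$ carrying every set-type except the $i$-th, plus one further type shared by all dummies; the quotas then force any feasible matching to contain exactly $k$ dummies and exactly $k$ set-students with $s_i$ matched iff $d_i$ is, and a short case analysis (swapping in any unmatched $s_i$ or $d_i$ necessarily breaks some set-type quota) shows such a matching is stable. Your alternative via \cref{lem:aux} plus a lexicographically maximal size-$k$ cover also works---the swap replacing the $w$-worst witness member by the intruder yields a strictly lex-better feasible set, contradicting maximality---at the price of $m=3$ instead of $m=1$. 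One caveat: your appeal to \cref{lem:aux}\eqref{aux-card} is both unnecessary and technically out of scope (that statement assumes all lower quotas are zero, which your $w$ violates); the reverse direction already follows directly from feasibility of $w$'s element lower quotas, so simply drop that citation.
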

\begin{proof}
	We show this by providing a parameterized reduction from the W[2]-complete \textsc{Set Cover} problem, parameterized by the solution size~\cite{DF13}.
	\decprobnormal{Set Cover}
	{An \(n^*\)-element universe \(\mathcal{U} = \{x_1, \dotsc, x_{n^*}\}\), a collection~\(\mathcal{S} = \{S_1, \dotsc, S_{m^*}\}\) of $m^*$~sets,
          where each \(S_i \subseteq \mathcal{U}\) and \(\mathcal{U} \subseteq \bigcup_{i \in [m]} S_i\) and \(k \in \mathbb{N}\) with \(k \leq m\).}
	{Is there a size-at-most-$k$ \myemph{set cover}, i.e., a subcollection~$C\subseteq \mathcal{S}$ with $|C|\le k$
          such that \(\mathcal{U} \subseteq \bigcup_{S \in C} S\)?}
%	\textsc{Set Cover} is known to be W[2]-hard when parameterized by \(k\) \cite{DF13}.
	
	Let \(I=(\mathcal{U}, \mathcal{S}=\{S_1, \dotsc, S_{m^*}\}, k)\) be a \textsc{Set Cover} instance; without loss of generality assume that $m^* > k \ge 1$.
	We construct an \feasible/\diverse instance as follows.
	Let \(U = \{s_1, \dotsc, s_{m^*}, d_1, \dotsc, d_{m^*}\}\), \(W = \{w\}\) and \(t = n^* + m^* + 1\).
	We also refer to \(S \coloneqq \{s_1, \dotsc s_{m^*}\}\) as the \emph{set-students}, and to \(D \coloneqq \{d_1, \dotsc d_{m^*}\}\) as the \emph{dummy-students}.
	Furthermore, let \(\acset(w) = U\), i.e., we have unrestricted acceptability.
	This uniquely defines the preference lists of all students.
	The preference list of \(w\) is given by \(s_1 \succ_w d_1 \succ_w \cdots \succ_w s_{m^*} \succ_w d_{m^*}\).
	
	There are $n^*+m^*+1$ types: the first $n^*$ ones correspond to the elements, the middle $m^*$ ones correspond to the sets, while the last one is dedicated to the dummy students which will allow only $k$~dummy students to be assigned to $w$.
        For each set-student~\(s_i\), set \(\typevec_{s_i}[z] = 1\) if ``\(z \in [n^*]\) and \(x_z \in S_i\)'', or ``\(z = n^* + i\)''.
	That is, the set-student corresponding to~\(S_i\) has type \(n^* + i\), as well as types corresponding to \(S_i\)'s elements.
	For each dummy student~\(d_i\), set \(\typevec_{d_i}[z] = 1\) if ``\(z \in [n^* + 1, n^* + m^*]\) and \(z \neq n^* + i\)'', or ``\(z = n^* + m^* + 1\)''.
	That is, all dummy students have type \(n^* + m^* + 1\), and a dummy student corresponding to \(S_i\) has the types corresponding to all other sets given in the \textsc{Set Cover} instance.
        In other words, $s_i$ and $d_i$ together cover all the set-types.
	
	Set \(\lowervec_w = \{1\}^{n^*}\{k\}^{m^* + 1}\), \(\uppervec_{w} = \{k\}^{n^* + m^* + 1}\) and \(q_w = 2k\).
	This concludes the construction of the \feasible/\diverse instance.
	
	We claim that a solution for the original instance of \textsc{Set Cover} infers a solution for the constructed instance of \feasible/\diverse, and vice versa.
	
	For the first direction, let \(C\) be a size-$k$ set cover for $I$; we add arbitrary sets to $C$ to make sure that $C$ has indeed exactly $k$~sets.
	Start with \(M \coloneqq \{\{s_i, w\}, \{d_i, w\} \mid S_i \in C\}\).
%	Matching~\(M\) can be extended to a feasible stable matching by adding \(\{\{s_i, w\}, \{d_i, w\} \mid i \in I'\}\) to \(M\) where \(I'\) is an arbitrary set of indices in \([m^*] \setminus \{i \mid S_i \in C\}\) such that \(|I'| = k - |C|\).
	Feasibility is straightforward to check.
	Assume for contradiction, that \(M\) is not stable, i.e., there is some blocking pair \(\{u, w\}\) for \(M\).
	Let this be witnessed by \(U' \subseteq M(w)\).
        There are two cases for $u$, either $u=s_i$ or $u=d_i$ for some $i\in [m^*]$ such that $S_i \notin C$.
        First of all, since college~$w$ already receives~$q_w$ students, it follows that $|U'|\ge 1$.
        \begin{itemize}
          \item If $u=d_i$, then since the lower quota and upper quota of $w$ for the type~$n^*+m^*+1$ are equal to $k$
          and only the dummy students have such type, it follows that $U'\cap D=\{d_j\}$ with $j\neq i$.
          By our definition of $M$, it follows that $s_i\notin M(w)$ and that no student in $U'\setminus \{d_j\}$ has type~$n^*+i$.
          However, since student~$d_j$ has type~$n^*+i$ but student~$d_i$ does not, replacing $d_i$ with $U'$ will exceed the upper quota of $w$ regarding type~$n^++i$,
          a contradiction.
          \item If $u=s_i$, then since the lower quota and upper quota of $w$ for type~$n^*+m^*+1$ are equal to $k$ but only dummy students have such type,
          it follows that $U'\cap D=\emptyset$.
          However, since the lower quota and the upper quota for type~$\{n^*+i\}$ are equal to $k$ and since no set-student other than $s_i$ has this type, it follows that that $U'$ must contain a dummy student, a contradiction.
        \end{itemize}
	% Consider the case that \(u\) is some set student \(s_i\).
	% Then because of \(w\)'s quotas for types in \(\{n^* + 1,\ldots, n^* + m^*\}\) and the fact that \(n^* + i\) is the only such type for which \(\typevec_{s_i}\) has a \(1\)-entry, it holds that \(\sum_{u' \in U'} \typevec_{u'}[n^* + i] = 1\) and \(\sum_{u' \in U'} \typevec_{u'}[z] = 0\) for all \(z \in [n^* + 1, n^* + m^*] \setminus \{n^* + i\}\).
	% Such a set does not exist in \(U \setminus \{s_i\}\), which is a contradiction.
	% It remains to consider the case that \(u\) is some dummy student \(d_i\).
	% Because of \(w\)'s quotas for type \(n^* + m^* + 1\) and the fact that \(\typevec_{d_j}(n^* + m^* + 1) = 1\) exactly for \(d_j\) in the set \(D\) of dummy students, it holds that \(U'\) consists of exactly one dummy student.
	% Because of \(w\)'s quotas for all types that \(d_i\) has, this means \(U'\) has to consist exactly one dummy student which has exactly the same types as \(d_i\).
	% Such a student does not exist in \(U \setminus \{d_i\}\) which is a contradiction.
	Thus \(M\) is indeed feasible and stable.
	
	Conversely, given a feasible (and stable\footnote{This fact is not necessary for showing that the \textsc{Set Cover} instance~$I$ is a yes instance.}) matching \(M\) for the constructed instance, \(C \coloneqq \{S_i \mid \{s_i, w\} \in M\}\) is a solution to the original \textsc{Set Cover} instance:
	\(|C| \leq k\) is ensured by \(q_w = 2k\) and there are exactly $k$ dummy students in $M(w)$.
        \(\mathcal{U} \subseteq \bigcup_{i : S_i \in C} S_i\) is ensured by \(\lowervec_w[z] = 1\) and \(\typevec_{s_i}[z] = 1 \Leftrightarrow x_z \in S_i\) for all \(z \in [n^*]\).
	
	Obviously in our constructed instance \(m = 1\) and \(\qmax = 2k\) which makes this a valid parameterized reduction from \textsc{Set Cover} parameterized by \(k\) to \feasible, and \diverse respectively parameterized by \(m + \qmax\).
\end{proof}

However, when no ties are present and $\lmax=0$, we are able to provide an FPT algorithm for \diverse{} with respect to $m+\qmax$.
It is based on the following observation.

\begin{observation}\label{obs:diverse:lmax=0-noties:u}
  Assume that there are no ties and $\lmax=0$.
  If student~$u$ is college~$w$'s most preferred student such that $\typevec_{u}\le \uppervec_{w}$,
  then each stable matching must assign a college~$w'$ to $u$ with $w' \succeq_u w$.
\end{observation}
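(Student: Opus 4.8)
The plan is to argue by contradiction. Suppose there is a stable matching $M$ in which $u$ receives no college at least as good as $w$; that is, either $M(u)=\bot$ or $M(u)\prec_u w$. Since there are no ties, any non-improving outcome is strictly worse, and $u$ prefers the acceptable college $w$ to $\bot$, so in both cases $u$ strictly prefers $w$ to $M(u)$. I would then exhibit $\{u,w\}$ as a blocking pair, contradicting stability. Note first that $w\in\acset(u)$ because $u\in\acset(w)$, and $\{u,w\}\notin M$ because $M(u)\neq w$; this supplies condition~(i) of the blocking-pair definition, while condition~(ii) is exactly the hypothesis just derived.

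The heart of the argument is the choice of witness $U'$. I would first observe that feasibility of $M$ gives $\sum_{u'\in M(w)}\typevec_{u'}\le \uppervec_w$, and since type vectors are nonnegative this forces $\typevec_{u'}\le \uppervec_w$ for every $u'\in M(w)$. Hence every student currently assigned to $w$ lies in the class of students over which $u$ is, by hypothesis, $w$'s most preferred; as there are no ties and $u\notin M(w)$, college $w$ strictly prefers $u$ to each $u'\in M(w)$. I would then take the witness to be the whole set $U'=M(w)$. Condition~(iii) holds by the preceding sentence, and for condition~(iv) the modified assignment replaces $M(w)$ by the singleton $\{u\}$, which is feasible for $w$: we have $|\{u\}|=1\le q_w$, the lower quotas equal $\boldsymbol{0}$ and are thus trivially met since $\lmax=0$, and $\typevec_u\le\uppervec_w$ holds by assumption. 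Therefore $\{u,w\}$ blocks $M$, a contradiction, and every stable matching assigns $u$ some $w'\succeq_u w$.

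The only subtlety — and the point I would flag as the main obstacle — is verifying that the crude witness ``remove everyone from $w$'' is genuinely feasible, since this is precisely where both hypotheses are used: $\lmax=0$ removes the lower-quota constraint that would otherwise forbid emptying $w$, and $\typevec_u\le\uppervec_w$ guarantees the single remaining student violates no upper quota (together with $q_w\ge 1$, which holds as $q_w\in[n]$). The no-ties assumption enters only to upgrade $w$'s weak preference for $u$ into a strict one. Once these checks are assembled, the statement follows directly with no further case analysis.
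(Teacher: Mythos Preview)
Your proof is correct and follows essentially the same approach as the paper: both argue by contradiction and take the witness $U'=M(w)$, showing that $\{u,w\}$ blocks $M$ because replacing all of $M(w)$ by $u$ alone is feasible for $w$. Your write-up is in fact more careful than the paper's, which simply asserts that ``$w$ prefers $u$ to $M(w)$''; you explicitly justify this by observing that feasibility of $M$ forces $\typevec_{u'}\le\uppervec_w$ for every $u'\in M(w)$, so each such $u'$ falls into the class over which $u$ is $w$'s favorite.
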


\begin{proof}
  Suppose, for the sake of contradiction, that there exists a stable matching~$M$ with $w \succ_u M(w)$.
  Then, $\{u,w\}$ is blocking~$M$ as $w$ prefers~$u$ to~$M(w)$ and can replace $M(w)$ with $u$ to obtain a new feasible matching.
\end{proof}

\cref{obs:diverse:lmax=0-noties:u} can be used to extend the Gale-Shapley algorithm so as to obtain an FPT algorithm for $m+\qmax$.

\begin{proposition}\label{prop:diverse:lamx=0-noties:m+qmax:FPT}
  If $\lmax=0$, then
  \diverse{} can be solved in $\mathcal{O}(m^{m\cdot \qmax}\cdot n\cdot m\cdot t)$~time.
\end{proposition}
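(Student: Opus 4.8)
The plan is to design a Gale--Shapley style procedure augmented with limited branching, exploiting the structure exposed by Observation~\ref{obs:diverse:lmax=0-noties:u}. The central idea is that, since $\lmax=0$ and there are no ties, each college~$w$ has a well-defined \emph{most preferred acceptable student}~$u$ (i.e., the first student in $\succ_w$ with $\typevec_u \le \uppervec_w$), and by Observation~\ref{obs:diverse:lmax=0-noties:u} any stable matching must assign $u$ to a college it weakly prefers to $w$. First I would use this to repeatedly \emph{prune} the preference lists: whenever $u$ is the top acceptable student of some college~$w$, I delete from $u$'s list every college~$w'$ with $w \succ_u w'$ (and symmetrically remove $u$ from $w'$'s list), since $u$ will never be matched to such a $w'$ in any stable solution. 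This prune step is safe precisely because of Observation~\ref{obs:diverse:lmax=0-noties:u}, and it can only delete pairs, hence it terminates.

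After pruning stabilizes, I would identify the set $U_1$ of students who are currently the top acceptable student of at least one college. Each such student must (by the observation) be matched, and matched to one of her acceptable colleges. The key combinatorial bound is that the total number of students that can ever be matched is at most $m\cdot\qmax$ (Observation~\ref{obs:m+qmax}), so over the whole run at most $m\cdot\qmax$ students are assigned. For each student that gets committed, I would branch over the at most $m$ colleges she could be assigned to; this yields a branching tree of depth at most $m\cdot\qmax$ and branching factor at most $m$, giving $m^{m\cdot\qmax}$ leaves. At each branch I would decrement the relevant college's remaining capacity $q'_w$ and remaining upper-quota vector $\uppervec'_w$ by $\typevec_u$, commit the pair to the partial matching $M$, and then re-prune. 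The recursion continues as long as there is still an unmatched student and some college that still has spare capacity and a compatible acceptable unmatched student; otherwise the current branch has produced a candidate matching, which I would check for feasibility and stability (stability being checkable in polynomial time via Lemma~\ref{lem:lq=0:stable-equiv-form} since $\lmax=0$).

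The main steps, in order, are: (1) prove that the prune operation is invariant-preserving, i.e., no pair it deletes belongs to any feasible and stable matching --- this follows directly from Observation~\ref{obs:diverse:lmax=0-noties:u}; (2) argue that whenever a stable matching exists, at least one branch reconstructs it, because each student in $U_1$ is forced to be matched and the branching exhausts all legal college choices for her; (3) bound the tree size by $m^{m\cdot\qmax}$ using Observation~\ref{obs:m+qmax} to cap the number of committed students; and (4) bound the per-node work, namely one round of pruning plus a feasibility/stability check, each costing $\mathcal{O}(n\cdot m\cdot t)$ time (the stability check via Lemma~\ref{lem:lq=0:stable-equiv-form}). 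Multiplying the number of leaves by the per-node cost gives the claimed $\mathcal{O}(m^{m\cdot\qmax}\cdot n\cdot m\cdot t)$ running time.

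The hard part will be the correctness argument in step~(2): I must show that the branching never ``misses'' a stable matching, which requires a careful invariant that the set of deleted pairs is always a subset of pairs absent from \emph{every} feasible and stable matching, and that the forced matching of $U_1$-students is genuinely forced. In particular I would need to handle the subtlety that committing a top student to a college changes $\uppervec'_w$ and may promote a new student to the top of some list, so the argument should be phrased as an inductive invariant maintained across prune-and-branch rounds rather than a single global claim. A secondary subtlety is ensuring that when a branch terminates with all forced students matched but some students still unmatched, the resulting $M$ is genuinely stable --- here I would invoke that every unmatched student-college pair either has the college full or the college preferring its assignees, which is exactly the condition tested by Lemma~\ref{lem:lq=0:stable-equiv-form}, so the explicit final check guarantees soundness even if the forcing argument only guarantees completeness.
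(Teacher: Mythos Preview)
Your proposal is correct and follows essentially the same approach as the paper's proof: both use Observation~\ref{obs:diverse:lmax=0-noties:u} to identify a student that must be matched in any stable matching, branch over her at most $m$ possible colleges, and iterate, bounding the tree by $m^{m\cdot\qmax}$ via Observation~\ref{obs:m+qmax} and checking stability at the leaves with Lemma~\ref{lem:lq=0:stable-equiv-form}. Your explicit pruning step is an additional refinement not spelled out in the paper's proof sketch, but it is sound and does not change the overall argument or the running-time bound.
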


\begin{proof}[Proof Sketch.]
  Let $I=(U,W,(\typevec_u)_{u\in U}$, $(\succ_x)_{x\in U\cup W}$ $(q_w$, $\lowervec_w =\boldsymbol{0},\uppervec_w)_{w\in W})$ be an instance of \diverse{}.
%   with zero lower quotas for all colleges.
  The idea of our algorithm is to start with a feasible empty matching~$M$,
  and iteratively find an unassigned student~$u\in U$ for which there exists a college~$w\in \acset(u)$ such that 
    \begin{align}
      |M(w)| < q_w \text{ and } \typevec_{u}+\sum_{v\in M(w)}\typevec_v\le \uppervec_w&\label{cond:first}
  \end{align}
  (i.e., $w$ has still some capacity to accommodate $u$ and assigning~$u$ to $w$ does not exceed the upper quotas)
  and \(u\) there is no unassigned student which is strictly preferred over \(u\) by \(w\) satisfying (\ref{cond:first}).
  If there is no such student, then we check whether the constructed matching is stable by using the polynomial-time algorithm given in \cref{lem:lq=0:stable-equiv-form}.
  Otherwise, based on \cref{obs:diverse:lmax=0-noties:u}, every stable matching must assign some college to~$u$.
  Hence, we branch into assigning each one college~$w'$ from~$\acset(u)$ to~$u$, i.e., adding $\{u,w'\}$ to $M$.
  For each possible branch, we check for a next student satisfying Condition~\eqref{cond:first}, and continue as described.
  We return ``no'' if no constructed matching is stable.

  As for the running time, since there are at most $m$ branches for each found student and in each branch at least one college receives one more student,
  the search tree built according to our branching algorithm has depth~$\mathcal{O}(\qmax \cdot m)$ and $\mathcal{O}(m^{\qmax\cdot m})$ nodes.
  Since in each node of the search tree we can find a next student in linear time and at each leaf of the search tree we can check stability in linear time using \cref{lem:lq=0:stable-equiv-form},
  we can decide $I$ in $\mathcal{O}(m^{m\cdot \qmax}\cdot n\cdot m\cdot t)$~time.
\end{proof}

The condition that \(\lmax = 0\) cannot be dropped for the fixed parameter tractability given in Proposition~\ref{prop:diverse:lamx=0-noties:m+qmax:FPT}, as is signified by the following result.

\begin{proposition}\label{prop:diverseties:m+qmax:w[1]-h}
	\diverseties parameterized by \(m + \qmax\) remains W[1]-hard, even when \(\lmax = 0\).
\end{proposition}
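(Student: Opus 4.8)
The plan is to give a parameterized reduction from \IS{} parameterized by the solution size~$k$ (which is W[1]-hard) to \diverseties{} parameterized by $m+\qmax$ with all lower quotas equal to zero. The crucial observation is that, since the statement concerns \diverseties{} rather than the tie-free \diverse{}, I am allowed to use \emph{ties} in the colleges' preferences. Exploiting ties is exactly what lets the stability constraints be controlled under a bounded capacity, and it is the reason one only obtains W[1]- rather than W[2]-hardness in this regime.

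Given a graph $G=(V,E)$ and an integer~$k$, I would build a single \emph{central college}~$w^*$ of capacity~$q_{w^*}=k$, together with the three distinguished students $r_1,r_2,r_3$ and the two colleges $a,b$ from the gadget of \cref{lem:aux}, taking $\{w^*\}$ as the single ``to-be-filled'' college of that lemma. The type set consists of the two special types of \cref{lem:aux} plus one \emph{edge type} per edge of~$G$. For each vertex~$v\in V$ I introduce a vertex-student (also called~$v$) whose type vector is $0$ on the two special types and has a~$1$ exactly in the edge types of the edges incident to~$v$; its only acceptable college is~$w^*$. The college~$w^*$ finds all vertex-students acceptable, ranks them in a single \emph{tie} (all equally preferred) followed by~$r_2$, has upper quota~$1$ on each edge type and~$(1,1)$ on the two special types, and has all lower quotas~$0$. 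The preferences, quotas and capacities of $a,b,r_1,r_2,r_3$ are taken verbatim from \cref{lem:aux}. This yields $m=3$, $\qmax=\max\{k,2\}$, and $\lmax=0$, so the parameter $m+\qmax$ is bounded by a function of~$k$, while the number of types and students is unbounded (which is permitted).

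For correctness I would argue both directions via \cref{lem:aux}, exactly in the spirit of the reuse of that gadget in \cref{thm:lq=0:m=4:diverse-NP-c}. For the ``only if'' direction, any feasible and stable matching~$M$ must, by \cref{lem:aux}\eqref{aux-card}, fill $w^*$ to capacity with $k$ vertex-students; here the vertex-students carry none of the two special types, so the gadget still forces $w^*$ to be full, and the only adaptation needed is to note that $\{r_2,w^*\}$ blocks an underfull~$w^*$ because $r_2$, of special type~$(1,1)$, fits into the free special-type slots of~$w^*$. Feasibility then forces these $k$ vertex-students to avoid every shared edge type pairwise, i.e.\ to form an independent set of size~$k$. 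Conversely, from an independent set~$S$ of size~$k$ I would set $M(w^*)=S$, $M(a)=\{r_2\}$, $M(b)=\{r_1,r_3\}$, and leave all other vertex-students unmatched. This is feasible, and I claim it is stable: \cref{lem:aux}\eqref{aux-nec} rules out all blocking pairs involving the gadget, and for a non-selected vertex-student~$v$ the pair $\{v,w^*\}$ cannot block, because $w^*$ is already full, so any witness must be a nonempty set of vertex-students, yet $w^*$ does not \emph{strictly} prefer~$v$ to any vertex-student (they are all tied).

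The main obstacle is precisely this forward-direction stability argument. With capacity bounded by~$k$ one cannot use lower quotas (they are forced to be~$0$) to pin down the content of~$w^*$, and a naive strict ranking of the vertex-students would let a non-selected vertex swap into~$w^*$ and destroy stability, or worse, over-constrain which independent sets are admissible. Putting all vertex-students into a single tie at~$w^*$ neutralizes every such swap while leaving the \cref{lem:aux} gadget fully intact; the remaining technical point, already handled for the analogous reductions in the paper, is to check that both statements of \cref{lem:aux} survive the presence of the nonzero edge types on the vertex-students.
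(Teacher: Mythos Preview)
Your proposal is correct and takes essentially the same approach as the paper. The paper reduces from \textsc{Set Packing} (parameterized by~$k$) rather than \textsc{Independent Set}, but the constructions are structurally identical: one central college of capacity~$k$ whose upper-$1$ quotas on the element/edge types encode pairwise disjointness, all ``main'' students placed in a single tie in that college's list, and the gadget of \cref{lem:aux} attached to force the college to be filled; your handling of the non-zero types on the main students when invoking \cref{lem:aux} matches what the paper does as well.
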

\begin{proof}
  We provide a parameterized reduction from the W[1]-complete \textsc{Set Packing} problem, parameterized by the solution size~$k$~\cite{DF13}.
	\decprobnormal{Set Packing}
	{An \(n^*\)-element universe \(\mathcal{U} = \{x_1, \dotsc, x_{n^*}\}\),
          a collection~$\mathcal{S}=\{S_1,\ldots, S_{m^*}\}$ of $m^*$~sets,
          where each \(S_i \subseteq \mathcal{U}\), and \(k \in \mathbb{N}\) with $k \le m$.}
	{It there size-at-least-$k$ \myemph{set packing}, i.e.,
          a subcollection~$C \subseteq \mathcal{S}$ with $|C|\le k$ such that
          for each two sets~$S, S'\in C$ it holds that $S\cap S' =\emptyset$?}
%          of \(k\) sets \(C = \{S_{i_1}, \dotsc, S_{i_k}\) such that \(\forall x \in \mathcal{U} \ |\{j \in [k] \mid x \in S_{i_j}\}| \leq 1\).}
	
%	\textsc{Set Packing} is known to be W[1]-hard when parameterized by \(k\) \cite{DF99}.
	
	Let \(\mathcal{U}, S_1, \dotsc, S_{m^*}\) and \(k\) be a \textsc{Set Packing} instance.
	We construct a \diverseties instance as follows:
	Let \(U = \{u_1, \dotsc, u_{m^*}, r_1, r_2, r_3\}\), \(W = \{w, a, b\}\) and \(t = n^* + 2\).
	For all \(i \in [m^*]\), and \(z \in [n^*]\) set \(\typevec_{u_i}[z] = 1\) whenever \(x_z \in S_i\),
	and set \(\typevec_{u_i}(m^* + 1) = \typevec_{u_i}(m^* + 2) = 0\).
	That is, a student corresponding to \(S_i\) has exactly the types corresponding to \(S_i\)'s elements.
	Further, for each~\(i \in [m^*]\), let \(\acset(u_i) = \{w\}\), which also uniquely determines the preference list of \(u_i\).
	Set all other preferences, types, quotas and capacities according to the following table, where
        the preferences of~$w$ have all students from $\{u_1,\ldots,u_{m^*}\}$ tied at the first position and rank student~$r_2$ at the second position while all other preferences do \emph{not} have ties and are ordered by $\succ$: 
        \begin{center}
        \tabcolsep=3pt        
	\begin{tabular}{@{}>{\columncolor{white}[0pt][\tabcolsep]}l@{}>{\columncolor{white}[0pt][\tabcolsep]}l@{\;}c|l@{}l@{}>{\columncolor{white}[0pt][\tabcolsep]}c@{\;}>{\columncolor{white}[0pt][\tabcolsep]}c}
		\toprule
		S. & Pref. & T. & C. & Pref. & UQ. & C. \\\midrule 
		\highlight{$r_1\colon$} & \highlight{$b \osucc a$} & \highlight{$\{0\}^{n^*}10$} &  \highlight{$ w\colon$}& \highlight{$\{u_1,\dotsc, u_{m^*}\}\osucc r_2$} & \highlight{$\{1\}^{n^* + 2}$} & \highlight{$k$}\\
		\highlight{$r_2\colon$} & \highlight{$b \osucc w \osucc a$} &\highlight{$\{0\}^{n^*}11$} &   \highlight{$a\colon$} & \highlight{$r_1 \osucc r_2 \osucc r_3$} & \highlight{$\{0\}^{n^*}11$} & \highlight{$1$}\\
		\highlight{$r_3\colon$} & \highlight{$a \osucc b$} &\highlight{$\{0\}^{n^*}01$}&  \highlight{$b\colon$} & \highlight{$r_3 \osucc r_2 \osucc r_1$} & \highlight{$\{0\}^{n^*}11$} & \highlight{$2$}\\
		\bottomrule
	\end{tabular}
      \end{center}
%        \noindent Observe that all preference lists are strict linear orders except for the one of \(w\).
        We set the lower quotas to zero.
        This completes the construction of the instance, which can clearly be done in polynomial time.

	We claim that a size-at-least-$k$ set packing for $I$ infers a feasible and stable matching for the constructed instance of \diverseties, and vice versa.
	
	For the first direction, let \(C\) be a set packing with $k$ sets; we delete arbitrary sets from $C$ to ensure that $C$ has size~$k$.
        Then, we claim that the matching~\(M \coloneqq \{\{r_1, b\}, \{r_2, a\}, \{r_3, b\}\} \cup \{\{u_i, w\} \mid S_i \in C\}\) is a solution for the \diverseties instance:
	Feasibility can be easily verified.
	Assume for contradiction, that \(M\) is not stable, i.e., there is some blocking pair \(\{\alpha,\beta\}\) for \(M\). 
	Note that because \(|M(w)| = |\{u_i \mid S_i \in C\}| = |C| = k = q_w\), we can apply Lemma~\ref{lem:aux}\eqref{aux-nec}.
	Hence we know that \(\alpha \in \{u_i \mid i \in [m^*]\}\) and \(\beta = w\).
	Because \(|M(w)| = q_w\), this has to be witnessed by some non-empty subset of \(M(w)\).
	However there cannot be such a set which is strictly preferred over \(\alpha\) which is a contradiction.
	Thus \(M\) is indeed a feasible and stable matching for the constructed \diverseties instance.
	
	Conversely, given a feasible and stable matching \(M\) for the constructed instance, \(C = \{S_i \mid \{u_i, w\} \in M\}\) is a size-$k$ set packing for the original \textsc{Set Packing} instance:
	By Lemma~\ref{lem:aux}\eqref{aux-card}, the stability of \(M\) implies that \(|\{\{u_i, w\} \in M \mid i \in [m^*]\}| = q_w = k\), and thus \(|C| = k\).
	Moreover the upper quotas for \(w\) and the types of the elements of \(C\), ensure that no two sets in $C$ have non-empty intersection,
        for otherwise the upper quota of some type that corresponds to the element in the intersection would be exceeded. 
%        \(\forall x \in \mathcal{U} \ |\{j \in [k] \mid x \in S_{i_j}\}| \leq 1\).
	Thus \(C\) is indeed a set packing for the original \textsc{Set Packing} instance.
	
	Obviously in our constructed instance \(\lmax = 0\), \(m = 3\) and \(\qmax = k\) which makes this a valid parameterized reduction from \textsc{Set Packing} parameterized by \(k\) to \diverseties with \(\lmax = 0\) parameterized by \(m + \qmax\).
\end{proof}

We have shown that \diverseties{} parameterized by~$n$ is fixed-parameter tractable~(\cref{thm:fptn}) since it admits an exponential-size problem kernel~(\cref{thm:fptn}).
In the following,  we show that a polynomial-size problem kernel is unlikely to exist.
\begin{proposition}\label{prop:nopoly-m+t+qmax}
  \feasible and \diverseties{} do not admit a problem kernel with size polynomially bounded by $m+t+\qmax$, unless \textnormal{NP}~\(\subseteq\)~\textnormal{coNP/P}.
\end{proposition}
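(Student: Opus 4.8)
The plan is to prove the no-polynomial-kernel result via a \emph{cross-composition} (equivalently an OR-composition), which is the standard technique for ruling out polynomial kernels under the assumption $\textnormal{NP}\not\subseteq\textnormal{coNP}/\textnormal{poly}$ (the hypothesis should read $\textnormal{coNP}/\textnormal{poly}$ rather than $\textnormal{coNP}/\textnormal{P}$). First I would fix an NP-hard source problem whose hardness already holds under bounded values of the target parameters; the natural candidate is the very problem proved NP-hard in \cref{prop:feasible-t+umax+qmax-NP-h}, namely \feasible\ restricted to instances with $t=3$, $\umax=1$, $\qmax=2$. The point of choosing this restriction is that the parameters $t$, $\umax$, and $\qmax$ are already constants in every source instance, so the only parameter that grows with the input is the number $n$ of students (and the number $m$ of colleges). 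A polynomial kernel bounded by $m+t+\qmax$ is therefore the forbidden object, because a composition would have to compress many instances into one whose parameter is polynomially bounded in the \emph{maximum} instance size, independent of the number of instances.

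The core construction I would carry out is the following OR-cross-composition. Take $N$ instances $I_1,\dots,I_N$ of the bounded \feasible\ problem, each with the same (constant) type count $t=3$, the same $\umax=1$ and $\qmax=2$; a polynomial-equivalence-relation argument lets us assume all instances share the same number of colleges and the same number of students, padding if necessary. The goal is to build a single \feasible\ (respectively \diverseties) instance that is a yes-instance if and only if at least one $I_i$ is, while keeping $m+t+\qmax$ bounded by a polynomial in the size of a single instance (hence \emph{independent} of $N$). The key idea is to reuse the \emph{same} set of colleges for all $N$ instances: I would identify the colleges of all instances into one common college set, and introduce a selector gadget with a small number of extra types/colleges that forces exactly one instance's student-set to be ``activated'' and matched while the students belonging to the other $N-1$ instances are absorbed harmlessly. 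Because each instance contributes only students (whose number does not affect the target parameters $m+t+\qmax$), and the colleges and types are shared, the parameter of the composed instance stays bounded by $\mathrm{poly}(\max_i |I_i|)$ rather than growing with $N$.

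The main obstacle — and the step I would spend the most care on — is designing the selector gadget so that feasibility of the composed instance corresponds exactly to feasibility of \emph{some single} $I_i$, rather than to an unintended ``mix-and-match'' across instances. Concretely, if colleges are shared, I must prevent a feasible matching from satisfying one college's quotas using students drawn from several different source instances, which could make the composition accept even when no single $I_i$ is a yes-instance. The clean way to enforce single-instance activation is to attach, to every student of instance $I_i$, a private ``instance-type'' or to route all of $I_i$'s students through an instance-indexed intermediary so that a feasible assignment must commit to one index $i$; one then argues that committing to $i$ reduces the feasibility question to exactly $I_i$. Since per-instance types would blow up $t$, the delicate part is to implement the index selection using only $O(\log N)$ or a bounded number of additional types together with lower/upper quota tricks, or alternatively to accept a polynomial (not constant) number of extra colleges while keeping them $\mathrm{poly}(\max_i|I_i|)$-bounded; either way the extra machinery must not depend on $N$ itself. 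Once the equivalence ``composed instance is yes $\iff$ some $I_i$ is yes'' is established, the cross-composition framework of Bodlaender, Jansen and Kratsch immediately yields the claimed lower bound for \feasible, and the result lifts to \diverseties\ by the same gadget since adding the (already constant) stability machinery does not increase the target parameters. I would close by remarking that the kernelization lower bound for the parameter $n$ alluded to in \cref{prop:nopoly-n} can be obtained by an analogous composition that instead keeps $n$ bounded while letting the colleges proliferate.
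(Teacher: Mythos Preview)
Your plan is a cross-composition from scratch, but the paper does something much simpler: it observes that the existing reduction of Aziz et al.\ from \textsc{Set Cover} to \feasible\ is already a polynomial parameter transformation sending $(|\mathcal{U}|,k)$ to $(t,\qmax)$ with $m=1$, and then invokes the known fact (Dom, Lokshtanov, Saurabh) that \textsc{Set Cover} parameterized by $|\mathcal{U}|+k$ has no polynomial kernel unless $\textnormal{NP}\subseteq\textnormal{coNP}/\textnormal{poly}$. Two lines, no new construction. Your correction about $\textnormal{coNP}/\textnormal{poly}$ is well taken.

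Your proposal, by contrast, has a genuine gap. You correctly identify the crux---preventing a feasible matching in the composed instance from mixing students across source instances when the colleges are shared---but you do not build the selector gadget; you only list options (``$O(\log N)$ types'', ``lower/upper quota tricks'', ``poly-many extra colleges'') without showing any of them works. This is not a detail: with shared colleges and feasibility alone, nothing stops a matching from satisfying each college's quotas using a blend of students from several $I_i$, and encoding an instance index in $O(\log N)$ bit-types does not by itself force all matched students to agree on that index unless you can tie the quota bounds to the (unknown, instance-dependent) number of matched students per college. Until that mechanism is actually constructed and proved correct, the OR-equivalence fails and there is no proof. The paper sidesteps all of this by transferring an off-the-shelf lower bound through a parameter-preserving reduction; I would recommend you do the same.
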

\begin{proof}
  To see this for \feasible, we note that the reduction given by \citeauthor{AzizGaspersSunWalsh2019aamas}~\shortcite[Proposition~5.1]{AzizGaspersSunWalsh2019aamas} can be considered as from \textsc{Set Cover}.
  It produces an instance to \feasible where the number~$t$ of types is equal to the size of the universe~$\mathcal{U}$, the number of colleges is one, and the maximum capacity is equal to the set cover size~$k$.
  Since \textsc{Set Cover} does not admit a polynomial-size problem kernel for~$|\mathcal{U}|+k$,
  it follows that neither does \feasible admit a polynomial-size problem kernel for $m+t+\qmax$~\cite{DLS09}.

  The above reasoning can be used to show the same for \diverseties.
  % This follows immediately from the parameterized reduction from \textsc{Set Cover} to \feasible
  % and \diverse given in the proof of Proposition~\ref{prop:feasible-m+qmax:W2-h}, and the fact that unless NP \(\subseteq\) coNP/P, \textsc{Set Cover} parameterized by the size of the universe \(|\mathcal{U}|\) (in our reduction \(m = 1\), \(\qmax=k\), and \(t = |\mathcal{U}| + |\mathcal{S}| + 1\)) does not admit a polynomial-size problem kernel for $m+\qmax$~\cite{DLS09}.
\end{proof}

\begin{proposition}\label{prop:nopoly-n}
  \feasible and \diverse{} do not admit a problem kernel with size polynomially bounded by $n$,
  unless \textnormal{NP}~\(\subseteq\)~\textnormal{coNP/P}.
\end{proposition}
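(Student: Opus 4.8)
The plan is to establish the lower bound through an OR\nobreak-cross-composition into \feasible (and \diverse) parameterized by $n$; by the standard incompressibility framework for kernelization~\cite{DLS09}, producing such a composition rules out a polynomial kernel unless $\textnormal{NP}\subseteq\textnormal{coNP/poly}$. As the source problem I would take \textsc{Hitting Set} (NP\nobreak-complete), and I would use the polynomial equivalence relation that places all malformed strings in one class and declares two well-formed instances equivalent iff they share the same universe size~$n_0$ and the same budget~$k$. This gives only polynomially many classes, and inside a class I may relabel to a common universe $U_0=\{1,\dots,n_0\}$, so that the $T$ input instances $x_1,\dots,x_T$ differ \emph{only} in their set families $\mathcal{F}_1,\dots,\mathcal{F}_T$. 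I would first pad the list (by duplicating $x_1$) so that $T$ is a power of two, writing $p=\log T$; this keeps the OR unchanged and makes every pattern in $\{0,1\}^p$ a valid instance index.

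The crucial point making a small student count possible is that the element universe is \emph{shared} across the class, so the element students can be shared too. Concretely I would build a single-college instance with college $w$ of capacity $q_w=k+p$, students consisting of $n_0$ \emph{element students} $e_1,\dots,e_{n_0}$ together with $2p$ \emph{selector students} $b_1,\bar b_1,\dots,b_p,\bar b_p$ (so $n=n_0+2p$), all acceptable only to $w$. The types are a \emph{consistency type} $\beta_i$ for each bit $i$, with $\lowervec_w[\beta_i]=\uppervec_w[\beta_i]=1$ and possessed only by $b_i,\bar b_i$, plus a type $\theta_{j,F}$ for each $j\in[T]$ and $F\in\mathcal{F}_j$, with $\lowervec_w[\theta_{j,F}]=1$ and $\uppervec_w[\theta_{j,F}]=n$. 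I set $\typevec_{e_x}[\theta_{j,F}]=1$ iff $x\in F$; I give $b_i$ the type $\beta_i$ and all $\theta_{j,F}$ whose index $j$ has $i$-th bit $0$, and $\bar b_i$ the type $\beta_i$ and all $\theta_{j,F}$ whose $j$ has $i$-th bit $1$. Then any feasible matching must, by the $\beta_i$-quotas, select exactly one of $b_i,\bar b_i$ per bit, which encodes a unique index $j^\star$, leaving capacity for at most $k$ element students. For every $j\ne j^\star$ some chosen selector already carries all of $\mathcal{F}_j$'s types (auto-satisfying their lower quotas), whereas for $j^\star$ no chosen selector carries any $\theta_{j^\star,F}$, so the $\le k$ element students must hit every $F\in\mathcal{F}_{j^\star}$. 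Hence the composed instance is feasible iff some $x_j$ is a yes-instance, and $n=n_0+2p$ is linear in $\max_i|x_i|+\log T$, as required.

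For \diverse I would reuse exactly the same instance, equipping $w$ with an arbitrary strict order over all students (no ties arise, and each student's only acceptable college is $w$). It then suffices to argue that for a single college a feasible matching exists iff a feasible \emph{and stable} one exists: among all feasible sets pick $S^\star$ lexicographically maximizing, for $r=1,2,\dots$, the number of its students of rank at most~$r$. If some unmatched student $e$ blocked $S^\star$ via a witness $U'$, then $S^\star\setminus U'\cup\{e\}$ is feasible and, since $w$ prefers $e$ to all of $U'$, strictly larger in this order, contradicting maximality. Thus the composed instance is a yes-instance of \diverse iff some $x_j$ is a yes-instance, and the same cross-composition yields the result for \diverse.

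I expect the main obstacle to be the selector gadget: getting the $\theta_{j,F}$ type assignment exactly right so that the lower quotas of all \emph{non}-selected instances are automatically met by the chosen selector literals while those of the selected instance are met \emph{only} by element students, and simultaneously forcing precisely one literal per bit through the $\beta_i$-quotas without leaking extra capacity. The secondary technical point is the single-college ``feasible $\Rightarrow$ stable'' lemma, which transfers the \feasible composition to \diverse for free; once both are in place, the parameter bound and equivalence-relation bookkeeping are routine.
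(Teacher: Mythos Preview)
Your cross-composition is correct: the bit-selector gadget works as you describe (the $\beta_i$ quotas pin down exactly one literal per bit, the chosen literals auto-satisfy every $\theta_{j,F}$ with $j\neq j^\star$ and none with $j=j^\star$, and the remaining $\le k$ element students must hit $\mathcal{F}_{j^\star}$), the parameter $n=n_0+2\log T$ is suitably bounded, and your single-college ``feasible $\Rightarrow$ some feasible matching is stable'' lemma via the lexicographic-by-rank maximality argument is sound and transfers the bound to \diverse.

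The paper, however, takes a much shorter route. It observes that the reduction from \textsc{Set Cover} already given in the proof of Proposition~\ref{prop:feasible-m+qmax:W2-h} produces an instance with $n=2m^*$ students (one set-student and one dummy per set), so it is in fact a polynomial parameter transformation from \textsc{Set Cover} parameterized by the number $m^*$ of sets to \feasible/\diverse parameterized by $n$. Since \textsc{Set Cover} parameterized by $|\mathcal{S}|$ is already known to admit no polynomial kernel unless $\textnormal{NP}\subseteq\textnormal{coNP/poly}$~\cite[Theorem~5]{BreCheHarKraNieSucWoe2014}, the result follows immediately. What you gain with your approach is self-containment (you do not import an external kernel lower bound for \textsc{Set Cover}) and an explicit instance-selection gadget; what the paper gains is brevity and reuse of a reduction it had to prove anyway.
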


\begin{proof}
  This follows immediately from the parameterized reduction from \textsc{Set Cover} to \feasible
  and \diverse given in the proof of Proposition~\ref{prop:feasible-m+qmax:W2-h}, and the fact that unless NP \(\subseteq\) coNP/P, \textsc{Set Cover} parameterized by the size of the size of the set collection~$\mathcal{S}$ does not admit a polynomial-size problem kernel~\cite[Theorem~5]{BreCheHarKraNieSucWoe2014}.
\end{proof}
\fi

\end{document}

%%% Local Variables:
%%% mode: latex
%%% TeX-master: t
%%% End: